\definecolor{Darkblue}{rgb}{0,0,0.4}
\definecolor{Brown}{cmyk}{0,0.61,1.,0.60}
\definecolor{Purple}{cmyk}{0.45,0.86,0,0}
\definecolor{Darkgreen}{rgb}{0.133,0.543,0.133}
\newcommand{\red}[1]{{\color{red}#1}}
\newcommand{\redno}[1]{{#1}}
\newcommand{\hide}[1]{}
\newif\ifdraft
\newcommand{\namedref}[2]{\hyperref[#2]{#1~\ref*{#2}}}
\newcommand{\propref}[1]{\hyperref[#1]{property~(\ref*{#1})}}
\newcommand{\Propref}[1]{\hyperref[#1]{Property~(\ref*{#1})}}
\newcommand{\questionref}[1]{\namedref{Question~}{#1}}
\newtheorem{theorem}{Theorem}
\newtheorem{lemma}{Lemma}
\newtheorem{definition}{Definition}
\newtheorem{claim}{Claim}
\newtheorem{observation}{Observation}
\newtheorem{corollary}{Corollary}
\newtheorem{question}{Question}
\newtheorem{remark}{Remark}
\newenvironment{customthm}[1]
{\innercustomthm}
{\endinnercustomthm}
\newcommand{\poly}{\mathrm{poly}}
\newcommand{\polylog}{\mathrm{polylog}}
\newcommand{\R}{\mathbb{R}}
\newcommand{\N}{\mathbb{N}}
\newcommand{\A}{\mathcal{A}}
\newcommand{\opt}{\mathrm{OPT}}
\newcommand{\supp}{\mathrm{supp}}
\newcommand{\home}{\mbox{\bf home}}
\newcommand{\diam}{\mathrm{diam}}
\newcommand{\dm}{\mathrm{diam}}
\newcommand{\hop}{\mathrm{hop}}
\newcommand{\rhop}[1]{^{{\color{red}(#1)}}}
\newcommand{\DO}{\mathsf{DO}}
\newcommand{\QED}{\\{\color{white}.}\hfill\qedsymbol}
\newcommand{\cA}{\mathcal{A}}
\newcommand{\cD}{\mathcal{D}}
\newcommand{\cK}{\mathcal{K}}
\newcommand{\cM}{\mathcal{M}}
\newcommand{\Geo}{\mathsf{Geo}}
\newcommand{\inter}{\mathsf{int}}
\newcommand{\logdiam}{\phi}
\definecolor{forestgreen}{rgb}{0.13, 0.55, 0.13}
\def\eps{\epsilon}
\DeclareMathAlphabet{\mathpzc}{OT1}{pzc}{m}{it}
\newcommand{\etal}{{\em et al. \xspace}}
\newlength{\dhatheight}
\newcommand {\ignore} [1] {}
\newcommand{\initOneLiners}{%
	\setlength{\itemsep}{0pt}
	\setlength{\parsep }{0pt}
	\setlength{\topsep }{0pt}
}
\title{Hop-Constrained Metric Embeddings and their Applications\thanks{An extended abstract of this paper appeared in the proceedings of FOCS 21 \cite{Fil21}. This short preliminary includes only the introduction, and statements of the results.}}
\author{Arnold Filtser
	\\Bar-Ilan University
	\\ 	Email: \texttt{arnold273@gmail.com} }
\date{}
\begin{document}
\maketitle
\begin{abstract}
In network design problems, such as compact routing, the goal is to route
packets between nodes using the (approximated) shortest paths.
A desirable property of these routes is a small number of hops, which makes
them more reliable, and reduces the transmission costs.

Following the overwhelming success of stochastic tree embeddings for
algorithmic design, Haeupler, Hershkowitz, and Zuzic (STOC'21) studied
hop-constrained Ramsey-type metric embeddings into trees. Specifically,
embedding $f:G(V,E)\rightarrow T$ has Ramsey hop-distortion $(t,M,\beta,h)$
(here $t,\beta,h\ge1$ and $M\subseteq V$) if $\forall u\in M,v\in V$,
$d_G^{(\beta\cdot h)}(u,v)\le d_T(u,v)\le t\cdot d_G^{(h)}(u,v)$. $t$ is
called the distortion, $\beta$ is called the hop-stretch, and $d_G^{(h)}(u,v)$
denotes the minimum weight of a $u-v$ path with at most $h$ hops.
Haeupler {\em et al.} constructed embedding where $M$ contains $1-\epsilon$
fraction of the vertices and $\beta=t=O(\frac{\log^2 n}{\epsilon})$. They used
their embedding to obtain multiple bicriteria approximation algorithms for
hop-constrained network design problems.

In this paper, we first improve the Ramsey-type embedding to obtain parameters
$t=\beta=\frac{\tilde{O}(\log n)}{\epsilon}$, and generalize it to arbitrary
distortion parameter $t$ (in the cost of reducing the size of $M$).
This embedding immediately implies polynomial improvements for all the
approximation algorithms from Haeupler {\em et al.}.
Further, we construct hop-constrained clan embeddings (where each vertex has
multiple copies), and use them to construct bicriteria approximation
algorithms for the group Steiner tree problem, matching the state of the art
of the non-constrained version.
Finally, we use our embedding results to construct hop-constrained distance
oracles, distance labeling, and most prominently, the first hop-constrained
compact routing scheme with provable guarantees. All our metric data
structures almost match the state of the art parameters of the non
hop-constrained versions.
\end{abstract}

\newpage

\addtocontents{toc}{\protect\setcounter{tocdepth}{3}}
\tableofcontents
	\pagenumbering{gobble}
    \newpage
    \pagenumbering{arabic}

    \newpage

\section{Introduction}
Low distortion metric embeddings provide a powerful algorithmic toolkit, with
applications ranging from approximation/sublinear/online/distributed
algorithms \cite{LLR95,AMS99Sketch,BCLLM18,KKMPT12} to machine learning
\cite{GKK17}, biology \cite{HBKKW03}, and vision \cite{AS03}.
A basic approach for solving a problem in a ``hard'' metric space $(X,d_X)$
is to embed the points $X$ into a ``simple'' metric space $(Y,d_Y)$ while
preserving all pairwise distances up to small multiplicative factor.
Then one solves the problem in $Y$, and pulls back the solution into $X$.

A highly desirable target space is trees, as many hard problems become easy
once the host space is a tree metric.  Fakcharoenphol, Rao, and Talwar
\cite{FRT04} (improving over \cite{AKPW95,Bar96,Bartal98}, see also
\cite{Bartal04}) showed that every $n$ point metric space could be embedded
into a distribution $\mathcal{D}$ over dominating trees with expected distortion
$O(\log n)$. Formally, $\forall x,y\in X$, $T\in\supp(\cD)$,  $d_X(x,y)\le
d_T(x,y)$, and $\mathbb{E}_{T\sim\cD}[d_T(x,y)]\le O(\log n)\cdot d_X(x,y)$.
This stochastic embedding enjoyed tremendous success and has numerous
applications. However, the distortion guarantee is only in expectation.
A different solution is Ramsey type embeddings which have a worst case
guarantee, however only w.r.t. a subset of the points. Specifically, Mendel
and Naor \cite{MN07} (see also
\cite{BFM86,BBM06,BLMN05b,NT12,BGS16,ACEFN20,Bar21}) showed that for every
parameter $k$, every $n$-point metric space contains a subset $M$ of at least
$n^{1-\frac1k}$ points, where $X$ could be embedded into a tree such that all
the distances in $M\times X$ are preserved up to an $O(k)$ multiplicative
factor. Formally, $\forall x,y\in X$,  $d_X(x,y)\le d_T(x,y)$, and $\forall
x\in M,y\in X$,  $d_T(x,y)\le O(k)\cdot d_X(x,y)$.
Finally, in order to obtain worst case guarantee w.r.t. all point pairs, the
author and Hung \cite{FL21} introduced clan embeddings into trees. Here each
point $x$ is mapped into a subset $f(x)\subseteq Y$ which is referred to as
the copies of $x$, with a special chief copy $\chi(x)\in f(x)$. For every
parameter $k$, \cite{FL21} constructed a distribution over dominating clan
embeddings such that for every pair of vertices $x,y$, some copy of $x$ is
close to the chief of $y$: $d_X(x,y)\le \min_{x'\in f(x),y'\in
f(y)}d_T(x',y')\le \min_{x'\in f(x)}d_T(x',\chi(y))\le O(k)\cdot d_X(x,y)$,
and the expected number of copies of every vertex is bounded:
$\mathbb{E}[|f(x)|]=O(n^{\frac1k})$.

In many applications, the metric space is the shortest path metric $d_G$ of a
weighted graph $G=(V,E,w)$.
In some scenarios, in addition to metric distances, there are hop constraints.
For instance, one may wish to route a packet between two nodes, using a path
with at most $\red{h}$ hops,\footnote{To facilitate the reading of the paper,
all numbers referring to hops are colored in red.} i.e. the number of edges
in the path.
Such hop constraints are desirable as each transmission causes delays, which
are non-negligible when the number of transmissions is large \cite{AT11,BF18}.
Another advantage is that low-hop routes are more reliable: if each
transmission is prone to failure with a certain probability, then low-hop
routes are much more likely to reach their destination \cite{BF18,WA88,RAJ12}.
Electricity and telecommunications distribution network configuration problems
include hop constraints that limit the maximum number of edges between a
customer and its feeder \cite{BF18}, and there are many other (practical)
network design problems with hop constraints
\cite{BCM99,BA92,GPSV03,GM03,PS03}.
Hop-constrained network approximation is often used in parallel computing
\cite{Cohen00,ASZ20}, as the number of hops governs the number of required
parallel rounds (e.g. in Dijkstra).
Finally, there is an extensive work on approximation algorithms for
connectivity problems like spanning tree and Steiner tree with hop constraints
\cite{Rav94,KP97,MRSRRH98,AFHKRS05,KLS05,KP09,HKS09,KS16}, and for some
generalizations in \cite{HHZ21}.

Given a weighted graph $G=(V,E,w)$, $d_G\rhop{h}(u,v)$ denotes the minimum
weight of a $u$-$v$ path containing at most $\red{h}$ hops.
Following the tremendous success of tree embeddings, Haeupler, Hershkowitz and
Zuzic \cite{HHZ21} suggested to study hop-constrained tree embeddings. That
is, embedding the vertex set $V$ into a tree $T$, such that $d_T(u,v)$ will
approximate the $\red{h}$-hop-constrained distance $d_G\rhop{h}(u,v)$.
Unfortunately, \cite{HHZ21} showed that $d_G\rhop{h}$ is very far from being a
metric space, and the distortion of every embedding of $d_G\rhop{h}$
is unbounded.
To overcome this issue, \cite{HHZ21} allowed a slack not only in the distance
approximation (distortion) but also in the allowed number of hops
(hop-stretch). Specifically, allowing bi-criteria approximation in the
following sense: $d_G^{\red{(\beta\cdot h)}}(u,v)\le d_T(u,v)\le t\cdot
d\rhop{h}_G(u,v)$, for some parameters $\red{\beta}$, and $t$.
However, even this relaxation is not enough as long as
$\red{\beta\cdot h}<n-1$.
To see this,  consider the unweighted $n$-path $P_n=\{v_0,\dots,v_{n-1}\}$.
Then $d_G^{\red{(n-2)}}(v_0,v_{n-1})=\infty$, implying that for every metric
$X$ over $V$,
$\max_{i}\{d_{X}(v_{i},v_{i+1})\}\ge\frac{1}{n-1}\sum_{i}d_{X}(v_{i},v_{i+1})
\ge\frac{d_{X}(v_{0},v_{n-1})}{n-1}\ge\frac{d_{G}^{\red{(n-2)}}(v_{0},v_{n-1}
)}{n-1}=\infty$.
As a result, some pairs have distortion $\infty$, and hence no bi-criteria
metric embedding is possible. In particular, there is no hop-constrained
counterpart for \cite{FRT04}.

To overcome this barrier, \cite{HHZ21} studied hop-constrained Ramsey-type
tree embeddings.\footnote{\cite{HHZ21} called this embedding \emph{partial
metric distribution}, rather than Ramsey-type.} Their main result is that for
every $n$-point weighted graph $G=(V,E,w)$ with polynomial aspect
ratio,\footnote{The \emph{aspect ratio} (sometimes referred to as spread), is
the ratio between the maximum and minimum weight in $G$ $\frac{\max_{e\in
E}w(e)}{\min_{e\in E}w(e)}$. Often in the literature the aspect ratio is
defined as $\frac{\max_{u,v\in V}d_G(u,v)}{\min_{u\ne v\in V}d_G(u,v)}$:
the ratio between the maximum and minimum distance in $G$, which is equivalent
up to a factor $n$ to the definition used here.\label{foot:aspectRatio}
}
and parameter $\eps\in(0,1)$, there is a distribution over pairs $(T,M)$,
where $M\subseteq V$, and $T$ is a tree with $M$ as its vertex set such that
\[
\forall u,v\in M,\qquad d_G^{\red{(O(\frac{\log^2n}{\eps})\cdot h)}}(u,v)\le
d_T(u,v)\le O(\frac{\log^2n}{\eps})\cdot d_G\rhop{h}(u,v)~,
\]
and for every $v\in V$, $\Pr[v\in M]\ge 1-\eps$ (referred to as inclusion
probability). Improving the distortion, and hop-stretch, was left as the
``main open question'' in \cite{HHZ21}.
\begin{question}[\cite{HHZ21}]\label{question:ImproveEmbedding}
	Is it possible to construct better hop-constrained Ramsey-type tree
	embeddings? What are the optimal parameters?
\end{question}

Adding hop constraints to a problem often makes it considerably harder. For
example, obtaining an $o(\log n)$ approximation for the hop-constrained MST
problem (where the hop-diameter must be bounded by $\red{h}$) is NP-hard
\cite{BKP01}, while MST is a simple problem without it. Similarly, for every
${\eps>0}$, obtaining an $o(2^{\log^{1-\eps}}n)$-approximation for the
hop-constrained Steiner forest problem is NP-hard \cite{DKR16} (while constant
approximation is possible without hop constraints \cite{AKR95}).

\cite{HHZ21} used their hop-constrained Ramsey-type tree embeddings to
construct many approximation algorithms.
Roughly, given that some problem has (non constrained) approximation ratio
$\alpha $ for trees, \cite{HHZ21} obtain a bicriteria approximation for the
hop-constrained problem with cost $O(\alpha\cdot \log^3n)$, and hop-stretch
$O(\log^3n)$.
For example, in the group Steiner tree problem we are given $k$ sets
$g_1,\dots,g_k\subseteq V$, and a root vertex $r\in V$. The goal is to
construct a minimum-weight tree spanning the vertex $r$ and at least one
vertex from each group (set) $g_i$. In the hop-constrained version of the
problem, we are given in addition a hop-bound $\red{h}$, and the requirement
is to find a minimum weight subgraph $H$, such that for every $g_i$,
$\hop_H(r,g_i)\le \red{h}$.
Denote by $\opt$ the weight of the optimal solution.
In the non-constrained version, where the input metric is a tree, \cite{GKR00}
provided a  $O(\log n\cdot\log k)$ approximation. Using the scheme of
\cite{HHZ21}, one can find a solution $H$ of weight $O(\log^4 n\cdot\log
k)\cdot \opt$, such that for every $g_i$, $\hop_H(r,g_i)\le
\red{O(\log^3n)\cdot h}$.
The best approximation for the (non-constrained) group Steiner tree problem is
$O(\log ^2n\cdot\log k)$ \cite{GKR00}.
Clearly, every improvement over the hop-constrained Ramsey-type embedding will
imply better bicriteria approximation algorithms. However, even if one used
the \cite{HHZ21} framework on optimal Ramsey-type embeddings (ignoring hop
constraints), the cost factors will be inferior compared to the
non-constrained versions (specifically $O(\log^3n\cdot\log k)$ for GST).
It is interesting to understand whether there is a separation between the cost
of a bicriteria approximation for hop-constrained problems and that of their
non-constrained counterparts. A more specific question is the following:
\begin{question}\label{question:Approximation}
	Could one match the state of the art cost approximation of the (non
	hop-constrained) group Steiner tree problem while having $\polylog(n)$
	hop-stretch?
\end{question}

Metric data structures such as distance oracles, distance labeling, and
compact routing schemes are extensively studied, and widely used throughout
algorithmic design.
A natural question is whether one can construct similar data structures w.r.t.
hop-constrained distances.
Most prominent is the question of compact routing schemes \cite{TZ01b}, where
one assigns each node in a network a short label and small local routing
table, such that packets could be routed throughout the network with routes
not much longer than the shortest paths.
Hop-constrained routing is a natural question due to its clear advantages
(bounding transmission delays, and increasing reliability). Indeed, numerous
papers developed different heuristics for different models of hop-constrained
routing.\footnote{In fact, at the time of writing, the search
"hop-constrained" + routing on Google Scholar returns 1,020 papers.}
However, to the best of the author's knowledge, no prior provable guarantees
were provided for hop-constrained compact routing schemes.
\begin{question}\label{question:Routing}
	What hop-constrained compact routing schemes are possible?
\end{question}

\subsection{Our contribution}
\subsubsection{Metric embeddings}
\begin{figure}[t]
	\centering{\includegraphics[scale=.85,page=1]{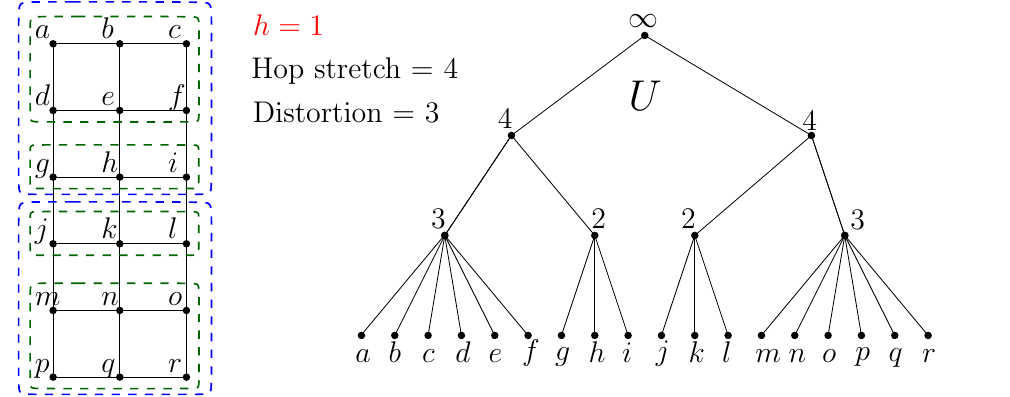}}
	\caption{\label{fig:RamseyDef}\small An illustration of an embedding of
	the unweighted $3\times6$ grid into an ultrametric with Ramsey
	hop-distortion $(t=3,M=\{a,b,c,p,q,r\},\red{\beta=4},\red{h=1})$. In the
	ultrametric, the distance between a pair of vertices equals the label
	of the least common ancestor.
	The ultrametric defines a hierarchical partition, which corresponds to
	the blue and green dashed lines on the left.
	The embedding has hop stretch $\red{\beta=4}$ as $\forall u,v\in V$,
	$d_{G}^{\red{(4\cdot 1)}}(u,v)\le d_U(u,v)$. Note that $4$ is tight, as
	$d_{G}^{\red{(4)}}(a,i)=4=d_U(a,i)<d_{G}^{\red{(3)}}(a,i)=\infty$. The
	embedding has distortion $t=3$ as $\forall u\in M,v\in V$, $d_U(u,v)\le
	t\cdot d_{G}\rhop{1}(u,v)$. Note that $3$ is tight, as $d_U(a,b)=3=3\cdot
	d_G\rhop{1}(a,b)$.
	Interestingly, the exact same embedding also has Ramsey hop-distortion
	$(t=4,M=\{a,b,c,d,e,f,m,n,o,p,q,r\},\red{\beta=4},\red{h=1})$.}
\end{figure}
Here we present our metric embedding results. Our embeddings will be into
ultrametrics (a.k.a. HST), which are a structured type of tree having the
strong triangle inequality (see \Cref{def:ultra}). We denote ultrametrics
usually with $U$.
Our first contribution is an improved Ramsey-type hop-constrained embedding
into trees,  partially solving \Cref{question:ImproveEmbedding}.
Embedding $f:V(G)\rightarrow U$ is said  to have Ramsey hop-distortion
$(t,M,\red{\beta},\red{h})$ (here $\red{\beta}>1$) if $\forall u,v\in V$,
$d_{G}^{\red{(\beta\cdot h)}}(u,v)\le d_U(u,v)$, and $\forall u\in M,v\in V$,
$d_U(u,v)\le t\cdot d_{G}\rhop{h}(u,v)$. See \Cref{def:Ramsey} for a formal
definition and \Cref{fig:RamseyDef} for an illustration.
In \Cref{thm:UltrametricRamsey} below we construct a distribution over
Ramsey-type embeddings with inclusion probability $\Omega(n^{-\frac1k})$ (that
is every vertex $v$ belongs to the set $M$ with probability
$\Omega(n^{-\frac1k})$), and every embedding in the support has Ramsey
hop-distortion $(O(k),M,\red{O(k\cdot\log n)},\red{h})$. That is the distance
between every pair $u,v$ (where at least one of them belongs to $M$) has
distortion $O(k)$ and hop stretch $\red{O(k\cdot\log n)}$. We also allow
taking the inclusion probability to the extreme, where each vertex belongs to
$M$ with probability $1-\eps$. Then the distortion is  $O(\frac{\log
n}{\eps})$, while the hop-stretch is $\red{O(\frac{\log^2 n}{\eps})}$.
\begin{restatable}[Hop-Constrained Ramsey Embedding]{theorem}{RamseyType}
	\label{thm:UltrametricRamsey}
	Consider an $n$-vertex graph $G=(V,E,w)$ with polynomial aspect ratio, and
	parameters $k,\red{h}\in [n]$.
	Then there is a distribution $\mathcal{D}$ over dominating ultrametrics,
	such that:
	\begin{enumerate}
		\item Every $U\in\supp(\mathcal{D})$, has Ramsey hop-distortion
		$(O(k),M,\red{O(k\cdot\log n)},\red{h})$, where $M\subseteq V$ is a
		random variable.
		\item For every $v\in V$, $\Pr_{U\sim\mathcal{D}}[v\in M]\ge
		\Omega(n^{-\frac1k})$.
	\end{enumerate}
	In addition, for every $\eps\in (0,1)$, there is a distribution
	$\mathcal{D}$ as above such that every $U\in\supp(\mathcal{D})$ has Ramsey
	hop-distortion $(O(\frac{\log n}{\eps}),M,\red{O(\frac{\log^2
	n}{\eps})},\red{h})$, and for every $v\in V$, $\Pr[v\in M]\ge 1-\eps$.
\end{restatable}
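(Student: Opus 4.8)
The plan is to produce a single hierarchical (laminar) family of clusters and read the ultrametric off it, following the recursive random-partition paradigm behind Ramsey partitions (Calinescu--Karloff--Rabani / Mendel--Naor), but carried out throughout with hop-bounded distances. First I would normalize: since the aspect ratio is polynomial I may assume all finite inter-vertex $h$-hop distances lie in $[1,\Phi]$ with $\log\Phi=O(\log n)$, and I take $I=O(\log n)$ scales $\Delta_\ell=2^{I-\ell}$ for $\ell=0,\dots,I$. Infinite $h$-hop distances are dealt with at the root: instead of a single cluster $V$, I partition $V$ into maximal pieces of bounded $\beta h$-hop diameter, so that $d_U(u,v)=\infty$ precisely when no short-hop bounded-weight path connects $u$ and $v$; this is consistent with domination and vacuous for the upper-bound side. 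Both regimes in the statement will come from the \emph{same} construction by tuning one ``padding parameter'' $\gamma$: $\gamma=\Theta(1/k)$ yields the first item, and $\gamma=\Theta(\eps/\log n)$ the second (note that $O(\log n/\eps)=O(k)$ and $O(\log^2 n/\eps)=O(k\log n)$ when $k=\log n/\eps$).

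Next, for $\ell=0,1,\dots,I$ I would recursively refine each scale-$\Delta_{\ell-1}$ cluster into scale-$\Delta_\ell$ sub-clusters with a truncated-exponential random radius and a random ordering of centers, exactly as in the Mendel--Naor construction, with two modifications: (i) a vertex $y$ is assigned to a center $x$ according to the hop-bounded distance $d_{G}^{(\eta_\ell)}(x,y)$ for a suitable hop budget $\eta_\ell$, and (ii) the sub-clusters are formed so as to have a \emph{strong} hop-bounded diameter, i.e. any two cluster-mates are joined by a path of weight $O(\Delta_\ell)$ using at most $\eta_\ell$ hops. I then let $M$ be the set of vertices $v$ that are $\gamma$-padded at every scale, meaning the hop-ball $B_{G}^{(\eta_\ell)}(v,\gamma\Delta_\ell)$ is never split by the scale-$\ell$ partition. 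The standard telescoping argument for Ramsey partitions --- using the truncated exponential distribution together with the local growth of $|B_G(v,\cdot)|$ --- gives $\Pr[v\in M]\ge\Omega\!\left(n^{-O(\gamma)}\right)$, hence $\Omega(n^{-1/k})$ for $\gamma=\Theta(1/k)$; alternatively, a union bound over the $O(\log n)$ scales, each contributing cut probability $O(\gamma\log n)$, gives $\Pr[v\in M]\ge 1-\eps$ for $\gamma=\Theta(\eps/\log n)$.

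Finally I read $d_U$ off the laminar family, weighting the scale-$\ell$ clusters by $\Theta(\Delta_\ell)$. \emph{Upper bound}: fix $u\in M$ and $v\in V$ with $d_G^{(h)}(u,v)=D$, and let $\ell$ be the scale with $\Delta_\ell=\Theta(D/\gamma)$. Every vertex on the $h$-hop $u$--$v$ path has prefix weight $\le D\le\gamma\Delta_\ell$ and at most $h\le\eta_\ell$ hops from $u$, so the whole path lies in $B_{G}^{(\eta_\ell)}(u,\gamma\Delta_\ell)$; since $u$ is $\gamma$-padded at scale $\ell$, the path and in particular $v$ lie in $u$'s scale-$\ell$ cluster, so $d_U(u,v)=O(\Delta_\ell)=O(D/\gamma)=O(k)\cdot d_G^{(h)}(u,v)$, and this is what forces $\eta_\ell\ge h$ at every scale. \emph{Domination}: if $u,v$ share a scale-$\ell$ cluster, then by the strong hop-bounded diameter of clusters there is a $u$--$v$ path in $G$ of weight $O(\Delta_\ell)$ with at most $\beta h$ hops, where $\beta h$ is the hop budget realizing the strong diameter; after shrinking cluster radii by a constant this weight is $\le\Delta_\ell=d_U(u,v)$, giving $d_G^{(\beta h)}(u,v)\le d_U(u,v)$, i.e. the ultrametric dominates the $\beta h$-hop metric.

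The crux --- and the place where the extra $\log n$ in the hop-stretch enters --- is step (ii): building, at every scale, a random partition whose clusters have \emph{strong} hop-bounded diameter (a low-hop path \emph{inside} the cluster between cluster-mates, not merely a bound on $d_G$) while preserving the Ramsey padding guarantee. The difficulty is that a scale-$\ell$ cluster is carved out of its parent, so a bounded-weight path inside the parent may pass through vertices that were reassigned at level $\ell$ and must be re-routed; each of the $O(\log n)$ levels of carving can cost $O(k)\cdot h$ extra hops of re-routing (the $O(k)$ because reconnection takes place at the padding radius $\Delta_m/k$), and with a careful choice of the per-scale radii these costs accumulate additively rather than multiplicatively, yielding strong hop-diameter $O(k\log n)\cdot h$ and hence $\beta=O(k\log n)$. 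Controlling this accumulation via the hop-additive triangle inequality $d_G^{(a+b)}(u,w)\le d_G^{(a)}(u,v)+d_G^{(b)}(v,w)$, and checking that recursing inside a cluster does not further inflate the hop count, is the main technical obstacle; once the hop-bounded strong-diameter partition is in hand, the padding bound, the $1-\eps$ tail bound, and the passage from the laminar family to a dominating ultrametric are all routine.
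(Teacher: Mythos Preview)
Your proposal takes a genuinely different route from the paper --- you aim for a direct randomized CKR/Mendel--Naor partition, whereas the paper proves a \emph{deterministic} distributional lemma (given any $(\ge 1)$-measure $\mu$, build a single ultrametric with $\mu(M)\ge\mu(V)^{1-1/k}$) and then invokes the minimax theorem to obtain the distribution. The second regime $\Pr[v\in M]\ge 1-\eps$ is derived in the paper simply by plugging $k=\Theta(\log n/\eps)$ into the first, not by a separate union bound; your ``alternatively, union bound'' would only recover HHZ21's $O(\log^2 n/\eps)$ distortion, not the improved $O(\log n/\eps)$.

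More importantly, there is a real gap in your padding analysis. The CKR/Mendel--Naor per-level padding bound and the telescoping across levels both rest on the triangle inequality: if the first center $x$ has $d(x,v)\le R-t$ then $B(v,t)\subseteq B(x,R)$. With a \emph{fixed} hop budget $\eta_\ell$ this fails --- from $d^{(\eta_\ell)}(x,v)\le R-t$ and $d^{(h)}(v,z)\le t$ you only get $d^{(\eta_\ell+h)}(x,z)\le R$, not $d^{(\eta_\ell)}(x,z)\le R$, so $z$ need not lie in $x$'s cluster. The paper's deterministic \texttt{Create-Cluster} procedure resolves this by growing a family of nested balls $A_j=B^{(i\cdot h'+j\cdot h)}(v,\,2^{i-3}+j\rho_i)$ whose hop budget and radius increase \emph{together} by $h$ and $\rho_i$ per ring, so that the containment $B^{(h)}(u,\rho_i)\subseteq A_{j+1}$ whenever $u\in A_j$ goes through exactly. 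There are $O(k)$ rings per level and $O(\log n)$ levels, and the inductive telescoping requires the outer ball at level $i{-}1$ to match the inner ball at level $i$; this forces the hop budget at level $i$ to be $i\cdot h'=O(ik)\cdot h$, which is the true source of the $O(k\log n)$ hop-stretch --- not ``re-routing'' inside carved clusters. A truncated-exponential random radius does not by itself supply this coupling of radius with hop count, and without it neither the single-level padding bound nor the cross-level telescoping (which needs $B^{(\eta_\ell^{\mathrm{out}})}(v,r_\ell^{\mathrm{out}})\subseteq B^{(\eta_{\ell-1}^{\mathrm{in}})}(v,r_{\ell-1}^{\mathrm{in}})$) is justified.
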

 \Cref{thm:UltrametricRamsey}  is optimized for the distortion parameter
 ($O(k)/O(\frac{\log n}{\eps})$).
Indeed, ignoring the hop-distortion (e.g. setting $\red{h=n}$), the tradeoff
in \Cref{thm:UltrametricRamsey} between distortion and inclusion
probability is asymptotically tight (see \cite{BBM06,FL21}).
However, it is yet unclear what the best possible hop-stretch obtainable with
asymptotically optimal distortion-inclusion probability trade-off is.
Interestingly, in \Cref{thm:UltrametricRamseyAlt} we show that by increasing
the distortion by a $\log\log n$ factor, we can obtain a sub-logarithmic
hop-stretch (though not with a constant inclusion probability). Further, in
this version we can also drop the assumption that the aspect ratio
is polynomial.

\begin{restatable}[Hop-Constrained Ramsey Embedding with small hop-stretch and
arbitrary aspect-ratio]{theorem}{RamseyTypeAlt}
	\label{thm:UltrametricRamseyAlt}
		Consider an $n$-vertex graph $G=(V,E,w)$, and parameters
		$k,\red{h}\in [n]$.
	Then there is a distribution $\mathcal{D}$ over dominating ultrametrics
	with $V$ as leaves, such that every~$U\in\supp(\mathcal{D})$ has Ramsey
	hop-distortion $(O(k\cdot\log\log n),M,\red{O(k\cdot\log\log
	n)},\red{h})$, where $M\subseteq V$ is a random variable, and $\forall
	v\in V$, $\Pr_{U\sim\mathcal{D}}[v\in M]\ge \Omega(n^{-\frac1k})$
\\In addition, for every $\eps\in (0,1)$, there is a distribution $\mathcal{D}$
as above such that every $U\in\supp(\mathcal{D})$ has Ramsey hop-distortion
$(O(\frac{\log n\cdot\log\log n}{\eps}),M,\red{O(\frac{\log n\cdot\log\log
n}{\eps})},\red{h})$, and $\forall v\in V$, $\Pr[v\in M]\ge 1-\eps$.
\end{restatable}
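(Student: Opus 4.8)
The plan is to reduce the problem to \Cref{thm:UltrametricRamsey}, which already gives Ramsey hop-distortion $(O(k),M,\red{O(k\log n)},\red{h})$ but requires polynomial aspect ratio and has hop-stretch $\red{O(k\log n)}$ rather than $\red{O(k\log\log n)}$. The key observation is that both defects stem from the same source: the embedding of \Cref{thm:UltrametricRamsey} works scale-by-scale over $\log\Phi$ distance scales (where $\Phi$ is the aspect ratio), and on each scale the hop budget gets spread out, accumulating a $\log\Phi$ factor; with polynomial aspect ratio this is $O(\log n)$. The fix is a standard \emph{scale bucketing / tree-of-trees} trick: partition the $\log\Phi$ scales into groups of $O(\log\log n)$ consecutive scales (so that each group has its own aspect ratio bounded by $2^{O(\log\log n)}=\polylog(n)$), apply a truncated version of \Cref{thm:UltrametricRamsey} \emph{independently} within each group to get a local ultrametric with hop-stretch $\red{O(k\log\log n)}$, and then glue these local ultrametrics into one global ultrametric along a coarse ``skeleton'' that organizes the groups.

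First I would handle the unbounded aspect ratio. For a fixed hop bound $\red{h}$ and a fixed pair $u,v$, only distance scales between $\min_e w(e)$ and $\red{h}\cdot\max_e w(e)$ along relevant paths matter, but globally the number of \emph{nonempty} scales can still be $\Theta(n)$; the trick is that an ultrametric naturally supports an infinite range of scales, and what we really need to control is how many scales simultaneously ``see'' a given vertex pair with finite $\red{h}$-hop distance. I would group scales $[2^i,2^{i+1})$ into consecutive blocks $B_1,B_2,\dots$ of length $L:=\Theta(\log\log n)$. Within each block, contract everything below the block's bottom scale and truncate everything above its top scale; this produces a graph (or a collection of ``clusters'') of aspect ratio $2^L=\polylog(n)$, to which \Cref{thm:UltrametricRamsey} (or rather its scale-limited internal construction) applies directly, yielding for each block an ultrametric with distortion $O(k)$ and hop-stretch $\red{O(k\cdot L)}=\red{O(k\log\log n)}$, with each vertex retained with probability $\Omega(n^{-1/k})$ (resp. $1-\eps$, after rescaling $\eps$ by a constant).

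Next I would assemble the global ultrametric. Nesting the block-ultrametrics — hanging each cluster's fine ultrametric below the corresponding leaf of the next coarser block's ultrametric — gives a single dominating ultrametric on $V$. The distortion is still $O(k)$: for $u\in M, v\in V$, the scale at which $d_G\rhop{h}(u,v)$ ``lives'' falls into a unique block $B_j$, and within that block the local guarantee gives $d_U(u,v)\le O(k)\cdot d_G\rhop{h}(u,v)$; the coarser blocks only contribute to separations at strictly larger scales and so cannot hurt the upper bound beyond a constant factor (this is where I'd need the ultrametric's strong triangle inequality and a geometric-sum argument). For the lower bound, $d_U(u,v)\ge d_G^{\red{(\beta h)}}(u,v)$ with $\red{\beta}=\red{O(k\log\log n)}$: a path realizing $d_U$ crosses at most $O(L)=O(\log\log n)$ scales inside its home block plus the block structure above, and the hop budget within the relevant block is $\red{O(k\log\log n)\cdot h}$ by the per-block guarantee — the cross-block contributions are at geometrically larger weights so they only add a constant factor to $\red{\beta}$. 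The membership event $v\in M$ is the intersection over the (at most) relevant blocks, but since a fixed vertex $v$ together with a target only interacts with $O(1)$ blocks at its ``own'' scale level — or more carefully, we take $M$ to be the set of $v$ surviving in the block containing $v$'s characteristic scale, and union-bound/independence gives $\Pr[v\in M]\ge\Omega(n^{-1/k})$ (resp. $1-\eps$ with the constant absorbed).

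The main obstacle, I expect, is the bookkeeping in the lower bound when a shortest path ``escapes'' its home block: a $u$–$v$ walk might dip down into finer scales and back up, and one must argue the total hop cost charged across blocks telescopes to $\red{O(k\log\log n)\cdot h}$ rather than $\red{O(k\log\log n)\cdot(\#\text{blocks})\cdot h}$. The resolution is that the per-block constructions are applied to edge-\emph{truncated} graphs, so a path in block $B_j$ literally cannot use edges heavier than $B_j$'s top scale; combined with the fact that $d_U(u,v)$ is determined by a \emph{single} block (the coarsest one separating $u$ from $v$), only that block's hop budget is invoked, and the finer blocks are irrelevant to that particular distance. Making this ``single responsible block'' principle precise — and checking it survives the random truncations so that it holds for \emph{every} $U$ in the support, not just in expectation — is the technical heart of the argument.
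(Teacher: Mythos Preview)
Your bucketing approach is genuinely different from the paper's, and it has a real gap in the survival-probability step.

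The core issue: a vertex $v\in M$ must satisfy $d_U(u,v)\le t\cdot d_G\rhop{h}(u,v)$ for \emph{every} $u\in V$, and different $u$ place $d_G\rhop{h}(u,v)$ in different scale blocks. So $v$ has no single ``characteristic scale'' or ``home block''; it must be padded in \emph{all} blocks simultaneously. Your sentence ``a fixed vertex $v$ together with a target only interacts with $O(1)$ blocks'' is simply false for the Ramsey guarantee. If you instead take $M$ to be the intersection of the per-block survivors, independence gives $\Pr[v\in M]\ge (\Omega(n^{-1/k}))^{b}$ with $b=\Theta(\log n/\log\log n)$ blocks, which is far too small. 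If you try to nest the blocks (run block $j$'s process on the clusters output by block $j{+}1$), the telescoping in \Cref{lem:frac-survival} breaks at every block boundary: the inductive bound at the bottom of block $j{+}1$ involves a ball with hop budget $\Theta(L\cdot kh)$, but the top of block $j$ restarts with a smaller hop budget, so the ``max cluster at the next level $\subseteq$ min cluster at the current level'' containment fails and the $1/k$-powers no longer cancel. Either way you lose a factor of $b$ in the exponent. (A related red flag: you claim distortion $O(k)$, but the theorem only claims $O(k\log\log n)$; your analysis is too optimistic somewhere.)

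The paper does \emph{not} bucket scales. It keeps the same recursive ball-growing skeleton as \Cref{thm:UltrametricRamsey} but replaces the per-scale cluster creation by a procedure (\texttt{Create-Cluster-alt}, \Cref{alg:CreateClusterAlternative}) that uses a \emph{fixed} hop budget $O(kL)\cdot h$ with $L=\Theta(\log\log\mu(V))$ at every scale. The price is a two-step region-growing: first pick an annulus index $a\in[0,L{-}1]$ so that $\mu(A_{a,0})\ge \mu(A_{a+1,0})^2/\mu(H)$ (this is where the $\log\log$ enters), then pick $j$ inside that annulus by the usual $k$-step growth argument. This yields the self-contained per-cluster bound $\mu(\hat X)\ge \mu_M(X)^{1-1/k}$ (\Cref{lem:frac-survival-alt}) with no reference to the parent ball, so nothing accumulates across scales and the argument is oblivious to the aspect ratio. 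The $\log\log n$ shows up multiplicatively in both distortion and hop-stretch because the padding radius shrinks by the extra $L$ factor.
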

Compared to \cite{HHZ21}, \cite{HHZ21} provides results for inclusion
probability in the $1-\eps$ parameter regime, while our
\Cref{thm:UltrametricRamsey,thm:UltrametricRamseyAlt} hold in a much wider
spectrum of parameters.
Fixing inclusion probability $1-\eps$, in \Cref{thm:UltrametricRamsey} we
improved the distortion by a quadratic factor in the dependence on $n$, while
keeping the hop-stretch intact.
In \Cref{thm:UltrametricRamseyAlt} we improve both distortion and hop-stretch
by almost a quadratic factor (in the dependence on $n$).

\begin{figure}[t]
	\centering{\includegraphics[scale=.85,page=2]{fig/Ramsey}}
	\caption{\label{fig:ClanDef}\small An illustration of a clan embedding of
	the unweighted $3\times6$ grid into an ultrametric with hop distortion
	$(t=4,\red{\beta=5},\red{h=1})$.
	The copies of each vertex $v$ are denoted by a subset of $v_1,v_2,v_3$.
	The vertex representing the copy chosen to be the chief has a
	red underline.
	The clan embedding has hop stretch $\beta=5$, indeed
	$\forall u,v\in V,~~d^{\red{(5\cdot 1)}}_G(u,v)\le \min_{u'\in f(u),v'\in
	f(v)}d_U(u',v')$.
	Note that $5$ is tight, as
	$d^{\red{(5)}}_G(a,l)=5=d_U(a_1,l_1)<d^{\red{(4)}}_G(a,l)=\infty$.
	The clan embedding has distortion $t=4$, indeed
	$\forall u,v\in V,~~\min_{u'\in f(u)}d_U(u',\chi(v))\le 4\cdot
	d\rhop{1}_G(u,v)$. Note that $4$ is tight, as
	$\min_{j'\in f(j)}d_U(j',\chi(g))=d_U(j_1,g_2)=4=4\cdot
	d\rhop{1}_G(j,g)$.\\
	The clan embedding has hop-path-distortion $(4,\red{h=2})$. Consider the
	$2$-respecting path $(e,h,k)$ in $G$ of weight $2$. Then the copies
	$e_2,h_2,k_1$ in $U$ correspond to a ``path'' of weight $8$. On the other
	hand the path $(e,h,k,n)$ is not $2$-respecting, and indeed there is no
	corresponding path in $U$ of finite weight.}
\end{figure}

\sloppy Next, we construct a hop-constrained clan embedding $(f,\chi)$.
Recall that clan
embedding is a one-to-many embedding $f:V(G)\rightarrow 2^U$ (where the
vertices in $f(v)$ are called copies of $v$), and a choice of chiefs
$\chi:V(G)\rightarrow U$ (where $\forall v,~\chi(v)\in f(v)$).
A clan embedding $(f,\chi)$  is said to have hop-distortion
$(t,\red{\beta},\red{h})$ if it is dominating, that is, the distance in $U$ between
every two copies is at least the $\red{\beta\cdot h}$ hop-distance, and the
distance from some copy to the chief is distorted by at most a $t$ factor
(w.r.t. the $\red{h}$ hop distance). Formally:
$$\forall u,v\in V,~~d^{\red{(\beta\cdot h)}}_G(u,v)\le \min_{u'\in f(u),v'\in
f(v)}d_U(u',v')\le \min_{u'\in f(u)}d_U(u',\chi(v))\le t\cdot
d\rhop{h}_G(u,v)~.$$
See \Cref{def:clan} for a formal definition, and \Cref{fig:ClanDef} for an
illustration.
When constructing a clan embedding, the goal is to minimize the trade-off
between the hop-distortion and the number of copies each vertex has.
In \Cref{thm:ClanHopUltrametric}, assuming polynomial aspect ratio, for
$k\ge 1$ we construct a distribution over clan embeddings with hop distortion
$\left(O(k),\red{O(k\cdot\log n)},\red{h}\right)$, while every vertex has only
$ O(n^{\frac1k})$ copies in expectation.
Alternatively, it is also possible to get only $1+\eps$ copies in expectation
(that is, most vertices will have a single copy), while all the clan
embeddings in the support will have hop-distortion $\left(O(\frac{\log
n}{\epsilon}),\red{O(\frac{\log^2 n}{\epsilon})},\red{h}\right)$.

In addition to the hop-distortion guarantee, in
\Cref{thm:ClanHopUltrametric,thm:ClanHopUltrametricAlt} we also provide
hop-path-distortion guarantee. Roughly speaking,
a path $P=(v_0,v_1,\dots,v_m)$ is $\red{h}$-respecting if the induced graph
$G[P]$ (note that $P$ might not be simple) does not contain low cost paths
with more than $\red{h}$-hops (see \Cref{def:hRespecting}).
A one-to-many embedding $f$ has hop-path-distortion $(t,\red{h})$ if for every
$\red{h}$-respecting path $P=(v_0,v_1,\dots,v_m)$, one can choose copies
$v_i'\in f(v_i)$ such that the path $P'=(v_0',v_1',\dots,v_m')$ in $U$ has
total weight at most $t\cdot w_G(P)$.
Path-distortion was originally introduced by Bartal and Mendel
\cite{BM04multi} in the context of multi-embeddings of general graphs (without
$\red{h}$-respecting requirement), and was shown to be useful for
approximation algorithms (e.g. group Steiner tree).
See \Cref{def:HopPathDist} for a formal definition.
\begin{restatable}[Clan embedding into
ultrametric]{theorem}{ClanHopUltrametric}
	\label{thm:ClanHopUltrametric}
	Consider an $n$-vertex graph $G=(V,E,w)$ with polynomial aspect ratio, and
	parameters $k,\red{h}\in [n]$.
	Then there is a distribution $\mathcal{D}$ over clan embeddings $(f,\chi)$
	into ultrametrics with hop-distortion $\left(O(k),\red{O(k\cdot\log
	n)},\red{h}\right)$, hop-path-distortion $\left(\redno{O(k\cdot\log
	n)},\red{h}\right)$, and such that for every vertex $v\in V$,
	$\mathbb{E}_{f\sim\mathcal{D}}[|f(v)|]\le O(n^{\frac1k})$.

	In addition, for every $\eps\in (0,1)$, there is a distribution
	$\mathcal{D}$ as above such that every $U\in\supp(\mathcal{D})$ has
	hop-distortion $\left(O(\frac{\log n}{\epsilon}),\red{O(\frac{\log^2
	n}{\epsilon})},\red{h}\right)$, hop-path-distortion
	$\left(\redno{O(\frac{\log^2 n}{\epsilon})},\red{h}\right)$, and such that
	for every vertex $v\in V$,
	$\mathbb{E}_{f\sim\mathcal{D}}[|f(v)|]\le1+\eps$.
\end{restatable}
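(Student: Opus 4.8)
The plan is to re-run the recursive ultrametric construction underlying \Cref{thm:UltrametricRamsey}, changing exactly one thing: wherever that proof \emph{deletes} a vertex from $M$ because its padding ball is cut at some scale, we instead \emph{duplicate} that vertex into each cluster the ball reaches. Recall the shape of the Ramsey construction: one builds a hierarchy of random low-hop-diameter clusterings at geometrically decreasing scales $\Delta_0>\Delta_1>\cdots$, each with its own hop budget; the ultrametric $U$ has one node per cluster with label $\Theta(\Delta_i)$ at level $i$; and $v$ is kept in $M$ iff at every level $i$ the padding ball $B_G^{(h_i)}(v,\rho_i)$, where $\rho_i=\Theta(\Delta_i/t)$, is not split by the level-$i$ clustering of $v$'s current cluster. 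The facts I will reuse from that proof are: (i) $d_U$ dominates $d_G^{(\beta h)}$ and each root-to-leaf cluster path carries a $\le\beta h$-hop witness in $G$ (this gives the hop-path-distortion); (ii) if $v\in M$ and $u,v$ are first separated at level $i$, then $d_U(u,v)=\Theta(\Delta_i)\le t\cdot d_G^{(h)}(u,v)$; (iii) a fixed $v$ is split at level $i$ with probability at most $\rho$, where $\rho=O(\eps/\log n)$ in the second regime and $\rho=n^{1/(k\log n)}-1$ in the first, so that over the $O(\log n)$ scales (here the polynomial aspect ratio is used) survival has probability $\ge 1-\eps$ (resp.\ $\ge\Omega(n^{-1/k})$).

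For the clan embedding we keep the same scales, hop budgets and radii $\rho_i$ and only change the propagation of copies: descending the hierarchy, a copy of $v$ sitting in a level-$(i-1)$ cluster $C$ spawns, at level $i$, one child copy in \emph{every} sub-cluster of $C$ that meets $B_G^{(h_i)}(v,\rho_i)$ --- always at least the sub-cluster that contains $v$. We define $\chi(v)$ to be the leaf reached by always following that ``home'' sub-cluster. Domination and the hop-path property are inherited almost verbatim; the one subtlety is that a copy of $v$ placed in a cluster $C'$ with $v\notin C'$ still lies within $\rho_i=O(\Delta_i/t)$ of $C'$, so $d_G^{(\beta h)}(v,x)=O(\Delta_i)$ for every $x\in C'$ and domination survives after inflating the cluster labels by the corresponding constant. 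For the chief-closeness inequality, fix $u,v$ and let $i^\ast$ be the largest level with $\Delta_{i^\ast}\ge\Theta(t)\cdot d_G^{(h)}(u,v)$; I claim the copy of $u$ that keeps following $\chi(v)$'s cluster stays co-clustered with $\chi(v)$ through level $i^\ast$. Indeed, while the current cluster has scale $\Delta_i\ge\Theta(t)\cdot d_G^{(h)}(u,v)$ (hence $d_G^{(h)}(u,v)\le\rho_i$), the sub-cluster of $\chi(v)$ contains $v$ and so --- once the hop-budget accounting of \Cref{thm:UltrametricRamsey} is used to pass from budget $h$ to budget $h_i$ --- meets $B_G^{(h_i)}(u,\rho_i)$, whence the spawning rule keeps a copy $u'$ of $u$ in it. Then $d_U(u',\chi(v))\le\Theta(\Delta_{i^\ast})=O(t)\cdot d_G^{(h)}(u,v)$, which is the required inequality with $t=O(k)$ (resp.\ $O(\log n/\eps)$).

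It remains to bound $\mathbb{E}[|f(v)|]$. The ratio of the number of copies of $v$ at level $i$ to the number at level $i-1$ is $1$ plus the number of \emph{extra} sub-clusters that $B_G^{(h_i)}(v,\rho_i)$ meets, and the estimate used in \Cref{thm:UltrametricRamsey} to bound the split probability by $\rho$ in fact bounds $\mathbb{E}[\#\text{extra sub-clusters}]$ by the same $\rho$ --- this uses that a CKR/Bartal-type clustering with $\rho_i\ll\Delta_i$ does not, in expectation, shatter a small ball. Multiplying over the $O(\log n)$ levels yields $\mathbb{E}[|f(v)|]\le(1+O(\eps/\log n))^{O(\log n)}\le 1+O(\eps)$ in the second regime, and $\le(n^{1/(k\log n)})^{O(\log n)}=n^{O(1/k)}$ in the first; rescaling $\eps$ and $k$ by constants gives the stated $1+\eps$ and $O(n^{1/k})$.

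The step I expect to be the genuine obstacle is the chief-closeness argument in the presence of branching: one must certify that the copy of $u$ tracking $\chi(v)$ is never pruned prematurely, while the spawning test is phrased with the per-level budget $h_i$ and $d_G^{(h)}(u,v)$ is measured with the top budget $h$, and $d_G^{(h')}$ is neither monotone nor metric in any naive way. Turning ``$d_G^{(h)}(u,v)\le\rho_i$'' into ``$\chi(v)$'s sub-cluster meets $B_G^{(h_i)}(u,\rho_i)$'' for all $i\le i^\ast$ is exactly the hop-budget bookkeeping already engineered in the proof of \Cref{thm:UltrametricRamsey}; with that in hand the duplication variant needs only cosmetic changes, which is why all parameters match those of \Cref{thm:UltrametricRamsey}.
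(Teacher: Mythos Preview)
Your high-level idea --- run the same hierarchical construction as in \Cref{thm:UltrametricRamsey} but \emph{duplicate} boundary vertices into all clusters their padding ball meets instead of \emph{deleting} them from $M$ --- is exactly the paper's plan. But the analysis you propose does not match the Ramsey analysis you are trying to reuse, and this is a genuine gap.

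You claim that the proof of \Cref{thm:UltrametricRamsey} establishes a uniform per-level splitting probability $\rho$ (namely $\rho=n^{1/(k\log n)}-1\approx 1/k$ in the first regime), and then bound $\mathbb{E}[|f(v)|]\le(1+\rho)^{O(\log n)}$. The paper's Ramsey construction does \emph{not} work this way. It is a deterministic ball-growing algorithm against a $(\ge1)$-measure $\mu$, and the survival bound (\Cref{lem:frac-survival}) is a \emph{telescoping} inequality of the form $\mu(\hat{X})\ge\mu_{M(X)}(X)/\mu(B^{(ih')}(r(X),2^{i-2}))^{1/k}$, where the denominators at successive levels cancel against each other. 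There is no uniform per-level constant; the ratio at a single level can be as large as $n^{1/k}$. If you instead try to get a genuine per-level split probability $O(1/k)$ from a CKR/Bartal random clustering with diameter $\Delta_i$ and padding radius $\rho_i$, you need $\rho_i/\Delta_i=O(1/(k\log n))$, which forces distortion $O(k\log n)$, not the claimed $O(k)$. So the multiplicative-per-level accounting cannot recover the stated parameters.

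The paper's proof of \Cref{thm:ClanHopUltrametric} mirrors the Ramsey proof exactly at this point: a deterministic distributional lemma (\Cref{lem:clanHopUltraMeasure}) builds a \emph{cover} (not a partition) via the same ball-growing, and proves the telescoping bound $\mathbb{E}_{v\sim\mu}[|f(v)|]\le\mu(V)^{1+1/k}$ by the same induction with the inequality flipped (\Cref{lem:ClanMesureBound}); the distribution $\mathcal{D}$ then comes from the minimax theorem, not from randomness in the clustering. Separately, hop-path-distortion is not the statement ``root-to-leaf paths have few hops'' that you describe; it is a \emph{weight} bound on the image of an $h$-respecting path (\Cref{def:HopPathDist}), and the paper proves it by a distinct inductive argument (\Cref{lem:ClanPathDistortion}) that controls the number of ``split edges'' when the path is broken across the first cluster and its complement --- this is why the \texttt{Clan-create-cluster} procedure additionally enforces that either $\mu(\underline{A})>\tfrac13\mu(Y)$ or $\mu(\overline{A})\le\tfrac23\mu(Y)$.
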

In \Cref{thm:ClanHopUltrametric} we optimize the trade-off between the
distortion $t$ and the expected clan size
$\mathbb{E}_{f\sim\mathcal{D}}[|f(v)|]$ (in similar fashion to
\Cref{thm:UltrametricRamsey}). Indeed, this trade-off in
\Cref{thm:ClanHopUltrametric} is asymptotically optimal \cite{FL21}.
It is yet unclear what the best possible hop-stretch obtainable with
asymptotically optimal distortion-expected clan size trade-off is.
In a similar fashion to \Cref{thm:UltrametricRamseyAlt}, in
\Cref{thm:ClanHopUltrametricAlt}, we show that by increasing the distortion by
a $\log\log n$ factor, we can obtain a sub-logarithmic hop-stretch (while
keeping the expected clan sizes the same). Further, in this version we can
also drop the assumption that the aspect ratio is polynomial.
\begin{restatable}[Clan embedding into ultrametric with small hop-stretch and
arbitrary aspect-ratio]{theorem}{ClanHopUltrametricAlt}
	\label{thm:ClanHopUltrametricAlt}
	Consider an $n$-vertex graph $G=(V,E,w)$, and parameters
	$k,\red{h}\in [n]$.
	Then there is a distribution $\mathcal{D}$ over clan embeddings $(f,\chi)$
	into ultrametrics with hop-distortion $\left(O(k\cdot\log\log
	n),\red{O(k\cdot\log\log n)},\red{h}\right)$, hop-path-distortion
	$\left(O(k\cdot\log n\cdot\log\log n),\red{h}\right)$, and such that for
	every point $v\in V$, $\mathbb{E}_{f\sim\mathcal{D}}[|f(v)|]\le
	O(n^{\frac1k})$.

	In addition, for every $\eps\in (0,1)$, there is a distribution
	$\mathcal{D}$ as above such that every $U\in\supp(\mathcal{D})$ has
	hop-distortion $\left(O(\frac{\log n\cdot\log\log
	n}{\epsilon}),\red{O(\frac{\log n\cdot\log\log
	n}{\epsilon})},\red{h}\right)$, hop-path-distortion
	$\left(\redno{O(\frac{\log^2 n\cdot\log \log
	n}{\epsilon})},\red{h}\right)$, and such that for every vertex $v\in V$,
	$\mathbb{E}_{f\sim\mathcal{D}}[|f(v)|]\le1+\eps$.
\end{restatable}

Our final embedding result is a one-to-many ``subgraph preserving'' embedding.
This is a strengthening of the path-distortion guarantee from
\Cref{thm:ClanHopUltrametric,thm:ClanHopUltrametricAlt}. The path-distortion
guarantee provides a bound only for $\red{h}$-respecting paths, while in many
applications we will not have such a promise.
A \emph{path-tree one-to-many embedding} is a map $f:V\rightarrow 2^T$, where
each vertex is sent to a subset of vertices of a tree $T$, and all the tree
vertices are copies of some graph vertices $f(V)=T$ (here for $A\subseteq V$,
$f(A)=\cup_{v\in A}f(v)$).
In addition, each edge $e=\{u',v'\}$, where $u'\in f(u),v'\in f(v)$, is
associated with a $u-v$ path $P^T_e$ in $G$ of weight at most $w_T(e)$. Given
a path $P$ in $T$, we can obtain an associated path $P^T$ in $G$ by simply
concatenating the associated paths of all the edges in $P$.
We say that $f$ has hop-bound $\red{\beta}$ if the length (in hops) of every
such concatenation of associated paths $P^T$ has at most $\red{\beta}$ hops.
Note that $T$ might contain an edge of weight $\infty$.
Compared with the definition of clan embedding, here we embed into a tree (and
not an ultrametric), there are no chiefs, and most importantly, each edge
in the tree is associated with a path in the graph. See
\Cref{def:OTM-PathTreeEmbedding} and \Cref{fig:PathTree} for a formal
definition and an illustration.

In \Cref{thm:Subgraph} below, we obtain a path-tree one-to-many embedding $f$
with hop-bound $\red{O(\log^3n)\cdot h}$ (for a parameter $\red{h}$).
The important guarantee is the following: given a subgraph $H$ of $G$, there
is a subgraph $H'$ of $T$ of weight at most $w_T(H')\le O(\log n)\cdot w(H)$,
and such that for every pair of vertices $u,v$ at hop-distance at most
$\red{h}$ (w.r.t. $H$), $H'$ contains two copies $u'\in f(u)$ and $v'\in f(v)$
in the same connected component.
For compression, in the path-distortion guarantee we require (morally) that
the entire subgraph $H$ has hop-diameter $\red{h}$, and get a single subgraph
$H'$ of $T$ containing a copy of each vertex in $H$. In contrast, here there
is no limitation on $H$, and $H'$ provides a ``localized'' guarantee where
near-by vertices (w.r.t. hops) should have copies in the same connected
component in $H'$. Note that this is a much stronger guarantee that can be
used for problems where the solution does not necessarily have bounded
hop-diameter (for example hop-constrained group Steiner forest).

\begin{restatable}[Subgraph Preserving
Embedding]{theorem}{SubgraphPreservingEmbedding}\label{thm:Subgraph}
	Consider an $n$-point graph $G=(V,E,w)$ with polynomial aspect ratio,
	parameters $\red{h},k\in\N$ and vertex $r\in V$.
	Then there is a path-tree one-to-many embedding $f$ of $G$ into a tree $T$
	with hop bound $\red{O(k\cdot\log^3n)\cdot h}$, $|f(V)|=(2n)^{1+\frac1k}$,
	$|f(r)|=1$, and such that for every sub-graph $H$, there is a subgraph
	$H'$ of $T$ where for every $u,v\in H$ with $\hop_H(u,v)\le\red{h}$, $H'$
	contains a pair of copies $u'\in f(u)$ and $v'\in f(v)$ in the same
	connected component. Further, $w_T(H')\le O(k\cdot\log n)\cdot w(H)$.
\end{restatable}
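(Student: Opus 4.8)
\medskip
\noindent\textbf{Proof plan.}
The plan is to produce the host tree from a single hop-constrained clan embedding into an ultrametric, rooted at $r$, and then for each subgraph $H$ to read the witness $H'$ directly off the laminar structure of that ultrametric. First I would normalize: since $G$ has polynomial aspect ratio we may rescale so that all pairwise distances lie in $[1,n^{c}]$, so every ultrametric in play has $L=O(\log n)$ non-trivial levels, one per scale $\Delta_i=2^{i}$ with $0\le i\le L$; this $O(\log n)$ is the ultimate source of the weight blow-up. Next, I would invoke \Cref{thm:ClanHopUltrametric} with $k=2$ to get a distribution over clan embeddings $(f,\chi)$ into ultrametrics with hop-distortion $(O(1),\red{O(\log n)},\red{h})$, hop-path-distortion $(O(\log n),\red{h})$, and expected clan size $O(\sqrt n)$, and modify it to the \emph{rooted} variant in which $r$ is never duplicated (pin $r$ to a fixed side of every cut), so $|f(r)|=1$ and $\chi(r)$ is that copy. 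Fixing one embedding in the support with $|f(V)|=O(n^{1.5})$ (Markov on the clan size), its ultrametric is $T$. For each tree-edge of $T$ joining a scale-$\Delta_i$ cluster to a scale-$\Delta_{i-1}$ subcluster I attach the witnessing $G$-path guaranteed by the hop-path-distortion bound; each such path has $\le O(\log n)\cdot\red{h}$ hops and weight $\le\Delta_i$, which makes $f$ a path-tree one-to-many embedding (in the sense of \Cref{def:OTM-PathTreeEmbedding}). An associated $G$-path for a general $T$-path is a concatenation of $O(L)=O(\log n)$ such witnesses, and the passage to hop-shallow pieces of $H$ in the next step costs one further $O(\log n)$ factor, so the overall hop bound reaches $\red{O(\log^{3}n)\cdot h}$.

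Given a subgraph $H$, I would treat each connected component separately; fix one, root it at an arbitrary $r_P$, and run an unweighted BFS from $r_P$. Cut the BFS layering into overlapping blocks of $\red{h}$ consecutive layers (block $P_j$ = layers $[(j-1)\red{h}/2,(j+1)\red{h}/2)$, taken as a subgraph of $H$): every pair $u,v$ of the component with $\hop_H(u,v)\le\red{h}$ lies in a common block, and within a block any two vertices are joined in $H$ by a path of $\le 2\red{h}$ hops. For a block $P_j$, let $H'_j\subseteq T$ be the union over $v\in V(P_j)$ of the unique $T$-path from $\chi(r_P)$ to the copy $v'\in f(v)$ minimizing $d_T(v',\chi(r_P))$; the clan guarantee gives $d_T(v',\chi(r_P))\le O(1)\cdot d_G(r_P,v)$. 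Since every $H'_j$ contains $\chi(r_P)$, all chosen copies of $V(P_j)$ sit in a single connected component of $H'_j$, so $H':=\bigcup_{\text{components}}\bigcup_j H'_j$ has the required co-location property; composing the $\le 2\red{h}$-hop in-block $H$-paths with the $\le O(\log n)\cdot\red{h}$-hop witnessing $T$-paths re-certifies that the associated $G$-paths stay within $\red{O(\log^{3}n)\cdot h}$ hops.

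It then remains to bound $w_T(H'_j)\le O(\log n)\cdot w(P_j)$, after which $w_T(H')\le\sum_j w_T(H'_j)=O(\log n)\cdot w(H)$ because the blocks overlap only $O(1)$-fold. I would write $w_T(H'_j)=\sum_i \Delta_i\cdot c_i$, where $c_i$ counts the scale-$\Delta_i$ clusters of $T$ meeting the chosen copy set of $P_j$. Any scale-$\Delta_i$ cluster other than the one containing $\chi(r_P)$ is hit only by a copy $v'$ with $d_T(v',\chi(r_P))\ge\Delta_i$, hence by a vertex $v$ with $d_G(r_P,v)\ge\Omega(\Delta_i)$, hence by a sub-path of $P_j$ of weight $\Omega(\Delta_i)$ emanating from $r_P$; traversing $P_j$ by a DFS and charging these clusters to edge-disjoint such sub-paths — exactly the classical argument that the image of a subgraph under an FRT-type tree has weight $O(\log n)$ times its own \cite{FRT04,GKR00} — gives $\Delta_i\cdot c_i=O(w(P_j))$ at each of the $O(\log n)$ scales, so $w_T(H'_j)=O(\log n)\cdot w(P_j)$ as needed.

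The delicate point — and the step I expect to be the main obstacle — is the last one together with the connectivity claim: two copies of the same vertex may live in very different parts of $T$, so chaining witnessing paths naively along $H$ would produce a disconnected $H'$. This is circumvented by routing everything to the \emph{single} root copy of each hop-shallow piece (legitimate precisely because $|f(r_P)|=1$ after rooting, and because the clan bound controls $d_T(v',\chi(r_P))$ by $d_G(r_P,v)$), and then showing that the per-scale charging survives the heavy overlapping of the root-to-$v'$ paths; this is the hop-constrained analogue of the FRT Steiner-tree charging, and making the constants and the $\log$-accounting go through there is the crux. By contrast, the hop bookkeeping of the first two paragraphs, and the reduction from a general $H$ to hop-shallow pieces, are routine.
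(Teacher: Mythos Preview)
Your plan has a real gap in the weight argument. You route every $v\in P_j$ to $\chi(r_P)$, the chief of the \emph{component} root, and invoke ``the clan guarantee gives $d_T(v',\chi(r_P))\le O(1)\cdot d_G(r_P,v)$.'' But the hop-constrained clan bound controls $d_T(v',\chi(r_P))$ only by $d_G^{\red{(h)}}(r_P,v)$, not by the unconstrained shortest-path distance; for $v$ in a block $P_j$ with $j\gg 1$ one has $\hop_H(r_P,v)\approx j\red{h}/2$, and $d_G^{\red{(h)}}(r_P,v)$ may be $\infty$ or, at best, unrelated to $w(P_j)$. Your charging then asks for ``a sub-path of $P_j$ of weight $\Omega(\Delta_i)$ emanating from $r_P$'', yet $r_P\notin P_j$ for $j>0$; the only $H$-paths witnessing large distance to $r_P$ live in the earlier blocks, so you are charging $w_T(H'_j)$ to $w(P_0\cup\cdots\cup P_{j-1})$, and summing over $j$ loses a factor equal to the number of blocks, not $O(\log n)$. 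Switching to a per-block root $r_j$ fixes the hop issue but not the charging: the clan bound relates $T$-distances to $d_G^{\red{(h)}}$, which says nothing about how many distinct scale-$\Delta_i$ clusters the chosen copies of $V(P_j)$ occupy as a function of $w(P_j)$, and the ``FRT charging'' you cite is an \emph{expectation} statement for a random one-to-one embedding that does not transfer to a fixed clan embedding with adversarially chosen copies. A second, independent problem: your BFS blocks are \emph{not} hop-shallow in $H$ --- two vertices in the same layer window can sit in distant BFS subtrees with $\hop_H$ arbitrarily large --- so ``within a block any two vertices are joined in $H$ by a path of $\le 2\red{h}$ hops'' is false in general.

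The paper's proof uses the one tool you set up but never deploy for the weight bound: the \emph{hop-path-distortion} of the clan embedding. Via an Eulerian tour, this immediately yields that any connected $\red{h'}$-respecting subgraph $F$ has a connected witness $F'\subseteq T$ with $w_T(F')\le O(\log n)\cdot w(F)$ (this is \Cref{cor:hBoundedSubgraph}, built with hop parameter $\red{h'}$). To reduce an arbitrary $H$ to such pieces, the paper applies a \emph{sparse cover} (\Cref{lem:SparseCover}) to the unweighted version of $H$ at radius $\red{h}$: the clusters have unweighted radius $O(\red{h}\log n)$ --- so their shortest-path trees are $\red{h'}$-respecting for $\red{h'}=O(\red{h}\log n)$ --- every pair with $\hop_H(u,v)\le\red{h}$ lands in a common cluster, and the total weighted mass of the clusters is $O(w(H))$. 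Taking the union of the per-cluster witnesses gives $w_T(H')\le O(\log n)\cdot w(H)$ with no per-scale accounting, and the passage from $\red{h}$ to $\red{h'}$ is exactly the extra $\log n$ that turns the $O(\log^2 n)\cdot\red{h}$ hop bound of \Cref{cor:hBoundedSubgraph} into the stated $O(\log^3 n)\cdot\red{h}$.
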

In \Cref{thm:SubgraphAlt} we prove an alternative version of
\Cref{thm:Subgraph} where the hop bound is $\red{\tilde{O}(k\cdot\log^2n)\cdot
h}$ and the bound on the weight of $H'$ is only $\tilde{O}(k\cdot\log
n)\cdot w(H)$.
In the application of \Cref{thm:Subgraph} in our paper we will use $k=O(1)$ as
we want to minimize the hop and weight bounds, and the total number of copies
is insignificant (as long as it is polynomial). Nevertheless, one can choose
$k=\log n$ to obtain linear number of copies in total (which might be
desirable for efficient algorithms).

Haeupler \etal \cite{HHZ21} had a similar embedding
to  \Cref{thm:Subgraph},
which they refer to as ``copy tree embedding'' (see theorem 8 in the full
\href{https://arxiv.org/pdf/2011.06112.pdf}{arXiv version}).
Haeupler \etal \cite{HHZ21} had hop bound of $O(\log^3n)$, and the cost of the
corresponding copy of $H$ was at most $O(\log^3n)\cdot w(H)$. Thus we get a
significant polynomial savings in these aspects, which directly translates to
the performance of our approximation algorithms.
An advantage of \cite{HHZ21} is that every vertex is guaranteed to have at
most $O(\log n)$ copies (implying $O(n\log n)$ total copies), while our
\Cref{thm:Subgraph} doesn't have such a guarantee even for $k=\log n$, when
the total number of copies is linear.
While such a guarantee could be useful for future applications, it is
insignificant for the applications in this paper.

\subsubsection{Approximation algorithms}
\begin{table}[p]
	\centering
	\begin{tabular}{|l|l|c|l|}
		\hline
		\textbf{Hop-Constrained Problem} & \textbf{Cost Apx.} & \textbf{Hop
		Apx.} &  \textbf{Reference} \\ \hline
		\multicolumn{4}{|c|}{ \textbf{Application of the clan
		embedding}} \\ \hline
		\multirow{ 3}{*}{H.C. Group Steiner Tree } & $O(\log ^ 4 n\cdot\log
		k)$ & $O(\log ^ 3 n)$ &  \cite{HHZ21}\\\cline{2-4}
		& $O(\log ^ 2 n\cdot\log k)$ & $O(\log ^ 2 n)$ &
		\Cref{thm:GroupSteinerTree} $^{(\checkmark)}$\\\cline{2-4}
		& $\tilde{O}(\log ^ 2 n\cdot\log k)$ & $\tilde{O}(\log n)$ &
		\Cref{thm:GroupSteinerTree}\\ \hline
		\multirow{ 3}{*}{H.C. Online Group Steiner Tree} & $O(\log ^ 5
		n\cdot\log \kappa)$ & $O(\log ^ 3 n)$ &   \cite{HHZ21}\\ \cline{2-4}
		& $O(\log ^ 3 n\cdot\log \kappa)$ & $O(\log ^ 2 n)$ &
		\Cref{thm:OnlineGroupSteinerTree} $^{(\checkmark)}$\\ \cline{2-4}
		& $\tilde{O}(\log ^ 3 n\cdot\log \kappa)$ & $\tilde{O}(\log n)$ &
		\Cref{thm:OnlineGroupSteinerTree}\\ \hline

		\multirow{ 3}{*}{H.C. Group Steiner Forest } & $O( \log ^ 6 n\cdot\log
		k)$ & $O(\log ^ 3 n)$ &  \cite{HHZ21} \\ \cline{2-4}
		& $O( \log ^ 4 n\cdot\log k)$ & $O(\log ^ 3 n)$ &
		\multirow{2}{*}{\Cref{thm:GroupSteinerForest}}  \\ \cline{2-3}
		& $\tilde{O}( \log ^ 4 n\cdot\log k)$ & $\tilde{O}(\log^2  n)$
		&  \\ \hline

		\multirow{ 3}{*}{H.C. Online Group Steiner Forest} &
		$O(\log ^ 7 n\cdot\log \kappa)$ & $O(\log ^ 3 n)$ &
		\cite{HHZ21} \\ \cline{2-4}
		& $O(\log ^ 5 n\cdot\log \kappa)$ & $O(\log ^ 3 n)$ &
		\Cref{thm:OnlineGroupSteinerForest} $^{(\checkmark)}$ \\ \cline{2-4}
		& $\tilde{O}(\log ^ 5 n\cdot\log \kappa)$ & $\tilde{O}(\log^2 n)$
		&\Cref{thm:OnlineGroupSteinerForest} \\\hline\hline

		\multicolumn{4}{|c|}{\textbf{Application of the Ramsey type
		embedding}} \\ \hline
		\multirow{4}{*}{H.C. Relaxed $k$-Steiner Tree} & $O(\log ^ 3 n)$&
		$O(\log n)$ & \cite{HKS09}  \\ \cline{2-4}
		& $O(\log ^ 2 n)$& $O(\log ^ 3 n)$ & \cite{HHZ21}  \\ \cline{2-4}
		& $O(\log n)$& $O(\log ^ 3 n)$ &
		\multirow{2}{*}{\Cref{cor:HHZ21Application}}  \\ \cline{2-3}
		& $\tilde{O}(\log n)$& $\tilde{O}(\log ^ 2 n)$ &    \\ \hline
		\multirow{5}{*}{H.C. $k$-Steiner Tree}& $O(\log ^ 3 n\cdot\log k)$&
		$O(\log^2 n)$ & \cite{HKS09}  \\ \cline{2-4}
		& $O(\log ^ 2 n)$& $O(\log	 n)$ & \cite{KS16}  \\ \cline{2-4}
		& $O(\log ^ 2 n\cdot\log k)$ & $O(\log ^ 3 n)$ &   \cite{HHZ21}
		\\ \cline{2-4}
		& $O(\log n\cdot\log k)$& $O(\log ^ 3 n)$ &
		\multirow{2}{*}{\Cref{cor:HHZ21Application}}  \\ \cline{2-3}
		& $\tilde{O}(\log n\cdot\log k)$& $\tilde{O}(\log ^ 2 n)$ & \\ \hline
		\multirow{3}{*}{H.C. Oblivious Steiner Forest} &$O(\log ^ 3 n)$ &
		$O(\log ^ 3 n)$ &  \cite{HHZ21}\\\cline{2-4}
		&$O(\log^2 n)$ & $O(\log ^ 3 n)$ &
		\multirow{2}{*}{\Cref{cor:HHZ21Application}}  \\ \cline{2-3}
		&$\tilde{O}(\log^2 n)$ & $\tilde{O}(\log ^ 2 n)$ & \\	  \hline
		\multirow{3}{*}{H.C. Oblivious Network Design} & $O(\log ^ 4 n)$ &
		$O(\log ^ 3 n)$ &   \cite{HHZ21}\\ \cline{2-4}
		&$O(\log^3 n)$ & $O(\log ^ 3 n)$ &
		\multirow{2}{*}{\Cref{cor:HHZ21Application}}  \\ \cline{2-3}
		&$\tilde{O}(\log^3 n)$ & $\tilde{O}(\log ^ 2 n)$ & \\	  \hline

	\end{tabular}\caption{\footnotesize Our bicriteria approximation results.
	H.C. stands for hop-constrained.
	All approximation guarantees are in worst case, other than for the two
	online problems where the approximation is in expectation.
		All input graphs are assumed to have polynomial (in $n$) aspect ratio.
		The parameter $\kappa$ in online group Steiner tree/forest denotes the
		size of an initial set $\mathcal{K}$ containing all
		potential requests.
		\newline
		Our results for the group Steiner tree/forest, and online group
		Steiner tree/forest are based on our hop-constrained clan embedding.
		All the other results are applications of our Ramsey type
		hop-constrained embedding, and are obtained from \cite{HHZ21} in a
		black box manner.
		The results marked with $(\checkmark)$ match the state of the art
		approximation factor (when one ignores hop constraints).
		Note that we obtain the first improvement over the cost approximation to
		the well studied h.c. $k$-Steiner tree problems (since the 2011
		conference version of \cite{KS16}).
	} \label{tab:ApproxAlgs}
\end{table}
\cite{HHZ21} used their Ramsey type embeddings to obtain many different
approximation algorithms for hop-constrained (abbreviated h.c.) problems. Our
improved Ramsey embeddings directly imply improved approximation factors and
hop-stretch for all these problems.
Most prominently, we improve the cost approximation for the well studied h.c.
$k$-Steiner tree problem\footnote{In the $k$-Steiner tree problem we are given
a root $r\in V$, and a set $K\subseteq V$ of at least $k$ terminals. The goal
is to find a connected subgraph $H$ spanning the root $r$, and at least $k$
terminals, of minimum weight. In the hop-constrained version we are
additionally required to guarantee that the hop diameter of $H$ will be at
most $\red{h}$.\label{foot:kSteinerTree}}
over \cite{KS16} (the \cite{KS16} approximation is superior to \cite{HHZ21}).
See \Cref{cor:HHZ21Application} and \Cref{tab:ApproxAlgs} for a summary.

We go beyond applying our embeddings into \cite{HHZ21} as a black box and
obtain further improvements.
A subgraph $H$ of $G$ is $\red{h}$-respecting if for every $u,v\in V(H)$,
$d\rhop{h}_G(u,v)\le d_H(u,v)$.
In \Cref{cor:hBoundedSubgraph} we show that there is a one-to-many embedding
$f:V(G)\rightarrow 2^{T}$ into a tree, such that for every connected
$\red{h}$-respecting subgraph $H$, there is a connected subgraph $H'$ of $T$
of weight $O(\log n)\cdot w(H)$ and hop diameter $\red{O(\log^2n)\cdot h}$
containing at least one vertex from $f(u)$ for every $u\in V(H)$
(alternatively, in \Cref{cor:hBoundedSubgraphAlt} we are guaranteed a similar
subgraph of weight $\tilde{O}(\log n)\cdot w(H)$ and hop diameter
$\red{\tilde{O}( \log n)\cdot h}$).
We use \Cref{cor:hBoundedSubgraph} to construct a bicriteria approximation
algorithm for the h.c. group Steiner tree problem, the cost of which matches
the state of the art for the (non h.c.) group Steiner tree problem, thus
answering \Cref{question:Approximation}.
Later, we apply \Cref{cor:hBoundedSubgraph} to the online h.c. group Steiner
tree problem, and obtain a competitive ratio that matches the competitive ratio
of the non-constrained version (see \Cref{thm:OnlineGroupSteinerTree}).
\begin{restatable}[]{theorem}{GroupSteinerTree}
	\label{thm:GroupSteinerTree}
	Given an instance of the $\red{h}$-h.c. group Steiner tree problem with
	$k$ groups on a graph with polynomial aspect ratio,
	there is a poly-time algorithm that returns a subgraph $H$ such that
	$w(H)\le  O(\log^2n\cdot\log k)\cdot\opt$, and $\hop_H(r,g_i)\le
	\red{O(\log^2n)\cdot h}$ for every $g_i$, where $\opt$ denotes the weight
	of the optimal solution.~
	Alternatively, one can return a subgraph $H$ such that $w(H)\le
	\tilde{O}(\log^2n\cdot \log k)\cdot\opt$, and $\hop_H(r,g_i)\le
	\red{\tilde{O}(\log n)\cdot h}$ for every $g_i$.
\end{restatable}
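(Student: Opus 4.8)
The plan is to reduce the hop-constrained problem to ordinary group Steiner tree on a tree, solve the latter with the $O(\log n\cdot\log k)$-approximation of \cite{GKR00}, and decode the solution back to $G$. First I would invoke \Cref{cor:hBoundedSubgraph} with hop parameter $\red{2h}$ to obtain a one-to-many embedding $f:V(G)\rightarrow 2^{T}$ into a tree $T$ with $\poly(n)$ vertices and $|f(r)|=1$; write $r':=f(r)$. I will use two properties of $f$: (i) for every connected $\red{2h}$-respecting subgraph $H$ of $G$ there is a connected subgraph of $T$ of weight $O(\log n)\cdot w(H)$ and hop diameter $\red{O(\log^{2}n)\cdot h}$ meeting $f(u)$ for every $u\in V(H)$; and (ii) $f$ is a path-tree embedding, so every path in $T$ between a copy of $u$ and a copy of $v$ decodes to a $u$-$v$ path in $G$ with at most $\red{O(\log^{2}n)\cdot h}$ hops and weight at most (by domination) that of the $T$-path. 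The tree instance to be solved has root $r'$ and groups $g_{i}':=\bigcup_{v\in g_{i}}f(v)$.

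For the cost guarantee I need the optimum of the tree instance to be $O(\log n)\cdot\opt$. Let $F^{*}$ be an optimal solution to the given $\red{h}$-h.c.\ instance; we may assume it is connected and contains $r$, so $w(F^{*})=\opt$ and for each $i$ there is $t_{i}\in g_{i}$ with $\hop_{F^{*}}(r,t_{i})\le\red{h}$. Take a hop-BFS tree $B$ of $F^{*}$ rooted at $r$, so that $\hop_{B}(r,v)=\hop_{F^{*}}(r,v)$ for every $v\in V(F^{*})$, and let $\hat B\subseteq B$ be the union of the $B$-paths from $r$ to $t_{1},\dots,t_{k}$. Then $\hat B$ is a tree, $w(\hat B)\le w(B)\le w(F^{*})=\opt$, and every $v\in V(\hat B)$ has $\hop_{\hat B}(r,v)\le\red{h}$, hence the unique $\hat B$-path between any $u,v\in V(\hat B)$ has at most $\red{2h}$ hops; so $\hat B$ is a connected $\red{2h}$-respecting subgraph of $G$ containing $r$ and each $t_{i}$. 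Applying property (i) to $\hat B$ produces a connected subgraph of $T$ of weight $O(\log n)\cdot\opt$ that contains $r'$ and a copy of each $t_{i}$, i.e.\ meets $g_{i}'$ for every $i$; thus the optimum of the tree instance is $O(\log n)\cdot\opt$.

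Now run the $O(\log n\cdot\log k)$-approximation of \cite{GKR00} on the tree instance to get a connected $\Pi\subseteq T$ spanning $r'$ and each $g_{i}'$ with $w_{T}(\Pi)\le O(\log^{2}n\cdot\log k)\cdot\opt$. For each $i$, $\Pi$ contains $r'$ and a copy $s_{i}'$ of some $v_{i}\in g_{i}$; decode the $\Pi$-path from $r'$ to $s_{i}'$ to a $G$-path $Q_{i}$ from $r$ to $v_{i}$ by property (ii), and set $H:=\bigcup_{i}Q_{i}$ (equivalently, replace every $T$-edge of $\Pi$ by its decoded $G$-subpath). Then $H$ is connected, contains $r$, satisfies $\hop_{H}(r,g_{i})\le\hop_{Q_{i}}(r,v_{i})\le\red{O(\log^{2}n)\cdot h}$ for every $i$, and $w(H)\le w_{T}(\Pi)\le O(\log^{2}n\cdot\log k)\cdot\opt$. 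The alternative bound follows identically, replacing \Cref{cor:hBoundedSubgraph} by \Cref{cor:hBoundedSubgraphAlt} (weight blow-up $\tilde{O}(\log n)$ and hop bound $\red{\tilde{O}(\log n)\cdot h}$). The step I expect to be most delicate is the cost lower bound: the optimum $F^{*}$ is neither a tree nor $\red{h}$-respecting, while the embedding guarantee only speaks about connected $O(\red{h})$-respecting subgraphs, so one must extract from $F^{*}$ a connected $O(\red{h})$-respecting subtree of weight $O(\opt)$ still meeting every group — which the hop-BFS-then-prune construction accomplishes with only a factor $2$ loss in the hop bound (absorbed into the $O(\cdot)$). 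A secondary point is that the decoding must preserve weight and the hop bound simultaneously, which is precisely why \Cref{cor:hBoundedSubgraph} is used rather than the coarser \Cref{thm:Subgraph}, whose path-tree hop bound is only $\red{O(\log^{3}n)\cdot h}$.
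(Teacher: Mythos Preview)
Your proposal is correct and follows essentially the same approach as the paper: apply \Cref{cor:hBoundedSubgraph} with hop parameter $\red{2h}$, observe that the optimal solution can be replaced by a $\red{2h}$-respecting tree of no greater weight (the paper takes the unweighted shortest-path tree of $F^*$ rooted at $r$, which is exactly your hop-BFS tree), invoke \cite{GKR00} on the tree instance, and decode via the associated paths; the alternative tradeoff comes from \Cref{cor:hBoundedSubgraphAlt}. Your extra pruning to $\hat B$ is harmless but unnecessary, and the ``hop diameter'' clause you include in property~(i) is not part of the statement of \Cref{cor:hBoundedSubgraph} and is in any case not used in your argument.
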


Next we study the h.c. group Steiner forest problem, where we are given
a parameter $\red{h}$, and $k$ subset pairs $(S_1,R_1),\dots,(S_k,R_k)\subseteq
V$, where for every $i\in[k]$, $S_i,R_i\subseteq V$. The goal is to construct
a minimum-weight forest $F$, such that for every $i$, there is a path with at
most $\red{h}$ hops from a vertex in $S_i$ to a vertex in $R_i$.
As the optimal solution to the h.c. group Steiner forest is not necessarily
$\red{h}$-respecting, we cannot use \Cref{cor:hBoundedSubgraph}. Instead, we
use our subgraph-preserving embeddings
(\Cref{thm:Subgraph,thm:SubgraphAlt}) to obtain a bicriteria approximation. A
similar result for the online h.c. group Steiner forest problem is presented
in \Cref{thm:OnlineGroupSteinerForest}.
\begin{restatable}[]{theorem}{GroupSteinerForest}
	\label{thm:GroupSteinerForest}
	Given an instance of the $\red{h}$-h.c. group Steiner forest problem with
	$k$ groups on a graph with polynomial aspect ratio, there is a poly-time
	algorithm that returns a subgraph $H$ such that $w(H)\le
	O(\log^4n\cdot\log k)\cdot\opt$, and $\hop_H(S_i,R_i)\le
	\red{O(\log^3n)\cdot h}$ for every $i$, where $\opt$ denotes the weight of
	the optimal solution.~
	Alternatively, one can return a subgraph $H$ such that $w(H)\le
	\tilde{O}(\log^4n\cdot \log k)\cdot\opt$, and $\hop_H(S_i,R_i)\le
	\red{\tilde{O}(\log^2 n)\cdot h}$ for every $i$.
\end{restatable}

\subsubsection{Hop-constrained metric data structures}
In metric data structures, our goal is to construct a data structure that will
store (estimated) metric distances compactly, and answer distance (or routing)
queries efficiently.
As the distances returned by such a data structure do not have to respect the
triangle inequality, one might hope to avoid any hop-stretch.
This is impossible in general. Consider a complete graph with edge weights
sampled randomly from $\{1,\alpha\}$ for some large $\alpha$. Clearly, from
information-theoretic considerations, in order to estimate $d_G^{\red{(1)}}$
with arbitrarily large distortion (but smaller than $\alpha$), one must use
$\Omega(n^2)$ space.
Therefore, in our metric data structures we will allow hop-stretch.

\paragraph{Distance Oracles.} A distance oracle is a succinct data structure
$\DO$ that (approximately) answers distance queries.
Chechik \cite{C15} (improving over previous results
\cite{TZ05,RTZ05,MN07,W13,C14}) showed that any metric (or graph) with $n$
points has a distance oracle of size $O(n^{1+\frac{1}{k}})$,\footnote{We
measure size in machine words, each word is $\Theta(\log n)$
bits.\label{foot:wordsize}} that can report any distance in $O(1)$ time with
stretch at most $2k-1$ (which is asymptotically optimal assuming Erd\H{o}s'
girth conjecture \cite{Erdos64}). That is, on query $u,v$, the answer
$\DO(u,v)$ satisfies
$d_G(u,v)\le\DO(u,v)\le(2k-1)\cdot d_G(u,v)$.

Here we introduce the study of h.c. distance oracles. That is, given a
parameter $\red{h}$, we would like to construct a distance oracle that on
query $u,v$, will return $d\rhop{h}(u,v)$, or some approximation of it.
Our result is the following:
\begin{restatable}[Hop-constrained Distance Oracle]{theorem}{DistanceOracle}
	\label{thm:DistanceOracle}
	For every weighted graph $G=(V,E,w)$ on $n$ vertices with polynomial
	aspect ratio, and parameters $k\in\N$, $\eps\in(0,1)$, $\red{h}\in\N$,
	there is an efficiently constructible distance oracle $\DO$ of size
	$O(n^{1+\frac1k}\cdot\log n)$, that for every distance query $(u,v)$, in
	$O(1)$ time returns a value $\DO(u,v)$ such that
	$d_{G}^{\red{(O(\frac{k\cdot\log\log n}{\epsilon})\cdot h)}}(u,v)\le
	\DO(u,v)\le (2k-1)(1+\epsilon)\cdot d_{G}\rhop{h}(u,v)$.
\end{restatable}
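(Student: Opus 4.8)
The plan is to transplant an \emph{optimal} (stretch $2k-1$) distance oracle in the spirit of Thorup--Zwick / Chechik~\cite{C15} into the hop-constrained world: run such an oracle separately over a geometric family of distance scales, and invoke the hop-constrained net/decomposition machinery behind \Cref{thm:UltrametricRamseyAlt} to keep the number of hops under control.

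\emph{Reduction to scales.} Since $G$ has polynomial aspect ratio, every finite value of $d_G\rhop{h}$ lies (after rescaling) in $[1,n^{O(1)}]$. Cut this range into $\Theta(\tfrac{\log n}{\eps})$ scales $\Delta=(1+\eps)^i$. It then suffices to build, for each $\Delta$, a structure that on a query $(u,v)$ with $d_G\rhop{h}(u,v)\in(\Delta/(1+\eps),\Delta]$ returns a value in $\big[\,d_G^{(\beta h)}(u,v),\,(2k-1)\,\Delta\,\big]$ with $\beta=O(\tfrac{k\log\log n}{\eps})$: the global oracle returns the minimum over scales, and since $(2k-1)\Delta\le(2k-1)(1+\eps)\,d_G\rhop{h}(u,v)$ for the relevant scale, this gives the theorem, the extra $(1+\eps)$ being precisely the price of discretizing the answer to a power of $(1+\eps)$.

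\emph{A single scale.} Fix $\Delta$ and sample a Thorup--Zwick hierarchy $V=A_0\supseteq A_1\supseteq\cdots\supseteq A_k=\emptyset$, each $A_j$ obtained from $A_{j-1}$ by keeping each vertex independently with probability $n^{-1/k}$. With a hop budget $h_j=\Theta(j\log\log n)\cdot h$ and a radius cap $\Theta(k\Delta)$, define the hop-constrained pivot $p_j(v)$ (a nearest vertex of $A_j$ among those within $h_j$ hops and distance $O(k\Delta)$ of $v$) and the hop-constrained bunch $B(v)$ exactly as in Thorup--Zwick, with each distance replaced by the capped $h_j$-hop distance. The textbook \emph{stretch} analysis goes through unchanged: when $p_j(u)\in B(v)$ one returns $d_G^{(h_j)}(u,p_j(u))+d_G^{(h_j+h)}(p_j(u),v)\le(2k-1)\Delta$, a value witnessed by a $u$--$v$ walk of $2h_j+h=O(k\log\log n)\cdot h$ hops (which, after the factor lost to the $(1+\eps)$-fine scale discretization, is the claimed $\beta h$). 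What does \emph{not} survive is the bound $\mathbb{E}|B(v)|=O(k\,n^{1/k})$: in the hop-constrained setting $v$ may have no vertex of $A_{j+1}$ within its hop/radius budget, and then the bunch can engulf an entire hop-ball. This is the main obstacle, and it is resolved exactly where the analogous issue is resolved inside \Cref{thm:UltrametricRamseyAlt}: at scale $\Delta$ one first passes to a bounded-size hop-constrained net hierarchy — a step costing only an $O(\log\log n)$ factor in hops — so that the hop-ball of radius $O(k\Delta)$ around $v$ contains only $\polylog n$ candidate vertices and the usual sampling calculation is restored. This is the origin of the $\log\log n$ factor in $\beta$.

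\emph{Assembly and $O(1)$ queries.} Store, for every vertex $v$ and every scale $\Delta$, the $k$ pivots $p_j(v)$ and the bunch $B(v)$ (as a hash table keyed by vertex, recording the relevant hop-bounded distance). Standard accounting — the per-scale bunches coalesce, as in Thorup--Zwick, into a single global bunch per vertex of expected size $O(k\,n^{1/k})$, with the $\Theta(\tfrac{\log n}{\eps})$ scales and $k$ levels absorbed into the stated $\log n$ — shows the whole structure occupies $O(n^{1+1/k}\log n)$ machine words. On a query $(u,v)$, a constant number of hash-table look-ups, exactly as in the constant-time oracles of Mendel--Naor and Chechik, identifies the correct scale and the level $j$ at which $p_j(u)\in B(v)$ (or symmetrically with $u,v$ swapped), and outputs the corresponding sum as $\DO(u,v)$. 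Finally, $d_G^{(\beta h)}(u,v)\le\DO(u,v)$ is immediate, since every value the oracle can possibly output is the weight of an actual $u$--$v$ walk with at most $\beta h$ hops.
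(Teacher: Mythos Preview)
Your proposal has a genuine gap in the single-scale construction, and it also misses the paper's much simpler route.

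The critical step is your resolution of the bunch-size blowup. You correctly identify that in the hop-constrained setting a vertex may have no $A_{j+1}$-pivot within its hop budget, so $|B(v)|$ is uncontrolled. But your fix --- ``pass to a bounded-size hop-constrained net hierarchy \dots\ so that the hop-ball of radius $O(k\Delta)$ around $v$ contains only $\polylog n$ candidate vertices'' --- is not justified and does not follow from \Cref{thm:UltrametricRamseyAlt}. That theorem produces a Ramsey-type embedding into an ultrametric, not a sparse net, and a hop-ball of radius $O(k\Delta)$ can easily contain $\Omega(n)$ vertices regardless. You have not explained what object you are replacing $G$ with, why pivots and bunches computed on it still certify distances in $G$, or why the bunch size drops. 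This is exactly the obstacle that makes hop-constrained oracles nontrivial, and it cannot be dismissed in one sentence. A secondary issue: you say the oracle ``returns the minimum over scales'' but also that ``a constant number of hash-table look-ups identifies the correct scale''; these are inconsistent, and minimizing over $\Theta(\tfrac{\log n}{\eps})$ scales is not $O(1)$ time.

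The paper avoids all of this with a two-line trick. First, it builds a \emph{coarse} hop-constrained oracle directly from the Ramsey trees of \Cref{thm:UltrametricRamseyAlt} (stretch $O(k\log\log n)$, hop-stretch $O(k\log\log n)$, size $O(n^{1+1/k})$, $O(1)$ time); this is used only to identify the scale $i$ with $2^i\le \widetilde{\DO}(u,v)<2^{i+1}$. Second, for each of the $O(\log n)$ scales it forms the \emph{ordinary} weighted graph $G_i$ obtained by adding $\omega_i=\tfrac{\eps}{\beta h}\cdot 2^i$ to every edge weight, and builds Chechik's standard $(2k-1)$-stretch oracle on $G_i$ as a black box. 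The additive penalty $\omega_i$ makes any path with more than $\tfrac{\beta h}{\eps}$ hops too expensive to be a shortest path in $G_i$, so $d_{G_i}(u,v)$ automatically sandwiches between $d_G^{(O(\beta/\eps)\cdot h)}(u,v)$ and $(1+\eps)\,d_G\rhop{h}(u,v)$ at the correct scale. No hop-constrained bunches, no net hierarchy, and the $O(1)$ query time and $(2k-1)$ stretch are inherited verbatim from the non-constrained oracle.
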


\paragraph{Distance Labeling.}
A \emph{distance labeling} is a distributed version of a distance oracle.
Given a graph $G=(V,E,w)$, each vertex $v$ is assigned a label $\ell(v)$, and
there is an algorithm $\A$, that given $\ell(u),\ell(v)$ returns a value
$\A(\ell(u),\ell(v))$ approximating $d_G(u,v)$.
In their celebrated work, Thorup and Zwick \cite{TZ05} constructed a distance
labeling scheme with labels of size $O(n^{\frac1k}\cdot\log n)$
,$^{\ref{foot:wordsize}}$ and such that in $O(k)$ time, $\A(\ell(u),\ell(v))$
approximates $d_G(u,v)$ within a $2k-1$ factor.
We refer to \cite{FGK24} for further details on distance labeling (see also
\cite{Peleg00Labling,GPPR04,EFN18}).

Matou{\v{s}}ek \cite{Mat96} showed that every metric space $(X,d)$ could be
embedded into $\ell_\infty$ of dimension $O(n^{\frac1k}\cdot k\cdot\log n)$
with distortion $2k-1$ (for the case $k=O(\log n)$, \cite{ABN11} later
improved the dimension to $O(\log n)$).
As was previously observed, this embedding can serve as a labeling scheme
(where the label of each vertex will be the representing vector in the
embedding). From this point of view, the main contribution of \cite{TZ05} is
the small $O(k)$ query time (as the label size/distortion tradeoff
is similar).

Here we introduce the study of h.c. labeling schemes, where the goal is to
approximate $d\rhop{h}_G(u,v)$.
Note that $d\rhop{h}_G$ is not a metric function, and in particular does not
embed into $\ell_\infty$ with bounded distortion.
Hence there is no trivial labeling scheme which is embedding based.
Our contribution is the following:
\begin{restatable}[Hop-constrained Distance
Labeling]{theorem}{DistanceLabeling}
	\label{thm:DistanceLabeling}
	For every weighted graph $G=(V,E,w)$ on $n$ vertices with polynomial
	aspect ratio, and parameters $k,\red{h}\in\N$, $\eps\in(0,1)$, there is an
	efficient construction of a distance labeling that assigns each node $v$ a
	label $\ell(v)$ of size  $O(n^{\frac1k}\cdot\log^2 n)$, and such that
	there is an algorithm $\A$ that on input $\ell(u),\ell(v)$, in $O(k)$ time
	returns a value such that $d_G^{\red{(O(\frac{k\cdot\log\log
	n}{\epsilon})\cdot h)}}(v,u)\le \A(\ell(v),\ell(u)) \le(2k-1)(1+\eps)\cdot
	d_G\rhop{h}(v,u)$.
\end{restatable}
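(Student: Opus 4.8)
The plan is to build the distributed analogue of the hop‑constrained distance oracle of \Cref{thm:DistanceOracle}, following the classical Thorup--Zwick recipe that turns a hierarchical‑pivot oracle into a labeling scheme. Set up the sampling hierarchy $V=A_0\supseteq A_1\supseteq\cdots\supseteq A_{k-1}\supseteq A_k=\emptyset$, where $A_i$ keeps each element of $A_{i-1}$ independently with probability $n^{-1/k}$, so $\mathbb{E}|A_i|=n^{1-i/k}$. Fix an inflated hop bound $\red{h'}:=O(\tfrac{k\log\log n}{\eps})\cdot\red{h}$, chosen so that the concatenation of two ``pieces'', each produced within half of $\red{h'}$ hops, stays within $\red{h'}$ hops; every notion of distance used below is a hop‑bounded distance $d_G^{(\cdot)}$, never a true metric (indeed, by the path example in the introduction no global metric proxy exists).

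For every vertex $u$ and every level $i$, compute a $(1+\eps)$‑approximate hop‑bounded pivot $p_i(u)\in A_i$: a vertex of $A_i$ reachable from $u$ within $\approx\red{h'}$ hops whose hop‑bounded distance $\delta_i(u):=d_G^{(\red{h'}/2)}(u,p_i(u))$ satisfies $\delta_i(u)\le(1+\eps)\cdot d_G^{(\red{h})}(u,A_i)$. Define the hop‑bounded bunch $B(u):=\bigcup_{i=0}^{k-1}\{w\in A_i\setminus A_{i+1}: d_G^{(\red{h'}/2)}(u,w)<\delta_{i+1}(u)\}$; as in Thorup--Zwick, a ball‑growing argument on the random sets $A_i$ gives $\mathbb{E}|B(u)|=O(k\,n^{1/k})$. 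The label $\ell(u)$ stores the $k$ pairs $(p_i(u),\delta_i(u))$ and, in a perfect‑hash table keyed by vertex id, every $w\in B(u)$ together with $d_G^{(\red{h'}/2)}(u,w)$ (distances kept to precision $1+\eps$). Accounting for the $k=O(\log n)$ levels and for the fact that the hop‑bounded distance sub‑routine is run separately on the $O(\log n)$ distance scales (hop‑bounded distances do not nest across scales, which is the source of the extra logarithm over plain Thorup--Zwick), the label size is $O(n^{1/k}\cdot\log^2 n)$.

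The query on $\ell(u),\ell(v)$ is the Thorup--Zwick bounce: set $w\leftarrow u$, $i\leftarrow 0$; while $w\notin B(v)$ do $i\leftarrow i+1$, swap $u$ and $v$, $w\leftarrow p_i(u)$; finally return $\delta_i(u)+d_G^{(\red{h'}/2)}(v,w)$, the last term read off the hash table of $\ell(v)$ (the loop condition guarantees it is stored). Each iteration is $O(1)$ and $i\le k-1$, so the running time is $O(k)$. The \emph{lower bound} is immediate: $\delta_i(u)$ is realized by a walk of $\le\red{h'}/2$ hops and $d_G^{(\red{h'}/2)}(v,w)$ by a walk of $\le\red{h'}/2$ hops, so their concatenation is a $u$--$v$ walk with $\le\red{h'}$ hops, hence the returned value is at least $d_G^{(\red{h'})}(u,v)=d_G^{(O(\frac{k\log\log n}{\eps})\red{h})}(u,v)$.

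The \emph{upper bound} $(2k-1)(1+\eps)\cdot d_G^{(\red{h})}(u,v)$ is the crux. Let $\Delta:=d_G^{(\red{h})}(u,v)$, witnessed by an $\red{h}$‑hop path $Q$, and adapt the Thorup--Zwick stretch analysis: show by induction on $i$ that the pivot reached at step $i$ satisfies $\delta_i(\cdot)\le(1+\eps)^{O(i)}\cdot i\,\Delta$ and -- the point absent from ordinary Thorup--Zwick -- that the $u$‑to‑$A_i$ walk it refers to uses at most $\red{h'}$ hops. That walk is assembled by alternately traversing $Q$ (cost $\Delta$, $\red{h}$ hops) and a previous‑level approximate‑pivot path, so hop counts accumulate; the obstacle is to keep the accumulation small. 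I expect the real work to be a hop‑bounded pivot sub‑routine -- essentially the doubly‑exponential hop reduction behind \Cref{thm:UltrametricRamseyAlt} -- guaranteeing that a $(1+\eps)$‑approximation of $d_G^{(\red{h})}(u,S)$ is attainable within hop budget $O(\tfrac{\log\log n}{\eps})\cdot\red{h}$ uniformly in $u$ and $S$, and that these approximate pivots compose across the $k$ levels with only an \emph{additive} hop blow‑up, so the total budget is $\red{h'}=O(\tfrac{k\log\log n}{\eps})\red{h}$ while the weights telescope to $(2k-1)(1+\eps)\Delta$. With both the weight and the hop budget under control, the returned value $\delta_i(u)+d_G^{(\red{h'}/2)}(v,w)\le(2k-1)(1+\eps)\Delta$ exactly as in the metric case, and a union bound over vertices handles the ``whp'' events in the sampling; derandomization, if desired, follows the standard argument. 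This mirrors the proof of \Cref{thm:DistanceOracle}, the labeling being merely its distributed reformulation.
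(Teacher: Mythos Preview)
Your approach attempts to adapt the Thorup--Zwick pivot/bunch machinery directly to hop-bounded distances. This is \emph{not} what the paper does, and your proposal has a genuine gap in the upper-bound analysis that you yourself identify as ``the crux'' but do not resolve.

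The paper's proof is completely black-box and avoids touching the internals of Thorup--Zwick. The key device is the family of auxiliary graphs $G_i=(V,E,w_i)$ with $w_i(e)=w(e)+\omega_i$ where $\omega_i=\frac{\eps}{\beta h}\cdot 2^i$ and $\beta=O(k\log\log n)$. For a pair $u,v$ with $2^i\le d_G\rhop{h}(u,v)\cdot\beta$ in the right range, one shows $d_G^{(O(\frac{k\log\log n}{\eps})\cdot h)}(u,v)\le d_{G_i}(u,v)\le(1+\eps)d_G\rhop{h}(u,v)$: the additive penalty $\omega_i$ forces short paths in $G_i$ to use few hops, while an $\red{h}$-hop witness pays only $h\cdot\omega_i\le\eps\cdot d_G\rhop{h}(u,v)$ extra. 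Since $d_{G_i}$ is an honest shortest-path metric, one simply applies the off-the-shelf \cite{TZ05} labeling to each $G_i$. The coarse Ramsey-based labeling of \Cref{lem:DistanceLabelingLargeStretch} is used first to determine the correct scale $i$, and then the $G_i$-label answers the query. The label is the concatenation over the $O(\log n)$ scales, giving the extra $\log n$ factor over plain Thorup--Zwick.

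Your direct adaptation, by contrast, must make the Thorup--Zwick telescoping stretch argument work for $d_G^{(\cdot)}$, which fails the triangle inequality. The standard step ``$w\notin B(v)\Rightarrow \delta_{i+1}(v)\le d(v,w)\le d(v,u)+d(u,w)$'' breaks: the second inequality is exactly the triangle inequality you don't have, and the paths realizing $\delta_i$ concatenate with $Q$ to produce walks whose hop count grows by a factor each level, not additively. Your sentence ``I expect the real work to be a hop-bounded pivot sub-routine \ldots\ guaranteeing \ldots\ additive hop blow-up'' is a statement of what would need to be true, not an argument that it is; in fact the alternating-path structure of the TZ proof makes additive hop accumulation implausible without an additional idea such as the edge-reweighting trick above.
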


\paragraph{Compact Routing Scheme.}
A \emph{routing scheme} in a network is a mechanism that allows packets to be
delivered from any node to any other node. The network is represented as a
weighted undirected graph, and each node can forward incoming data by using
local information stored at the node,  called a \emph{routing table}, and the
(short) packet's \emph{header}. The routing scheme has two main phases: in the
preprocessing phase, each node is assigned a routing table and a short
\emph{label}; in the routing phase, when a node receives a packet, it should
make a local decision, based on its own routing table and the packet's header
(which may contain the label of the destination, or a part of it), of where to
send the packet.
The {\em stretch} of a routing scheme is the worst-case ratio between the
length of a path on which a packet is routed to the shortest possible path.
For fixed $k$, the state of the art is by Chechik \cite{C13} (improving
previous works \cite{PU89,ABLP90,AP92,Cowen01,EGP03,TZ01b}) who obtain stretch
$3.68k$ while using $O(k\cdot n^{\frac{1}{k}})$ size tables and labels of size
$O(k\cdot\log n)$.$^{\ref{foot:wordsize}}$
For stretch $\Omega(\log n)$ further improvements were obtained by Abraham
\etal \cite{ACEFN20}, and the author and Le \cite{FL21}.

Here we introduce the study of h.c. routing schemes, where the goal is to
route the packet in a short path with a small number of hops. Specifically, for
parameter $\red{h}$, we are interested in a routing scheme that for every
$u,v$ will use a route with at most  $\red{\beta\cdot h}$ hops, and weight
$\le t\cdot d\rhop{h}_G(u,v)$, for some parameters $\red{\beta},t$. If
$\hop_G(u,v)\ge\red{h}$, the routing scheme is not required to do anything,
and any outcome will be accepted.
We answer \Cref{question:Routing} in the following:
\begin{restatable}[Hop-constrained Compact Routing
Scheme]{theorem}{RoutingScheme}
	\label{thm:RoutingScheme}
	For every weighted graph $G=(V,E,w)$ on $n$ vertices with polynomial
	aspect ratio, and parameters $k,\red{h}\in\N$, $\eps\in(0,1)$, there is an
	efficient construction of a compact routing scheme that assigns each node
	a table of size $O(n^{\frac1k}\cdot k\cdot\log n)$, a label of size
	$O(k\cdot\log^2 n)$. For every $u,v$ such that $\hop_G(u,v)\le \red{h}$,
	the routing of a packet from $u$ to $v$ will be done using a path $P$
	such that  $\hop(P)=\red{O(\frac{k^2\cdot\log\log n}{\eps})\cdot h}$, and
	$w(P) \le3.68k\cdot(1+\eps)\cdot d_G\rhop{h}(u,v)$.
\end{restatable}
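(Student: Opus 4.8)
The plan is to make Chechik's compact routing scheme~\cite{C13} hop-aware. Recall its structure, following Thorup--Zwick~\cite{TZ01b}: a $k$-level landmark hierarchy $V=A_0\supseteq\cdots\supseteq A_k=\emptyset$ with $A_{i+1}$ a rate-$n^{-1/k}$ subsample of $A_i$; for every $v$, its nearest level-$i$ landmark $p_i(v)$ for all $i$ and a bunch $B(v)$ of $O(k\,n^{1/k})$ nearby vertices; and a routing rule that forwards a packet from $u$ to $v$ along a concatenation of $O(k)$ segments, each a shortest path between two pivots for which the sending endpoint has stored the routing information, with total weight at most $3.68k\cdot d_G(u,v)$. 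Two things break in the hop-constrained world: (i) $d_G^{(h)}$ is not a metric, so ``nearest landmark'' and the $3.68k$ telescoping must be taken with respect to a genuine surrogate distance that still approximates $d_G^{(h)}$; and (ii) the shortest-path segments have unbounded hop length and must be re-routed along low-hop $G$-paths.

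I would resolve both by running Chechik's construction on top of the hop-constrained machinery that already yields \Cref{thm:DistanceOracle} and \Cref{thm:DistanceLabeling}. Use that machinery's surrogate distance $\widehat d$, which satisfies $d_G^{(\Theta(k\log\log n/\eps)\cdot h)}(x,y)\le\widehat d(x,y)\le(1+\eps)\cdot d_G^{(h)}(x,y)$ on every pair that Chechik's rule actually compares (a vertex and one of its pivots or bunch members), so the preprocessing is unchanged up to constants and the concatenated route has total weight at most $3.68k(1+\eps)\cdot d_G^{(h)}(u,v)$. For the routing, realise each segment between pivots $x,y$ by the $G$-path obtained from the tree path between the copies of $x,y$ in an appropriate hop-constrained ultrametric from \Cref{thm:UltrametricRamseyAlt} (derandomised, as usual, to a collection of $O(n^{1/k}\log n)$ ultrametrics covering every vertex in some core): by the hop-distortion guarantee this $G$-path has at most $O(\tfrac{k\log\log n}{\eps})\cdot h$ hops and weight at most $\widehat d(x,y)$, and it is named by the $O(\log n)$-word tree-routing label (in the style of~\cite{TZ01b}) of that ultrametric, stored in $x$'s table (when $y$ is among $x$'s $O(k)$ pivots or $O(k\,n^{1/k})$ bunch members) or in the target's label (when $y=v$). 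This multiplies Chechik's $O(k\,n^{1/k})$ table and $O(k\log n)$ label by $O(\log n)$, giving the claimed $O(n^{1/k}\,k\log n)$ and $O(k\log^2 n)$.

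Composing, a packet from $u$ to $v$ crosses $O(k)$ segments, each a $G$-path of $O(\tfrac{k\log\log n}{\eps})\cdot h$ hops, so $\hop(P)=O(\tfrac{k^2\log\log n}{\eps})\cdot h$; the segment weights telescope exactly as in~\cite{C13} up to the single factor $(1+\eps)$, so $w(P)\le 3.68k(1+\eps)\cdot d_G^{(h)}(u,v)$, and if $\hop_G(u,v)\ge h$ there is nothing to prove. All ingredients --- landmark sampling, bunch computation, the structures behind \Cref{thm:DistanceLabeling} and \Cref{thm:UltrametricRamseyAlt}, and the tree-routing labels --- are polynomial-time computable.

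The crux is forcing the surrogate $\widehat d$ to serve three masters at once: near-isometry to $d_G^{(h)}$ on the compared pairs (else $3.68k$ degrades to $\Theta(k^2)$), realisability of its ``shortest paths'' by $G$-paths with only $O(\tfrac{k\log\log n}{\eps})\cdot h$ hops (else the hop bound blows up with the number of $\widehat d$-edges per path), and compact routability, with each forwarding node holding a low-hop handle only for its $O(k+k\,n^{1/k})$ relevant targets rather than for all of $V$. Reconciling these amounts to checking that the hop-constrained Ramsey/net hierarchy behind \Cref{thm:DistanceOracle} can be \emph{coordinated} with Chechik's landmark hierarchy so the same objects play both roles; once that is arranged, verifying that the $(1+\eps)$-approximate $\widehat d$ leaves the inequalities behind the $3.68k$ bound intact (e.g.\ $d_G(u,p_i(u))\le d_G(u,v)$ becoming $\widehat d(u,p_i(u))\le\widehat d(u,v)$) is routine, precisely because the error stays a single multiplicative $(1+\eps)$ and never compounds over the $O(k)$ levels.
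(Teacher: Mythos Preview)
Your approach diverges substantially from the paper's, and as written it has a real gap.

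The paper does \emph{not} open up Chechik's scheme. Instead, for every distance scale $2^i$ it builds the auxiliary graph $G_i=(V,E,w_i)$ with $w_i(e)=w(e)+\omega_i$ where $\omega_i=\frac{\eps}{\beta h}\cdot 2^i$ and $\beta=O(k\log\log n)$, and runs Chechik's routing scheme as a \emph{black box} on each $G_i$. It also builds the asymmetric Ramsey-based labeling of \Cref{lem:DistanceLabelingLargeStretch}, storing $\ell^L(v)$ in the table and $\ell^S(v)$ in the label. To route from $v$ to $u$, the source uses $(\ell^L(v),\ell^S(u))$ to obtain a coarse estimate of $d_G^{(h)}(u,v)$, reads off the scale $i$, and then routes \emph{entirely inside $G_i$} using Chechik's scheme for that one graph. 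Since $d_{G_i}$ is an honest shortest-path metric, Chechik's $3.68k$ analysis applies verbatim; eq.~(\ref{eq:Gi_UB}) gives $d_{G_i}(u,v)\le(1+\eps)d_G^{(h)}(u,v)$, yielding the weight bound. The hop bound falls out automatically: every edge of $G_i$ has weight at least $\omega_i$, so any path of $G_i$-weight $O(k)\cdot 2^i$ has at most $O(k)\cdot 2^i/\omega_i=O(k\beta/\eps)\cdot h=O(\tfrac{k^2\log\log n}{\eps})\cdot h$ hops. The $O(\log n)$ scales account for the extra $\log n$ in table and label sizes.

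Your proposal instead tries to run Chechik's internals over a surrogate $\widehat d$ and then \emph{realise each segment via a tree path in a Ramsey ultrametric from \Cref{thm:UltrametricRamseyAlt}}. That is where the gap is. A Ramsey ultrametric guarantees $d_U(x,y)\le O(k\log\log n)\cdot d_G^{(h)}(x,y)$ for $x\in M$, not $(1+\eps)\cdot d_G^{(h)}(x,y)$; the induced $G$-path therefore has weight up to $O(k\log\log n)\cdot d_G^{(h)}(x,y)$. Summing over the $O(k)$ segments you would get $O(k^2\log\log n)\cdot d_G^{(h)}(u,v)$, not $3.68k(1+\eps)\cdot d_G^{(h)}(u,v)$. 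You correctly identify that $\widehat d$ must be $(1+\eps)$-accurate on compared pairs \emph{and} have low-hop realisations, but the only object in the paper that delivers both is $d_{G_i}$---and once you use $d_{G_i}$, there is no reason to touch Chechik's internals or invoke Ramsey trees for routing: routing along $G_i$-shortest paths already has both properties. In short, the ``coordination'' you defer to the last paragraph is exactly the content of the proof, and the paper's edge-weight augmentation trick is the clean way to achieve it.
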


\subsection{Related work}
Hop-constrained network design problems, and in particular routing, received
considerable attention from the operations research community, see
\cite{WA88,Gou95,Gou96,Voss99,GR01,GM03,AT11,RAJ12,BFGP13,%
BHR13,BFG15,TCG15,DGMG16,Lei16,BF18,DMMY18}
for a sample of papers.

Approximation algorithms for h.c. problems were previously constructed. They
are usually considerably harder than their non h.c. counterparts, and often
require bicriteria approximation. Previously studied problems include
minimum depth spanning tree \cite{AFHKRS05}, degree bounded minimum diameter
spanning tree \cite{KLS05}, bounded depth Steiner tree \cite{KP97,KP09}, h.c.
MST \cite{Rav94}, Steiner tree \cite{MRSRRH98}, and $k$-Steiner tree
\cite{HKS09,KS16}. We refer to \cite{HHZ21} for further details.
Several other approximation problems are also closely related, as they
optimize cost subject to explicit or implicit distance/depth requirements
(even though they don't have explicit hop constraints). Examples include
shallow-light trees and shallow-light Steiner trees \cite{KRY95,ES15ShallowLight},
cost-distance Steiner tree variants \cite{MMP08}, and buy-at-bulk network
design problems \cite{SCRS01,HKS09,KS16,GRTU17}.

Recently, Ghaffari \etal \cite{GHZ23} obtained a hop-constrained $\poly(\log)$
competitive oblivious routing scheme using techniques based on \cite{HHZ21}
hop-constrained Ramsey trees.
Notably, even though the $O(\frac{\log n}{\eps})$ distortion in
\Cref{thm:UltrametricRamsey} is superior to the $O(\frac{\log^2 n}{\eps})$
distortion in a similar theorem in \cite{HHZ21}, \cite{HHZ21} showed that the
expected distortion in their embedding is only
$\tilde{O}(\log\frac{n}{\eps})$. This additional property turned out to be
important for the oblivious routing application. See further discussion in
\Cref{subsec:Conclusion}, \questionref{question:Expected}.

Given a graph $G=(V,E,w)$, a \emph{hop-set} is a set of edges $G'$ that when
added to $G$, $d_{G\cup G'}\rhop{h}$ well approximates $d_G$. Formally, an
$(t,\red{h})$-hop-set is a set $G'$ such that $\forall u,v\in V$, $d_G(u,v)\le
d_{G\cup G'}\rhop{h}(u,v)\le t\cdot d_G(u,v)$.
Elkin and Neiman \cite{EN19hop} constructed
$\left(1+\eps,\red{\left(\frac{\log k}{\eps}\right)^{\log k-2}}\right)$
hop-sets with $O_{\eps,k}(n^{1+\frac1k})$ edges.
Hop-sets have been extensively studied; we refer to \cite{EN20} for a survey.
Another related problem is $\red{h}$-hop $t$-spanners. Here we are given a
metric space $(X,d)$, and the goal is to construct a graph $G$ over $X$ such
that $\forall x,y\in X$, $d_X(x,y)\le d_{G}\rhop{h}(x,y)\le t\cdot d_X(x,y)$.
See \cite{AMS94,Sol13,HIS13,LMS23} for Euclidean metrics, \cite{FN22} for
different metric spaces, \cite{FL22,FGN24SoCG} for reliable $\red{2}$-hop
spanners, and \cite{ASZ20} for low-hop emulators (where $d_G^{\red{(O(\log\log
n))}}$ respects the triangle inequality).

The idea of one-to-many embedding of graphs originated with Bartal and
Mendel \cite{BM04multi}, who for $k\ge1$ constructed an embedding into
ultrametric with $O(n^{1+\frac1k})$ nodes and path distortion $O(k\cdot\log
n\cdot \log\log n)$ (see \Cref{def:OTM-PathTreeEmbedding}, and ignore all hop
constraints). The path distortion was later improved to $O(k\cdot\log n)$
\cite{FL21,Bar21}.
Recently, Haeupler \etal \cite{HHZ21NonHop} studied approximate copy tree
embeddings, which are essentially equivalent to one-to-many tree embeddings.
Their ``path-distortion'' is inferior: $O(\log^2n)$, however they were able to
bound the number of copies of each vertex by $O(\log n)$ in the worst case
(not obtained by previous works).
One-to-many embeddings were also studied in the context of minor-free graphs,
where Cohen-Addad \etal \cite{CFKL20} constructed an embedding into low treewidth
graphs with expected additive distortion. Later, the author and Le \cite{FL21}
also constructed clan and Ramsey type embeddings of minor-free graphs into
low treewidth graphs.

Bartal \etal \cite{BFN22} showed that even when the metric space is the
shortest path metric of a planar graph with constant doubling dimension, the
general metric Ramsey type embedding \cite{MN07} cannot be substantially
improved. Finally, there are also versions of Ramsey type embeddings
\cite{ACEFN20}, and clan embeddings \cite{FL21} into spanning trees of the
input graph. These embeddings lose a $\log\log n$ factor in the distortion
compared with the embeddings into (non-spanning) trees.

\subsection{Conclusion and open problems}\label{subsec:Conclusion}
Hop-constrained shortest path distances are very challenging to work with as
they do not constitute a metric space. Haeupler, Hershkowitz, and Zuzic
\cite{HHZ21} were the first to study metric embeddings into trees of h.c.
distances. While such a metric  embedding is impossible in general, Haeupler
\etal managed to do so by introducing hop-stretch (that is additional slack in
the number of hops), and inclusion probability (that is Ramsey-type
embeddings). Haeupler \etal then used their Ramsey-type embedding to construct
a ``copy tree embedding'' where each subgraph in the original graph has a
corresponding copy in the image. Finally they used their embeddings to get a
plethora of h.c. approximation and online algorithms.

Haeupler \etal's constructions are based on stochastic decompositions ala
\cite{Bar96}. We employ more sophisticated metric embedding techniques (based
on \cite{ACEFN20}, which itself is based on \cite{MN07}) to obtain
significantly better distortion and hop-stretch. We then employ ideas of clan
embeddings \cite{BM04multi,FL21} to obtain one-to-many embeddings with path
distortion guarantees. Next, we use the clan embeddings to obtain path-tree
subgraph-preserving embeddings, which are similar to the \cite{HHZ21} copy tree
embedding, while ours have much superior guarantees. We then use our
embeddings to improve the parameters for the plethora of h.c. approximation
and online algorithms studied in \cite{HHZ21} (and previous papers).
Finally, we study h.c. metric data structures (not studied theoretically
before). Surprisingly, using a combination of our metric embeddings (to obtain
a rough estimate of the hop-constrained distance), together with previously
known metric data structures (in a black-box fashion) we are able to construct
h.c. metric data structures with parameters almost matching their non h.c.
counterparts.
Our work leaves several open questions and directions:
\begin{enumerate}
	\item Metric embeddings of topologically restricted graphs: A large
	portion of the metric embeddings literature is concerned with metric
	embedding of topologically restricted graph families such as planar
	graphs, or fixed-minor-free graphs.
	It will be interesting to understand how h.c. shortest path distances in
	such graphs behave, and to find additional structure that will allow for
	better metric embeddings.
	While we do not expect to find better Ramsey type or clan embeddings into
	trees (as there are no such non h.c. counterparts), it is plausible that
	there are good metric embeddings with low additive distortion into low
	treewidth graphs. See \cite{FKS19,CFKL20,FL21,FL22tw,CCLMST23,CLPP23}.
	\item Approximate Nearest Neighbor Search (ANNS): A very well studied
	metric data structure is that of ANNS (see the survey \cite{AIR18}). While
	most of the previous work is on norm spaces, there has also been some
	work on general metrics \cite{Fil23} and planar graphs
	\cite{ACKW15,Fil23}. It will be interesting to see if these works can be
	extended to the case of h.c. distances.
	\item Optimizing the parameters in our h.c. Ramsey-type embeddings: The
	first question will be whether it is possible to remove the $O(\log\log
	n)$ factors in \Cref{thm:UltrametricRamseyAlt} (similarly in
	\Cref{thm:ClanHopUltrametricAlt}). Secondly, it will be interesting to
	optimize the distortion (hop-stretch) and remove the asymptotic notation
	there. Note that the $k$ parameter governs the exponent in the inclusion
	probability, and hence is important.
	\item\label{question:Expected} Expected distortion: Haeupler \etal
	\cite{HHZ21} also studied the notion of expected distortion. Consider an
	ultrametric $U$, and a subset $M\subseteq V$ of saved vertices. Given a
	pair of vertices $u,v\in V$, set $d_{U,M}(u,v)=\begin{cases}
		d_{U}(u,v) & u,v\in M\\
		0 & \text{otherwise}
	\end{cases}$.
	The main result in \cite{HHZ21} is an embedding with inclusion probability
	$1-\eps$, distortion $O(\frac{\log^2 n}{\eps})$, hop stretch
	$O(\frac{\log^2 n}{\eps})$, and in addition, for every pair of vertices,
	the expected distortion is bounded by
	$\mathbb{E}_{U,M}\left[d_{U,M}(u,v)\right]\le O(\log n\cdot\log(\frac{\log
	n}{\eps}))\cdot d_G^{\red{(h)}}(u,v)$. Note that for inclusion probability
	$1-\eps$, our \Cref{thm:UltrametricRamsey} only guarantees distortion
	$O(\frac{\log n}{\eps})$ (in the worst case). Thus  \cite{HHZ21}
	relaxation for expected distortion improves the dependence on $\eps$
	exponentially. As it turns out, in the application of \cite{HHZ21} for
	oblivious routing (see \cite{GHZ23}) one needs inclusion probability
	$1-\eps$ for $\eps=\poly(\frac1n)$, thus this difference is significant.
	It will be interesting to optimize the expected distortion w.r.t. the
	other parameters using the more sophisticated tools in this paper.
	\item H.c. Ramsey type embeddings into spanning trees (subgraph): Our
	embeddings in this paper are mostly into ultrametrics (or into abstract
	non-subgraph trees in \Cref{thm:Subgraph,thm:SubgraphAlt}). For non h.c.
	distances, stochastic, Ramsey-type, and clan embeddings were studied into
	spanning (sub-graph) trees as well \cite{EEST08,AN19,ACEFN20,FL21}. It is
	interesting to see if it is possible to obtain similar results for h.c.
	distances (perhaps by allowing the distances in the tree to be also h.c.).
	\item Relaxed notions of distortion: In the metric embeddings literature,
	there are many works studying relaxed notions of distortion such as
	scaling distortion \cite{CBCDGKNS05,CDG06,ABN11,BFN19} (where $1-\eps$ of
	the pairs are guaranteed to have distortion $f(\frac1\eps)$), terminal
	distortion \cite{EFN17,EN21,MMMR18,NN19,CN24} (where there is distortion
	guarantee only for the pairs $K\times V$ for a subset $K\subseteq V$ of
	terminals), and prioritized distortion \cite{EFN18,BFN19,EN22,FGK24}
	(where there is a priority order over the vertices, and the distortion
	guarantee is w.r.t. the order). It will be interesting to construct metric
	embeddings of h.c. distances with refined notions of distortion.
\end{enumerate}

\section{Preliminaries}\label{sec:prelim}
The $\tilde{O}$ notation hides poly-logarithmic factors, that is
$\tilde{O}(g)=O(g)\cdot\polylog(g)$.
Consider an undirected weighted graph $G=(V,E,w)$, and a hop
parameter $\red{h}$.
A graph is called unweighted if all its edges have unit weight.
In general, in the input graphs we will assume that all weights are finite
positive numbers, while in our output graphs we will also use $\infty$ as a
weight (this will represent that either vertices are disconnected or every
path between them has more than $\red{h}$ hops).
We denote $G$'s vertex set and edge set by $V(G)$ and $E(G)$, respectively.
Often we will abuse notation and write $G$ instead of $V(G)$.
$d_{G}$ denotes the shortest path metric in $G$, i.e., $d_G(u,v)$ equals
the minimum weight of a path between $u$ and $v$.
A path $P=(v_0,v_1,\dots,v_h)$ is said to have $\hop(P)=\red{h}$ hops. The
$\red{h}$-hop distance between two vertices $u,v$ is
$$d_G\rhop{h}(u,v)=\min\{w(P)~\mid~P\mbox{ is a path between $u$ and $v$ with
} \hop(P)\le \red{h}\}~.$$
If there is no path from $u$ to $v$ with at most $\red{h}$ hops, then
$d_G\rhop{h}(u,v)=\infty$.
$\hop_G(u,v)=\min \left\{h \mid d_G\rhop{h}(u,v)<\infty\right\}$ is the
minimum number of hops in a $u$-$v$ path. For two subsets $S,T\subseteq V$,
$\hop_G(S,T)=\min_{s\in S,t\in T} \left\{\hop_G(s,t)\right\}$, for a vertex
$v$, we can also write $\hop_G(v,S)=\hop_G(\{v\},S)=\min_{s\in
S}\left\{\hop_G(v,s)\right\}$.
The $\red{h}$-hop diameter of the graph is
$\diam\rhop{h}(G)=\max\{d_G\rhop{h}(u,v)\}$,
the maximal $\red{h}$-hop distance between a pair of vertices. Note that in
many cases, this will be $\infty$.
$B_G(v,r)=\{u\mid d_G(u,v)\le r\}$ is the closed ball around $v$ of radius
$r$, and similarly $B\rhop{h}_G(v,r)=\{u\mid d\rhop{h}_G(u,v)\le r\}$ is the
$\red{h}$-hop-constrained ball.

Throughout the paper we will assume that the minimum weight edge in the input
graph has weight $1$. Note that due to scaling this assumption does not
lose generality.
The \emph{aspect ratio},$^{\ref{foot:aspectRatio}}$ is the ratio between the
maximum and minimum weight in $G$ $\frac{\max_{e\in
E}w(e)}{\min_{e\in E}w(e)}$,
or simply $\max_{e\in E}w(e)$ as we assumed the minimum distance is $1$.
In many places (similarly to \cite{HHZ21}) we will assume that the aspect
ratio is polynomial in $n$ (actually in all results other than
\Cref{thm:UltrametricRamseyAlt,thm:ClanHopUltrametricAlt}). This assumption
will always be stated explicitly.

For a subset $A\subseteq V$ of vertices,
$G[A]=(A,E_A=E\cap{A\choose2},w_{\upharpoonright E_{A}})$ denotes the subgraph
induced by $A$.
The diameter of a cluster $S\subseteq V$, denoted by $\dm(S)$ is $\max_{u,v
\in S}d_{G[S]}(u,v)$.
\footnote{This is often called \emph{strong} diameter. A related notion is the
\emph{weak} diameter of a cluster $S$, defined as $ \max_{u,v \in S}d_{G}(u,v)$.
Note that for a metric space, weak and strong diameter are equivalent. See
\cite{Fil19padded}.}
Similarly, for a subset of edges $\tilde{E}\subseteq E$, where $A\subseteq V$
is the subset of vertices in $\cup_{e\in\tilde{E}}e$, the graph induced by
$\tilde{E}$ is $G[\tilde{E}]=(A,\tilde{E},w_{\upharpoonright \tilde{E}})$.

An ultrametric $\left(Z,d\right)$ is a metric space satisfying a
strong form of the triangle inequality, that is, for all $x,y,z\in Z$,
$d(x,z)\le\max\left\{ d(x,y),d(y,z)\right\} $.
In \Cref{def:ultra} we define a hierarchically well-separated tree (HST). It
is well known that HSTs and ultrametrics are equivalent (see \cite{BLMN05}),
and we will use these two terms interchangeably.
\begin{definition}[Hierarchically well-separated tree - HST]\label{def:ultra}
	An HST is a metric space $\left(Z,d\right)$ whose elements
	are the leaves of a rooted labeled tree $T$. Each $z\in T$ is associated
	with a label $\ell\left(z\right)\ge0$ such that if $q\in T$ is a
	descendant of $z$ then $\ell\left(q\right)\le\ell\left(z\right)$
	and $\ell\left(q\right)=0$ iff $q$ is a leaf. The distance between
	leaves $z,q\in Z$ is defined as
	$d_{T}(z,q)=\ell\left(\mbox{lca}\left(z,q\right)\right)$
	where $\mbox{lca}\left(z,q\right)$ is the least common ancestor of
	$z$ and $q$ in $T$.
\end{definition}

\paragraph{Metric Embeddings}
Classically, a metric embedding is defined as a function $f:X\rightarrow Y$
between the points of two metric spaces $(X,d_X)$ and $(Y,d_Y)$.
A metric embedding $f$ is said to be \emph{dominating} if for every pair of
points $x,y\in X$, it holds that $d_X(x,y)\le d_Y(f(x),f(y))$.
The distortion of a dominating embedding $f$ is $\max_{x \not= y\in
X}\frac{d_Y(f(x),f(y))}{d_X(x,y)}$.
Here we will also study a more permissive generalization of metric embedding
introduced by Cohen-Addad \etal \cite{CFKL20}, which is called
\emph{one-to-many} embedding.
\begin{definition}[One-to-many embedding]\label{def:one-to-many}
	A \emph{one-to-many embedding} is a function $f:X\rightarrow2^Y$  from the
	points of a metric space $(X,d_X)$ into non-empty subsets of points of a
	metric space $(Y,d_Y)$, where the subsets $\{f(x)\}_{x\in X}$ are
	disjoint.	For a point $x'\in Y$,
	$f^{-1}(x')$ denotes the unique point $x\in X$ such that $x'\in f(x)$. If
	no such point exists, $f^{-1}(x')=\emptyset$.
	A point $x'\in f(x)$ is called a \emph{copy} of $x$, while $f(x)$ is
	called the \emph{clan} of $x$.
	For a subset $A\subseteq X$ of vertices, denote $f(A)=\cup_{x\in A}f(x)$.
	We say that $f$ is \emph{dominating} if for every pair of points $x,y\in
	X$, it holds that $d_X(x,y)\le \min_{x'\in f(x),y'\in f(y)}d_Y(x',y')$.
\end{definition}

Here we will study the new notion of clan embeddings introduced by the author
and Le \cite{FL21}.%
\begin{definition}[Clan embedding]\label{def:clan}
	A clan embedding from metric space $(X,d_X)$ into a metric space $(Y,d_Y)$
	is a pair $(f,\chi)$ where $f:X\rightarrow2^{Y}$ is a dominating
	one-to-many embedding, and $\chi:X\rightarrow Y$ is a classical embedding,
	where for every $x\in X$, $\chi(x)\in f(x)$.  $\chi(x)$ is referred to as
	the chief of the clan of $x$ (or simply the chief of $x$).
	We say that the clan embedding $f$ has distortion $t$ if for every
	$x,y\in X$, \mbox{$\min_{y'\in f(y)}d_{Y}(y',\chi(x))\le t\cdot
	d_{X}(x,y)$}.
\end{definition}

\paragraph{Bounded hop distances}
We say that an embedding $f:V\rightarrow X$ has distortion $t$, and
\emph{hop-stretch} $\red{\beta}$, if for every $u,v\in V$ it holds that
$$d_G^{\red{(\beta h)}}(u,v)\le d_X(f(u),f(v))\le t\cdot d_G\rhop{h}(u,v)~.$$
\cite{HHZ21} showed an example of a graph $G$ where in every classical embedding
of $G$ into a metric space, either the hop-stretch $\red{\beta}$, or the
distortion $t$ must be polynomial in $n$. In particular, there is no hope for
a classical embedding (or even a stochastic one) with both hop-stretch and distortion
being sub-polynomial.

We will therefore study hop-constrained clan embeddings, and Ramsey
type embeddings.
\begin{definition}[Hop-distortion of Ramsey type embedding]\label{def:Ramsey}
	\sloppy An embedding $f$  from a weighted graph $G=(V,E,w)$ to a metric
	space $(X,d_X)$  has Ramsey hop distortion $(t,M,\red{\beta},\red{h})$ if
	$M\subseteq V$, for every $u,v\in V$  it holds that $d_G^{\red{(\beta\cdot
	h)}}(u,v)\le d_X(f(v),f(u))$, and for every $u\in V$ and $v\in M$,
	$d_X(f(v),f(u))\le t\cdot d_G\rhop{h}(u,v)$.
\end{definition}
\begin{definition}[Hop-distortion of clan embedding]
	\sloppy A clan embedding $(f,\chi)$ from a weighted graph $G=(V,E,w)$ to a
	metric space $(X,d_X)$ has hop distortion $(t,\red{\beta},\red{h})$ if for
	every $u,v\in V$,  it holds that $d_G^{\red{(\beta\cdot
	h)}}(u,v)\le\min_{v'\in f(v),u'\in f(u)}d_X(v',u')$, and $\min_{v'\in
	f(v)}d_X(v',\chi(u))\le t\cdot d_G\rhop{h}(u,v)$.
\end{definition}

The following definitions will be useful to argue that the clan embedding
preserves properties of subgraphs, and not only vertex pairs.

\begin{definition}[$\red{h}$-respecting]\label{def:hRespecting}
	A subgraph $H$  of $G$ is $\red{h}$-\emph{respecting} if for every $u,v\in
	H$ it holds that $d\rhop{h}_G(u,v)\le d_H(u,v)$. Often we will abuse
	notation and say that a path $P$ is $\red{h}$-\emph{respecting}, meaning
	that the subgraph induced by the path is $\red{h}$-respecting.
\end{definition}
\begin{definition}[Hop-path-distortion]\label{def:HopPathDist}
	We say that the one-to-many embedding $f: V \to Y$ has \emph{hop path
	distortion} $(t,\red{h})$ if for every $\red{h}$-respecting path
	$P=\left(v_0,v_1,\dots,v_m\right)$ there are vertices $v'_0,\dots,v'_m$
	where $v'_i\in f(v_i)$ such that $\sum_{i=0}^{m-1}d_Y(v'_i,v'_{i+1})\le t
	\cdot \sum_{i=0}^{m-1}w(v_i,v_{i+1})$.
\end{definition}

The $\red{h}$-respecting condition in \Cref{def:HopPathDist} is necessary as
it rules out low-weight paths that use more than $\red{h}$ hops. Such paths
cannot generally be preserved by an h.c. embedding, since vertices with large
or infinite $\red{h}$-hop distance in $G$ may be placed far apart in the
embedding. On the other hand, the condition is useful as it allows us to apply
path-distortion to connected subgraphs: after doubling the edges of an
$\red{h}$-respecting subgraph, its Euler tour is again an
$\red{h}$-respecting path.

\paragraph{Minimax principle.}
We will use the following standard finite-dimensional form of von Neumann's
minimax theorem; see, e.g., \cite{OR94}. For a finite set $S$, let
$\mathsf{Dist}(S)$ denote the set of probability distributions over $S$.
Let $\Omega,\mathcal{A}$ be finite sets, and let
$\Phi:\mathcal{A}\times\Omega\rightarrow\mathbb{R}$
be a payoff function. Then
\[
	\max_{\cD\in\mathsf{Dist}(\mathcal{A})}\min_{\mu\in\mathsf{Dist}(\Omega)}
	\mathbb{E}_{A\sim\cD,\,\omega\sim\mu}[\Phi(A,\omega)]=\min_{\mu\in
	\mathsf{Dist}(\Omega)}\max_{A\in\mathcal{A}}\mathbb{E}_{\omega\sim\mu}[
	\Phi(A,\omega)]~.
\]
In particular, if for every distribution $\mu\in\mathsf{Dist}(\Omega)$
there is an object $A\in\mathcal{A}$ such that
$\mathbb{E}_{\omega\sim\mu}[\Phi(A,\omega)]\ge \alpha$,
then there is a distribution $\cD\in\mathsf{Dist}(\mathcal{A})$ such
that for every $\omega\in\Omega$,
$\mathbb{E}_{A\sim\cD}[\Phi(A,\omega)]\ge\alpha$.
Similarly,
\[
\min_{\cD\in\mathsf{Dist}(\mathcal{A})}\max_{\mu\in\mathsf{Dist}(\Omega)}
\mathbb{E}_{A\sim\cD,\,\omega\sim\mu}[\Phi(A,\omega)]=\max_{\mu\in
\mathsf{Dist}(\Omega)}\min_{A\in\mathcal{A}}\mathbb{E}_{\omega\sim\mu}[
\Phi(A,\omega)]~.
\]
In particular, if for every distribution $\mu\in\mathsf{Dist}(\Omega)$
there is an object $A\in\mathcal{A}$ such that
$\mathbb{E}_{\omega\sim\mu}[\Phi(A,\omega)]\le \beta$,
then there is a distribution $\cD\in\mathsf{Dist}(\mathcal{A})$ such
that for every $\omega\in\Omega$,
$\mathbb{E}_{A\sim\cD}[\Phi(A,\omega)]\le \beta$.

\section{Paper overview}
\addtocontents{toc}{\protect\setcounter{tocdepth}{1}}
\subsection{Ramsey type embeddings}
\paragraph{Previous approach.}
The construction of Ramsey type embeddings in \cite{HHZ21} is based on padded
decompositions such as in \cite{Bar96,Fil19padded}. Specifically, Haeupler
\etal show that for every parameter $\Delta$, one can partition the vertices
into clusters $\mathcal{C}$ such that for every $C\in\mathcal{C}$,
$\diam\rhop{O(\log^2 n)\cdot h}(C)=\max_{u,v\in C}d_G\rhop{O(\log^2 n)\cdot
h}(u,v)\le \Delta$ and every ball $B\rhop{h}(v,r)=\{u\mid d\rhop{h}_G(u,v)\le
r\}$, for $r=\Omega(\frac{\Delta}{\log^2n})$, belongs to a single cluster
(i.e. $v$ is ``padded'') with probability $\approx1-\frac{1}{\log n}$.
\cite{HHZ21} then recursively partitions the graph to create clusters with
geometrically decreasing diameter $\Delta$. The resulting hierarchical
partition defines an ultrametric, where the vertices that are padded in all
the levels belong to the set $M$. By a union bound each vertex belongs to $M$
with probability at least $\frac12$, and the distortion and hop-stretch equal
to the parameters of the decomposition: $O(\log^2n)$.

Our goal here is to improve the different parameters: stretch, hop-stretch,
and inclusion probability. We begin by optimizing the trade-off between
stretch and inclusion probability (which mostly resembles classic non
hop-constrained work). Then we will further optimize the hop-stretch (by
introducing small slack to the stretch).

\paragraph{\Cref{thm:UltrametricRamsey}: optimal distortion - inclusion
probability tradeoff.}
We use a more sophisticated approach, similar to previous works on clan
embeddings \cite{FL21}, and the deterministic construction of Ramsey trees
\cite{ACEFN20}.
The minimax theorem is a standard tool for converting an average-type
guarantee into a distribution that is good for each element separately.
Informally, if for every way of averaging over the vertices there exists
an embedding whose saved set has a large portion of the total mass, then there
is a distribution over embeddings in which every vertex is saved with large
probability.
In our setting, the averaging is given by a probability measure
$\mu:V\rightarrow[0,1]$, and the payoff of an embedding with saved set $M$ is
$\mu(M)$. Thus, for the minimax step, it is enough to show that for
every probability measure $\mu$, there exists a single embedding for
which $\mu(M)$ is large.
As a corollary, we then obtain a distribution where the inclusion probability
for each individual vertex is accordingly large.

Rather than prove this probability-measure statement directly, we prove
a slightly more convenient version for $(\ge1)$-measures, that is, where the
measure of each vertex is at least $1$. The lower bound $\mu(v)\ge1$
is useful in the recursive ball-growing proof (where singleton clusters should
not have vanishing mass).
Specifically, given a parameter $k$ and a measure
$\mu:X\rightarrow\R_{\ge1}$, \Cref{lem:UltraMeasure} constructs a
Ramsey type embedding with Ramsey hop-distortion
$(16k,M,\red{O(k\cdot\log n)},\red{h})$ such that
$\mu(M)\ge \mu(X)^{1-\frac1k}$. This lemma is later converted back to
probability measures and fed into the minimax argument.
The construction of the distributional \Cref{lem:UltraMeasure} is
a deterministic recursive ball growing algorithm.
Our algorithm closely resembles the Ramsey (non-spanning) trees construction
from \cite{ACEFN20} (Algorithm 1 in the Journal version). In particular, if
one sets $\red{h}=\infty$, up to some technical differences, it will be the
same algorithm.

Given a cluster $X$ and scale parameter $2^i$, we partition $X$ into clusters
$X_1,\dots,X_s$ such that each $X_i$ is contained in a hop-bounded ball
$B_{G[X]}^{\red{(i\cdot O( k)\cdot h)}}(v,2^{i-1})$.
Then we recursively construct a Ramsey type embedding for each cluster $X_q$
(with scale $2^{i-1}$), and combine the obtained ultrametrics using a new root
with label $2^i$.
Note that there is a dependence between the radius of the balls and the number
of hops they are allowed. Specifically, in each level we decrease the radius
by a multiplicative factor of $2$, and the number of hops by an additive factor
of $\red{h'=O(k)\cdot h}$.
We maintain a set of active vertices $M\subseteq X$. Initially all the
vertices are active. An active vertex $v\in M$ will cease to be active if the
ball $B_G\rhop{h}(v,\Theta(\frac{2^i}{k}))$ intersects with more than one
cluster. The set of vertices that remain active until the end of the algorithm,
i.e. vertices that were ``padded'' in every level, will constitute the set $M$
in \Cref{lem:UltraMeasure}.
The ``magic'' happens in the creation of the clusters
(\Cref{alg:CreateCluster}), so as to ensure that a large enough portion of the
vertices remain active until the end of the algorithm.
On an (inaccurate and) intuitive level, inductively, at level $i-1$, the
``probability'' of a vertex $v$ to be padded in all future levels is
$\left(|B^{\red{(i\cdot h')}}(v,2^{i-3})|\right)^{-\frac1k}$.
By a ball growing argument, $v$ is padded in the $i$'th level with
``probability'' $\left(\frac{|B^{\red{((i+1)\cdot
h')}}(v,2^{i-2})|}{|B^{\red{(i\cdot h')}}(v,2^{i-3})|}\right)^{-\frac1k}$.
Hence by the induction hypothesis, $v$ is padded at level $i$ and all the
future levels with probability $\left(|B^{\red{((i+1)\cdot
h')}}(v,2^{i-2})|\right)^{-\frac1k}$. Note that for these cancellations to work
out, we are paying an $\red{h'=O(k\cdot h)}$ additive factor in the
hop-stretch in each level, as we need that the ``minimum possible cluster''
(among the possibly considered clusters, minimum w.r.t. containment) at the
current level will contain the ``maximum possible cluster'' (among the
possibly considered clusters, maximum w.r.t. containment) in the next one.
In other words, consider an $i$-level cluster $X$ which contains an
$i-1$-level cluster $Y$. $X$ and $Y$ were chosen from a set of $O(k)$ nested
clusters $X\in \left\{A^X_{0}\subseteq A^X_{1}\subseteq\cdots \subseteq
A^X_{2k}\right\}$, and $Y\in\left\{A^Y_{0}\subseteq A^Y_{1}\subseteq\cdots
\subseteq A^Y_{2k}\right\}$. Our analysis requires that $A^Y_{2k}\subseteq
A^X_{0}$ (regardless of the $A^Y_{j}$ implemented).

\paragraph{\Cref{thm:UltrametricRamseyAlt}: $\log\log n$ hop-stretch.}
The construction of the embedding for \Cref{thm:UltrametricRamseyAlt} is the
same as that of \Cref{thm:UltrametricRamsey} in all aspects other than the
cluster creation. In particular, we first prove a distributional lemma
(\Cref{lem:UltraMeasureAlt}) and use the minimax theorem.
The analysis of the algorithm of \Cref{thm:UltrametricRamsey} required a very
strong ``containment of potential created clusters'' guarantee
($A^Y_{2k}\subseteq A^X_{0}$ above).
This dependence between the different scale levels contributes an additive
factor of $\red{O(kh)}$ to the hop allowance per level. Thus for polynomial
aspect ratio it has $\red{O(k\cdot\log n)}$ hop-stretch.

To decrease the hop-stretch, here we create clusters w.r.t. ``hop diameter''
at most $\red{O(k\cdot\log\log n)\cdot h}$, regardless of the current scale.
As a result, it is harder to relate between the different levels (as the
``maximum possible cluster'' in the next level is not contained in the
``minimum possible one'' in the current level).
To compensate for that, instead, we use a stronger per-level guarantees.
A similar phenomenon where one cannot directly compare consecutive levels also
appeared in the spanning (sub-graph) Ramsey trees of Abraham \etal
\cite{ACEFN20} (see also \cite{AN19,FL21}). In particular, our algorithm for
\Cref{thm:UltrametricRamseyAlt} is morally similar to their algorithm 4 (in
the journal version). However, as \cite{ACEFN20} implemented their algorithm
in the petal decomposition framework, the technicalities are quite different.
An additional advantage of this approach is that it holds for graphs with
unbounded aspect ratio (as the hop-stretch is unrelated to the
number of scales).

In more detail, at level $i$ we create a ball cluster inside
$B^{\red{(O(k\cdot\log\log n)\cdot h)}}(v,2^{i-2})$, however instead of using
the ball growing argument on the entire spectrum of possibilities, we use it only
in a smaller ``strip'', losing a $\log\log n$ factor, but obtaining that the
probability of $v$ to be padded in the current $i$ level is
$\left(\frac{|B^{\red{(O(k\cdot\log\log n)\cdot
h)}}(v,2^{i-2})|}{|X|}\right)^{\frac{1}{k}}$. Inductively we assume that the
``probability'' of a vertex $v\in X$ to be padded in all the levels is
$|X|^{-\frac1k}$. As the cardinality of the resulting cluster is smaller than
$\left|B^{\red{(O(k\cdot\log\log n)\cdot h)}}(v,2^{i-2})\right|$, the
probability that $v$ is padded in all the levels (including $i$) is indeed
$|X|^{-\frac1k}$.

\subsection{Clan embeddings}
The constructions of the clan embeddings
\Cref{thm:ClanHopUltrametric,thm:ClanHopUltrametricAlt} are similar to the
Ramsey type embeddings. In particular, in both cases the key step is a
distributional lemma
(\Cref{lem:clanHopUltraMeasure,lem:clanHopUltraMeasureAlt}), where the
algorithms for the distributional lemmas are deterministic ball
growing algorithms.
The main difference is that while in the Ramsey type embeddings we partition
the graph into clusters,
in the clan embedding we create a cover.
Specifically, given a cluster $X$ and scale $2^i$, the created cover is a
collection of clusters $X_1,\dots,X_s$ such that each $X_i$ is contained in a
hop-bounded ball $B_{G[X]}^{\red{(O(i\cdot k)\cdot h)}}(v,2^{i-2})$, and every
ball $B_{G[X]}\rhop{h}(u,\Omega(\frac{2^i}{k}))$ is fully contained in some
cluster. Each vertex $u$ might belong to several clusters.
In the Ramsey type embedding our goal was to minimize the measure of the
``close to the boundary'' vertices, as they were deleted from the set $M$.
On the other hand, here the goal is to bound the combined multiplicity of all
the vertices (w.r.t. the measure $\mu$), which governs the clan sizes.

\paragraph{Path distortion and subgraph preserving embedding}
An additional property of the clan embedding is path distortion.
Specifically, given an ($\red{h}$-respecting) path
$P=\left(v_0,v_1,\dots,v_m\right)$ in $G$, we are guaranteed that there are
vertices $v'_0,\dots,v'_m$ where $v'_i\in f(v_i)$ such that
$\sum_{i=0}^{m-1}d_U(v'_i,v'_{i+1})\le O(k\cdot\log n) \cdot
\sum_{i=0}^{m-1}w(v_i,v_{i+1})$.
The proof of the path distortion property (\Cref{lem:ClanPathDistortion}) is
recursive and follows similar lines to \cite{BM04multi}.
Specifically, for a scale $2^i$ we iteratively partition the path vertices
$v_0,v_1,\dots,v_m$ into the clusters $X_1,\dots,X_s$, while minimizing the
number of ``split edges'': consecutive vertices assigned to different
clusters. We inductively construct ``copies'' for each sub-path internal to a
cluster $X_j$. The ``split edges'' are used as evidence that the path $P$
has weight comparable to the scale $2^i$ times the number of split edges, and
thus we can afford them (as the maximal price of an edge in the
ultrametric is $2^i$).

Next, we construct the subgraph preserving embedding. Specifically, we
construct a clan embedding $f:V\rightarrow 2^T$, such that all the vertices in
$T$ are copies of $G$ vertices: $V(T)=\cup_{v\in V}f(v)$, and every edge
$e=\{u',v'\}\in T$ for $u'\in f(u)$, $v'\in f(v)$ is associated with a path
$P_e$ from $u$ to $v$ in $G$ of weight $\le d_T(u',v')$. By concatenating edge
paths we obtain an induced path for every pair of vertices $u',v'$ in $T$.
In \Cref{cor:hBoundedSubgraph} we construct such a subgraph preserving
embedding, where the number of hops in every induced path is bounded by
$\red{O(\log^2n)\cdot h}$, and such that for every connected
($\red{h}$-respecting) subgraph $H$ of $G$, there is a connected subgraph $H'$
of $T$ of weight $w_T(H')\le O(\log n)\cdot w_G(H)$, where for every $v\in H$,
$H'$ contains some vertex from $f(v)$. In \Cref{cor:hBoundedSubgraphAlt} we
obtain an alternative tradeoff where the number of hops in every induced path
is bounded by $\red{\tilde{O}(\log n)\cdot h}$, and the weight of the subgraph
$H'$ is bounded by $w_T(H')\le \tilde{O}(\log n)\cdot w_G(H)$.

Finally, we obtain a ``subgraph-preserving embedding'' for arbitrary (that is
not necessarily $\red{h}$-respecting) subgraphs
(\Cref{thm:Subgraph,thm:SubgraphAlt}).
Specifically, given an arbitrary subgraph $H$ of $G$, we are guaranteeing that $T$
will contain a (not necessarily connected) subgraph $H'$ of $T$ of weight at
most $w_T(H')\le O(\log n)\cdot w_G(H)$, such that for every pair of points
$u,v$ where $\hop_H(u,v)\le\red{h}$, $H'$ will contain two vertices from
$f(u)$ and $f(v)$ in the same connected component.

A key component in the proof of \Cref{thm:Subgraph} is a construction of a
\emph{sparse cover} (\Cref{lem:SparseCover}). Here we are given a parameter
$\Delta$, and partition the graph into clusters of diameter $O(\log
n)\cdot\Delta$, such that every ball of radius $\Delta$ is contained in some
cluster, and the total sum of weights of all the cluster-induced subgraphs is
at most constant times larger than the weight of $G$.

\subsection{Approximation algorithms}
Following the approach of \cite{HHZ21}, our improved Ramsey type embeddings
(\Cref{thm:UltrametricRamsey,thm:UltrametricRamseyAlt}) imply improved
approximation algorithms for all the problems studied in \cite{HHZ21}. We
refer to \cite{HHZ21} for details, and to \Cref{cor:HHZ21Application} (and
\Cref{tab:ApproxAlgs}) for a summary.

For the group Steiner tree problem (and its online version), using the
subgraph preserving embedding, we are able to obtain a significant improvement
beyond the techniques of \cite{HHZ21}. In fact, the cost of our approximation
algorithm matches the state of the art for the non-hop-restricted case.
Our approach is as follows: first we observe that the optimal solution $H$ for
the h.c. group Steiner tree problem is a tree, and hence
$\red{2h}$-respecting. It follows that a copy $H'$ of the optimal solution $H$
(of weight $O(\log n)\cdot \opt$) could be found in the image $T$ of the
subgraph preserving embedding $f$. Next we use known (non h.c.) approximation
algorithms for the group Steiner tree problem on trees \cite{GKR00} to find a
valid solution $H''$ in $T$ of weight $O(\log^2 n\cdot\log k)\cdot \opt$.
Finally, by combining the associated paths we obtain an induced solution in
$G$ of the same cost. The bound on the hops in the induced paths ensures an
$\red{O(\log^2n)}$ hop-stretch.
The competitive algorithm for the online group Steiner tree problem follows
similar ideas.
A similar approach is then used on h.c. group Steiner forest (and its online
version). However, as the optimal solution is not necessarily
$\red{h}$-respecting, we use \Cref{thm:Subgraph} instead of
\Cref{cor:hBoundedSubgraph}.

\subsection{Hop-constrained metric data structures}
At first glance, as hop-constrained distances are non-metric, constructing
metric data structures for them seems to be very challenging, and indeed, the
lack of any previous work on this very natural problem serves as evidence.

Initially, using our Ramsey type embedding we construct distance labelings in
a similar fashion to the distance labeling scheme of \cite{MN07}: construct
$O(n^{\frac1k}\cdot k)$ Ramsey type embeddings into ultrametrics
$U_1,U_2,\dots$ such that $U_i$ is ultrametric over $V$ with the set $M_i$ of
saved vertices. It will hold that every vertex belongs to some set $M_i$.
For trees (and ultrametrics) there are very efficient distance labeling
schemes (constant size and query time, and $1.5$-distortion \cite{FGNW17}).
We define the label of each vertex $v$ to be the union of the labels for the
ultrametrics, and the index $i$ of the ultrametric where $v\in M_i$.
As a result we obtain a distance labeling with label size $O(n^{\frac1k}\cdot
k)$, constant query time, and such that for every $u,v$,
$d^{\red{(O(k\cdot\log\log n)\cdot h)}}_{G}(u,v)\le \A(\ell(v),\ell(u))\le
O(k\cdot\log\log n)\cdot  d\rhop{h}_{G}(u,v)$.
While the space, query time, and hop-stretch are quite satisfactory, the
distortion leaves much to be desired.

To obtain an improvement, our next step is to construct data structures for
a fixed scale. Specifically, for every scale $2^i$, we construct a graph $G_i$
such that for every pair $u,v$ where $d_G\rhop{h}(u,v)\approx 2^i$, it holds
that $d_G^{\red{(O(\frac{k\cdot\log\log n}{\eps})\cdot h)}}(u,v)\le
d_{G_i}(u,v)\le (1+\eps)\cdot(2k-1)\cdot d_G\rhop{h}(u,v)$.
Thus, if we were to know that $d_G\rhop{h}(u,v)\approx 2^i$, then we could
simply use state of the art distance labeling for $G_i$, ignoring hop
constraints. Now, using our first Ramsey based distance labeling scheme we can
obtain the required coarse approximation!
Our final construction consists of Ramsey based distance labeling, and in
addition state of the art distance labeling \cite{TZ05} for the graphs $G_i$
for all possible distance scales. Given a query $u,v$, we first coarsely
approximate $d_G\rhop{h}(u,v)$ to obtain some value $2^i$. Then we return
the answer from the distance labeling scheme that was prepared in
advance for $G_i$.
Our approach for distance oracles, and compact routing schemes is similar.

\addtocontents{toc}{\protect\setcounter{tocdepth}{3}}

\section{Hop-constrained Ramsey type embedding}\label{sec:Ramsey-Partitions}
\addtocontents{toc}{\protect\setcounter{tocdepth}{2}}
We restate \Cref{thm:UltrametricRamsey} for convenience:
\RamseyType*

First, we will prove a ``distributional'' version of
\Cref{thm:UltrametricRamsey}. That is, we will receive a distribution $\mu$
over the points, and deterministically construct a single Ramsey type
embedding such that $\mu(M)$ will be large. Later, we will use the Minimax
theorem to conclude \Cref{thm:UltrametricRamsey}.
We begin with some definitions: a \emph{measure} over a finite set $X$, is
simply a function $\mu:X\rightarrow \R_{\ge 0}$. The measure of a subset
$A\subseteq X$, is $\mu(A)=\sum_{x\in A}\mu(x)$.
We say that $\mu$ is a \emph{probability measure} if $\mu(X)=1$. We say that
$\mu$ is a \emph{$(\ge1)$-measure} if for every $x\in X$, $\mu(x)\ge1$.

\begin{lemma}\label{lem:UltraMeasure}
	Consider an $n$-vertex weighted graph $G=(V,E,w)$ with polynomial aspect
	ratio, $(\ge1)$-measure $\mu:V\rightarrow\mathbb{R}_{\ge1}$,
	integer parameters $k\ge1$, $h\in [n]$ and subset
	$\mathcal{M}\subseteq V$.
	Then there is a Ramsey type embedding into an ultrametric $U$ with Ramsey
	hop distortion
	$(16k,M,\red{O(k\cdot\log n)},\red{h})$ where $M\subseteq \mathcal{M}$ and
	$\mu(M)\ge \mu(\mathcal{M})^{1-\frac1k}$.
\end{lemma}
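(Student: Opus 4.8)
The plan is to build the ultrametric by a deterministic recursive ball-growing procedure over geometrically decreasing scales, and to maintain a shrinking set $M\subseteq\mathcal M$ of ``active'' (still padded) vertices whose $\mu$-measure stays large. Concretely, fix the top scale $i^\star$ with $2^{i^\star}\ge\diam(G)$ (which exists by the polynomial aspect ratio). At scale $2^i$, given a current cluster $X$ (initially $X=V$), I would invoke a cluster-creation subroutine that partitions $X$ into pieces $X_1,\dots,X_s$, each contained in a hop-bounded ball $B_X^{\red{(\Theta(k)\cdot h)}}(v_q,2^{i-2})$ for some center $v_q$; the construction greedily peels off such balls, and the key invariant is that each chosen ball is a "padded" ball in the sense that the thin annulus $B_X^{\red{(\cdot)}}(v_q,\,\Theta(2^i/k)+2^{i-2})\setminus B_X^{\red{(\cdot)}}(v_q,2^{i-2})$ carries only a small $\mu^{1/k}$-fraction of the mass of the larger ball $B^{\red{((i+1)h')}}(v_q,2^{i-2})$. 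A vertex $v$ ceases to be active the first time its small ball $B_G^{\red{(h)}}(v,\Theta(2^i/k))$ is not fully contained in a single cluster $X_q$. Then recurse on each $X_q$ at scale $2^{i-1}$, and join the resulting ultrametrics under a new root labelled $\Theta(2^i)$ (I will choose the constant so that domination holds). The set $M$ is the set of vertices active at every level.

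The distortion and hop-stretch bookkeeping is the easy part. For the upper bound: if $v\in M$ and $u\in V$, let $j$ be the level where $d_G^{\red{(h)}}(u,v)\in(2^{j-1},2^j]$; since $v$ was padded at every level, in particular at level $j+2$ the ball $B_G^{\red{(h)}}(v,2^j)$ lies inside a single cluster, so $u$ and $v$ are separated only at a root of label $O(2^{j})=O(d_G^{\red{(h)}}(u,v))$ — hence $d_U(u,v)\le 16k\cdot d_G^{\red{(h)}}(u,v)$ after absorbing the center-offset constants. For the lower bound (domination): if $u,v$ are separated at a root of label $2^i$, they lie in different scale-$(i-1)$ clusters, and since each cluster sits in a ball of radius $2^{i-2}$ reachable with $O(k)\cdot h$ hops, any walk realizing a small $U$-distance is too short; accumulating the $O(k)\cdot h$ hop allowance over the $O(\log n)$ levels gives hop-stretch $O(k\cdot\log n)$, which is where the $\red{O(k\log n)}$ comes from. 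This matches the intuitive "paying an $h'=O(kh)$ additive hop factor per level" described in the overview.

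The main obstacle — and the real content — is the measure argument showing $\mu(M)\ge\mu(\mathcal M)^{1-1/k}$. I would prove this by (strong) induction, carried through the cluster-creation subroutine, on the statement that for every cluster $X$ produced at level $i$ and every $v\in X\cap\mathcal M$, the ``surviving measure'' of $v$'s subtree satisfies a bound of the form $\mu(M\cap X)\ \ge\ \mu(\mathcal M\cap X)\cdot\bigl(|B^{\red{((i+1)h')}}(v,2^{i-2})|_\mu\bigr)^{-1/k}$ aggregated appropriately, where $|\cdot|_\mu$ denotes $\mu$-measure. The crux is that the ball-growing choice can always find a radius for the annulus at which the ratio $\mu\bigl(B^{\red{((i+1)h')}}(v,2^{i-2})\bigr)/\mu\bigl(B^{\red{(ih')}}(v,2^{i-3})\bigr)$ telescopes across levels — this is exactly why the hop budgets must be nested ($ih'$ inside, $(i+1)h'$ outside) so that the ``maximum possible cluster'' at level $i+1$ sits inside the ``minimum possible cluster'' guaranteed at level $i$. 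Summing the geometric-mean-type inequality $\sum_q a_q^{1-1/k}\cdot(\text{local ratio})^{-1/k}\ge (\sum_q a_q)^{1-1/k}$ over the pieces $X_1,\dots,X_s$, using concavity / the weighted power-mean inequality, yields the global bound $\mu(M)\ge\mu(\mathcal M)^{1-1/k}$. I would isolate the cluster-creation subroutine and its single-level measure guarantee as a separate lemma (this is \Cref{alg:CreateCluster} in the overview), prove that lemma carefully, and then the present statement follows by the recursion plus the routine distortion/hop estimates above.
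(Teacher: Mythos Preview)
Your overall plan---deterministic recursive ball growing over $O(\log n)$ scales, unmarking vertices whose $\rho_i$-ball straddles two pieces, and a telescoping inductive bound on $\mu(M)$---is exactly the paper's approach, and your distortion and hop-stretch bookkeeping is fine.

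The one real gap is in the summing step of the measure induction. You write the per-piece bound as $a_q^{1-1/k}\cdot(\text{local ratio})^{-1/k}$ and propose to combine these into $(\sum_q a_q)^{1-1/k}$ via ``concavity / the weighted power-mean inequality.'' This is not how the argument closes, and I don't see a concavity route that works: the ``local ratios'' differ across $q$ (they depend on the ball around each center $v_q$), and concavity of $x\mapsto x^{1-1/k}$ only gives $\sum_q a_q^{1-1/k}\ge(\sum_q a_q)^{1-1/k}$, which goes the wrong way once you account for the mass lost in the annuli. The paper's device---which you don't mention---is to choose each center $v_q$ to \emph{maximize} $\mu_M\bigl(B^{(ih')}(v,2^{i-3})\bigr)$ among the remaining vertices. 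Combined with the refined inductive hypothesis (the denominator is the $\mu$-mass of the ball $B_{H(X)}^{(ih')}(r(X),2^{i-2})$ around the \emph{parent} cluster's center $r(X)$, not a generic $v$), this maximization forces every sub-cluster's denominator to be bounded by the single quantity $\mu_{M}\bigl(B_{H(X)}^{((i+1)h')}(r(X),2^{i-2})\bigr)$. Once all denominators are equal, the sum over $q$ is just the linear sum of the numerators $\mu_M(\overline{X}_q)$, which telescope exactly to $\mu_M(X)$---no convexity needed. This is the content of the paper's \Cref{lem:frac-survival} and the reason for \cref{line:CreateClusterChooseCenter} in \Cref{alg:CreateCluster}; you should make the center choice explicit and drop the appeal to concavity.
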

Note that \Cref{lem:UltraMeasure} guarantees that for every $u,v\in V$,
$d_G^{\red{(O(k\cdot\log n)\cdot h)}}(u,v)\le d_U(u,v)$. Further, if $u\in M$,
then $d_U(u,v)\le 16k\cdot d_G\rhop{h}(u,v)$.
The freedom to choose $\mathcal{M}$ will be useful in our construction of
metric data structures (but will not be used during the proof of
\Cref{thm:UltrametricRamsey}).
The proof of \Cref{lem:UltraMeasure} is deferred to
\Cref{subsec:ProofOfUltraMeasure}. Now we will prove
\Cref{thm:UltrametricRamsey} using \Cref{lem:UltraMeasure}.
\begin{proof}[Proof of \Cref{thm:UltrametricRamsey}]
	We can assume that $k<n$ (and also $\eps>\frac1n$), as otherwise we are
	allowed $\Omega(n)$ hops and stretch. Thus we can ignore the hop
	constraints, and \Cref{thm:UltrametricRamsey} will simply follow from
	classic Ramsey type embeddings (e.g. \cite{MN07}).

	We begin with an auxiliary claim, which translates from the language of
 	$(\ge1)$-measures used in \Cref{lem:UltraMeasure} to that of probability
 	measures (while increasing the distortion from $16k$ to $16(k+1)$).
 	\begin{claim}\label{clm:probabilisticRamsey}
 		Consider an $n$-vertex weighted graph $G=(V,E,w)$ with polynomial
 		aspect ratio, probability measure $\mu:V\rightarrow[0,1]$,
 		and integer parameters $k\ge1$, $\red{h}\in [n]$.
 		Then there is a Ramsey type embedding into an ultrametric with Ramsey
 		hop distortion
 		$(16(k+1),M,\red{O(k\cdot\log n)},\red{h})$ where $\mu(M)\ge
 		(k+1)^{-\frac{1}{k}}\left(1-\frac{1}{k+1}\right)\cdot
 		n^{-\frac{1}{k}}$.
 	\end{claim}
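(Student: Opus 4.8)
The plan is to reduce Claim~\ref{clm:probabilisticRamsey} to Lemma~\ref{lem:UltraMeasure} by a reweighting trick. Given the probability measure $\mu$, I would define the $(\ge 1)$-measure $\mu'(x) = 1 + N\cdot\mu(x)$ for a scalar $N>0$ to be fixed later, and apply Lemma~\ref{lem:UltraMeasure} with parameter $k' = k+1$, the measure $\mu'$, and $\mathcal{M} = V$. This produces a Ramsey type embedding into an ultrametric with Ramsey hop distortion $(16(k+1), M, \red{O((k+1)\log n)}, \red{h})$; since $k+1 = O(k)$ for $k\ge1$ the hop-stretch is $\red{O(k\log n)}$ as required, so this embedding is exactly what we output — only the analysis of $M$ uses the reweighting. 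One should record that the preconditions of Lemma~\ref{lem:UltraMeasure} are met: $k'=k+1\ge1$ is an integer, $h\in[n]$ is unchanged, $G$ still has polynomial aspect ratio, and $\mu'(x)\ge1$ pointwise.

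Next I would translate the guarantee $\mu'(M)\ge \mu'(V)^{1-\frac1{k+1}}$ back to $\mu$. Since $\mu(V)=1$ we have $\mu'(V)=n+N$, and since $|M|\le n$ we have $\mu'(M)=|M|+N\mu(M)\le n+N\mu(M)$. Combining,
\[
\mu(M)\ \ge\ \frac{(n+N)^{\frac{k}{k+1}}-n}{N}\ \ge\ \frac{N^{\frac{k}{k+1}}-n}{N},
\]
using $n+N\ge N$ and monotonicity of $x\mapsto x^{k/(k+1)}$.

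Finally, choosing $N=\big((k+1)\,n\big)^{\frac{k+1}{k}}$ makes $N^{\frac{k}{k+1}}=(k+1)n$, so the numerator is at least $kn$, and
\[
\mu(M)\ \ge\ \frac{kn}{\big((k+1)n\big)^{\frac{k+1}{k}}}\ =\ \frac{k}{k+1}\cdot(k+1)^{-\frac1k}\cdot n^{-\frac1k}\ =\ \Big(1-\tfrac{1}{k+1}\Big)(k+1)^{-\frac1k}\,n^{-\frac1k},
\]
which is exactly the claimed bound, completing the proof.

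There is no genuine obstacle here; the only real choice is the value of $N$. Writing $N=\lambda\, n^{1+1/k}$, one wants to maximize $\lambda^{-1/(k+1)}-\lambda^{-1}$, and the first-order condition gives $\lambda=(k+1)^{(k+1)/k}$, i.e.\ the value above — it is pleasant (and, as the computation shows, not coincidental) that this makes the resulting estimate match the target exactly. The one point that needs a moment of care is taking $k'=k+1$ rather than $k$ in the reduction, so that the bound comes out as $\Theta(n^{-1/k})$ rather than $\Theta(n^{-1/(k+1)})$, at the mild cost of distortion $16(k+1)$ instead of $16k$.
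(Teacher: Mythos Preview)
Your proof is correct and follows essentially the same approach as the paper: reweight the probability measure to a $(\ge 1)$-measure, invoke \Cref{lem:UltraMeasure} with parameter $k+1$ and $\mathcal{M}=V$, and optimize the scaling. The paper's reweighting is written as $\widetilde{\mu}_{\ge1}(v)=1+(s-1)n\,\mu(v)$ with $s=n^{1/k}(k+1)^{(k+1)/k}$ (so total measure $sn=((k+1)n)^{(k+1)/k}$, matching your $N$), and the intermediate inequalities are arranged slightly differently, but the optimal choice of scale and the final bound coincide exactly.
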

\begin{proof}
	Fix $s=\frac{n^{\nicefrac{1}{k}}}{\delta}$ for
	$\delta=(k+1)^{-\frac{k+1}{k}}$.
	Define the following probability
	measure $\widetilde{\mu}$: $\forall v\in V$,
	$\widetilde{\mu}(v)=\frac{1}{sn}+\frac{s-1}{s}\mu(v)$.
	Set the following $(\ge1)$-measure
	$\widetilde{\mu}_{\ge1}(v)=sn\cdot\widetilde{\mu}(v)$. Note that
	$\widetilde{\mu}_{\ge1}(V)=sn$.
	We execute \Cref{lem:UltraMeasure} w.r.t. the $(\ge1)$-measure
	$\widetilde{\mu}_{\ge1}$, parameter $k+1$, and $\mathcal{M}=V$. As a
	result, we obtain a Ramsey type embedding into an ultrametric $U$ with
	Ramsey hop distortion
	$(16(k+1),M,\red{O(k\cdot\log n)},\red{h})$. Further, it holds that
	\[
	(sn)^{1-\frac{1}{k+1}}=\widetilde{\mu}_{\ge1}(V)^{1-\frac{1}{k+1}}~\le~
	\widetilde{\mu}_{\ge1}(M)=sn\cdot\widetilde{\mu}(M)=sn\cdot\left(
	\frac{|M|}{sn}+\frac{s-1}{s}\mu(M)\right)~.
	\]
	As $|M|\le n$, $\frac{s-1}{s}<1$, and using
	$s=\frac{n^{\nicefrac{1}{k}}}{\delta}$ we conclude
	\begin{equation*}
	\mu(M)\ge\frac{s}{s-1}\cdot
	\left((sn)^{-\frac{1}{k+1}}-\frac{|M|}{sn}\right)
	\ge(sn)^{-\frac{1}{k+1}}-\frac{1}{s}
	=\left(\frac{1}{\delta}\cdot n^{1+\frac{1}{k}}\right)^{
	-\frac{1}{k+1}}
	-\delta\cdot n^{-\frac{1}{k}}
	=\left(\delta^{\frac{1}{k+1}}-\delta\right)\cdot
	n^{-\frac{1}{k}}~.%
	\end{equation*}
	By plugging in the value of $\delta$ we have
	\begin{align*}
	\delta^{\frac{1}{k+1}}-\delta &
	=\left((k+1)^{-\frac{k+1}{k}}\right)^{\frac{1}{k+1}}-(k+1)^{-\frac{k+
	1}{k}}=(k+1)^{-\frac{1}{k}}-(k+1)^{-\frac{k+1}{k}}=(k+1)^{-\frac{1}{k}}
	\left(1-\frac{1}{k+1}\right)~.
	\end{align*}
	The claim now follows.
\end{proof}

	We obtain that for every probability measure $\mu$, there is a Ramsey type
	embedding into an ultrametric with the distortion guarantees as above,
	such that the measure of $M$ is at least $\alpha_k\cdot n^{-\frac{1}{k}}$,
	for $\alpha_k=(k+1)^{-\frac{1}{k}}\left(1-\frac{1}{k+1}\right)$.
	Let $\cA$ be the family of candidate pairs $(U,M)$ with the above
	Ramsey hop-distortion guarantees.
	Using the minimax principle from \Cref{sec:prelim}, with payoff
	function $\mu(M)$,
	\[\max_{\cD\in\mathsf{Dist}(\cA)}
	\min_{\mu\in\mathsf{Dist}(V)}
	\mathbb{E}_{(U,M)\sim\cD}[\mu(M)]
	=
	\min_{\mu\in\mathsf{Dist}(V)}
	\max_{(U,M)\in\cA}
	\mu(M)
	\ge \alpha_k\cdot n^{-\frac1k}~.\]

	In words, $\cD$ here is a distribution over pairs $(U,M)$, and $\mu$ is a
	probability measure over the vertices. Given $U,M,\mu$, the payoff is
	$\mu(M)$, the measure of the saved set. \Cref{clm:probabilisticRamsey}
	says that for every probability measure $\mu$, one can find a Ramsey type
	embedding $(U,M)$ with payoff $\mu(M)\ge \alpha_k\cdot n^{-\frac{1}{k}}$.
	Accordingly, by the minimax principle, there is a distribution $\cD$ over
	pairs $(U,M)$ such that for every probability measure $\mu$ over the
	vertices, $\mathbb{E}_{(U,M)\sim\cD}[\mu(M)]\ge\alpha_k\cdot
	n^{-\frac{1}{k}}$.
	Let $\mathcal{D}$ be this distribution from above. For every vertex $v\in
	V$, denote by $\mu_{v}$
	the probability measure where $\mu_v(v)=1$ (and $\mu_v(u)=0$ for $u\ne
	v$). Then when sampling an ultrametric $U\sim\mathcal{D}$ we have:
	\[
	\Pr_{(U,M)\sim\cD}[v\in M]
	=
	\mathbb{E}_{(U,M)\sim\cD}[\mu_v(M)]
	\ge \alpha_k\cdot n^{-\frac1k}~.
	\]
	To conclude the first part of \Cref{thm:UltrametricRamsey}, note that
	$\alpha_k$ is a monotonically increasing function of $k$, hence
	$\alpha_{k}\ge\alpha_{1}=\frac{1}{2}\cdot\frac{1}{2}=\frac{1}{4}$.

	For the second assertion, let $k=\frac{4\cdot\ln n}{\eps}$.
	Note that
	$(k+1)^{-\frac{1}{k}}=e^{-\frac{1}{k}\ln(k+1)}\ge1-\frac{\ln(k+1)}{k}\ge1-
	\frac{\ln(2k)}{k}$, and $1-\frac{1}{k+1}>1-\frac1k$.
	Hence $\alpha_k>\left(1-\frac{\ln(2k)}{k}\right)\left(1-\frac{1}{k}
	\right)>1-\frac{\ln(4k)}{k}$.
	Applying the preceding minimax argument with this value of $k$ gives a
	distribution with the desired distortion guarantees, such that
	every vertex
	belongs to $M$ with probability at least
	\[
	\alpha_{k}\cdot n^{-\frac{1}{k}}>\left(1-\frac{\ln(4k)}{k}\right)\cdot
	n^{-\frac{1}{k}}>\left(1-\frac{\epsilon}{4}\cdot\frac{\ln(4n)}{\ln
	n}\right)\cdot
	e^{-\frac{\epsilon}{4}}>(1-\frac{\epsilon}{2})^{2}>1-\epsilon~,
	\]
	where we used that $k\le n$. The theorem follows.

	Note that we provided here only an existential proof of the desired
	distribution.
	A constructive proof (with polynomial construction time, and $O(n\cdot\log
	n)$ support size) can be obtained using the multiplicative weights update
	(MWU) method.
	The details are similar to the constructive proof of clan embeddings in
	\cite{FL21}, and we will skip them.
\end{proof}

\subsection{Proof of \Cref{lem:UltraMeasure}: distributional h.c. Ramsey type
embedding}\label{subsec:ProofOfUltraMeasure}
For simplicity, initially we assume that $\diam\rhop{h}(G)<\infty$.
Later, in \Cref{subsec:inftyDistance} we will remove this assumption.
Let $\phi=\left\lceil\log_2 \diam\rhop{h}(G) \right\rceil$ be the minimum
integer such that all the $\red{h}$-hop distances are bounded by $2^\phi$.
Recall that we assume that the minimum edge weight is $1$.
As we are assuming here a polynomial aspect ratio, it follows that
$\phi=O(\log n)$.

\sloppy The construction of the ultrametric is illustrated in the recursive
algorithm \texttt{Ramsey-type-embedding}: \Cref{alg:hier-Ramsey}.
We treat here ultrametrics as HSTs (according to \Cref{def:ultra}).
We will have a set of marked vertices which is denoted $M$.
Initially, all the vertices are marked $M=V$.
In order to create the embedding we will make the call
\texttt{Ramsey-type-embedding}$(G[V],\mu,M=V,\red{h},k,2^\phi)$.
At any point when a call
\texttt{Ramsey-type-embedding}$(G[X],\mu,M,\red{h},k,2^i)$ is called, it is
assumed that $\diam^{\red{((i+2)\cdot 4kh)}}(G[X])\le 2^i$, $M\subseteq X$ (it
is the current subset of marked vertices), $\red{h},k\in\N$ are parameters,
and $\mu$ is a $(\ge1)$-measure (over a superset of $X$).
We use the notation $\mu_M(X)=\mu(X\cap M)$ to denote the measure of the
marked vertices in $X$.
A cluster $X$ on which we called \texttt{Ramsey-type-embedding} with scale
parameter $2^i$ is called an $i$-level cluster (corresponding to the
$i$-level of the associated hierarchical cover).
The algorithm \texttt{Ramsey-type-embedding} first partitions the set of
vertices $X$ into clusters $\{X_q\}_{q=1}^s$, each with a subset $M_q\subseteq
X\cap M$ of marked vertices. Note that some vertices in $M$ might
become unmarked.
Each cluster $X_q$ will have diameter at most $\diam^{\red{((i+2)\cdot
4kh)}}(G[X_q])\le2^{i-1}$.
Note that as the algorithm progresses, both the diameter, and the allowed
number of hops decrease. This is used to balance the number of vertices that stay
marked during the entire execution of the algorithm.
For every $q\in[1,s]$, the algorithm will perform a recursive call on the
induced graph $G[X_q]$ with $M_q$ as the set of marked vertices, and scale
$2^{i-1}$, to obtain an ultrametric $U_q$ over $X_q$. In particular each such
cluster $X_q$ is an $i-1$-level cluster.
Then all these ultrametrics are joined to a single ultrametric root at $r_U$
with label $2^i$.

The algorithm \texttt{Padded-partition} partitions a cluster $X$ into clusters
$\{X_q\}_{q=1}^s$ iteratively.
When no marked vertices remain, the algorithm simply partitions the remaining
vertices into singletons and is done. Otherwise, while the remaining graph is
nonempty, the algorithm carves a new cluster $X_q$ using a call to the
\texttt{Create-Cluster} procedure. The vertices near the boundary of $X_q$ are
unmarked, where $M_q$ is the set of marked vertices in the interior of $X_q$. The
remaining graph is denoted $Y_q$ (initially $Y_0=X$), where $M_q^Y$ is the set
of remaining marked vertices (not in $X_q$ or near its boundary).
See \Cref{fig:RamseyAlgExample} for an illustration.

The \texttt{Create-Cluster} procedure returns a triplet
 $(A_{j(v)},A_{j(v)+1},A_{j(v)+2})$, 
 where $A_{j(v)+1}$ is the cluster itself (denoted
$X_{q}$ in the \texttt{Padded-partition} procedure), 
$A_{j(v)}$ (also denoted $\underline{X_q}$) is the interior of the cluster
(all vertices far from the boundary), and $A_{j(v)+2}$ (also denoted $\overline{X_q}$)
is the cluster itself plus its exterior (vertices not in $A_{j(v)+1}=X_q$, but close to
its boundary).
The procedure starts by picking a center $v$, which maximizes the measure of
the marked nodes in the ball $B_{H}^{\red{(i\cdot h')}}(v,2^{i-3})$. This
choice will later be used in the inductive argument lower bounding the overall
number of remaining marked vertices.
The cluster is chosen somewhere between $B_{H}^{\red{((i+1)\cdot
h')}}(v,2^{i-3})$ and $B_{H}^{\red{(i\cdot h')}}(v,2^{i-2})$, where
\red{$h'=2kh$}.
See \Cref{fig:RamseyAlgExample} for an illustration.

During the construction of the ultrametric, we implicitly create a
hierarchical partition $\{{\cal X}_{\logdiam},{\cal
X}_{\logdiam-1},\dots,{\cal X}_0\}$ of $X$.
That is, each $\mathcal{X}_i$ is a set of disjoint clusters, where
$V=\cup_{X\in \mathcal{X}_i}X$, and each $X\in \mathcal{X}_i$ fulfills
$\diam^{\red{((i+2)\cdot h')}}(G[X])\le 2^{i}$.
Further, $\mathcal{X}_{i-1}$ refines $\mathcal{X}_i$, that is, for every
cluster $X\in \mathcal{X}_{i-1}$, there is a cluster $X'\in \mathcal{X}_i$
such that $X\subseteq X'$. Finally ${\cal X}_0=\{\{v\}\mid v\in V\}$ is the
set of all singletons, while ${\cal X}_\logdiam=\{V\}$ is the
trivial partition.
In our algorithm, $\mathcal{X}_i$ will be all the clusters that were created
by calls to the \texttt{Padded-partition} procedure with scale $2^i$, in any
step of the recursive algorithm.

Denote $\rho_i=\frac{2^{i-3}}{2k}$. For a node $v$ and index $i$, we say that
$v$ is $i$-\emph{padded},
if there exists a subset $X \in {\cal X}_i$, such that
$B_{G}\rhop{h}(v,\rho_i) \subseteq X$.
We will argue that a vertex is marked at stage $i$, (i.e. belongs to $M$) only
if it was padded in all the previous steps of the algorithm.
We would like to maximize the measure of the nodes that are padded in
all the levels.
We denote by $M$ the set of \emph{marked} vertices. Initially $M=V$, and
iteratively the algorithm \emph{unmarks} some of the nodes.
The nodes that will remain marked by the end of the process are the nodes that
are padded on all levels.

\begin{algorithm}[p]
	\caption{$U=\texttt{Ramsey-type-embedding}(G[X],\mu,M,\red{h},k,
	2^i)$}\label{alg:hier-Ramsey}
	\DontPrintSemicolon
	\SetKwInOut{Input}{input}\SetKwInOut{Output}{output}
	\Input{Induced graph $G[X]$, parameters $\red{h},k\in\N$, $\ge1$-measure
	$\mu$, set of padded vertices $M$, scale $i$.}
	\Output{Ultrametric $U$ with $X$ as leaves.}
	\If{$|X|=1$}{\Return $G[X]$}
	let $\left(\left( X_{q},M_q\right)
	_{q=1}^{s}\right)=\texttt{Padded-partition}(G[X],\mu,M,\red{h},k,2^i)$\;
	\For{each $q\in[1,\dots,s]$}{
		$U_{q}=\texttt{Ramsey-type-embedding}(G[X_q],\mu,M_q,\red{h},k,
		2^{i-1})$\;
	}
	let $U$ be the HST formed by taking a root $r_U$ with label $2^{i}$, and
	set the ultrametrics $U_{1},\dots,U_{s}$ as its children\;
	\Return $U$\;
\end{algorithm}

\begin{algorithm}[p]
	\caption{$\left(\left(
	X_{q},M_q\right)_{q=1}^{s}\right)$=
	\texttt{Padded-partition}$(G[X],\mu,M,\red{h},k,
	2^i)$}\label{alg:RamseyPartition}
	\DontPrintSemicolon
	\SetKwInOut{Input}{input}\SetKwInOut{Output}{output}
	\Input{Induced graph $G[X]$, parameters $\red{h},k\in\N$, $\ge1$-measure
	$\mu$, set of padded vertices $M$, scale $i$.}
	\Output{Partition $\{X_{q}\}_{q=1}^{s}$ of $X$, and sets $M_q\subseteq
	M\cap X_q$.}
	\BlankLine

	set $Y_0=X$, $M^Y_0=M$, and $q=1$\;
	\While{$Y_{q-1}\ne\emptyset$}{
		\If{$M^Y_{q-1}=\emptyset$}{
			partition $Y_{q-1}$ into singleton clusters $\{X_{q'}\}_{q'\ge
			q}$, where $M_{q'}=\emptyset$ for each such $X_{q'}$\;
			skip to \cref{line:returnPaddedPartition}\;
		}
		\Else{
			$(\underline{X_q},X_q,\overline{X_q})=
			\texttt{Create-Cluster}(G[Y_{q-1}],M^Y_{q-1},\mu,\red{h},k,
			2^i)$\label{line:callCreateCluster}\;
			set $M_q=M\cap \underline{X_q}$\;
			set $Y_q=Y_{q-1}\backslash X_q$ and $M^Y_{q}\leftarrow
			M^Y_{q-1}\backslash \overline{X_q}$\tcp*{all the nodes in
			$\overline{X}_q\backslash\underline{X}_q$ are unmarked.}

			$q\leftarrow q+1$\;
		}
	}
	\Return $\left(\left(
	X_{q},M_q\right)_{q=1}^{s}\right)$\label{line:returnPaddedPartition}
	\tcp*{$s$ is the number of created clusters}
\end{algorithm}

\begin{algorithm}[p]
	\caption{$(\underline{A},A,\overline{A})$=\texttt{Create-Cluster}$(H=G[X],
	M,\mu,\red{h},k,2^i)$}\label{alg:CreateCluster}
	\DontPrintSemicolon
	\SetKwInOut{Input}{input}\SetKwInOut{Output}{output}
	\Input{Induced graph $G[X]$, parameters $\red{h},k\in\N$, $\ge1$-measure
	$\mu$, set of padded vertices $M$, scale $i$.}
	\Output{Three clusters $(\underline{A},A,\overline{A})$, where
	$\underline{A}\subseteq A\subseteq \overline{A}$.}
	\BlankLine
	pick a node $v \in X$ with maximal $\mu_{M}\left(B_{H}^{\red{(i\cdot
	h')}}(v,2^{i-3})\right)$ for
	$\red{h'}=\red{2kh}$\label{line:CreateClusterChooseCenter}\;
	denote $A_{j}=B_{H}^{\red{(i\cdot h'+j\cdot
	h)}}\left(v,2^{i-3}+j\cdot\rho_i\right)$, for
	$\rho_i=\frac{2^{i-3}}{2k}=\frac{2^i}{16\cdot k}$\;
	let $j(v)\ge 0$ be the minimal integer such that $
	\frac{\mu_{M}\left(A_{j+2}\right)}{\mu_{M}\left(A_{j}\right)}\leq\left(
	\frac{\mu_{M}\left(A_{2k}\right)}{\mu_{M}\left(A_{0}\right)}\right)^{
	\frac{1}{k}}$
	\label{line:RG}\;
	\Return $(A_{j(v)},A_{j(v)+1},A_{j(v)+2})$\;
\end{algorithm}

	\begin{figure}[t]
	\centering{\includegraphics[width=\textwidth]{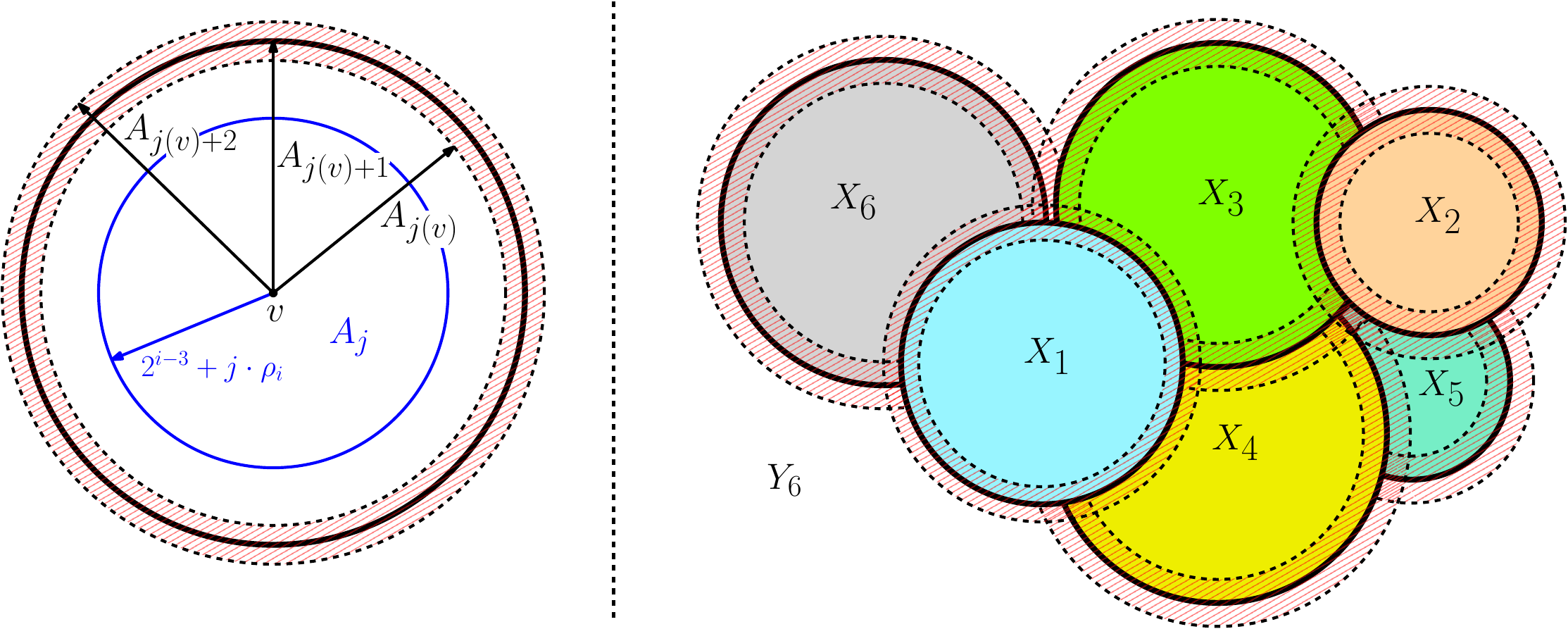}}
	\caption{\label{fig:RamseyAlgExample}\small
		Left: Illustration of the \texttt{Create-Cluster} procedure
		(\Cref{alg:CreateCluster}). The algorithm chooses a center $v$ maximizing
		the marked measure of the initial ball
		$A_0=B_H^{\red{(i\cdot h')}}(v,2^{i-3})$. Around this center it considers
		the nested balls $
		A_j=B_H^{\red{(i\cdot h'+j\cdot h)}}
		\left(v,2^{i-3}+j\cdot\rho_i\right)$, with increasing radii and
		hop budgets. 
		The blue ball depicts $A_j$. The procedure chooses the minimum index
		$j(v)$ satisfying the growth condition in \cref{line:RG}. It then returns
		the three consecutive balls
		$(A_{j(v)},A_{j(v)+1},A_{j(v)+2})$.\\
		Right: Illustration of the \texttt{Padded-partition} procedure
		(\Cref{alg:RamseyPartition}). In iteration $q$, the current working set is
		$Y_{q-1}$. The procedure calls \texttt{Create-Cluster} and obtains a triple
		$\underline{X}_{q}\subseteq X_q\subseteq\overline{X}_{q}$, corresponding to
		$A_{j(v)},A_{j(v)+1},A_{j(v)+2}$. The middle set
		$X_q=A_{j(v)+1}$ becomes the next cluster, while the boundary
		$\overline{X}_{q}\setminus\underline{X}_{q}
		=A_{j(v)+2}\setminus A_{j(v)}$
		is unmarked. The working set is updated to
		$Y_q=Y_{q-1}\setminus X_q$. In the illustration, the procedure creates
		$X_1,\dots,X_6$, leaving $Y_6$ as the remaining working set. All vertices
		covered by the red hatching become unmarked.
	}
\end{figure}

We first show that the index chosen in \Cref{alg:CreateCluster} is bounded
by $2(k-1)$.
\begin{claim}\label{claim:j-v}
	Consider a call to the \texttt{Create-Cluster} procedure, the index $j(v)$
	defined in \Cref{alg:CreateCluster} satisfies $j(v) \leq 2(k-1)$.
\end{claim}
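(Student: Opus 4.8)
In a call to \texttt{Create-Cluster}, the index $j(v)$ from \Cref{line:RG} satisfies $j(v)\le 2(k-1)$.

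The plan is to argue by contradiction using a telescoping (pigeonhole) argument on the ratios $\frac{\mu_M(A_{j+2})}{\mu_M(A_j)}$. Recall that $A_0\subseteq A_1\subseteq\cdots$ is an increasing chain of balls, so the $\mu_M$-values are nondecreasing; moreover $\mu_M(A_0)\ge 1$ since the center $v$ is marked (it is chosen from $M$ via the maximizing rule, and $v\in B_H^{\red{(i\cdot h')}}(v,2^{i-3})$), and the measure is a $(\ge1)$-measure. Write $m_j=\mu_M(A_j)$. Suppose toward contradiction that for every $j\in\{0,1,\dots,2(k-1)\}$ the minimality-defining inequality fails, i.e.\ $\frac{m_{j+2}}{m_j}>\left(\frac{m_{2k}}{m_0}\right)^{1/k}$.

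First I would split the indices $0,1,\dots,2k$ into the two chains of even and odd indices: $0,2,4,\dots,2k$ and $1,3,\dots,2k-1$ (the odd chain is handled by padding it out or just using $0,2,\dots,2k-2$ together with the observation that $m$ is monotone). Focusing on the even chain $A_0\subseteq A_2\subseteq\cdots\subseteq A_{2k}$: multiplying the assumed strict inequalities for $j=0,2,4,\dots,2k-2$ gives
\[
\frac{m_{2k}}{m_0}=\prod_{\ell=0}^{k-1}\frac{m_{2\ell+2}}{m_{2\ell}}>\prod_{\ell=0}^{k-1}\left(\frac{m_{2k}}{m_0}\right)^{1/k}=\frac{m_{2k}}{m_0}~,
\]
which is a contradiction. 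The only subtlety is that the failure of the inequality is only guaranteed for $j$ \emph{up to} $2(k-1)=2k-2$, and the even-chain telescoping uses exactly $j=0,2,\dots,2k-2$, so all $k$ factors are available; hence this contradiction is legitimate. (If the chain $m_0,m_2,\dots$ happens to be constant from some point on, the inequalities cannot be strict there, so the minimal valid $j$ was reached even earlier — this only helps.)

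The main obstacle I anticipate is the bookkeeping around \emph{which} indices $j$ are covered by the assumption ``$j(v)>2(k-1)$ means the inequality fails for all $j\le 2(k-1)$'' versus which indices the telescoping product needs — one must make sure the product uses exactly $k$ consecutive even (or odd) steps that all lie in the range $\{0,\dots,2k-2\}$, which it does. A second minor point is justifying $m_0>0$ (so that division and the $k$-th root are well-defined): this is immediate since $v\in M\cap A_0$ and $\mu(v)\ge 1$. With these in place the claim follows; I would write it as: assume $j(v)>2(k-1)$, apply the even-index telescoping above to reach $\frac{m_{2k}}{m_0}>\frac{m_{2k}}{m_0}$, contradiction, hence $j(v)\le 2(k-1)$.
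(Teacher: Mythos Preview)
Your proof is correct and is essentially the contrapositive of the paper's argument: the paper picks the even index $2j$ minimizing $\frac{\mu_M(A_{2(j+1)})}{\mu_M(A_{2j})}$ and bounds it via the same telescoping product, while you assume all such ratios exceed $(\mu_M(A_{2k})/\mu_M(A_0))^{1/k}$ and multiply to a contradiction. One small inaccuracy: in \Cref{alg:CreateCluster} the center $v$ is picked from $X$, not from $M$, so your justification of $m_0\ge 1$ is off --- but the conclusion still holds, since \texttt{Create-Cluster} is only invoked when $M\ne\emptyset$, and any $u\in M$ witnesses $\mu_M(B_H^{\red{(i\cdot h')}}(u,2^{i-3}))\ge\mu(u)\ge1$, hence by maximality $m_0\ge1$.
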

\begin{proof}
	Using the terminology from the  \texttt{Create-Cluster} procedure, let
	$j\in\{0,1,2,\dots,k-1\}$ be the index minimizing
	$\frac{\mu_{M}\left(A_{2(j+1)}\right)}{\mu_{M}\left(A_{2j}\right)}$. Then
	it holds that
	\[
	\frac{\mu_{M}\left(A_{2k}\right)}{\mu_{M}\left(A_{0}\right)}=\frac{\mu_{M}
	\left(A_{2k}\right)}{\mu_{M}\left(A_{2(k-1)}\right)}\cdot\frac{\mu_{M}
	\left(A_{2(k-1)}\right)}{\mu_{M}\left(A_{2(k-2)}\right)}\cdots\frac{
	\mu_{M}\left(A_{2}\right)}{\mu_{M}\left(A_{0}\right)}\ge\left(\frac{
	\mu_{M}\left(A_{2(j+1)}\right)}{\mu_{M}\left(A_{2j}\right)}\right)^{k}~,
	\]
	the claim follows as
	$\frac{\mu_{M}\left(A_{2(j+1)}\right)}{\mu_{M}\left(A_{2j}\right)}\le
	\left(\frac{\mu_{M}\left(A_{2k}\right)}{\mu_{M}\left(A_{0}\right)}
	\right)^{\frac{1}{k}}$,
	and hence there is indeed such an index as required.
\end{proof}

The following is a straightforward corollary,
\begin{observation}\label{obs:diamBound}
	Every $i$-level cluster $X$ (i.e. cluster on which we called
	\texttt{Ramsey-type-embedding} with scale $2^i$) has diameter at most
	$\diam^{\red{(2(i+2)\cdot h')}}(G[X])\le 2^{i}$.
\end{observation}
\begin{proof}
	In the first call to the \texttt{Ramsey-type-embedding} algorithm we use
	the scale $2^\phi$ for $\phi=\left\lceil\log_2 \diam\rhop{h}(G)
	\right\rceil$. In particular, $\diam^{\red{(2(i+2)\cdot
	h')}}(G[X])\le\diam\rhop{h}(G[X])\le 2^\phi$.
	For any cluster $X$ other than the first one, it was created during a call
	to the \texttt{Create-Cluster} procedure with scale $2^{i+1}$. By
	\Cref{claim:j-v}, $X=A_{j(v)+1}=B_{H}^{\red{((i+1)\cdot h'+(j(v)+1)\cdot
	h)}}\left(v,2^{i-2}+(j(v)+1)\cdot\rho_{i+1}\right)$
	 for $j(v) \leq 2(k-1)$. In particular, for every $u\in X$,
	 $d_{G[X]}\rhop{(i+1)\cdot h'+2k\cdot h}(u,v)=d_{G[X]}\rhop{(i+2)\cdot
	 h'}(u,v)\le2^{i-2}+2k\cdot\rho_{i+1}=2^{i-1}$. In particular, by the
	 triangle inequality, $\diam^{\red{(2(i+2)\cdot h')}}(G[X])\le 2^{i}$.
\end{proof}

For a cluster $X$, denote by $\hat{X}$ the set of vertices in $X$ that remain
marked until the end of the algorithm. We next prove the distortion guarantee.
\begin{lemma}\label{lem:RamseyDistortion}
	For every pair of vertices $u,v\in V$ it holds that
	$d_{G}^{\red{(O(\phi)\cdot h')}}(u,v)\le d_{U}(u,v)$. Further, if $u\in
	\hat{V}$ then $d_{U}(u,v)\le16k\cdot d_{G}\rhop{h}(u,v)$.
\end{lemma}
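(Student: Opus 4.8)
\textbf{Proof plan for \Cref{lem:RamseyDistortion}.}
The statement has two halves: a lower bound (domination with a hop budget) valid for all pairs, and an upper bound (distortion $16k$) valid whenever $u$ stays marked to the end. The plan is to exploit the recursive structure of the ultrametric $U$ built by \texttt{Ramsey-type-embedding}: the leaves of $U$ are the vertices of $V$, and the label of $\mathrm{lca}(u,v)$ equals $2^{i+2}$, where $2^i$ is the scale of the unique recursive call whose \texttt{Padded-partition} first separated $u$ and $v$ into distinct clusters $X_q \neq X_{q'}$. So for any $u,v$, if $i$ is that separating scale, then $d_U(u,v) = 2^{i+2}$, and both halves reduce to understanding what it means, in terms of hop-bounded distances in $G$, for $u$ and $v$ to be separated (resp. not yet separated) at scale $2^i$.

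\emph{Upper bound.} Suppose $u \in \hat V$, so $u$ is $j$-padded at every level $j$ encountered along its root-to-leaf branch. Let $2^i$ be the separating scale of $u,v$, so $d_U(u,v)=2^{i+2}$. I want to show $2^{i+2} \le 16k \cdot d_G^{(h)}(u,v)$, i.e. $d_G^{(h)}(u,v) \ge 2^{i}/(4k) = 2^{i-2}/k = 2\rho_i$ (recalling $\rho_i = 2^{i-3}/(2k) = 2^i/(16k)$). The key point: at scale $2^i$, $u$ lies in some cluster $X \in \mathcal X_i$, and since $u$ is $i$-padded, $B_G^{(h)}(u,\rho_i) \subseteq X$. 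But $v$ is in a \emph{different} scale-$2^i$ cluster (that is what the separation at level $i$ means, after unwinding that the separating call actually operates on a sub-cluster $X'$ with $u,v \in X'$ and assigns them to different $X_q$'s — here one should double-check the indexing, since the claim about $i$-paddedness is stated for the global hierarchy $\mathcal X_i$, and one needs that membership in a common $\mathcal X_i$-cluster is exactly "not yet separated at scale $\ge i$"). Hence $v \notin B_G^{(h)}(u,\rho_i)$, giving $d_G^{(h)}(u,v) > \rho_i$. To get the factor $2\rho_i$ rather than $\rho_i$ one uses that separation occurs at the \emph{largest} such scale or a telescoping over two consecutive levels; in any case a constant-factor slack is built into the "$16k$" and the clean way is: $u,v$ share a cluster at scale $2^{i+1}$ (they are separated only at scale $2^i$), and $u$ being $(i+1)$-padded gives $B_G^{(h)}(u,\rho_{i+1}) \subseteq$ that cluster, but this does not immediately exclude $v$; the honest route is to track that when \texttt{Create-Cluster} is invoked at the separating scale, $u$ lands in the interior $\underline{X_q} = A_{j(v)}$ while $v \notin \overline{X_q} = A_{j(v)+2}$, and the gap between these two balls, measured in $G$-with-$h$-hops, is at least $2\rho_i$ by the $+2$ in the radius increments $A_j = B_H^{(i h' + jh)}(v, 2^{i-3}+j\rho_i)$ combined with the triangle inequality (and the fact that $H$-distances dominate $G$-distances). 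This is the step I expect to be the main obstacle — lining up "global padding in $\mathcal X_i$" with "local interior/exterior membership in the \texttt{Create-Cluster} call", getting the hop budgets to match ($H = G[Y_{q-1}]$ paths of $\le ih'+jh \le O(k)\cdot h$ hops are also $G$-paths), and pinning down the constant.

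\emph{Lower bound.} For domination, take any $u,v$ with separating scale $2^i$, so $d_U(u,v) = 2^{i+2}$; I must exhibit a $u$-$v$ path in $G$ of weight $\le 2^{i+2}$ using only $O(\phi)\cdot h' = O(\log n)\cdot h' = O(k\log n)\cdot h$ hops. Since $u,v$ lie in a common cluster $X'$ at the recursive level just above separation, and by \Cref{obs:diamBound} that cluster has $\diam^{(2(i+3)\cdot h')}(G[X']) \le 2^{i+1} \le 2^{i+2}$, there is a path inside $G[X']$ of weight $\le 2^{i+2}$ with at most $2(i+3)h' = O(\phi)\cdot h'$ hops (using $\phi = O(\log n)$ and $i \le \phi$). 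That path lies in $G$, so $d_G^{(O(\phi)\cdot h')}(u,v) \le 2^{i+2} = d_U(u,v)$, as required. The only care needed is the bookkeeping that \Cref{obs:diamBound}'s hop bound $2(i+2)h'$ (or $2(i+3)h'$ for the parent scale) is indeed $O(\phi)\cdot h'$ and that $h' = O(k)\cdot h$ so the overall hop stretch is $O(k\log n)$, matching the statement of \Cref{lem:UltraMeasure}. Throughout I would use freely that $d_{G[A]}^{(\ell)} \ge d_G^{(\ell)}$ for $A \subseteq V$ (induced subgraphs only increase hop-bounded distances) and that a ball $B_H^{(\ell)}(v,r)$ in $H = G[Y]$ is contained in $B_G^{(\ell)}(v,r)$, which is what lets all the per-level hop increments of $h' = 2kh$ add up to the claimed total.
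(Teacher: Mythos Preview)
Your plan is essentially correct and tracks the paper's argument: the lower bound comes from \Cref{obs:diamBound} applied to the parent cluster at the separating scale, and the upper bound comes from the observation that a surviving (marked) vertex $u$ must sit in the interior $\underline{X_q}=A_{j(v_q)}$ (otherwise it would have been unmarked), after which a hop-triangle-inequality step places the entire ball $B^{(h)}(u,\rho_i)$ inside $X_q$.

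Two corrections are worth making. First, in your ``honest route'' you claim $v\notin\overline{X_q}=A_{j(v_q)+2}$, but separation only tells you $v\notin X_q=A_{j(v_q)+1}$; the radius gap between $\underline{X_q}$ and $X_q$ is a single $\rho_i$, not $2\rho_i$, and that is precisely what the paper uses (with the root label taken as $2^i$ one gets $d_U(u,v)\le 2^i=16k\,\rho_i<16k\,d^{(h)}(u,v)$). Second, the paper organizes the argument as an induction on the scale $i$, proving at each call on $G[X]$ that $d_{U_X}(u,v)\le 16k\cdot d_{G[X]}^{(h)}(u,v)$; this framing is cleaner than your one-shot ``separating scale'' version because it never has to relate $d_G^{(h)}$ to $d_{G[X]}^{(h)}$ across levels (at the top $X=V$ and the two coincide). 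Your suspicion that ``marked $\Rightarrow$ $i$-padded with the global ball $B_G$'' needs justification was exactly right: the paper states that sentence only as motivation and never proves it directly --- the inductive argument on $d_{G[X]}^{(h)}$ is how the content is actually delivered.
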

\begin{proof}
	The proof is by induction on the scale $i$ used in the call to the
	\texttt{Ramsey-type-embedding}$(G[X],\mu,M,\red{h},k,2^i)$ algorithm.
	Specifically, we will prove that in the returned ultrametric $U$ it holds
	that $d_{G}^{\red{(2(i+2))\cdot h')}}(u,v)\le d_{U}(u,v)$ for arbitrary
	pair of vertices, while for $u\in\hat{X}$ it holds that
	$d_{U}(u,v)\le16k\cdot d_{G}\rhop{h}(u,v)$.
	The base case $i=0$ is easy, as the entire graph is a singleton.
	Consider the ultrametric $U$ returned by a call with scale $i$.
	The algorithm partitioned $X$ into clusters $\{X_{q}\}_{q=1}^{s}$ with
	marked sets $\{M_{q}\}_{q=1}^{s}$, and reconnected them under a root $r_U$
	with label $2^{i}$.

	We first prove the lower bound on the distance between all vertex pairs.
	By \Cref{obs:diamBound}, $\diam^{\red{(2(i+2)\cdot h')}}(G[X])\le 2^{i}$,
	and the label of the root of $U$ is $2^i$.
	If $u$ and $v$ belong to different clusters, then
	$d_{G[X]}^{\red{(2(i+2)\cdot h')}}(u,v)\le\diam^{\red{(2(i+2)\cdot
	h')}}(G[X])\le2^{i}= d_{U}(u,v)$. Otherwise, there is some index $q$ such
	that $u,v\in X_q$. By the induction hypothesis, it holds that
	$d_{G[X]}^{\red{(2(i+1)\cdot h')}}(u,v)\le d_{G[X_{q}]}^{\red{(2i\cdot
	h')}}(u,v)\le d_{U_{q}}(u,v)=d_{U}(u,v)$.

	Next we prove the upper bound, thus we assume that $u\in \hat{X}$.
	Denote $B=B_{G[X]}\rhop{h}(u,\rho_i)$. We argue that the entire ball $B$
	is contained in a single cluster $X_{\tilde{q}}$. We will use the notation
	from the \texttt{Padded-partition} procedure.
	Let $q$ be the minimal index such that there is a vertex $z\in B\cap
	X_{q}$. By the minimality of $q$, it follows that $B\subseteq Y_{q-1}$.
	Thus for every $z'\in B$, $d\rhop{h}_{G[Y_{q-1}]}(u,z')\le \rho_{i}$.
	Denote by $v_q$ the center vertex of the cluster $X_q$ created in the
	\texttt{Create-Cluster} procedure. Recall that
	$A_{j}=B_{G[Y_{q-1}]}^{\red{(i\cdot h'+j\cdot
	h)}}\left(v,2^{i-3}+j\cdot\rho_i\right)$, and there is some index $j(v_q)$
	such that $X_q=A_{j(v_q)+1}$.
	As $z\in X_q$, it holds that $d_{G[Y_{q-1}]}^{\red{((i\cdot
	h'+(j(v_{q})+1)\cdot h)}}(v_{q},z)\le2^{i-3}+(j(v_{q})+1)\cdot\rho_{i}$.
	Thus by the triangle inequality
	\begin{align*}
	d_{G[Y_{q-1}]}^{\red{((i\cdot h'+(j(v_{q})+2)\cdot h)}}(v_{q},u) & \le
	d_{G[Y_{q-1}]}^{\red{((i\cdot h'+(j(v_{q})+1)\cdot
	h)}}(v_{q},z)+d_{G[Y_{q-1}]}\rhop{h}(z,u)\\
	& \le2^{i-3}+(j(v_{q})+1)\cdot\rho_{i}+\rho_{i}=2^{i-3}+(j(v_{q})+2)
	\cdot\rho_{i}~,
	\end{align*}
	implying $u\in A_{j(v_q)+2}=\overline{X}_q$.
	As $u\in \hat{X}$ it follows that $u\in\hat{X}_q$, and thus $u\in
	\underline{X}_q=A_{j(v_q)}$ (as otherwise it would have been unmarked).
	Hence $d_{G[Y_{q-1}]}^{\red{((i\cdot h'+j(v_{q})\cdot
	h)}}(v_{q},u)\le2^{i-3}+j(v_{q})\cdot\rho_{i}$. Using the triangle
	inequality, we conclude that for every $z'\in B$,
	\begin{align*}
	d_{G[Y_{q-1}]}^{\red{((i\cdot h'+(j(v_{q})+1)\cdot h)}}(v_{q},z') & \le
	d_{G[Y_{q-1}]}^{\red{((i\cdot h'+j(v_{q})\cdot
	h)}}(v_{q},u)+d_{G[Y_{q-1}]}\rhop{h}(u,z')\\
	& \le2^{i-3}+j(v_{q})\cdot\rho_{i}+\rho_{i}=2^{i-3}+(j(v_{q})+1)
	\cdot\rho_{i}
	\end{align*}
	It follows that $B\subseteq A_{j(v_q)+1}=X_q$ as required.

	We are finally ready to upper bound $d_U(u,v)$.
	If $v\notin B$ then $d\rhop{h}(u,v)>\rho_i$, as the maximum distance in
	$U$ is $2^i$, it follows that
	$d_{U}(u,v)\le 2^{i}=16k\cdot\rho_{i}<16k\cdot d\rhop{h}(u,v)$.
	Else, $v\in B$. Then as $B=B_{G[X]}\rhop{h}(u,\rho_i)\subseteq X_q$, the
	entire shortest $\red{h}$-hop path from $u$ to $v$ is contained in $X_q$
	and hence $d_{G[X]}\rhop{h}(u,v)=d_{G[X_q]}\rhop{h}(u,v)$. Using the
	induction hypothesis on $i-1$, it follows that
	\[
	d_{U}(u,v)\le d_{U_{q}}(u,v)\le16k\cdot d_{G[X_{q}]}\rhop{h}(u,v)=16k\cdot
	d_{G[X]}\rhop{h}(u,v)~.
	\]

\end{proof}

Consider a cluster $C$ created during a call
\texttt{Create-Cluster}$G([H=Y_{q-1}],M,\mu,\red{h},k,2^i)$.
Specifically, there was an $i$-level cluster $X'$, from which we created
$X_1,\dots,X_s$. $C=X_q$ was created in the $q$'th step, where the set of yet
unclustered vertices was $Y_{q-1}=X'\setminus\cup_{j=1}^{q-1}X_j$.
Let $r(C)=r(X_q)$ be the center vertex of $C=X_q$ (chosen in
\cref{line:CreateClusterChooseCenter}), and also denote by $Y(C)=Y_{q-1}$ the set
of yet unclustered vertices. $C=X_q$ will be called an $i-1$-cluster (as by
\Cref{obs:diamBound} its diameter is bounded by $2^{i-1}$).
Denote by $M(C)=M\cap C$ the set of marked vertices in $C$ after
its creation.
Accordingly, $M(Y(C))=M(Y_{q-1})$ is the set of yet unclustered, but also
still marked vertices after the creation of $X_{q-1}$ (or all of
$M$ for $Y_0$).
Specifically, following the notation in \texttt{Create-Cluster},
$C=X_q=A_{j(v)+1}$, and $M(C)=M(X_q)=M(Y_{q-1})\cap\underline{X_q}=M(Y_{q-1})\cap
A_{j(v)}$. In particular, $M(Y_q)=M(Y_{q-1})\setminus
\overline{X_q}=M(Y_{q-1})\setminus A_{j(v)+2}$.

We now proceed to proving the inductive bound on the measure of the saved set.
Recall that for a set of marked
vertices $S$ and a subset $A\subseteq V$, we write
$\mu_S(A)=\mu(A\cap S)$ for the measure of the marked vertices of $A$.
The first cluster $V\in{\cal X}_\phi$, is a $\phi$-level cluster. We set
$Y(V)=V$, $M(V)=V$, and we choose its center vertex $r(V)$ to be the vertex
maximizing $\mu_{M}\left(B_{G}^{\red{((\phi+1)\cdot
h')}}(v,2^{\phi-2})\right)$.
Next we bound the number of vertices in $\hat{V}$.
\begin{lemma}\label{lem:frac-survival}
	For every $i$-level cluster $X$ with $M(X)\ne\emptyset$, it holds
	$\mu(\hat{X})\ge\frac{\mu(M(X))}{\mu_{M(Y(X))}\left(B_{Y(X)}^{\red{(i\cdot
	h')}}(r(X),2^{i-2})\right)^{\frac{1}{k}}}
	$.
\end{lemma}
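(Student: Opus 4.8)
The plan is to prove the inequality by induction on the scale $i$, following the recursive structure of \texttt{Ramsey-type-embedding}. For the base case (singleton clusters, $i=0$), we have $\hat X = X$ and $M(X)=M(H(X))=X$, both balls are $X$ itself, so the bound is trivial. For the inductive step, consider an $i$-cluster $X$ and let $v = r(X)$ be its center. The algorithm runs \texttt{Padded-partition} on $G[X]$ at scale $2^i$, producing subclusters $(X_q, M_q)_{q=1}^s$, and then recurses on each $G[X_q]$ at scale $2^{i-1}$ with marked set $M_q$. Since $\hat X = \bigcup_q \hat X_q$ (a vertex survives to the end iff it survives within its subcluster), and the $\hat X_q$ are disjoint, we get $\mu(\hat X) = \sum_q \mu(\hat X_q)$. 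Each $X_q$ is an $(i-1)$-cluster with center $r(X_q) = v_q$ and $H(X_q) = Y_{q-1}$, so the induction hypothesis gives
\[
\mu(\hat X_q) \;\ge\; \frac{\mu_{M_q}(X_q)}{\mu_{M^Y_{q-1}}\!\left(B_{Y_{q-1}}^{\red{((i-1)\cdot h')}}(v_q, 2^{i-3})\right)^{\frac1k}}\;.
\]

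The heart of the argument is to control the denominator and sum these bounds. Here is where the choice in \Cref{line:CreateClusterChooseCenter} of \texttt{Create-Cluster} and the ball-growing stopping rule of \Cref{line:RG} come in. First, $v_q$ was chosen to maximize $\mu_{M^Y_{q-1}}(B_{Y_{q-1}}^{\red{(i\cdot h')}}(v,2^{i-3}))$ over $v \in Y_{q-1}$; in particular this quantity is at least $\mu_{M^Y_{q-1}}(B_{Y_{q-1}}^{\red{(i\cdot h')}}(u, 2^{i-3}))$ for every $u$, and I will need the stronger consequence that the denominator ball (which sits inside $B_{Y_{q-1}}^{\red{(i\cdot h')}}(v_q, 2^{i-3}) = A_0$ — note $(i-1)h' < i h'$ and $2^{i-3}$ is the same radius, so actually the recursion-ball is $\subseteq A_0$) has measure $\mu_{M^Y_{q-1}}(A_0) = \mu_{M(H(X_q))}(A_0)$. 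Actually the correct reading is $B_{Y_{q-1}}^{\red{((i-1)h')}}(v_q, 2^{i-3}) \subseteq A_0 = B_{Y_{q-1}}^{\red{(ih')}}(v_q, 2^{i-3})$, so we may replace the denominator by $\mu_{M^Y_{q-1}}(A_0)^{1/k}$, only weakening the bound. Second, the stopping rule guarantees $\mu_M(X_{j+2})/\mu_M(X_j) \le (\mu_M(X_{2k})/\mu_M(X_0))^{1/k}$ at the chosen index, and we also need $X_q = A_{j(v_q)+1} \supseteq A_{j(v_q)}$ carries essentially all the marked measure of $A_0$ up to the appropriate power — more precisely, the numerator $\mu_{M_q}(X_q) = \mu_{M}(\underline{X_q}) = \mu_M(A_{j(v_q)})$, and we want to relate $\mu_M(A_{j(v_q)})$ to $\mu_M(A_0)$ and $\mu_M(A_{2k})$.

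So the key computation, telescoped over $q$, is: using the maximality of the center choice to say $\mu_{M^Y_{q-1}}(A_0) \le \mu_{M(H(X))}(B_{H(X)}^{\red{(i h')}}(r(X), 2^{i-2}))$ — because $A_0 \subseteq B_{Y_{q-1}}^{\red{(i h')}}(v_q, 2^{i-3}) \subseteq B_{H(X)}^{\red{(i h')}}(r(X), 2^{i-2})$ once one checks $2^{i-3} + \text{(dist from }v_q\text{ to }r(X)) \le 2^{i-2}$ (this needs $v_q \in X$ and the diameter bound $\diam^{\red{(\cdot)}}(G[X]) \le 2^{i-1}$, hmm, one has to be a bit careful with hop budgets here — I'd use $\diam^{\red{((i+1)h')}}(G[X]) \le 2^{i-1}$ from \Cref{obs:diamBound} applied at scale $2^{i-1}$, wait, $X$ is an $i$-cluster meaning diameter $\le 2^i$; let me instead bound $A_0 \subseteq B_{H(X)}^{\red{(ih')}}(r(X), 2^{i-2})$ directly only when $v_q$ is close enough, else absorb into constants) — we obtain
\[
\mu(\hat X) \;=\; \sum_q \mu(\hat X_q) \;\ge\; \frac{\sum_q \mu_{M_q}(X_q)}{\mu_{M(H(X))}\!\left(B_{H(X)}^{\red{(ih')}}(r(X), 2^{i-2})\right)^{1/k}}\;,
\]
and it remains to show $\sum_q \mu_{M_q}(X_q) \ge \mu_{M(X)}(X)$, i.e.\ that the marked measure surviving the partition into interiors $\underline{X_q} = A_{j(v_q)}$ is at least $\mu_{M(X)}(X)$. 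This is where the ball-growing stopping rule does its work: the vertices unmarked at this level are those in $\overline{X_q}\setminus\underline{X_q}$, and the stopping rule $j(v_q) \le 2(k-1)$ (\Cref{claim:j-v}) together with the geometric-averaging choice of $j$ bounds the total unmarked measure by $\mu_{M(X)}(X)$ times a factor that, when combined with the $1/k$-power in the denominator, still yields the claimed bound. I expect the main obstacle to be exactly this bookkeeping: getting the telescoping of the per-cluster ratios to collapse to a single clean $1/k$-power of the root ball's measure, while simultaneously tracking that the hop budgets $i\cdot h'$ versus $(i+1)\cdot h'$ line up so that the "minimum possible cluster at level $i-1$ is contained in the maximum possible cluster counted at level $i$" — the cancellation only works because of the additive $h' = 2kh$ slack per level, and verifying the ball inclusions with the right hop parameters is the delicate part. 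The distortion side (\Cref{lem:RamseyDistortion}) is already done, so this measure-survival bound is the last ingredient before summing over the $O(n^{1/k}k)$ embeddings / applying minimax.
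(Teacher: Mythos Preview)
Your overall inductive scaffold is right, but the core of the argument has a genuine gap. You try to separate numerator and denominator: first bound each denominator by a common quantity $D=\mu_{M(H(X))}\bigl(B_{H(X)}^{\red{(ih')}}(r(X),2^{i-2})\bigr)$, then sum the numerators and claim $\sum_q \mu_{M_q}(X_q)\ge\mu_{M(X)}(X)$. That last inequality is false. Since $M_q=M\cap\underline{X_q}$, you have $\sum_q\mu_{M_q}(X_q)=\sum_q\mu_M(\underline{X_q})$, and the interiors $\underline{X_q}$ are disjoint but do \emph{not} cover all of $M(X)$: precisely the vertices in the shells $\overline{X_q}\setminus\underline{X_q}$ were marked in $M(X)$ but are missing from every $M_q$. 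So the sum of interiors is strictly smaller than $\mu_{M(X)}(X)$ in general, and your displayed inequality cannot close the induction.

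The fix is not to separate numerator and denominator. Apply the stopping rule from \cref{line:RG} \emph{per cluster} to trade the interior for the exterior while simultaneously enlarging the denominator ball: with $A_j=B_{Y_{q-1}}^{\red{(ih'+jh)}}(v_q,2^{i-3}+j\rho_i)$, the rule gives
\[
\frac{\mu_{M}(\underline{X_q})}{\mu_{M}(A_0)^{1/k}}\;\ge\;\frac{\mu_{M}(\overline{X_q})}{\mu_{M}(A_{2k})^{1/k}}\,,
\]
so after the induction hypothesis you get $\mu(\hat X_q)\ge \mu_{M(Y_{q-1})}(\overline{X_q})\,/\,\mu_{M(Y_{q-1})}(A_{2k})^{1/k}$. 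Now the numerators telescope exactly: since $M^Y_q=M^Y_{q-1}\setminus\overline{X_q}$, one has $\sum_q\mu_{M(Y_{q-1})}(\overline{X_q})=\mu_{M(X)}(X)$. For the denominator, do not attempt a ball-inclusion argument (your hop-budget worries there are well founded and that route does not close); instead use the \emph{maximality} of the center choice for $X$: when $X$ was carved (at scale $2^{i+1}$), its center $r(X)$ maximized $\mu_{M(H(X))}\bigl(B_{H(X)}^{\red{((i+1)h')}}(\cdot,2^{i-2})\bigr)$ over all candidates, and $v_q\in Y_{q-1}\subseteq X\subseteq H(X)$ is such a candidate, so $\mu_{M(Y_{q-1})}(A_{2k})\le\mu_{M(H(X))}\bigl(B_{H(X)}^{\red{((i+1)h')}}(r(X),2^{i-2})\bigr)$. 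Summing then yields the claim. The two ideas you were missing are (i) the $\underline{X_q}\leftrightarrow\overline{X_q}$ swap via the stopping rule, which is what makes the numerator sum equal $\mu_{M(X)}(X)$, and (ii) bounding the denominator via the previous level's center-maximality rather than via geometric ball containment.
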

\begin{proof}
	We prove the lemma by induction on $i$.
	Consider first the base case where $i=0$ and $M(X)\ne\emptyset$.
	As the minimum distance in $G$ is $1$, $X$ is the singleton vertex $r(X)$.
	It follows that
	$\mu(\hat{X})=\mu(r(X))=\mu(X)\ge\nicefrac{\mu_{M(X)}(X)}{\mu_{M(Y(X))}
	\left(B_{Y(X)}^{\red{(0)}}(r(X),\frac{1}{4})\right)^{\frac{1}{k}}}=
	\mu(r(X))^{1-\frac{1}{k}}$,
	which holds as $\mu(r(X))\ge 1$.

	Consider an $i$-level cluster $X$, and assume that the lemma holds for every
	$(i-1)$-level cluster.
	Let $v =r(X)$, and denote for ease of notation the set of marked vertices
	by $M=M(X)$.
	The algorithm performs the procedure
	\texttt{Padded-partition}$(G[X],\mu,M,\red{h},k,2^i)$ to obtain the
	$i-1$-level clusters $X_1,...,X_{s}$, each with the set
	$M_q\subseteq M\cap X_q$.
	For every $1\leq q \leq s$, denote by $v_q=r(X_q)$ the center of the cluster
	$X_q$. Using the notation from the \texttt{Padded-partition} procedure,
	$Y(X_q)=Y_{q-1}$ (where $Y_0=X$, $Y_q=Y_{q-1}\backslash X_q$).
	Recall that we choose an index $j_q=j(v_q)$ which in addition to $X_q$
	also defined sets $\underline{X_q}\subseteq X_q
	\subseteq\overline{X_q}$ such that
	\begin{equation}
		\mu_{M(X_{q})}(X_{q})=\mu_{M(Y_{q-1})}\left(\underline{X}_{q}\right)~,
		\label{eq:MeasureXq}
	\end{equation}
 and $M(Y_q)=M(Y_{q-1})\setminus \overline{X_q}$.
	By the choice of $j(v_q)$ in \Cref{alg:CreateCluster}, we have
	that
	\begin{equation}
	\frac{\mu_{M(Y_{q-1})}\left(\underline{X}_{q}\right)}{\mu_{M(Y_{q-1})}(
	\overline{X}_{q})}\ge\left(\frac{\mu_{M(Y_{q-1})}\left(A_{0}\right)}{
	\mu_{M(Y_{q-1})}\left(A_{2k}\right)}\right)=\left(\frac{\mu_{M(Y_{q-1})}
	\left(B_{Y_{q-1}}^{\red{(i\cdot h')}}(v_{q},2^{i-3})\right)}{
	\mu_{M(Y_{q-1})}\left(B_{Y_{q-1}}^{\red{((i+1)\cdot h')}}(v_{q},2^{i-2})
	\right)}\right)^{\frac{1}{k}}~.
	\label{eq:Int_Res_relation}
	\end{equation}
	Finally, by the choice of $r(X)$ to be the vertex maximizing
	$\mu_{M(Y(X))}\left(B_{Y(X)}^{\red{((i+1)\cdot h')}}(r(X),2^{i})\right)$,
	it holds that
	\begin{equation}
	\mu_{M(Y(X))}\left(B_{Y(X)}^{\red{((i+1)\cdot
	h')}}(r(X),2^{i-2})\right)\geq\mu_{M(Y(X))}\left(B_{Y(X)}^{\red{((i+1)
	\cdot h')}}(v_{q},2^{i-2})\right)=\mu_{M(Y_{q-1})}\left(B_{Y_{q-1}}^{
	\red{((i+1)\cdot h')}}(v_{q},2^{i-2})\right)~.
	\label{eq:measureInH(X)VersusH(X_q)}
	\end{equation}
	Using the induction hypothesis on $X_q$, we obtain
	\begin{align*}
		\mu(\hat{X}_{q}) &
		\ge\frac{\mu_{M(X_{q})}(X_{q})}{\mu_{M(Y_{q-1})}\left(B_{Y_{q-1}}^{
		\red{(i\cdot h')}}(v_{q},2^{i-3})\right)^{\frac{1}{k}}}\\
		& \overset{(\ref{eq:MeasureXq})}{=}\frac{\mu_{M(Y_{q-1})}\left(
		\underline{X}_{q}\right)}{\mu_{M(Y_{q-1})}\left(B_{Y_{q-1}}^{\red{(i
		\cdot h')}}(v_{q},2^{i-3})\right)^{\frac{1}{k}}}\\
		& \overset{(\ref{eq:Int_Res_relation})}{\ge}\frac{\mu_{M(Y_{q-1})}(
		\overline{X}_{q})}{\mu_{M(Y_{q-1})}\left(B_{Y_{q-1}}^{\red{(i\cdot
		h')}}(v_{q},2^{i-3})\right)^{\frac{1}{k}}}\cdot\left(\frac{
		\mu_{M(Y_{q-1})}\left(B_{Y_{q-1}}^{\red{(i\cdot h')}}(v_{q},2^{i-3})
		\right)}{\mu_{M(Y_{q-1})}\left(B_{Y_{q-1}}^{\red{((i+1)\cdot
		h')}}(v_{q},2^{i-2})\right)}\right)^{\frac{1}{k}}\\
		& \overset{(\ref{eq:measureInH(X)VersusH(X_q)})}{\ge}\frac{
		\mu_{M(Y_{q-1})}(\overline{X}_{q})}{\mu_{M(Y(X))}\left(B_{Y(X)}^{
		\red{((i+1)\cdot h')}}(r(X),2^{i-2})\right)^{\frac{1}{k}}}~.
	\end{align*}
	As $M(Y_q)=M(Y_{q-1})\backslash \overline{X}_q$, it follows that
	$\mu(M(Y_{q-1}))=\mu(M(Y_q))+\mu_{M(Y_{q-1})}(\overline{X}_q)$, and by
	induction $\mu(M(X))=\sum_{q=1}^s\mu_{M(Y_{q-1})}(\overline{X}_q)$.
	We conclude that
	\[
	\mu(\hat{X})=\sum_{q=1}^{s}\mu(\hat{X}_{q})\ge\sum_{q=1}^{s}\frac{
	\mu_{M(Y_{q-1})}(\overline{X}_{q})}{\mu_{M(Y(X))}\left(B_{Y(X)}^{\red{((i+
	1)\cdot h')}}(r(X),2^{i-2})\right)^{\frac{1}{k}}}=\frac{\mu_{M(X)}(X)}{
	\mu_{M(Y(X))}\left(B_{Y(X)}^{\red{((i+1)\cdot h')}}(r(X),2^{i-2})\right)^{
	\frac{1}{k}}}~.
	\]
\end{proof}

Using \Cref{lem:frac-survival} on $V$ with $i=\logdiam$ and $M=\mathcal{M}$ we
have \\$\mu(\hat{V})\ge\nicefrac{\mu(V)}{\mu\left(B_{G}^{\red{(\phi\cdot
h')}}(r(X),2^{\phi-2})\right)^{\frac{1}{k}}}\ge\mu(V)^{1-\frac{1}{k}}$.
\Cref{lem:UltraMeasure} now follows.

\subsubsection{The case where
$\diam\rhop{h}(G)=\infty$}\label{subsec:inftyDistance}
In this subsection we will remove the assumption that
$\diam\rhop{h}(G)<\infty$.
Note that even in the simple unweighted path graph $P_n$, it holds that
$\diam\rhop{h}(G)=\infty$.
Let $D'=\max\{d_G\rhop{h}(u,v)\mid u,v\in V \mbox{ and
}d_G\rhop{h}(u,v)<\infty\}$ be the maximum distance between a pair of
vertices $u,v$ such that $\hop_G(u,v)\le \red{h}$.
We create an auxiliary graph $G'$ by taking $G$ and adding an edge of weight
$\omega=17k\cdot D'$ for every pair $u,v$ for which $\hop_G(u,v)>\red{h}$.
Note that $G'$ has a polynomial aspect ratio (as we can assume that $k$ is at
most polynomial in $n$), and the maximum $\red{h}$-hop distance is $\omega$.
Next, using the finite case, we construct an ultrametric $U'$ for $G'$, and
receive a set $M$ such that for every $u,v$, $d_{G'}^{\red{(O(k \log n)\cdot
h)}}(u,v)\le d_{U'}(u,v)$, while for $u\in M$ and $v\in V$, $d_{U'}(u,v)\le
16k\cdot d_{G'}\rhop{h}(u,v)$.
Finally, we create a new ultrametric $U$ from $U'$ by replacing each label $l$
such that $l\ge \omega$ by $\infty$. We argue that the ultrametric $U$
satisfies the conditions of \Cref{lem:UltraMeasure} w.r.t. $G$ and
the set $M$.

First note that as we used the same measure $\mu$, it holds that
$\mu(M)\ge\mu(V)^{1-\frac1k}$.
Next we argue that $U$ does not shrink any distance. For every pair $u,v\in V$,
if $d_U(u,v)=\infty$ then clearly $d_G^{\red{(O(k \log n)\cdot
h)}}(u,v)\le\infty= d_U(u,v)$.
Else $d_U(u,v)<\infty$, then
$d_{G'}^{\red{(O(k \log n)\cdot h)}}(u,v)\le d_{U'}(u,v)<\omega$. It follows
that the shortest $\red{O(k \log n)\cdot h}$-path in $G'$ from $u$ to $v$
did not use any of the newly added edges. In particular $d_{G}^{\red{(O(k
\log n)\cdot h)}}(u,v)=d_{G'}^{\red{(O(k \log n)\cdot h)}}(u,v)\le
d_{U'}(u,v)=d_{U}(u,v)$ as required.

Finally, consider $u\in M$ and $v\in V$. If $d_U(u,v)=\infty$, then $\omega\le
d_{U'}(u,v)\le 16k\cdot d_{G'}\rhop{h}(u,v)$, implying $d_{G'}\rhop{h}(u,v)\ge
\frac{\omega}{16k}>D'$. In particular $\hop_G(u,v)> \red{h}$, and hence
$d_{G}\rhop{h}(u,v)=\infty$.
Otherwise, if $d_U(u,v)<\infty$,
$$d_{U}(u,v)=d_{U'}(u,v)\le 16k\cdot d_{G'}\rhop{h}(u,v)= 16k\cdot
d_{G}\rhop{h}(u,v)~.$$

\subsection{Alternative Ramsey construction}\label{subsec:AltRamsey}
In this section, we modify the \texttt{Ramsey-type-embedding} algorithm by
making changes to the \texttt{Create-Cluster} procedure. As a result, we obtain
a theorem with slightly different trade-off between hop-stretch and distortion
from that of \Cref{thm:UltrametricRamsey}. Specifically, by adding an additional
factor of  $O(\log\log n)$ in the distortion, we can allow the hop-stretch to
be as small as $O(\log\log n)$, and  completely avoid any dependence on the
aspect ratio.
We restate the theorem for convenience:

\begin{customthm}{\ref{thm:UltrametricRamseyAlt}}[Hop-Constrained Ramsey
Embedding with small hop-stretch and arbitrary aspect-ratio]
	\sloppy
	Consider an $n$-vertex graph $G=(V,E,w)$, and parameters
	$k,\red{h}\in [n]$.
	Then there is a distribution $\mathcal{D}$ over dominating ultrametrics
	with $V$ as leaves, such that for every $U\in\supp(\mathcal{D})$ there is a
	set $M_U\subseteq V$ such that:
	\begin{enumerate}
		\item Every $U\in\supp(\mathcal{D})$, has Ramsey hop-distortion
		$(O(k\cdot\log\log n),M_U,\red{O(k\cdot\log\log n)},\red{h})$.
		\item For every $v\in V$, $\Pr_{U\sim\mathcal{D}}[v\in M_U]\ge
		\Omega(n^{-\frac1k})$.
	\end{enumerate}
	In addition, for every $\eps\in (0,1)$, there is a distribution
	$\mathcal{D}$ as above such that every $U\in\supp(\mathcal{D})$ has Ramsey
	hop-distortion $(O(\frac{\log n\cdot\log\log n}{\eps}),M,\red{O(\frac{\log
	n\cdot\log\log n}{\eps})},\red{h})$, and $\forall v\in V$, $\Pr[v\in
	M]\ge 1-\eps$.
\end{customthm}

Analogously to the role of \Cref{lem:RamseyDistortion} in the proof of
\Cref{thm:UltrametricRamsey}, the following lemma is the core of our
construction.
\begin{lemma}\label{lem:UltraMeasureAlt}
	\sloppy Consider an $n$-vertex weighted graph $G=(V,E,w)$,
	$(\ge1)$-measure $\mu:V\rightarrow\mathbb{R}_{\ge1}$,
	integer parameters $k,\red{h}\in [n]$, and subset
	$\mathcal{M}\subseteq V$.
	Then there is a Ramsey type embedding into an ultrametric with Ramsey
	hop distortion
	$(O(k\cdot\log\log\mu(\mathcal{M})),M,\red{O(k\cdot\log\log\mu(
	\mathcal{M}))},\red{h})$ where $M\subseteq \mathcal{M}$ and $\mu(M)\ge
	\mu(\mathcal{M})^{1-\frac1k}$.
\end{lemma}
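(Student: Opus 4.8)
The plan is to mirror the proof of \Cref{lem:UltraMeasure} in \Cref{subsec:ProofOfUltraMeasure} essentially line for line --- running the recursive \texttt{Ramsey-type-embedding}/\texttt{Padded-partition} scheme with the same marking/unmarking bookkeeping --- and to change only the \texttt{Create-Cluster} procedure of \Cref{alg:CreateCluster}. Write $\Lambda=\Theta\!\big(k\cdot\log\log\mu(\mathcal{M})\big)$. The point is that the modified \texttt{Create-Cluster}, invoked on $G[X]$ with scale $2^i$, grows its candidate balls $A_0\subseteq A_1\subseteq\cdots$ using a \emph{single, scale-independent} hop budget $\red{h''}=\Lambda\cdot\red{h}$ (rather than the scale-dependent $\red{i\cdot h'}$ of \Cref{alg:CreateCluster}) together with \emph{thinner} radial increments $\rho_i=\Theta(2^i/\Lambda)$; it performs ball-growing over $\Theta(\Lambda)$ candidate radii and, exactly as before, returns a triple $(\underline{A},A,\overline{A})$ in which $\overline{A}\setminus\underline{A}$ is a two-ring buffer whose vertices are unmarked. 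One first checks, as in \Cref{claim:j-v}, that the chosen index is $O(\Lambda)$, so that every ball touched by \texttt{Create-Cluster} is an $O(\red{h''})=O(\Lambda\cdot\red{h})$-hop ball of radius at most $2^{i-1}$ (the constant in $\rho_i$ is chosen to make this radius bound hold).

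Granting that, the distortion and hop bounds are the easy part and follow the pattern of \Cref{obs:diamBound} and \Cref{lem:RamseyDistortion}. Every cluster on which we recurse at scale $2^i$ is contained in an $O(\red{h''})$-hop ball of radius $\le 2^{i-1}$ about its center, hence $\diam^{\,O(\red{h''})}(G[X])\le 2^i$; the crucial difference from \Cref{lem:UltraMeasure} is that this holds with a \emph{fixed} hop budget $O(\red{h''})=O(\Lambda\red{h})$, so the distance lower bound becomes $d_G^{\,O(\Lambda\red{h})}(u,v)\le d_U(u,v)$ with \emph{no accumulation over scales} --- which is exactly why the polynomial-aspect-ratio hypothesis can be dropped. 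For the upper bound, if $u\in\hat X$ and $d_G\rhop{h}(u,v)\le\rho_i$ then the same padding argument as in \Cref{lem:RamseyDistortion} forces $v$ into $u$'s cluster, and since $d_U(u,v)\le 2^i=O(\Lambda)\cdot\rho_i$ we get distortion $O(\Lambda)=O(k\log\log\mu(\mathcal{M}))$.

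The heart of the argument is the analogue of \Cref{lem:frac-survival}, and this is where I expect all the real work and all the risk of an off-by-$\log\log$ error to be. In \Cref{lem:UltraMeasure} the denominator balls of consecutive scales are \emph{nested} (the parent's smallest candidate ball contains the child's largest candidate ball) precisely because the hop allowance grows by $\red{h'}$ per scale; freezing the budget at $\red{h''}$ destroys this nesting, so the per-scale factor $\big(\mu(A_{\mathrm{small}})/\mu(A_{\mathrm{big}})\big)^{1/k}$ can no longer be telescoped across scales in the same way. The fix, following the paper overview, is to run the ball-growing not over the whole radius range but inside a narrow \emph{strip} of radii of width $\Theta(2^i/k)$: subdivide the strip into $\Theta(k)$ thin rings, select the lightest consecutive pair, and place the interior/exterior of the new cluster at its two ends. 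This costs a $\log\log\mu(\mathcal{M})$ factor in how fine the rings --- hence $\rho_i$, hence the distortion --- must be, but it buys the \emph{stronger} per-scale guarantee that the discarded buffer carries at most a $\big(\mu(B^{\,\red{h''}}(v,2^{i-2}))/\mu(X)\big)^{1/k}$ fraction of the weight. The resulting invariant --- roughly, $\mu(\hat X)\ge \mu_{M(X)}(X)\big/\mu_{M(H(X))}\!\big(B^{\,\red{h''}}_{H(X)}(r(X),2^{i-1})\big)^{1/k}$ --- then telescopes by a single cancellation per scale: the child cluster $X_q$ lies inside $B^{\,\red{h''}}(v,2^{i-2})$, so $\mu(X_q)\le\mu\big(B^{\,\red{h''}}(v,2^{i-2})\big)$ absorbs the induction hypothesis' denominator, and unwinding the recursion up to the root cluster $V$ (whose denominator ball is all of $V$, its radius exceeding $\diam\rhop{h}(G)$) yields $\mu(\hat V)\ge \mu(\mathcal{M})\big/\mu(\mathcal{M})^{1/k}=\mu(\mathcal{M})^{1-1/k}$, as required.

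The delicate points to nail in that paragraph are the hop/radius bookkeeping: (i) each ball grown by \texttt{Create-Cluster} must be contained in an $\red{h''}$-hop ball of the parent cluster, so that the ``max-center'' choice of $r(X)$ transfers the measure bound downward across the recursion; (ii) the single cancellation $\mu(X_q)\le\mu\big(B^{\,\red{h''}}(v,2^{i-2})\big)$ must suffice to make the invariant telescope (no residual ratio left over); and (iii) the strip must split into $\Theta(k)$ rings of width $\Theta(2^i/\Lambda)$ so that the chosen index stays $O(\Lambda)$ and $\red{h''}=O(\Lambda\red{h})$ with $\Lambda=O(k\log\log\mu(\mathcal{M}))$. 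The case $\diam\rhop{h}(G)=\infty$ is dispatched exactly as in \Cref{subsec:inftyDistance} (indeed more easily, since the hop-stretch is no longer tied to the number of scales), completing \Cref{lem:UltraMeasureAlt}; \Cref{thm:UltrametricRamseyAlt} then follows from it by the same minimax plus measure-rescaling reduction that derives \Cref{thm:UltrametricRamsey} from \Cref{lem:UltraMeasure}, taking $\mu$ with $\mu(\mathcal{M})=\mathrm{poly}(n)$ so that $\log\log\mu(\mathcal{M})=O(\log\log n)$.
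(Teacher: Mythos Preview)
Your high-level plan matches the paper: keep \Cref{alg:hier-Ramsey} and \Cref{alg:RamseyPartition} intact, replace only \texttt{Create-Cluster} by a variant that grows balls with a \emph{fixed} hop budget $O(k\log\log\mu(\mathcal M))\cdot h$, and observe that this yields the claimed hop-distortion with no accumulation over scales (hence no aspect-ratio hypothesis). Your distortion paragraph is essentially correct.

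The survival argument, however, diverges from the paper and has a genuine gap. The paper's \texttt{Create-Cluster-alt} (\Cref{alg:CreateClusterAlternative}) performs a \emph{two-level} selection: it first partitions the radius range $[0,2^{i-2}]$ into $L=O(\log\log\mu)$ strips and chooses a strip index $a$ satisfying the doubling condition $\mu_M(A_{a,0})\ge\mu_M(A_{a+1,0})^2/\mu_M(X)$ --- this is where the center is picked to \emph{minimize}, not maximize, the ball-measure (so that if even the minimal ball already has half the mass one returns the trivial partition, and otherwise \Cref{clm:RamseyAltAChoise} guarantees such an $a$ exists). Only then does it run standard ball-growing over $2k$ rings inside strip $a$ to choose $j$. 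You describe the inner $j$-selection (``subdivide the strip into $\Theta(k)$ rings'') but never say how the strip itself is chosen; without the $a$-condition the ``stronger per-scale guarantee'' you assert does not follow. (Your strip width $\Theta(2^i/k)$ is also off --- it is $\Theta(2^i/L)$.)

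More importantly, the ball-based invariant you carry over from \Cref{lem:frac-survival} does not telescope under a fixed hop budget: with $h''$ constant across scales the child's denominator ball is not nested in the parent's, so the max-center transfer you invoke in point~(i) fails --- this is precisely the obstacle you name but do not actually resolve. The paper sidesteps it entirely by proving the \emph{simpler} invariant $\mu_M(\hat X)\ge\mu_M(X)^{1-1/k}$ (\Cref{lem:frac-survival-alt}): it inducts on $X_1$ and on $Y_1=X\setminus\underline X_1$ (the latter treated as a same-scale recursive call), and combines the $a$- and $j$-conditions to obtain $\mu(\hat X_1)\ge\mu_M(\overline X_1)/\mu_M(X)^{1/k}$; summing with $\mu(\hat Y_1)\ge\mu(M_1^Y)^{1-1/k}$ closes the induction since $\mu_M(\overline X_1)+\mu(M_1^Y)=\mu_M(X)$. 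Your proposed ``single cancellation $\mu(X_q)\le\mu(B^{h''}(v,2^{i-2}))$'' is not a substitute for this chain.
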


The proof of \Cref{thm:UltrametricRamseyAlt} using \Cref{lem:UltraMeasureAlt}
follows exactly the same lines as the proof of \Cref{thm:UltrametricRamsey}
using \Cref{lem:UltraMeasure}, and thus we will skip it.

\subsection{Proof of \Cref{lem:UltraMeasureAlt}: alternative distributional
h.c. Ramsey type embedding}
For the embedding of \Cref{lem:UltraMeasureAlt} we will use the exact same
\texttt{Ramsey-type-embedding} (\Cref{alg:hier-Ramsey}), with the only
difference that instead of using the \texttt{Create-Cluster} procedure
(\Cref{alg:CreateCluster}), we will use the  \texttt{Create-Cluster-alt}
procedure (\Cref{alg:CreateClusterAlternative}).
While in \Cref{alg:CreateCluster} the allowed number of hops in the
construction gradually decreases as the algorithm progresses, here we
start from a clean sheet in each scale. This prevents us from relating the
different scales in the analysis. Instead, we give stronger
``per-level'' guarantees.

Similarly to the proof of \Cref{lem:UltraMeasure}, we will assume that
$\diam\rhop{h}(G)<\infty$. The exact same argument from
\Cref{subsec:inftyDistance} can be used to remove this assumption.
Similarly to the proof of \Cref{lem:UltraMeasure} we also denote
$\phi=\left\lceil\log_2 \diam\rhop{h}(G) \right\rceil$, and assume that the
minimum edge weight is $1$. However, here there is no assumption of
polynomial aspect ratio, and thus we don't have any bound on $\phi$.
We begin by making the call
\texttt{Ramsey-type-embedding}$(G[V],\mu,M=\mathcal{M},\red{h},k,2^\phi)$.

\begin{algorithm}[t]
	\caption{$(\underline{A},A,\overline{A})$=\texttt{Create-Cluster-alt}$(H=
	G[X],M,\mu,\red{h},k,2^i)$}\label{alg:CreateClusterAlternative}
	\DontPrintSemicolon
	\SetKwInOut{Input}{input}\SetKwInOut{Output}{output}
	\Input{Induced graph $G[X]$, parameters $\red{h},k\in\N$, $\ge1$-measure
	$\mu$, set of padded vertices $M$, scale $i$.}
	\Output{Three clusters $(\underline{A},A,\overline{A})$, where
	$\underline{A}\subseteq A\subseteq \overline{A}$.}
	\BlankLine
	set $L=\lceil 1+\log\log\mu(M)\rceil$ and $\Delta=2^i$\;
	let $v \in M$ with minimal $\mu_{M}\left(B_{H}^{\red{(2k\cdot L\cdot
	h)}}(v,\frac18\Delta)\right)$\;
	\If{$\mu_{M}\left(B_{H}^{\red{(2k\cdot L\cdot
	h)}}(v,\frac18\Delta)\right)>\frac12\cdot\mu_M(X)$}{
		let $S=\cup_{x\in M}B_{H}^{\red{(2k\cdot L\cdot
		h)}}(x,\frac18\Delta)$\tcp*{Note that for every $u,v\in M$,
		$d_H^{\red{(4k\cdot L\cdot h)}}(u,v)\le \frac\Delta4$}
		\Return $(S,S,S)$\label{line:AltCreateClusterReturnX}\;
	}
	denote $A_{a,j}=B_{H}^{\red{((2k\cdot a+j)\cdot
	h)}}\left(v,(a+\frac{j}{2k})\cdot\frac{\Delta}{8L}\right)$\;
	let $a\in[0,L-1]$ such that
	$\mu_{M}\left(A_{a,0}\right)\ge\frac{\mu_{M}\left(A_{a+1,0}\right)^{2}}{
	\mu_{M}(H)}$\label{line:firstStepRamsey}\;
	let $j\in [0,2(k-1)]$ such that
	$\mu_{M}\left(A_{a,j+2}\right)\le\mu_{M}\left(A_{a,j}\right)\cdot\left(
	\frac{\mu_{M}\left(A_{a+1,0}\right)}{\mu_{M}\left(A_{a,0}\right)}\right)^{
	\frac{1}{k}}$\label{line:SecondStepRamsey}\;
	\Return $(A_{a,j},A_{a,j+1},A_{a,j+2})$\label{line:RamseyCreateFinal}\;
\end{algorithm}

In the \texttt{Create-Cluster-alt} procedure, the main objects of interest are
the sets $A_{a,j}=B_{H}^{\red{((2k\cdot a+j)\cdot
h)}}\left(v,(a+\frac{j}{2k})\cdot\frac{\Delta}{8L}\right)$, where there are
two indices $a\in[0,L-1]$, $j\in [0,2(k-1)]$, governing the radius and the
allowed number of hops.
The sets $A_{a,j}$ range between
$A_{0,0}=B_{H}^{\red{(0)}}\left(v,0\right)=\{v\}$ and
$A_{L,0}=B_{H}^{\red{(2k\cdot L\cdot h)}}\left(v,\frac{\Delta}{8}\right)$.
Note that all these sets are nested.
The proof of \Cref{lem:UltraMeasureAlt} follows similar lines to
\Cref{lem:UltraMeasure}.
First we argue that we can choose indices $a,j$ as specified in
\cref{line:SecondStepRamsey} and \cref{line:firstStepRamsey} of
\Cref{alg:CreateClusterAlternative}.

\begin{claim}\label{clm:RamseyAltAChoise}
	In \cref{line:firstStepRamsey} of \Cref{alg:CreateClusterAlternative},
	there is an index $a\in [0,L-1]$ such that
	$\mu_{M}\left(A_{a,0}\right)\ge\frac{\mu_{M}\left(A_{a+1,0}\right)^{2}}{
	\mu_{M}(H)}$.
\end{claim}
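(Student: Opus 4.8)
The plan is to argue by contradiction, exploiting the fact that we have $L$ nested balls $A_{0,0}\subseteq A_{1,0}\subseteq\dots\subseteq A_{L,0}$ of geometrically growing radius (and growing hop budget), all contained in $H$. Suppose that for \emph{every} $a\in[0,L-1]$ we had the reverse strict inequality $\mu_M(A_{a,0})<\mu_M(A_{a+1,0})^2/\mu_M(H)$. The natural normalization is to set $x_a:=\mu_M(A_{a,0})/\mu_M(H)\in(0,1]$; then the assumed inequality reads $x_a<x_{a+1}^2$, i.e. $x_{a+1}>x_a^{1/2}$. Since we passed the test on \cref{line:AltCreateClusterReturnX} without returning $(X,X,X)$, we know $\mu_M(A_{0,0})=\mu_M\!\left(B_H^{\red{(2k\cdot L\cdot h)}}(v,\tfrac14\Delta)\right)$ has measure at most $\tfrac12\mu_M(X)$ — wait, one must be slightly careful about \emph{which} ball appears where; note $A_{0,0}=B_H^{\red{(0)}}(v,0)$ is just the singleton $\{v\}$, so the useful lower bound is simply $x_0\ge \mu(v)/\mu_M(H)\ge 1/\mu_M(H)$ using that $\mu$ is a $(\ge1)$-measure and $v\in M$. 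Also $x_L\le 1$ trivially since $A_{L,0}\subseteq H$.

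First I would iterate the recursion $x_{a+1}>x_a^{1/2}$ downward from $a=L$: it gives $x_L>x_{L-1}^{1/2}>x_{L-2}^{1/4}>\dots>x_0^{1/2^{L}}$. Combining with the two bounds above yields $1\ge x_L> x_0^{1/2^L}\ge \mu_M(H)^{-1/2^L}$, i.e. $\mu_M(H)^{1/2^L}>1$, which is no contradiction yet — so the direction of the iteration must be the other way. Let me instead iterate \emph{upward}: from $x_a<x_{a+1}^2$ we get $x_0<x_1^2<x_2^4<\dots<x_L^{2^L}\le 1^{2^L}=1$; still no contradiction. The real point is quantitative: $x_0<x_L^{2^L}$ forces $x_L>x_0^{2^{-L}}\ge \mu_M(H)^{-2^{-L}}$. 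Now use $L=\lceil 1+\log\log\mu(M)\rceil\ge 1+\log\log\mu_M(H)$ (here $\mu_M(H)\le\mu(M)$ since $H$ is a subgraph in the recursion and marked vertices only shrink), so $2^L\ge 2\log\mu_M(H)$, hence $\mu_M(H)^{2^{-L}}\le \mu_M(H)^{1/(2\log\mu_M(H))}=2^{1/2}=\sqrt2$. Therefore $x_L>1/\sqrt2$, i.e. $\mu_M(A_{L,0})>\tfrac1{\sqrt2}\mu_M(H)>\tfrac12\mu_M(X)$.

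The contradiction then comes from comparing with the test on \cref{line:AltCreateClusterReturnX}: the ball $A_{L,0}=B_H^{\red{(2k\cdot L\cdot h)}}(v,\tfrac{L}{2k}\cdot\tfrac{\Delta}{4L})=B_H^{\red{(2k\cdot L\cdot h)}}(v,\tfrac{\Delta}{8k})$ — one checks the radius and hop parameters of $A_{L,0}$ are both bounded by those of the ball $B_H^{\red{(2k\cdot L\cdot h)}}(v,\tfrac14\Delta)$ appearing in the test (since $k\ge1$ and $\Delta>0$), so $A_{L,0}\subseteq B_H^{\red{(2k\cdot L\cdot h)}}(v,\tfrac14\Delta)$ and hence $\mu_M(A_{L,0})\le \mu_M\!\left(B_H^{\red{(2k\cdot L\cdot h)}}(v,\tfrac14\Delta)\right)\le \tfrac12\mu_M(X)$, contradicting $\mu_M(A_{L,0})>\tfrac12\mu_M(X)$. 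Hence some $a\in[0,L-1]$ satisfies the claimed inequality. I expect the main obstacle to be purely bookkeeping: getting the radius/hop bounds of the generic ball $A_{a,j}$ to line up correctly with the test ball $B_H^{\red{(2k\cdot L\cdot h)}}(v,\tfrac14\Delta)$ (so that $A_{L,0}$ is genuinely contained in it), and making sure that the base value $\mu_M(A_{0,0})$ is bounded below by something like $1$ rather than trivially $0$ — which is exactly why the $(\ge1)$-measure hypothesis and $v\in M$ are needed.
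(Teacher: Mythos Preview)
Your argument is correct and follows essentially the same contradiction approach as the paper: iterate the assumed strict inequality to bound $\mu_M(A_{L,0})$ from below, use $v\in M$ and the $(\ge1)$-measure hypothesis to handle $A_{0,0}$, and derive a contradiction with the test on \cref{line:AltCreateClusterReturnX}. One small slip: the radius of $A_{L,0}$ is $(L+\tfrac{0}{2k})\cdot\tfrac{\Delta}{4L}=\tfrac{\Delta}{4}$, not $\tfrac{\Delta}{8k}$, so in fact $A_{L,0}$ \emph{equals} the test ball rather than being strictly contained in it --- but this only strengthens your containment argument.
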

\begin{proof}
	Seeking contradiction, assume that for every index $a$
	it holds that
	$\mu_{M}\left(A_{a,0}\right)<\frac{\mu_{M}\left(A_{a+1,0}\right)^{2}}{
	\mu_{M}(H)}$. Applying this for every $a\in[0,L-1]$ we have that
	\begin{align*}
	\mu_{M}\left(A_{L,0}\right) &
	>\mu_{M}\left(A_{L-1,0}\right)^{\frac{1}{2}}\cdot\mu_{M}(H)^{\frac{1}{2}}>
	\mu_{M}\left(A_{L-2,0}\right)^{\frac{1}{4}}\cdot\mu_{M}(H)^{\frac{3}{4}}\\
	& >\dots>\mu_{M}\left(A_{L-l,0}\right)^{2^{-l}}\cdot\mu_{M}(H)^{1-2^{-l}}>
	\mu_{M}\left(A_{0,0}\right)^{2^{-L}}\cdot\mu_{M}(H)^{1-2^{-L}}~.
	\end{align*}
	But $\mu_{M}(H)^{1-2^{-L}}=\frac{\mu_{M}(H)}
	{\mu_{M}(H)^{\frac{1}{2^{L}}}}>
	\frac{\mu_{M}(H)}{\mu_{M}(H)^{\frac{1}{\log(\mu_{M}(H))}}}=\frac{1}{2}\cdot
	\mu_{M}(H)$
	 (as $L=\lceil 1+\log\log\mu(M)\rceil>\log\log \mu_{M}(H)$ ), and as $v\in
	 M$, $\mu_{M}(A_{0,0})=\mu_{M}\left(B_{H}^{\red{(0)}}\left(v,0\right)
	 \right)=\mu_{M}(v)\ge1$.
	 Thus $\mu_{M}\left(A_{0,0}\right)^{2^{-L}}\ge1$. Hence we obtain
	$\mu_{M}\left(B_{H}^{\red{(2k\cdot L\cdot
	h)}}\left(v,\frac{\Delta}{8}\right)\right)=\mu_{M}\left(A_{L,0}\right)>
	\frac{1}{2}\cdot\mu_{M}(H)$,
	a contradiction (as the algorithm would have halted at
	\cref{line:AltCreateClusterReturnX}).
\end{proof}

\begin{claim}\label{clm:RamseyAltJChoise}
	In \cref{line:SecondStepRamsey} of \Cref{alg:CreateClusterAlternative},
	there is an index $j\in [0,2(k-1)]$ such that
	$\mu_{M}\left(A_{a,j+2}\right)\le\mu_{M}\left(A_{a,j}\right)\cdot\left(
	\frac{\mu_{M}\left(A_{a+1,0}\right)}{\mu_{M}\left(A_{a,0}\right)}\right)^{
	\frac{1}{k}}$.
\end{claim}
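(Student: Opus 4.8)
The plan is to mimic the telescoping argument used in the proof of \Cref{claim:j-v}, now carried out on the even-indexed balls $A_{a,0},A_{a,2},A_{a,4},\dots,A_{a,2k}$ for the scale index $a\in[0,L-1]$ fixed in \cref{line:firstStepRamsey}.

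First I would record the identity linking the two ``coordinates'' of the doubly-indexed balls. Unwinding $A_{a,j}=B_{H}^{\red{((2k a+j)\cdot h)}}\!\left(v,(a+\tfrac{j}{2k})\cdot\tfrac{\Delta}{4L}\right)$ at $j=2k$ gives
\[
A_{a,2k}=B_{H}^{\red{(2k(a+1)\cdot h)}}\!\left(v,(a+1)\cdot\tfrac{\Delta}{4L}\right)=A_{a+1,0}.
\]
Since $v\in M$ and $v\in A_{a,j}$ for every $j\ge0$, each quantity $\mu_{M}(A_{a,j})$ is at least $\mu(v)\ge1$; in particular all of the ratios below are well defined and strictly positive.

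Next I would telescope. Write
\[
\frac{\mu_{M}(A_{a+1,0})}{\mu_{M}(A_{a,0})}=\frac{\mu_{M}(A_{a,2k})}{\mu_{M}(A_{a,0})}=\prod_{\ell=0}^{k-1}\frac{\mu_{M}(A_{a,2(\ell+1)})}{\mu_{M}(A_{a,2\ell})},
\]
a product of $k$ consecutive ratios. Let $\ell_0\in\{0,1,\dots,k-1\}$ minimize $\mu_{M}(A_{a,2(\ell+1)})/\mu_{M}(A_{a,2\ell})$; then every one of the $k$ factors is at least $\mu_{M}(A_{a,2\ell_0+2})/\mu_{M}(A_{a,2\ell_0})$, so
\[
\frac{\mu_{M}(A_{a+1,0})}{\mu_{M}(A_{a,0})}\ \ge\ \left(\frac{\mu_{M}(A_{a,2\ell_0+2})}{\mu_{M}(A_{a,2\ell_0})}\right)^{k}.
\]
Setting $j=2\ell_0$ and extracting the $k$-th root gives exactly $\mu_{M}(A_{a,j+2})\le\mu_{M}(A_{a,j})\cdot\bigl(\mu_{M}(A_{a+1,0})/\mu_{M}(A_{a,0})\bigr)^{1/k}$, and $j=2\ell_0\in\{0,2,\dots,2(k-1)\}\subseteq[0,2(k-1)]$, as claimed.

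There is essentially no obstacle: the only points needing a line of care are the boundary identity $A_{a,2k}=A_{a+1,0}$ (so the telescoping product collapses to the ratio appearing in the statement) and the fact that $v\in M$ makes all the measures positive so that the pigeonhole over the $k$ ratios (geometric mean at least the minimum) is legitimate; both are immediate from the definitions in \texttt{Create-Cluster-alt}.
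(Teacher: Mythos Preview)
Your proof is correct and follows essentially the same telescoping argument as the paper's own proof: pick the even index $j\in\{0,2,\dots,2(k-1)\}$ minimizing the ratio $\mu_{M}(A_{a,j+2})/\mu_{M}(A_{a,j})$ and bound the product $\prod_{\ell=0}^{k-1}\mu_{M}(A_{a,2(\ell+1)})/\mu_{M}(A_{a,2\ell})=\mu_{M}(A_{a+1,0})/\mu_{M}(A_{a,0})$ from below by the $k$-th power of that minimum. Your version is slightly more explicit in verifying the boundary identity $A_{a,2k}=A_{a+1,0}$ and the positivity of the measures, but otherwise the two arguments coincide.
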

\begin{proof}
	Let $j\in\{0,2,\dots,2(k-1)\}$ be the index minimizing the ratio
	$\frac{\mu_{M}\left(A_{a,j+2}\right)}{\mu_{M}\left(A_{a,j}\right)}$.
	It holds that
	\[
	\frac{\mu_{M}\left(A_{a+1,0}\right)}{\mu_{M}\left(A_{a,0}\right)}=\frac{
	\mu_{M}\left(A_{a,2}\right)}{\mu_{M}\left(A_{a,0}\right)}\cdot\frac{
	\mu_{M}\left(A_{a,4}\right)}{\mu_{M}\left(A_{a,2}\right)}\cdots\frac{
	\mu_{M}\left(A_{a+1,0}\right)}{\mu_{M}\left(A_{a,2(k-1)}\right)}\ge\left(
	\frac{\mu_{M}\left(A_{a,j+2}\right)}{\mu_{M}\left(A_{a,j}\right)}
	\right)^{k}~,
	\]
	the claim follows.
\end{proof}

Next we observe that the diameter of the clusters in our laminar partition is
gradually decreasing. However here we argue this w.r.t. a fixed
number of hops.
\begin{observation}\label{obs::diamBoundAlt}
	Every $i$-level cluster $X$ (i.e. cluster on which we called
	\texttt{Ramsey-type-embedding} with scale $2^i$) has diameter at most
	$\diam^{\red{(8k\cdot L\cdot h)}}(G[X])\le 2^{i}$.
\end{observation}
\begin{proof}
	In the first call to the \texttt{Ramsey-type-embedding} algorithm we use
	the scale $2^\phi$ for $\phi=\left\lceil\log_2 \diam\rhop{h}(G)
	\right\rceil$. In particular, $\diam^{\red{(8k\cdot L\cdot
	h)}}(G[X])\le\diam\rhop{h}(G[X])\le 2^\phi$.
	For an $i$-level cluster $X$ other than the first one, it was created
	during a call to the \texttt{Create-Cluster-alt} procedure with scale
	$\Delta=2^{i+1}$ over a cluster $Y$.
	There are two different options for \Cref{alg:CreateClusterAlternative} to
	return a cluster.
	Suppose first that it returned a cluster $S=\cup_{x\in
	M}B_{H}^{\red{(2k\cdot L\cdot h)}}(x,\frac18\Delta)$ in
	\cref{line:AltCreateClusterReturnX}, and consider two vertices $u,v\in S$.
	By definition, there are $u',v'\in M$ such that
	$u\in B_{H}^{\red{(2k\cdot L\cdot h)}}(u',\frac18\Delta)$, and $v\in
	B_{H}^{\red{(2k\cdot L\cdot h)}}(v',\frac18\Delta)$. Furthermore, it holds
	that $\mu_{M}\left(B_{Y}^{\red{(2k\cdot L\cdot
	h)}}(v',\frac{\Delta}{8})\right)>\frac12\cdot\mu_M(Y)$ and
	$\mu_{M}\left(B_{Y}^{\red{(2k\cdot L\cdot
	h)}}(u',\frac{\Delta}{8})\right)>\frac12\cdot\mu_M(Y)$. Hence there is a
	point $z\in B_{Y}^{\red{(2k\cdot L\cdot h)}}(v',\frac{\Delta}{8})\cap
	B_{Y}^{\red{(2k\cdot L\cdot h)}}(u',\frac{\Delta}{8})$
	in the intersection of these two balls. It follows that
	\[
	d_{H}^{\red{(8k\cdot L\cdot h)}}(u,v)\le d_{H}^{\red{(2k\cdot L\cdot
	h)}}(u,u')+d_{H}^{\red{(2k\cdot L\cdot h)}}(u',z)+d_{H}^{\red{(2k\cdot
	L\cdot h)}}(z,v')+d_{H}^{\red{(2k\cdot L\cdot
	h)}}(v',v)\le4\cdot\frac{\Delta}{8}=\frac{\Delta}{2}=2^{i}~.
	\]
	Otherwise, $X=A_{a,j+1}$ for some $a\in [0,L-1]$ and $j\in[0,2(k-1)]$. As
	$(2ka+(j+1))\cdot h\le 2k(a+1)\le 2kL\cdot h$, and
	$(a+\frac{j+1}{2k})\cdot\frac{\Delta}{8L}<(a+1)\cdot\frac{\Delta}{8L}\le
	\frac{\Delta}{8}=2^{i-3}$, it follows that $X\subseteq
	B_{H}^{\red{(2k\cdot L\cdot h)}}(v,2^{i-1})$, the observation follows.
\end{proof}

Recall that for a cluster $X$, we denote by $\hat{X}$ the set of vertices in
$X$ that have been padded throughout the algorithm (that is, never belonged to
$\overline{X}_{j}\setminus\underline{X}_{j}$). In particular, $\hat{V}$
denotes the set $M$ returned at the end of the algorithm. The proof of the
following lemma follows the exact same lines as \Cref{lem:RamseyDistortion},
and thus we will skip it.
\begin{lemma}\label{lem:RamseyDistortionAlt}
	For every pair of vertices $u,v\in V$ it holds that
	$d_{G}^{\red{(O(k\cdot\log\log\mu(\mathcal{M}))\cdot h)}}(u,v)\le
	d_{U}(u,v)$. Further, if $u\in \hat{V}$ then $d_{U}(u,v)\le
	O(k\cdot\log\log \mu(\mathcal{M}))\cdot d_{G}\rhop{h}(u,v)$.
\end{lemma}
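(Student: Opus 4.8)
\medskip
\noindent\textbf{Proof plan for \Cref{lem:RamseyDistortionAlt}.}
The plan is to rerun the proof of \Cref{lem:RamseyDistortion} almost verbatim, with three substitutions: \Cref{obs::diamBoundAlt} replaces \Cref{obs:diamBound}; \Cref{clm:RamseyAltAChoise} and \Cref{clm:RamseyAltJChoise} replace \Cref{claim:j-v} to legitimize the index choices $(a,j)$ made by \texttt{Create-Cluster-alt}; and the padding radius becomes $\rho_i=\tfrac{2^i}{16kL}$ with $L=\lceil 1+\log\log\mu(\mathcal{M})\rceil$, in place of $\Theta(2^i/k)$. As in \Cref{lem:RamseyDistortion} I would first assume $\diam\rhop{h}(G)<\infty$ and lift this at the very end by the reduction of \Cref{subsec:inftyDistance}. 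The statement is proved by induction on the scale $2^i$ passed to \texttt{Ramsey-type-embedding}$(G[X],\dots,2^i)$, with inductive hypothesis: the returned ultrametric $U$ satisfies $d_G^{\red{(4k\cdot L\cdot h)}}(u,v)\le d_U(u,v)$ for all $u,v\in X$, and $d_U(u,v)\le O(kL)\cdot d_{G[X]}\rhop{h}(u,v)$ whenever $u\in\hat X$ (note that $L$ at any recursive call is at most its top-level value, since $\mu$ only shrinks). The base case $|X|=1$ is immediate.

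\smallskip
\noindent\emph{Domination step.} If $u,v$ fall into distinct children $X_q\ne X_{q'}$ of the current call, their least common ancestor in $U$ carries label $2^{i+2}$, and \Cref{obs::diamBoundAlt} gives $d_G^{\red{(4k\cdot L\cdot h)}}(u,v)\le d_{G[X]}^{\red{(4k\cdot L\cdot h)}}(u,v)\le\diam^{\red{(4k\cdot L\cdot h)}}(G[X])\le 2^i\le d_U(u,v)$; if $u,v\in X_q$ one invokes the inductive hypothesis at scale $2^{i-1}$ together with $G[X_q]\subseteq G[X]$. The step I would emphasize, and the only place the construction behaves differently from \Cref{lem:RamseyDistortion}, is that the hop budget $4kLh$ supplied by \Cref{obs::diamBoundAlt} does not depend on $i$: in \texttt{Create-Cluster-alt} every ball used to carve a cluster has radius at most $2^{i-1}$ while spending at most $2kLh$ hops, regardless of the scale, so the $\Theta(i\cdot kh)$ accumulation of \Cref{obs:diamBound} is replaced by a flat $O(kLh)$, at the price of the $\log\log$ factor inside $L$. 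Hence $d_G^{\red{(O(k\log\log\mu(\mathcal M))\cdot h)}}(u,v)\le d_U(u,v)$ globally.

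\smallskip
\noindent\emph{Distortion step.} Fix $u\in\hat X$, set $\rho_i=\tfrac{2^i}{16kL}$, $B=B_{G[X]}\rhop{h}(u,\rho_i)$, and (in the notation of \texttt{Padded-partition}) let $q$ be minimal with $B\cap X_q\ne\emptyset$, so $B\subseteq Y_{q-1}$. If \texttt{Create-Cluster-alt} returned the triple $(Y_{q-1},Y_{q-1},Y_{q-1})$ then $B\subseteq Y_{q-1}=X_q$ trivially. Otherwise $\underline{X_q}=A_{a,j}\subseteq X_q=A_{a,j+1}\subseteq\overline{X_q}=A_{a,j+2}$, where consecutive $A$'s differ in radius by exactly $\tfrac1{2k}\cdot\tfrac{2^i}{4L}=2\rho_i$ and in hop budget by exactly $\red{h}$; the same three triangle-inequality moves as in \Cref{lem:RamseyDistortion} --- from the center $v_q$ to a point $z\in B\cap X_q$, then on to $u$ (so $u\in\overline{X_q}$, hence $u\in\underline{X_q}$ because $u$ stayed marked), then from $v_q$ through $u$ out to an arbitrary $z'\in B$ (so $z'\in X_q$) --- yield $B\subseteq X_q$. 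Since every path of at most $\red{h}$ hops and weight at most $\rho_i$ leaving $u$ stays inside $B\subseteq X_q$, we get $d_{G[X_q]}\rhop{h}(u,v)=d_{G[X]}\rhop{h}(u,v)$ for $v\in B$. Then, exactly as before: if $v\notin B$ then $d_{G[X]}\rhop{h}(u,v)>\rho_i$ and $d_U(u,v)\le 2^{i+2}=64kL\cdot\rho_i<64kL\cdot d_{G[X]}\rhop{h}(u,v)$; if $v\in B$ the inductive hypothesis at scale $2^{i-1}$ gives $d_U(u,v)=d_{U_q}(u,v)\le O(kL)\cdot d_{G[X_q]}\rhop{h}(u,v)=O(kL)\cdot d_{G[X]}\rhop{h}(u,v)$. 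Applying the hypothesis at the top call, where $X=V$ and $L=O(\log\log\mu(\mathcal M))$, proves \Cref{lem:RamseyDistortionAlt}.

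\smallskip
The only genuinely new point, and the one I would check most carefully, is the scale-independence of the domination hop budget: one must verify that \Cref{obs::diamBoundAlt} certifies an $O(kL\cdot h)$-hop diameter for \emph{every} cluster, covering both the degenerate $(X,X,X)$ branch of \texttt{Create-Cluster-alt} and the generic $A_{a,j+1}$ branch, so that the final domination bound is $O(k\log\log\mu(\mathcal M))\cdot h$ rather than the $O(k\log n)\cdot h$ of \Cref{lem:RamseyDistortion}. Everything else is a line-by-line transcription with $\Theta(2^i/k)$ replaced by $\Theta(2^i/(kL))$.
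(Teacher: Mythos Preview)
Your proposal is correct and matches the paper's own treatment: the paper explicitly states that the proof ``follow[s] the exact same lines as \Cref{lem:RamseyDistortion}'' and omits the details, and you have supplied precisely those details with the right substitutions (\Cref{obs::diamBoundAlt} for \Cref{obs:diamBound}, padding radius $\rho_i=\Theta(2^i/(kL))$ in place of $\Theta(2^i/k)$, and the observation that the hop budget $4kLh$ is scale-independent). Your care about $L$ being monotone non-increasing along the recursion, and about the two branches of \texttt{Create-Cluster-alt}, addresses exactly the points where the transcription is not purely mechanical.
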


Finally, we will lower bound the measure of the vertices $\hat{V}$ that remain
marked until the end of the algorithm.
\begin{lemma}\label{lem:frac-survival-alt}
	For every cluster $X$ on which we execute the
	\texttt{Ramsey-type-embedding} using the \texttt{Create-Cluster-alt}
	procedure, it holds that $\mu_M(\hat{X})\ge \mu_M(X)^{1-\frac1k}$.
\end{lemma}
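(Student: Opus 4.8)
The plan is to mirror the inductive argument of Lemma~\ref{lem:frac-survival}, but exploit the stronger per-level guarantee baked into \texttt{Create-Cluster-alt}. I would induct on the scale $i$ and, within a fixed scale, on $|X|$. The base case is a singleton (or $i=0$), where $\hat X = X$ and the bound is trivial. For the inductive step, consider a call $\texttt{Ramsey-type-embedding}(G[X],\mu,M,\red{h},k,2^i)$ and the partition $\{X_q\}_{q=1}^s$ with marked sets $M_q$ produced by \texttt{Padded-partition}, where each $X_q$ comes from a call to \texttt{Create-Cluster-alt} on $Y_{q-1}$ with $M(Y_{q-1})=M_{q-1}^Y$.

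The heart of the matter is the per-cluster inequality: for each $q$, I claim $\mu_M(\hat X_q)\ge \dfrac{\mu_{M(Y_{q-1})}(\overline{X}_q)}{\mu_M(X)^{1/k}}$, where the denominator is the \emph{global} measure $\mu_M(X)$ rather than the measure of a ball at the previous scale (this is exactly the ``decoupling'' that removes the aspect-ratio dependence). There are two cases according to how \texttt{Create-Cluster-alt} returned. If it returned $(X,X,X)$ at \cref{line:AltCreateClusterReturnX}, then $\overline{X}_q = X_q = X$ and $\underline{X}_q = X$, so no vertex in $X$ is unmarked at this level; by the induction hypothesis on $X_q=X$ at scale $2^{i-1}$ (smaller scale) we get $\mu_M(\hat X)\ge \mu_M(X)^{1-1/k}\ge \mu_M(\overline{X}_q)/\mu_M(X)^{1/k}$, and since this is the unique cluster we are done directly. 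In the generic case $X_q = A_{a,j+1}$, I combine: (i) the induction hypothesis applied to $X_q$, giving $\mu_M(\hat X_q)\ge \mu_{M(X_q)}(X_q)^{1-1/k} = \mu_{M(Y_{q-1})}(\underline{X}_q)^{1-1/k}$ (using $M_q=\underline{X}_q\cap M(Y_{q-1})$ as in~(\ref{eq:MeasureXq})); (ii) the choice of $j$ at \cref{line:SecondStepRamsey}, which gives $\mu_{M(Y_{q-1})}(\underline{X}_q)\ge \mu_{M(Y_{q-1})}(\overline{X}_q)\cdot\bigl(\mu_M(A_{a,0})/\mu_M(A_{a+1,0})\bigr)^{1/k}$; and (iii) the choice of $a$ at \cref{line:firstStepRamsey}, $\mu_M(A_{a,0})\ge \mu_M(A_{a+1,0})^2/\mu_M(H)$, so that $\mu_M(A_{a+1,0})/\mu_M(A_{a,0}) \le \mu_M(H)/\mu_M(A_{a+1,0}) \le \mu_M(X)$ using $A_{a+1,0}\ni v$ hence $\mu_M(A_{a+1,0})\ge 1$ and $H\subseteq X$. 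Chaining (i)--(iii), and also using that $\underline{X}_q\supseteq A_{a,0}$ gives a lower bound on $\mu_{M(Y_{q-1})}(\underline{X}_q)$ that, combined with the $(1-1/k)$-power from (i), absorbs into the clean form $\mu_M(\hat X_q)\ge \mu_{M(Y_{q-1})}(\overline{X}_q)/\mu_M(X)^{1/k}$.

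Finally I would sum over $q$: since $M(Y_q) = M(Y_{q-1})\setminus \overline{X}_q$, the measures $\mu_{M(Y_{q-1})}(\overline{X}_q)$ telescope to $\mu_M(X)$ (every marked vertex of $X$ is removed at exactly one step, matching the accounting in Lemma~\ref{lem:frac-survival}), so
\[
\mu_M(\hat X) = \sum_{q=1}^s \mu_M(\hat X_q) \ge \frac{\sum_{q=1}^s \mu_{M(Y_{q-1})}(\overline{X}_q)}{\mu_M(X)^{1/k}} = \frac{\mu_M(X)}{\mu_M(X)^{1/k}} = \mu_M(X)^{1-\frac1k}.
\]
I expect the main obstacle to be bookkeeping the exponents cleanly in case (iii): one must verify that the extra slack from raising $\mu_{M(Y_{q-1})}(\underline{X}_q)$ to the power $1-1/k$ (rather than $1$) in the induction hypothesis is exactly what is needed to pay for bounding the ratio $\mu_M(A_{a+1,0})/\mu_M(A_{a,0})$ by the \emph{global} $\mu_M(X)$ instead of a scale-local quantity — this is the step where the $\log\log$ loss in the hop bound (through $L$) is ``cashed in'' for aspect-ratio-independence, and getting the inequality chain to close without an off-by-one in the power of $\mu_M(X)$ is the delicate point. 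Applying the lemma to $X=V$ with $M=\mathcal{M}$ then yields $\mu(\hat V)\ge \mu(\mathcal{M})^{1-1/k}$, completing the proof of Lemma~\ref{lem:UltraMeasureAlt}.
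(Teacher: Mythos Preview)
Your overall architecture is right and matches the paper's: induct on $i$ and $|X|$, establish for each carved cluster the inequality $\mu_M(\hat X_q)\ge \mu_{M(Y_{q-1})}(\overline{X}_q)/\mu_M(X)^{1/k}$, and telescope. The trivial-return case and the telescoping are fine.

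The concrete gap is exactly where you flag it, in step~(iii). Your bound $\mu_M(A_{a+1,0})/\mu_M(A_{a,0}) \le \mu_M(H)/\mu_M(A_{a+1,0})\le \mu_M(X)$, obtained via $\mu_M(A_{a+1,0})\ge 1$, is too weak: plugging it into (ii) gives only $\mu_M(\underline{X}_q)\ge \mu_M(\overline{X}_q)/\mu_M(X)^{1/k}$, and then (i) yields $\mu_M(\hat X_q)\ge \mu_M(\overline{X}_q)^{1-1/k}/\mu_M(X)^{(k-1)/k^2}$, which does \emph{not} dominate $\mu_M(\overline{X}_q)/\mu_M(X)^{1/k}$ in general. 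The extra containment you invoke, $\underline{X}_q\supseteq A_{a,0}$, is also in the wrong direction for what is needed. The fix is to keep the full strength of the $a$-inequality as $\mu_M(A_{a,0})/\mu_M(A_{a+1,0}) \ge \mu_M(A_{a+1,0})/\mu_M(X)$ and then use the \emph{other} containment $A_{a+1,0}=A_{a,2k}\supseteq A_{a,j}=\underline{X}_q$, so that $\mu_M(A_{a+1,0})\ge \mu_M(\underline{X}_q)$. Writing (i) as $\mu_M(\hat X_q)\ge \mu_M(\underline{X}_q)/\mu_M(\underline{X}_q)^{1/k}$ and substituting (ii) in the numerator, the $\mu_M(\underline{X}_q)^{1/k}$ in the denominator is cancelled by the $\mu_M(A_{a+1,0})^{1/k}$ in the numerator, leaving exactly $\mu_M(\overline{X}_q)/\mu_M(X)^{1/k}$. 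This is the ``square'' in the $a$-condition doing its job; bounding $\mu_M(A_{a+1,0})$ below by $1$ throws that away.

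As a minor structural note, the paper organizes the induction slightly differently: it peels off only $X_1$, bounds $\mu(\hat X_1)\ge \mu_M(\overline{X}_1)/\mu_M(X)^{1/k}$ via the chain above, and then applies the induction hypothesis at the \emph{same} scale $i$ to $Y_1$ (legal since $|Y_1|<|X|$) to get $\mu(\hat Y_1)\ge \mu(M_1^Y)^{1-1/k}\ge \mu(M_1^Y)/\mu_M(X)^{1/k}$; summing the two gives the result. Your ``do all $q$ at once'' version is equivalent once the per-cluster inequality is corrected as above.
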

\begin{proof}
	We prove the lemma by induction on $|X|$ and $i$.
	The base cases, where either $X$ is a singleton or $i=0$, are both
	trivial. For the inductive step, assume we call
	\texttt{Ramsey-type-embedding} on $(H=G[X],\mu,k,\red{h},2^i)$ and the
	current set of marked vertices is $M$.
	The algorithm performs the procedure
	\texttt{Padded-partition}$(G[X],\mu,M,\red{h},k,2^i)$ to obtain the
	$i-1$-clusters $X_1,...,X_{s}$, each with the set $M_q\subseteq M\cap X_q$
	of marked vertices.
	If the algorithm returns a trivial partition (i.e. $s=1$) then we are done
	by induction (as $M_1=M$).
	Similarly, if the cluster $X_1$ was returned by the
	\texttt{Create-Cluster-alt} procedure in
	\cref{line:AltCreateClusterReturnX}, then $M_1=M$, and again we can simply
	apply induction.
	Hence we can assume that $s>1$, and the cluster $X_1$ was returned by the
	\texttt{Create-Cluster-alt} procedure in \cref{line:RamseyCreateFinal}.
	Denote by $\hat{X}_q$ the set of vertices from $X_q$ that remained marked
	at the end of the algorithm. Then $\hat{X}=\cup_{q=1}^{s}\hat{X}_q$.
	Recall that after creating $X_1$, the set of remaining vertices is denoted
	$Y_1=X\backslash X_1$, with the set $M_1^Y=M\backslash \overline{X}_1$ of
	marked vertices.
	Note that if we were to call \texttt{Ramsey-type-embedding} on input
	$(G[Y_1],\mu,M_1^Y,\red{h},k,2^i)$, it would first partition $Y_1$ into
	$X_2,\dots,X_s$, and then continue on these clusters in the exact same
	way as the original execution on $X$. In particular we would receive an
	embedding into an ultrametric $U_Y$.
	Therefore, we can analyze the rest of the process as if the algorithm did
	a recursive call on $Y_1$ rather than on each cluster $X_q$.
	Denote $\hat{Y}_1=\cup_{q=2}^{s}\hat{X}_q$.
	Since $|X_1|,|Y_1|<|X|$, the induction hypothesis implies that
	$\mu(\hat{X}_{1})\ge\mu(X_{1})^{1-\frac{1}{k}}$ and
	$\mu(\hat{Y}_1)\ge\mu(M_{1}^Y)^{1-\frac{1}{k}}$.

	As the \texttt{Padded-partition} algorithm did not return a trivial
	partition of $X$, it defined  $A_{a,j}=B_{H}^{\red{((2k\cdot a+j)\cdot
	h)}}\left(v,(a+\frac{j}{2k})\cdot\frac{\Delta}{8L}\right)$,
	and returned
	$(\underline{X}_1,X_1,\overline{X}_1)=(A_{a,j},A_{a,j+1},A_{a,j+2})$. The
	indices $a,j$ were chosen such that
	$\mu_{M}\left(A_{a,0}\right)\ge\frac{\mu_{M}\left(A_{a+1,0}\right)^{2}}{
	\mu_{M}(X)}$ and
	$\mu_{M}\left(A_{a,j+2}\right)\le\mu_{M}\left(A_{a,j}\right)\cdot\left(
	\frac{\mu_{M}\left(A_{a+1,0}\right)}{\mu_{M}\left(A_{a,0}\right)}\right)^{
	\frac{1}{k}}$,
	which is equivalent to $\mu_{M}\left(\underline{X}_{1}\right)\ge\mu_{M}
	\left(\overline{X}_{1}\right)\cdot\left(\frac{\mu_{M}\left(A_{a,0}\right)}
	{\mu_{M}\left(A_{a+1,0}\right)}\right)^{\frac{1}{k}}$.
	It holds that
	\begin{align*}
		\mu(\hat{X}_{1}) &
		\ge\frac{\mu(M_{1})}{\mu(M_{1})^{\frac{1}{k}}}=\frac{\mu_{M}(
		\underline{X}_{1})}{\mu_{M}(\underline{X}_{1})^{\frac{1}{k}}}\\
		& \ge\mu_{M}(\overline{X}_{1})\cdot\left(\frac{\mu_{M}\left(A_{a,0}
		\right)}{\mu(\underline{X}_{1})\cdot\mu\left(A_{a+1,0}\right)}
		\right)^{\frac{1}{k}}\\
		& \ge\mu_{M}(\overline{X}_{1})\cdot\left(\frac{\mu_{M}\left(A_{a+1,0}
		\right)}{\mu(\underline{X}_{1})\cdot\mu_{M}\left(X\right)}\right)^{
		\frac{1}{k}}\ge\frac{\mu_{M}(\overline{X}_{1})}{\mu_{M}\left(X
		\right)^{\frac{1}{k}}}~,
	\end{align*}
	where the third inequality follows as
	$\frac{\mu_{M}\left(A_{a,0}\right)}{\mu_{M}\left(A_{a+1,0}\right)}\ge
	\frac{\mu_{M}\left(A_{a+1,0}\right)}{\mu_{M}(X)}$, and the last inequality
	holds as $A_{a+1,0}\supseteq A_{a,j}=\underline{X}_{1}$. As $M_1^Y=M
	\backslash \overline{X}_{1}$, we conclude
	\begin{align*}
	\mu(\hat{X}) & =\mu(\hat{X}_{1})+\mu(\hat{Y}_{1})\\
	& \ge\frac{\mu_{M}(\overline{X}_{1})}{\mu_{M}\left(X\right)^{
	\frac{1}{k}}}+\mu_{M}(M_{1}^{Y})^{1-\frac{1}{k}}\\
	& \ge\frac{\mu_{M}(\overline{X}_{1})+\mu_{M}(M_{1}^{Y})}{\mu_{M}\left(X
	\right)^{\frac{1}{k}}}=\frac{\mu_{M}(\overline{X}_{1})+\mu_{M}(X)-\mu_{M}(
	\overline{X}_{1})}{\mu_{M}\left(X\right)^{\frac{1}{k}}}=\mu_{M}\left(X
	\right)^{1-\frac{1}{k}}~.
	\end{align*}
\end{proof}

\section{Hop-constrained clan embedding}
In this section we prove \Cref{thm:ClanHopUltrametric}.
We restate the theorem for convenience:
\ClanHopUltrametric*

The algorithm and its proof will follow lines similar to those of
\Cref{thm:UltrametricRamsey}. The main difference is that in the construction
of \Cref{thm:UltrametricRamsey} we had a set of marked vertices $M$ and each
time when creating a partition we removed all the boundary vertices from $M$,
while here we will create a cover where the boundary vertices will be
duplicated to multiple clusters.

Intuitively, in the construction of the Ramsey-type embedding we first created
a hierarchical partition. In contrast, here we will create a hierarchical cover.
A cover of $X$ is a collection $\mathcal{X}$ of clusters such that each vertex
belongs to at least one cluster (but possibly to many more).
A cover $\mathcal{X}$ refines a cover $\mathcal{X}'$ if for every cluster
$C\in \mathcal{X}$, there is a cluster $C'\in \mathcal{X}'$ such that
$C\subseteq C'$.
A hierarchical cover is a collection of covers $\{{\cal X}_0,{\cal
X}_1,\dots,{\cal X}_\phi\}$ such that ${\cal X}_\logdiam=\{V\}$ is the trivial
cover, ${\cal X}_0$ consists only of singletons, and for every $i$, ${\cal
X}_i$ refines ${\cal X}_{i+1}$.
However, to accurately use a hierarchical cover to create a clan embedding, one
needs to control for the different copies of a vertex, which makes this
notation tedious. Therefore, while we will actually create a hierarchical
cover, we will not use this notation explicitly.
As in the Ramsey-case, the main technical part will be the following
distributional lemma, the proof of which is deferred to
\Cref{subsec:ProofClanDistributionalLemma}.
Recall that a $(\ge1)$-measure is simply a function
$\mu:V\rightarrow\mathbb{R}_{\ge1}$. Given a one-to-many embedding
$f:X\rightarrow2^Y$, set  $\mathbb{E}_{v\sim\mu}[|f(v)|]=\sum_{v\in
X}\mu(v)\cdot |f(v)|$.
\begin{lemma}\label{lem:clanHopUltraMeasure}
	Consider an $n$-vertex graph $G=(V,E,w)$ with polynomial aspect ratio,
	$(\ge1)$-measure $\mu:V\rightarrow\mathbb{R}_{\ge1}$, and parameters
	$k,\red{h}\in [n]$. Then there is a clan embedding $(f,\chi)$ into an
	ultrametric with
	hop-distortion $\left(16(k+1),\red{O(k\cdot\log n)},\red{h}\right)$,
	hop-path-distortion $\left(\redno{O(k\cdot\log
	\mu(V))},\red{h}\right)$, and
	such that $\mathbb{E}_{v\sim\mu}[|f(v)|]\le\mu(V)^{1+\frac1k}$.
\end{lemma}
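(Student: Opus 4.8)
The plan is to mirror the structure of the proof of Lemma~\ref{lem:UltraMeasure} (the distributional Ramsey lemma), replacing the ``partition-and-delete'' mechanism by a ``cover-and-duplicate'' mechanism. Concretely, I would run a recursive algorithm \texttt{Clan-embedding}$(G[X],\mu,\red{h},k,2^i)$ that, at scale $2^i$, calls a procedure \texttt{Padded-cover} to produce a collection of clusters $X_1,\dots,X_s$ that \emph{cover} $X$ (rather than partition it), recurses on each $G[X_q]$ at scale $2^{i-1}$ to obtain ultrametrics $U_q$, and joins them under a new root labelled $2^i$; every copy of $v\in X$ produced in the recursion on $X_q$ becomes a copy in $f(v)$. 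The cover is built iteratively by calls to \texttt{Create-Cluster}: pick a center $v$ maximizing $\mu(B^{\red{(i h')}}_H(v,2^{i-3}))$, run the same ball-growing step on the dyadic-in-$1/k$ ratio $\mu(A_{j+2})/\mu(A_j)\le(\mu(A_{2k})/\mu(A_0))^{1/k}$ to fix the radius, declare $X_q = A_{j(v)+1}$, and crucially \emph{do not remove the annulus $\overline{X}_q\setminus X_q$ from the residual graph} — only remove the core $A_{j(v)}$ (or equivalently, remove $X_q$ itself but keep around the ``boundary strip'' so that balls straddling the cut are re-covered). The chief $\chi(v)$ is defined to be the copy in the first cluster whose core contains $v$. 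As in the Ramsey case, set $h'=2kh$, $\rho_i=2^{i-3}/(2k)=2^i/(16k)$, and recurse down $\phi=O(\log n)$ scales (using polynomial aspect ratio), giving hop-stretch $O(\phi\cdot h') = \red{O(k\log n)\cdot h}$.

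The distortion bound $\left(16(k+1),\red{O(k\log n)},\red{h}\right)$ is then essentially the statement of Lemma~\ref{lem:RamseyDistortion}: the lower bound $d_G^{\red{(O(k\log n)\cdot h)}}(u,v)\le \min_{u'\in f(u),v'\in f(v)} d_U(u',v')$ follows because every copy of $u$ and every copy of $v$ that end up in a common subtree came from a common cluster whose hop-bounded diameter is controlled by Observation~\ref{obs:diamBound}; and for the upper bound $\min_{u'\in f(u)}d_U(u',\chi(v))\le 16(k+1)\cdot d_G^{\red{(h)}}(u,v)$ one repeats the argument that the ball $B^{\red{(h)}}_{G[X]}(v,\rho_i)$ around the chief-defining center is \emph{entirely contained} in the cluster $X_q$ that defines $\chi(v)$ at that level — this is exactly where the cover (as opposed to partition) is needed, since it guarantees $u$ lands in the same cluster as $v$'s core rather than being deleted. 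The reason we can afford $k+1$ instead of $k$ is the same small additive slack as in Claim~\ref{clm:probabilisticRamsey}; the $h'=2kh$-per-level additive hop loss is identical to the Ramsey construction.

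The two genuinely new pieces are the \textbf{expected clan size} and the \textbf{hop-path-distortion}. For clan size, I would prove by induction on $(i,|X|)$ the analogue of Lemma~\ref{lem:frac-survival}, but now \emph{upper}-bounding the total multiplicity: for every $i$-cluster $X$, $\sum_{v\in X}\mu(v)\cdot|f_X(v)| \le \mu(X)\cdot \mu\!\left(B^{\red{(i h')}}_{H(X)}(r(X),2^{i-2})\right)^{1/k}$, where $f_X$ is the partial embedding built inside $X$. The inductive step sums over the cover clusters $X_1,\dots,X_s$; the key inequality is that, by the ball-growing choice of $j(v_q)$, the measure of the ``over-covered'' strip $\overline{X}_q\setminus\underline{X}_q$ (the vertices that get charged to more than one cluster) is bounded by $\mu_M(\overline{X}_q)\le\mu_M(\underline{X}_q)\cdot(\mu(B_{\text{big}})/\mu(B_{\text{small}}))^{1/k}$, so the extra copies telescope against the same $1/k$-power factor exactly as in Lemma~\ref{lem:frac-survival}, yielding the bound $\mathbb{E}_{v\sim\mu}[|f(v)|]\le\mu(V)^{1+1/k}$ at the top level. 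The hop-path-distortion $\left(O(k\log\mu(V)),\red{h}\right)$ I would prove separately along the lines sketched in the paper overview (following Bartal--Mendel \cite{BM04multi}): given an $\red{h}$-respecting path $P=(v_0,\dots,v_m)$, at scale $2^i$ the cover induces a partition of $P$'s vertices into maximal sub-paths internal to single clusters; the number of ``split edges'' (consecutive vertices in different clusters) is small because each split edge has weight $\Omega(\rho_i)=\Omega(2^i/k)$ — charging a cost of $2^i$ (the max edge label at this scale) to each split edge costs $O(k)$ times the path weight at scale $i$, and summing over $O(\log\mu(V))$ scales gives $O(k\log\mu(V))$. I expect the \textbf{main obstacle} to be the bookkeeping in the clan-size induction: because the same vertex $v$ can be copied through several cover clusters at the \emph{same} level and then again through nested covers at lower levels, one must carefully define the partial embedding $f_X$ and verify that the telescoping of the $1/k$-powers survives the switch from ``measure of surviving marked vertices'' (a quantity that only decreases) to ``total number of copies'' (a quantity that can increase at every level), which is precisely the dual phenomenon and requires the ball-growing inequality to be applied in the opposite direction.
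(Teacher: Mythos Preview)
Your overall architecture matches the paper: a recursive \texttt{clan-cover} that at each scale carves ball-clusters via the same ball-growing ratio test, recurses on the outer ball $\overline{X}_q$ while removing only the inner core $\underline{X}_q$, and defines chiefs via the first middle level containing the vertex. Your hop-distortion argument and the clan-size induction (the dual of Lemma~\ref{lem:frac-survival}, telescoping the $1/k$-powers in the opposite direction) are essentially what the paper does in Lemmas~\ref{lem:ClanDistortion} and~\ref{lem:ClanMesureBound}.

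The genuine gap is in the hop-path-distortion argument. Your sketch says ``each split edge has weight $\Omega(\rho_i)$'' and then ``sum over $O(\log\mu(V))$ scales''. Both steps are wrong as stated. First, a single split edge need not have weight $\Omega(\rho_i)$; what is true is that a \emph{sub-path} between a vertex of $\underline{X}_q$ and a vertex outside $\overline{X}_q$ has $\red{h}$-hop distance (hence, by $\red{h}$-respecting, path length) at least $2\rho_i$. Second, the number of distance scales is $\phi=O(\log n)$ (aspect ratio), not $\log\mu(V)$; the $\log\mu(V)$ in the bound does \emph{not} come from counting scales. In the paper's proof (Lemma~\ref{lem:ClanPathDistortion}) the recursion is simultaneously on $i$ and on $\mu(X)$: at scale $i$ one views the cover-creation as peeling off $\overline{X}_1$ (sent to scale $i-1$) and leaving $Y_1=X\setminus\underline{X}_1$ (same scale $i$, smaller set), and charges the split-edge cost to whichever of $\overline{X}_1,Y_1$ has measure at most $\tfrac{2}{3}\mu(X)$. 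For this to work one must \emph{guarantee} that one side is small, and the paper enforces this by adding to \texttt{Clan-create-cluster} the extra requirement that either $\mu(A_j)>\tfrac{1}{3}\mu(Y)$ or $\mu(A_{j+2})\le\tfrac{2}{3}\mu(Y)$ (line~\ref{line:RG_Clan}). Without it, both $\mu(\overline{X}_1)$ and $\mu(Y_1)$ can be close to $\mu(X)$ and the $\log_{3/2}\mu(X)$ recursion fails.

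This extra condition is also the actual reason the constants become $k{+}1$: Claim~\ref{claim:j-v-clan} shows the condition may push $j(v)$ from $2(k-1)$ up to $2k$, so one needs $h'=2(k+1)h$ and $\rho_i=2^i/(16(k+1))$. Your attribution of the $k{+}1$ to the minimax slack of Claim~\ref{clm:probabilisticRamsey} is mistaken; that claim lives in the derivation of Theorem~\ref{thm:UltrametricRamsey} from its distributional lemma, not inside the cluster construction.
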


\begin{proof}[Proof of \Cref{thm:ClanHopUltrametric}]
	In a similar fashion to \Cref{clm:probabilisticRamsey}, we begin by
	translating \Cref{lem:clanHopUltraMeasure} to the language of
	probability measures.
	\begin{claim}\label{clm:probabilisticClan}
		Consider an $n$-vertex graph $G=(V,E,w)$ with polynomial aspect ratio,
		probability measure $\mu:V\rightarrow[0,1]$,
		and parameter $h\in [n]$. We can construct the following two clan
		embeddings $(f,\chi)$ into ultrametrics:
		\begin{enumerate}
			\item For every parameter  $k\in\N$, hop-distortion $\left(16\cdot
			(k+1),\red{O(k\cdot\log n)},\red{h}\right)$, hop-path-distortion
			$\left(\redno{O(k\cdot\log n)},\red{h}\right)$, and such that
			$\mathbb{E}_{x\sim\mu}[|f(x)|]\le
			O(n^{\frac{1}{k}})$.
			\item For every parameter  $\epsilon\in(0,1]$, hop-distortion
			$\left(O(\frac{\log n}{\eps}),\red{O(\frac{\log^2
			n}{\eps})},\red{h}\right)$, hop-path-distortion
			$\left(\redno{O(\frac{\log^2 n}{\eps})},\red{h}\right)$, and such
			that $\mathbb{E}_{x\sim\mu}[|f(x)|]\le 1+\eps$.
		\end{enumerate}
	\end{claim}
	\begin{proof}
		We define the following probability measure $\widetilde{\mu}$:
		$\forall x\in V$, $\widetilde{\mu}(x)=\frac{1}{2n}+\frac{1}{2}\mu(x)$.
		Set the following $(\ge1)$-measure $\widetilde{\mu}_{\ge1}(x)=2n\cdot
		\tilde{\mu}(x)$. Note that $\widetilde{\mu}_{\ge1}(V)=2n$.
		We execute \Cref{lem:clanHopUltraMeasure} w.r.t. the $(\ge1)$-measure
		$\widetilde{\mu}_{\ge1}$, and parameter $\frac1\delta\in\N$ to be
		determined later. It holds that
		\[
		\widetilde{\mu}_{\ge1}(V)\cdot\mathbb{E}_{x\sim\widetilde{
		\mu}}[|f(x)|]=\mathbb{E}_{x\sim\widetilde{\mu}_{\ge1}}[|f(x)|]\le
		\widetilde{\mu}_{\ge1}(V)^{1+\delta}=\widetilde{\mu}_{\ge1}(V)
		\cdot(2n)^{\delta}~,
		\]
		implying
		\[
		(2n)^{\delta}\ge\mathbb{E}_{x\sim\widetilde{\mu}}[|f(x)|]=\frac{1}{2}
		\cdot\mathbb{E}_{x\sim\mu}[|f(x)|]+\frac{\sum_{x\in V}|f(x)|}{2n}\ge
		\frac{1}{2}\cdot\mathbb{E}_{x\sim\mu}[|f(x)|]+\frac{1}{2}~.
		\]
		\begin{enumerate}
			\item  Set
			$\delta=\frac1k$, then we have hop-distortion $\left(16\cdot
			(k+1),\red{O(k\cdot\log n)},\red{h}\right)$, hop-path-distortion
			$\left(\redno{O(k\cdot\log n)},\red{h}\right)$, and
			$\mathbb{E}_{x\sim\mu}[|f(x)|]\le2\cdot(2n)^{\delta}=O(n^{
			\frac{1}{k}})$.
			\item Choose $\delta\in(0,1]$ such that
			$\frac{1}{\delta}=\left\lceil
			\frac{\ln(2n)}{\ln(1+\epsilon/2)}\right\rceil $,
			note that $\delta\le\frac{\ln(1+\epsilon/2)}{\ln(2n)}$.
			Then we have hop-distortion $\left(O(\frac{\log
			n}{\eps}),\red{O(\frac{\log^2 n}{\eps})},\red{h}\right)$,
			hop-path-distortion $\left(\redno{O(\frac{\log^2
			n}{\eps})},\red{h}\right)$, and such that
			\\				$\mathbb{E}_{x\sim\mu}[|f(x)|]\le2\cdot(2n)^{
			\delta}-1\le2\cdot e^{\ln(1+\epsilon/2)}-1=1+\epsilon$.
		\end{enumerate}
	\end{proof}

	We now apply the minimax principle, exactly as in the proof of
	\Cref{thm:UltrametricRamsey}.
	First consider the first assertion of
	\Cref{thm:ClanHopUltrametric}.
	Denote by $\cA$ the family of clan embeddings into an ultrametric with
	hop-distortion
	$\left(16\cdot (k+1),\red{O(k\cdot\log n)},\red{h}\right)$,
	hop-path-distortion $\left(\redno{O(k\cdot\log n)},\red{h}\right)$.
	By \Cref{clm:probabilisticClan}, for every
	probability measure $\mu$ over $V$, there is a clan embedding $(f,\chi)\in
	\cA$ such that $\mathbb{E}_{v\sim\mu}[|f(v)|]\le O(n^{\frac1k})$.
	Using the minimax
	principle (see \Cref{sec:prelim}, preliminaries) we have
	\[
	\min_{\cD\in\mathsf{Dist}(\cA)}
	\max_{\mu\in\mathsf{Dist}(V)}
	\mathbb{E}_{(f,\chi)\sim\cD,\,v\sim\mu}[|f(v)|]
	=
	\max_{\mu\in\mathsf{Dist}(V)}
	\min_{(f,\chi)\in\cA}
	\mathbb{E}_{v\sim\mu}[|f(v)|]
	\le O(n^{\frac1k})~.
	\]
	In words, $\cD$ here is a distribution over clan embeddings, and $\mu$
	is a probability measure over the vertices. Given $(f,\chi)$ and $\mu$,
	the payoff is $\mathbb{E}_{v\sim\mu}[|f(v)|]$, the $\mu$-average number
	of copies. \Cref{clm:probabilisticClan} says that for every probability
	measure $\mu$, one can find a clan embedding whose payoff is at most
	$O(n^{\frac1k})$. Accordingly, by the minimax principle, there is a
	distribution $\cD$ over clan embeddings such that for every probability
	measure $\mu$ over the vertices,
	$\mathbb{E}_{(f,\chi)\sim\cD,\,v\sim\mu}[|f(v)|]\le O(n^{\frac1k})$.
	Let $\cD$ be this distribution. For every vertex $z\in V$, denote by
	$\mu_z$ the probability measure where $\mu_z(z)=1$ (and $\mu_z(y)=0$
	for $y\ne z$). Then
	\[
	\mathbb{E}_{(f,\chi)\sim\cD}[|f(z)|]
	=
	\mathbb{E}_{(f,\chi)\sim\cD,\,v\sim\mu_z}[|f(v)|]
	\le O(n^{\frac1k})~.
	\]
	This proves the first assertion.

	The second assertion is identical, using the second item of
	\Cref{clm:probabilisticClan}. Let $\cA$ now be the family of clan
	embeddings with hop-distortion
	$\left(O(\frac{\log n}{\eps}),\red{O(\frac{\log^2
	n}{\eps})},\red{h}\right)$
	and hop-path-distortion
	$\left(\redno{O(\frac{\log^2 n}{\eps})},\red{h}\right)$.
	For every probability measure $\mu$ over $V$, \Cref{clm:probabilisticClan}
	gives some $(f,\chi)\in\cA$ such that
	$\mathbb{E}_{v\sim\mu}[|f(v)|]\le1+\eps$. Hence, by the minimax principle,
	\[
	\min_{\cD\in\mathsf{Dist}(\cA)}
	\max_{\mu\in\mathsf{Dist}(V)}
	\mathbb{E}_{(f,\chi)\sim\cD,\,v\sim\mu}[|f(v)|]
	=
	\max_{\mu\in\mathsf{Dist}(V)}
	\min_{(f,\chi)\in\cA}
	\mathbb{E}_{v\sim\mu}[|f(v)|]
	\le 1+\eps~.
	\]
	Thus there is a distribution $\cD$ over clan embeddings in $\cA$ such
	that, for every vertex $z\in V$,
	\[
	\mathbb{E}_{(f,\chi)\sim\cD}[|f(z)|]\le1+\eps~.
	\]

	Note that we provided here only an existential proof of the desired
	distribution.
	A constructive proof (with polynomial construction time, and $O(n\cdot\log
	n)$ support size) could be obtained using the multiplicative weights
	update (MWU) method.
	The details are exactly the same as in the constructive proof of clan
	embeddings in \cite{FL21}, and we will skip them.
\end{proof}

\subsection{Proof of \Cref{lem:clanHopUltraMeasure}: distributional h.c. clan
embedding}\label{subsec:ProofClanDistributionalLemma}
For simplicity, as in the Ramsey case, we assume that
$\diam\rhop{h}(G)<\infty$. This assumption can later be removed using the
exact same argument as in \Cref{subsec:inftyDistance}, and we will
not repeat it.
Denote $\phi=\left\lceil\log_2 D\rhop{h}(G) \right\rceil$.
As we assumed polynomial aspect ratio, it follows that $\phi=O(\log n)$.

Unlike in the Ramsey partitions, here we create a cover, and every vertex will
be padded in some cluster. However, some vertices might belong to multiple
clusters. Our goal will be to minimize the total measure of all the clusters
in the cover. In other words, to minimize the measure of
``repeated vertices''.
 The construction is given in \Cref{alg:hier-Clan} and the procedures
 \Cref{alg:clan-c} and \Cref{alg:clan-create-cluster}.

\sloppy In order to create the embedding we will make the call
\texttt{Hop-constrained-clan-embedding}$(G[V],\mu,\red{h},k,2^\phi)$
(\Cref{alg:hier-Clan}), which is a recursive algorithm similar to
\Cref{alg:hier-Ramsey}. The input is $G[X]$ (a graph $G=(V,E,w)$ induced over
a set of vertices $X\subseteq V$), a $\ge1$-measure $\mu$, a hop parameter
$\red{h}$, distortion parameter $k$ and scale $2^i$.
The algorithm invokes the \texttt{clan-cover} procedure (\Cref{alg:clan-c}) to
create clusters $\overline{X}_1,\overline{X}_2,\dots,\overline{X}_s$ of $X$
(for some integer $s$). The \texttt{clan-cover} algorithm also provides
sub-clusters $\underline{X}_q,X_q$ which are used to define the clan embedding
(specifically, to determine which vertices should be removed, and how to
choose the chiefs). See \Cref{fig:HopDistortion} for an illustration.
The \texttt{Hop-constrained-clan-embedding} algorithm now recurses on each
$G[\overline{X}_q]$ for $1\le q\le s$ (with scale parameter $2^{i-1}$), to get
clan embeddings $\{(f_q,\chi_q)\}_{1\le q\le s}$ into ultrametrics
$\{U_q\}_{1\le q\le s}$. Finally, it creates a global embedding into a
ultrametric $U$, by setting a new node $r_U$ to be its root with label $2^{i}$,
and ``hanging'' the ultrametrics $U_1,\dots,U_s$ as the children of $r_U$.
The one-to-many embedding $f$ is simply defined as the union of the one-to-many
embeddings $\{f_q\}_{1\le q\le s}$.
Note, however, that the clusters $\overline{X}_1,\dots,\overline{X}_s$ are
not disjoint.
The algorithm uses the previously provided sub-clusters $X_q\subseteq
\overline{X}_q$  to determine the chiefs (i.e. $\chi$ part) of the
clan embedding.

The algorithm \texttt{clan-cover} creates the clusters $\left\{
\underline{X}_q,X_q,\overline{X}_q\right\} _{q=0}^{s}$ in an iterative manner.
Initially $Y_0=X$. In each step we create a new cluster $\overline{X}_q$ with
sub-clusters $\underline{X}_q,X_q$ using the \texttt{Clan-create-cluster}
procedure (\Cref{alg:clan-create-cluster}). Then the vertices
$\underline{X}_q$ are removed to define $Y_q=Y_{q-1}\backslash
\underline{X}_q$. The process continues until all the vertices are removed.

The \texttt{Clan-create-cluster} procedure is very similar to
\Cref{alg:CreateCluster}.
It returns a triple $\underline{A},A,\overline{A}$, where $\overline{A}$ is
the cluster itself (denoted $A_{j(v)+2}$), and $\underline{A},A$ (denoted
$A_{j(v)},A_{j(v)+1}$) are sub-clusters (it holds that $\underline{A}\subseteq
A\subseteq \overline{A}$).
It will hold that vertices in $A$ are ``padded'' by $\overline{A}$, while
vertices in $\underline{A}$, as well as all their neighborhoods, are
``padded'' by $\overline{A}$. Therefore the procedure \texttt{clan-cover}
can remove the vertices $\underline{A}$ from $Y$.
The cluster $\overline{X}_q$ will have a center vertex $v_q$, and
index $j(v_q)$.
In contrast to the analogous
 \Cref{alg:CreateCluster}, here we also require that either
 $\mu(\overline{A})\le\frac23\mu(Y)$ or $\mu(\underline{A})>\frac13\mu(Y)$.
 This property will later be used in a recursive argument bounding the hop
 path distortion.
This is also the reason that we are using slightly larger parameters here:
$\red{h'}=\red{2(k+1)h}$ and $\rho_i=\frac{2^i}{16(k+1)}$.

\begin{algorithm}[p]
	\caption{$(U,f,\chi)=\texttt{Hop-constrained-clan-embedding}
		(G[X],\mu,\red{h},k,2^i)$}	\label{alg:hier-Clan}
	\DontPrintSemicolon
	\SetKwInOut{Input}{input}\SetKwInOut{Output}{output}
	\Input{Induced graph $G[X]$, parameters $\red{h},k\in\N$, $\ge1$-measure
	$\mu$, scale $i$.}
	\Output{Clan embedding $(f,\chi)$ into ultrametric $U$.}
	\BlankLine
	\If{$X=\{v\}$}{
		set $f(v)=\{v\}$ and $\chi(v)=v$\;
		\Return $(G[X],f,\chi)$
	}
	let $\left(\left\{ \underline{X}_{q},X_{q},\overline{X}_{q}\right\}
	_{q=1}^{s}\right)=\texttt{clan-cover}(G[X],\mu,\red{h},k,2^{i})$\;
	\For{each $q\in[1,\dots,s]$}{
		$(U_{q},f_q,\chi_q)=\texttt{Hop-constrained-clan-embedding}
		(G[\overline{X}_{q}],\mu,\red{h},k,2^{i-1})$\;
	}
	\For{each $z\in X$}{
		set $f(z)=\cup_{q=1}^sf_q(z)$\;
		let $q>0$ be the minimal index such that $z\in X_q$. Set
		$\chi(z)=\chi_q(z)$\;
	}
	let $U$ be the ultrametric formed by taking a root $r_U$ with label
	$2^{i}$, and taking the ultrametrics $U_{1},\dots,U_{s}$ as its children\;
	\Return $(U,f,\chi)$\;
\end{algorithm}
\begin{algorithm}[p]
	\caption{$\left(\left\{ \underline{X}_q,X_q,\overline{X}_q\right\}
	_{q=0}^{s}\right)=\texttt{clan-cover}(G[X],\mu,\red{h},k,2^i)$}
	\label{alg:clan-c}
	\DontPrintSemicolon
	\SetKwInOut{Input}{input}\SetKwInOut{Output}{output}
	\Input{Induced graph $G[X]$, parameters $\red{h},k\in\N$, $\ge1$-measure
	$\mu$, scale $i$.}
	\Output{Cover $\{\overline{X}_{q}\}_{q=1}^{s}$ of $X$, and sub-clusters
	$\underline{X}_q\subseteq X_q\subseteq  \overline{X}_q$.}
	\BlankLine
	let $Y_0=X$\;
	set $q=1$\;

	\While{$Y_{q-1}\ne\emptyset$}{
		$(\underline{X_q},X_q,\overline{X_q})=
		\texttt{Clan-create-cluster}(G[Y_{q-1}],\mu,\red{h},k,
		2^i)$\label{line:callClanCreateClaster}\;
		let $Y_q\leftarrow Y_{q-1}\backslash \underline{X}_{q}$\;
		let $q\leftarrow q+1$\;
	}
	\Return $\left(\left\{ \underline{X}_{q},X_{q},\overline{X}_{q}\right\}
	_{q=1}^{s}\right)$.\tcp*{$s$ is the maximal index of a set}
\end{algorithm}
\begin{algorithm}[p]
	\caption{$(\underline{A},A,\overline{A})=
	\texttt{Clan-create-cluster}(G[Y],\mu,\red{h},k,2^i)$}
	\label{alg:clan-create-cluster}
	\DontPrintSemicolon
	\SetKwInOut{Input}{input}\SetKwInOut{Output}{output}
	\Input{Induced graph $G[Y]$, parameters $\red{h},k\in\N$, $\ge1$-measure
	$\mu$, scale $i$.}
	\Output{Three clusters $(\underline{A},A,\overline{A})$, where
	$\underline{A}\subseteq A\subseteq \overline{A}$.}
	\BlankLine
	Fix $H=G[Y]$, $\rho_i=\frac{2^{i-3}}{2(k+1)}=\frac{2^i}{16(k+1)}$, and
	$\red{h'}=\red{2(k+1)h}$\;
	pick a node $v \in Y$ with maximal $\mu\left(B_{H}^{\red{(i\cdot
	h')}}(v,2^{i-3})\right)$\label{line:CoverCreateClusterChooseCenter}\;
	denote $A_{j}=B_{H}^{\red{(i\cdot h'+j\cdot
	h)}}\left(v,2^{i-3}+j\cdot\rho_i\right)$\;
	let $j(v)\ge 0$ be the minimal integer such that $
	\frac{\mu\left(A_{j+2}\right)}{\mu\left(A_{j}\right)}\leq\left(\frac{\mu
	\left(A_{2(k+1)}\right)}{\mu\left(A_{0}\right)}\right)^{\frac{1}{k}}$
	\newline\phantom{let $j(v)\ge 0$ be the minimal n} and either
	$\mu(A_{j})>\frac13\mu(Y)$ or $\mu(A_{j+2})\le\frac23\mu(Y)$
	\label{line:RG_Clan}\;
	\Return $(A_{j(v)},A_{j(v)+1},A_{j(v)+2})$\;
\end{algorithm}

In the next claim, which is analogous to \Cref{claim:j-v}, we show that the
index chosen in \cref{line:RG_Clan} of \Cref{alg:clan-c} is bounded by $2k$.
\begin{claim}\label{claim:j-v-clan}
	Consider a call to the \texttt{Clan-create-cluster} procedure, the index
	$j(v)$ defined in \cref{line:RG_Clan} of \Cref{alg:clan-create-cluster}
	satisfies $j(v) \leq 2k$.
\end{claim}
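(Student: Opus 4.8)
The argument parallels that of \Cref{claim:j-v}, but now the defining condition for $j(v)$ in \cref{line:RG_Clan} has two parts: the ball‑growing ratio bound $\frac{\mu(A_{j+2})}{\mu(A_j)}\le\big(\frac{\mu(A_{2(k+1)})}{\mu(A_0)}\big)^{1/k}$, and the balancedness condition ``$\mu(A_j)>\tfrac13\mu(Y)$ or $\mu(A_{j+2})\le\tfrac23\mu(Y)$''. The plan is to exhibit an even index $2m$ with $m\in\{0,1,\dots,k\}$ for which \emph{both} parts hold; since $j(v)$ is the minimal integer satisfying both, this yields $j(v)\le 2m\le 2k$.

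First I would record that $j\mapsto\mu(A_j)$ is monotone non-decreasing, since increasing $j$ enlarges both the hop budget and the radius in $A_j=B_H^{((i\cdot h'+j\cdot h))}(v,2^{i-3}+j\rho_i)$. Then I analyze when the balancedness part can \emph{fail}: it fails at $j$ exactly when $\mu(A_j)\le\tfrac13\mu(Y)$ and $\mu(A_{j+2})>\tfrac23\mu(Y)$. If it failed at two indices $j<j'$ with $j'\ge j+2$, monotonicity gives $\mu(A_{j'})\ge\mu(A_{j+2})>\tfrac23\mu(Y)>\tfrac13\mu(Y)$, contradicting $\mu(A_{j'})\le\tfrac13\mu(Y)$. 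Hence the set of ``bad'' indices is contained in $\{b,b+1\}$ for some $b$, so it contains at most one even number; therefore at most one of the $k+1$ even indices $0,2,4,\dots,2k$ is bad. Let $T\subseteq\{0,1,\dots,k\}$ collect those $m$ for which $2m$ is not bad; then $|T|\ge k$.

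Now I run the telescoping/pigeonhole argument of \Cref{claim:j-v} restricted to the windows indexed by $T$. By monotonicity every factor $\frac{\mu(A_{2m+2})}{\mu(A_{2m})}\ge 1$ (note $A_0=B_H^{(i h')}(v,2^{i-3})\ni v$, so $\mu(A_0)\ge1>0$), hence
\[
\prod_{m\in T}\frac{\mu(A_{2m+2})}{\mu(A_{2m})}\ \le\ \prod_{m=0}^{k}\frac{\mu(A_{2m+2})}{\mu(A_{2m})}\ =\ \frac{\mu(A_{2(k+1)})}{\mu(A_0)}\,,
\]
so some $m^\ast\in T$ satisfies $\frac{\mu(A_{2m^\ast+2})}{\mu(A_{2m^\ast})}\le\big(\frac{\mu(A_{2(k+1)})}{\mu(A_0)}\big)^{1/|T|}\le\big(\frac{\mu(A_{2(k+1)})}{\mu(A_0)}\big)^{1/k}$, using $|T|\ge k$ and that the base is at least $1$. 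Since $2m^\ast$ is not bad it also satisfies the balancedness condition, so $j=2m^\ast$ satisfies both parts of the definition; thus $j(v)\le 2m^\ast\le 2k$.

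The main obstacle is precisely the interaction of the two conditions: the telescoping argument by itself only locates \emph{one} window with a good ratio, and a priori that window could be the unbalanced one. The observation that rescues the bound is that unbalancedness is confined to at most two consecutive integers, hence removes at most one of the $k+1$ candidate even windows, leaving enough windows for the pigeonhole estimate with exponent $1/k$ to go through.
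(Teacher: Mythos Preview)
Your proof is correct and follows essentially the same approach as the paper: both combine the telescoping/pigeonhole bound on the ratios with the observation that the balancedness condition can fail at only one relevant even index. The paper organizes the argument slightly differently---it first picks the minimum-ratio window among $\{0,2,\dots,2(k-1)\}$ and, if that window is unbalanced, shows the very next window $2(j+1)\le 2k$ satisfies both conditions---but the core idea is identical to yours.
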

\begin{proof}
	Using the terminology from the \texttt{Clan-create-cluster} procedure, let
	$j\in\{0,1,2,\dots,k-1\}$ be the index minimizing
	$\frac{\mu\left(A_{2(j+1)}\right)}{\mu\left(A_{2j}\right)}$. Then
	it holds that
	\[
	\frac{\mu\left(A_{2k}\right)}{\mu\left(A_{0}\right)}=\frac{\mu
	\left(A_{2k}\right)}{\mu\left(A_{2(k-1)}\right)}\cdot\frac{\mu
	\left(A_{2(k-1)}\right)}{\mu\left(A_{2(k-2)}\right)}\cdots\frac{
	\mu\left(A_{2}\right)}{\mu\left(A_{0}\right)}\ge\left(\frac{
	\mu\left(A_{2(j+1)}\right)}{\mu\left(A_{2j}\right)}\right)^{k}~,
	\]
	then $\frac{\mu\left(A_{2(j+1)}\right)}{\mu\left(A_{2j}\right)}\leq\left(
	\frac{\mu\left(A_{2k}\right)}{\mu\left(A_{0}\right)}\right)^{\frac{1}{k}}
	\le\left(\frac{\mu\left(A_{2(k+1)}\right)}{\mu\left(A_{0}\right)}\right)^{
	\frac{1}{k}}$.
	If either $\mu(A_{2j})>\frac13\mu(Y)$ or $\mu(A_{2(j+1)})\le\frac23\mu(Y)$
	then we are done.
	Otherwise,
	$\frac{\mu\left(A_{2(j+1)}\right)}{\mu\left(A_{2j}\right)}>
	\frac{2/3}{1/3}=2$,
	while $\frac{\mu\left(A_{2(j+2)}\right)}{\mu\left(A_{2(j+1)}\right)}\le
	\frac{1}{2/3}<\frac{\mu\left(A_{2(j+1)}\right)}{\mu\left(A_{2j}\right)}\le
	\left(\frac{\mu\left(A_{2(k+1)}\right)}{\mu\left(A_{0}\right)}\right)^{
	\frac{1}{k}}$.
	Note that $2(j+1)\le2k$, and
	$\mu\left(A_{2(j+1)}\right)>\frac{2}{3}\mu(Y)>\frac{1}{3}\mu(Y)$.
	The claim follows.
\end{proof}

The following observation follows by the same argument as
\Cref{obs:diamBound}.
\begin{observation}\label{obs:ClanDiamBound}
	Every $i$-level cluster $X$   (i.e. cluster on which we called
	\texttt{Hop-constrained-clan-embedding} with scale $2^i$) has diameter at
	most $\diam^{\red{(2(i+2)\cdot h')}}(G[X])\le 2^{i}$.
\end{observation}
Using the notation from the \texttt{Clan-create-cluster} procedure which is
executed with parameter $2^{i+1}$ (the one where $i$-level clusters are
created), the observation holds as
$A_{j(v)+2}\subseteq A_{2k+2}=B_{H}^{\red{((i+1)\cdot h'+(2k+2)\cdot
h)}}\left(v,2^{i-2}+(2k+2)\cdot\rho_{i+1}\right)=B_{H}^{\red{((i+2)\cdot
h')}}\left(v,2^{i-1}\right)$.
We will skip the formal proof.

\begin{SCfigure}[][t]\caption{\it \footnotesize Illustration of the procedure
\texttt{clan-cover}, and the analysis of the hop-distortion of the resulting
clan embedding. The algorithm receives the graph induced by an $i$-level
cluster $X$. Initially $Y_0=X$. The procedure iteratively creates clusters
$\{\overline{X}_q\}_{q=1}^{s}$, where each cluster $\overline{X_q}$ comes with
two associated sub-clusters $\underline{X_q}\subseteq
X_q\subseteq\overline{X_q}$. The cluster $\overline{X_q}$ is created in the
graph induced by $Y_{q-1}=X\setminus\cup_{j=1}^{q-1}\underline{X_j}$.
Consider the ball $B=B_{G[X]}\rhop{h}(u,\rho_i)$, and let $q\in[1,s]$ be the
minimal index such that $u\in X_q$ ($q=4$ in the illustration).
For every $q'<q$, we argue that as $u\notin X_{q'}$, the ball $B$ is disjoint
from $\underline{X_{q'}}$. In particular $B\subseteq Y_{q-1}$.
Next we argue that as $u\in X_q$, the ball $B$ is fully contained in
$\overline{X_q}$. It follows that $u$ is ``fully padded'' in
$\overline{X_q}$.
		\label{fig:HopDistortion}}
	\includegraphics[width=.465\textwidth]{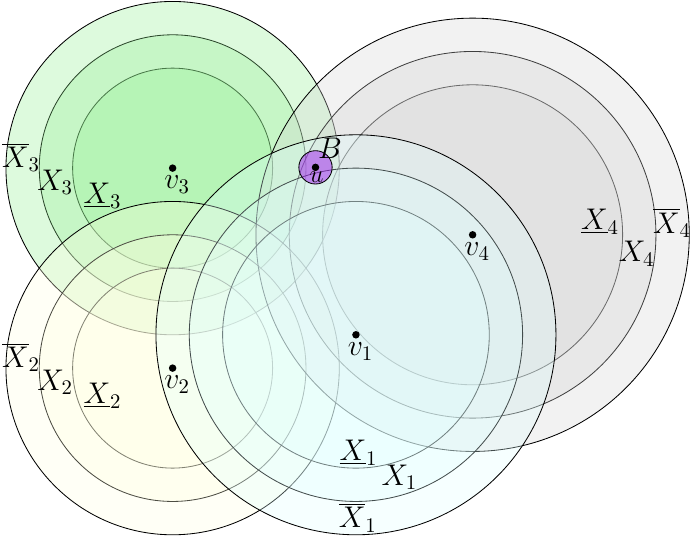}
\end{SCfigure}

\begin{lemma}\label{lem:ClanDistortion}
	The clan embedding $(f,\chi)$ has hop-distortion
	$(16(k+1),\red{O(k\cdot\phi)},\red{h})$.
\end{lemma}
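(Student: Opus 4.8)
The plan is to prove, by induction on the recursion (on the scale $2^i$ together with $|X|$), the following strengthening tailored to the recursive structure: in the ultrametric $U$ produced by $\texttt{Hop-constrained-clan-embedding}(G[X],\mu,\red{h},k,2^i)$, (i) for every $u,v\in X$ and every pair of copies $u'\in f(u)$, $v'\in f(v)$ we have $d_{G[X]}^{\red{(2(i+2)\cdot h')}}(u,v)\le d_U(u',v')$, and (ii) for every $u,v\in X$ we have $\min_{v'\in f(v)}d_U(v',\chi(u))\le 16(k+1)\cdot d_{G[X]}\rhop{h}(u,v)$. Applied to the top call ($X=V$, $i=\phi$, $G[V]=G$) this yields exactly hop-distortion $(16(k+1),\red{O(k\cdot\phi)},\red{h})$, since $h'=2(k+1)h$ makes $2(\phi+2)h'=O(k\cdot\phi)\cdot h$. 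The base case ($|X|=1$, equivalently $i=0$, where every ball of radius $\rho_i<1$ is a singleton) is immediate. For the inductive step recall that $\texttt{clan-cover}$ produces $\{\underline X_q,X_q,\overline X_q\}_{q=1}^s$, we recurse on each $G[\overline X_q]$ at scale $2^{i-1}$ to get $(U_q,f_q,\chi_q)$, and glue via a root $r_U$ of label $2^i$, with $f(z)=\bigcup_q f_q(z)$ and $\chi(z)=\chi_{q_z}(z)$ for the minimal $q_z$ with $z\in X_{q_z}$.

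For the lower bound (i): take copies $u'\in f(u)$, $v'\in f(v)$ lying in subtrees $U_q,U_{q'}$. If $q\ne q'$ then $d_U(u',v')=2^i$, while by \Cref{obs:ClanDiamBound} $d_{G[X]}^{\red{(2(i+2)\cdot h')}}(u,v)\le\diam^{\red{(2(i+2)\cdot h')}}(G[X])\le 2^i$. If $q=q'$ then $u,v\in\overline X_q$, and $d_U(u',v')=d_{U_q}(u',v')\ge d_{G[\overline X_q]}^{\red{(2(i+1)\cdot h')}}(u,v)\ge d_{G[X]}^{\red{(2(i+1)\cdot h')}}(u,v)\ge d_{G[X]}^{\red{(2(i+2)\cdot h')}}(u,v)$, using the induction hypothesis, that $G[\overline X_q]$ is a subgraph of $G[X]$, and monotonicity of hop-bounded distances in the hop budget.

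For the upper bound (ii) fix $u,v$ and set $B=B_{G[X]}\rhop{h}(u,\rho_i)$, where $\rho_i=\frac{2^i}{16(k+1)}$. If $v\notin B$ then $d_{G[X]}\rhop{h}(u,v)>\rho_i$, and since $d_U(v',\chi(u))\le 2^i=16(k+1)\rho_i$ for any copy $v'$, we are done. If $v\in B$, the crux is the claim that $B\subseteq Y_{q_u-1}$ and $B\subseteq\overline X_{q_u}$, where $q_u$ is the minimal index with $u\in X_{q_u}$. The key elementary fact is that every vertex on a shortest $\le\red{h}$-hop $u$–$z$ path in $G[X]$ of weight $\le\rho_i$ again lies in $B$ (every prefix has $\le\red{h}$ hops and smaller weight), so such paths never leave $B$. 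Now let $q'$ be minimal with $B\cap\underline X_{q'}\ne\emptyset$; minimality gives $B\subseteq Y_{q'-1}$, hence a shortest $\red{h}$-hop path from $u$ to a vertex of $B\cap\underline X_{q'}$ lives in $G[Y_{q'-1}]$, and one application of the (hop-additive) triangle inequality inside $G[Y_{q'-1}]$ — combining the radius and hop budget of $\underline X_{q'}=A_{j(v_{q'})}$ with a step of $\rho_i$ and $\red{h}$ hops — places $u$ in $X_{q'}=A_{j(v_{q'})+1}$; thus $q_u\le q'$, so $B\cap\underline X_{q''}=\emptyset$ for all $q''<q_u$ and therefore $B\subseteq Y_{q_u-1}$. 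Repeating the triangle-inequality argument inside $G[Y_{q_u-1}]$ with center $v_{q_u}$, now anchoring at $u\in X_{q_u}$ and taking one more $\rho_i/\red{h}$ step (admissible since $j(v_{q_u})+2\le 2(k+1)$ by \Cref{claim:j-v-clan}), shows $B\subseteq A_{j(v_{q_u})+2}=\overline X_{q_u}$. Consequently $v\in\overline X_{q_u}$, so $f_{q_u}(v)\subseteq f(v)$ is nonempty and $\chi(u)=\chi_{q_u}(u)\in U_{q_u}$; moreover the shortest $\le\red{h}$-hop $u$–$v$ path in $G[X]$ stays in $B\subseteq\overline X_{q_u}$, giving $d_{G[\overline X_{q_u}]}\rhop{h}(u,v)=d_{G[X]}\rhop{h}(u,v)$. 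Applying the induction hypothesis (ii) to the scale-$2^{i-1}$ embedding of $G[\overline X_{q_u}]$ then yields $\min_{v'\in f(v)}d_U(v',\chi(u))\le\min_{v'\in f_{q_u}(v)}d_{U_{q_u}}(v',\chi_{q_u}(u))\le 16(k+1)\cdot d_{G[\overline X_{q_u}]}\rhop{h}(u,v)=16(k+1)\cdot d_{G[X]}\rhop{h}(u,v)$, as required.

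I expect the containment $B\subseteq Y_{q_u-1}\cap\overline X_{q_u}$ to be the main obstacle: unlike in the Ramsey construction, $\texttt{clan-cover}$ removes only the inner cluster $\underline X_q$ (not $X_q$) at each step, and the chief is pinned to the first $X_q$ containing $u$ rather than the first $\underline X_q$, so one must argue that the ball around $u$ is never sliced by an earlier inner cluster; this is exactly what the ``shortest bounded-hop paths stay inside $B$'' observation, combined with the minimality of $q_u$, delivers. The rest is bookkeeping with hop-additive triangle inequalities identical in spirit to \Cref{lem:RamseyDistortion}.
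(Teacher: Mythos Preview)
Your proof is correct and follows essentially the same approach as the paper: induction on the scale, with the lower bound handled by \Cref{obs:ClanDiamBound} for cross-cluster pairs and the induction hypothesis for same-cluster pairs, and the upper bound handled by showing $B=B_{G[X]}\rhop{h}(u,\rho_i)\subseteq Y_{q_u-1}\cap\overline X_{q_u}$ via the two triangle-inequality steps. The only cosmetic differences are that the paper carries the slightly tighter lower-bound hop budget $2(i+1)\cdot h'$ (versus your $2(i+2)\cdot h'$), and phrases the $B\subseteq Y_{q_u-1}$ step as a proof by contradiction rather than directly via minimality of $q'$; your explicit remark that shortest $\le\red{h}$-hop paths from $u$ into $B$ stay inside $B$ is a nice clarification that the paper leaves implicit.
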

\begin{proof}
	We will prove by induction on the scale $i$, that the algorithm
	\texttt{Hop-constrained-clan-embedding} given input
	$(G[X],\mu,\red{h},k,2^i)$ returns a clan embedding $(f,\chi)$ into
	ultrametric $U$ with hop-distortion $(16(k+1),\red{2(i+2)\cdot
	2(k+1)},\red{h})$. More specifically,
	the inductive hypothesis is that for every $u,v\in X$,
	$d_{G[X]}^{\red{(2(i+2)\cdot h')}}(u,v)\le\min_{v'\in f(v),u'\in
	f(u)}d_{U}(v',u')$
	and $\min_{v'\in f(v)}d_{U}(v',\chi(u))\le16(k+1)\cdot
	d_{G[X]}\rhop{h}(u,v)$.
	The base case is $i=0$, where the graph is a singleton, and is trivial.
	For the general case, consider a pair of vertices $u,v$. Let
	$\overline{X}_1,\dots,\overline{X}_s$ be the clusters created by the call
	to the \texttt{clan-cover} procedure. Note that each cluster
	$\overline{X}_q$ was created from a subset $Y_{q-1}\subseteq X$, has a
	center $v_q$, and index $j(v_q)$ such that
	$\overline{X}_q=A_{j(v_q)+2}=B_{Y_{q-1}}^{\red{((i\cdot h'+(j(v_q)+2)\cdot
	h)}}\left(v_q,2^{i-3}+(j(v_q)+2)\cdot\rho_i\right)$.

	We begin by proving the lower bound: $d_{G[X]}^{\red{(2(i+2)\cdot
	h')}}(u,v)\le\min_{v'\in f(v),u'\in f(u)}d_{U}(v',u')$.
	By \Cref{obs:ClanDiamBound}, $\diam^{\red{(2(i+2)\cdot
	h')}}(G[X])\le 2^{i}$.
	For every $q'\ne q$ such that $v\in \overline{X}_{q'},u\in
	\overline{X}_{q}$, it holds that $\min_{v'\in f_{q'}(v),u'\in
	f_{q}(u)}d_{U}(v',u')=2^{i}\ge\diam^{\red{(2(i+2)\cdot h')}}(G[X])\ge
	d_{G[X]}^{\red{(2(i+2)\cdot h')}}(u,v)$.
	For every $q$ such that $v,u\in \overline{X}_q$, the induction hypothesis
	implies that
	$\min_{v'\in f_{q}(v),u'\in f_{q}(u)}d_{U}(v',u')\ge
	d_{G[\overline{X}_{q}]}^{\red{(2(i+1)\cdot h')}}(u,v)\ge
	d_{G[X]}^{\red{(2(i+2)\cdot h')}}(u,v)$.
	We conclude that  $d_{G[X]}^{\red{(2(i+2)\cdot h')}}(u,v)\le\min_{v'\in
	f(v),u'\in f(u)}d_{U}(v',u')$.

	Next, we turn to the upper bound. See \Cref{fig:HopDistortion} for an
	illustration of our argument.
	If $d\rhop{h}_{G[X]}(u,v)>\rho_i$, then as the label of the root of $U$ is
	$2^i$, we have that
	\[
	\min_{v'\in
	f(v)}d_{U}(v',\chi(u))\le2^{i}=16(k+1)\cdot\rho_{i}<16(k+1)\cdot
	d_{G[X]}\rhop{h}(u,v)~,
	\]
	and thus the lemma holds.
	Otherwise,  $d\rhop{h}_{G[X]}(u,v)\le\rho_i$.
	We will use the terminology from the \texttt{clan-cover}, and
	\texttt{Clan-create-cluster} procedures.
	Let $q$ be the minimal index such that $u\in X_q$.
	Recall that $\chi(u)=\chi_q(u)\in f_q(u)$.
	Denote $B=B_{G[X]}\rhop{h}(u,\rho_i)$. We argue that the entire ball $B$
	is contained in $\overline{X}_q$.
	In particular, it will follow that $v\in \overline{X}_q$, and
	$d_{G[\overline{X}_q]}\rhop{h}(u,v)=d_{G[X]}\rhop{h}(u,v)$ (as the entire
	shortest $\red{h}$-hop path contained in $\overline{X}_q$).
	By induction, it will then follow that
	\[
	\min_{v'\in f(v)}d_{U}(v',\chi(u))\le\min_{v'\in
	f_{q}(v)}d_{U}(v',\chi_q(u))\le 16(k+1)\cdot
	d_{G[\overline{X}_{q}]}\rhop{h}(u,v)=16(k+1)\cdot d_{G[X]}\rhop{h}(u,v)~.
	\]
	We first show that $B\subseteq Y_{q-1}$.
	Suppose for the sake of contradiction otherwise.
	Hence there is some vertex $z\in B= B_{G[X]}\rhop{h}(u,\rho_i)$, such that
	$z\in\underline{X}_{q'}$ for $q'<q$. Let $q'$ be the minimal such index,
	and $z\in \underline{X}_{q'}$ for some $z\in B$. By the minimality of
	$q'$, $B\subseteq Y_{q'-1}$.
	Denote by $v_{q'}$ the center of $X_{q'}$.
	It follows that
	\begin{align*}
	d_{G[Y_{q'-1}]}^{\red{(i\cdot h'+(j(v_{q'})+1)\cdot h)}}(v_{q'},u) & \le
	d_{G[Y_{q'-1}]}^{\red{(i\cdot h'+j(v_{q'})\cdot
	h)}}\left(v_{q'},z\right)+d_{G[Y_{q'-1}]}\rhop{h}\left(z,u\right)\\
	& \le2^{i-3}+j(v_{q'})\cdot\rho_{i}+\rho_{i}=2^{i-3}+(j(v_{q'})+
	1)\rho_{i}~,
	\end{align*}
	thus $u\in A_{j(v_{q'})+1}=X_{q'}$, a contradiction to the choice of
	the index $q$
	as the minimal index such that $u\in X_{q}$.
	It follows that $B\subseteq Y_{q-1}$ as claimed. In particular, for every
	$z\in B$ it holds that
	$d_{G[Y_{q-1}]}\rhop{h}\left(u,z\right)=d_{G[X]}\rhop{h}\left(u,z\right)$.
	As $u\in X_q=A_{j(v_q)+1}$, for every $z\in B$ we have that
	\begin{align*}
	d_{G[Y_{q-1}]}^{\red{(i\cdot h'+(j(v_{q})+2)\cdot h)}}(v_{q},z) & \le
	d_{G[Y_{q-1}]}^{\red{(i\cdot h'+(j(v_{q})+1)\cdot
	h)}}\left(v_{q},u\right)+d_{G[Y_{q-1}]}\rhop{h}\left(u,z\right)\\
	& \le2^{i-3}+(j(v_{q})+1)\cdot\rho_{i}+\rho_{i}=2^{i-3}+(j(v_{q})+2)
	\cdot\rho_{i}~,
	\end{align*}
	implying that $B\subseteq \overline{X}_q$, as required.
\end{proof}

\begin{lemma}\label{lem:ClanPathDistortion}
	The one-to-many embedding $f$ has hop path distortion
	$(\redno{O(k\cdot\log \mu(X))},\red{h})$.
\end{lemma}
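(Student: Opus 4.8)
The plan is to mimic the recursive path-distortion argument of Bartal--Mendel \cite{BM04multi}, carried out through the hierarchy of covers produced by \texttt{Hop-constrained-clan-embedding}. Fix an $\red{h}$-respecting path $P=(v_0,\dots,v_m)$ in $G$, with total weight $W=\sum_{i=0}^{m-1}w(v_i,v_{i+1})$. I will prove by induction on the scale $i$ (and, within a scale, on $|X|$) the following statement: if all vertices of $P$ lie in a cluster $X$ on which the algorithm is called at scale $2^i$, then there are copies $v'_\ell\in f(v_\ell)$ inside the ultrametric $U$ returned for $X$ such that $\sum_\ell d_U(v'_\ell,v'_{\ell+1})\le c\cdot k\cdot(i+1)\cdot W$ for a suitable absolute constant $c$. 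Applying this at the top call with $i=\phi=O(\log\mu(X))$ (for polynomial aspect ratio, $\phi=O(\log n)=O(\log\mu(X))$ once $\mu\ge 1$ pointwise) yields the claimed $O(k\cdot\log\mu(X))$ hop-path-distortion. A subtlety to keep track of is that a single vertex $v_\ell$ may have copies in several of the overlapping clusters $\overline{X}_q$; the inductive construction must \emph{commit} to one cluster per vertex, so I will process the cover $\{\overline{X}_q\}$ in order $q=1,\dots,s$ and assign $v_\ell$ to the first $\overline{X}_q$ containing it, exactly as the $\chi$-assignment does in \Cref{alg:hier-Clan}.

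The core of the argument is the counting of \emph{split edges} at scale $2^i$: edges $(v_\ell,v_{\ell+1})$ of $P$ whose endpoints get assigned to different clusters of the cover. Call the path segments that lie entirely within one cluster the \emph{sub-paths}; there are (number of split edges) $+1$ of them, each an $\red{h}$-respecting path inside some $\overline{X}_q$, so induction at scale $2^{i-1}$ gives copies whose $U_q$-distances telescope to at most $c\cdot k\cdot i\cdot(\text{weight of the sub-path})$. Summing over sub-paths costs $c\cdot k\cdot i\cdot W$. For each split edge $(v_\ell,v_{\ell+1})$ we additionally pay the $U$-distance between $v'_\ell$ and $v'_{\ell+1}$, which is at most the root label $2^i$; so the total extra cost is $2^i\cdot(\#\text{split edges})$. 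The crucial claim, which makes everything fit, is that $2^i\cdot(\#\text{split edges})\le O(k)\cdot W$. Here the balance condition imposed in \cref{line:RG_Clan} of \Cref{alg:clan-create-cluster} (``$\mu(A_{j})>\frac13\mu(Y)$ or $\mu(A_{j+2})\le\frac23\mu(Y)$'', giving $\mu(\underline{X}_q)>\frac14\mu(Y)$ or $\mu(\overline{X}_q)\le\frac34\mu(Y)$) enters: it guarantees that the number of clusters $s$ along the removal process, and hence the number of times the path can cross a cluster boundary, is controlled. More directly, a split edge between $\overline{X}_q$ and a later cluster forces one endpoint to lie in $\overline{X}_q\setminus\underline{X}_q$ while the other does not; since $\overline{X}_q$ and $\underline{X}_q$ are balls of radii differing by $2\rho_i=\Theta(2^i/k)$ around the same center (with $\red{h}$-bounded hops and $P$ being $\red{h}$-respecting), each split edge ``witnesses'' a portion of $P$ of weight $\gtrsim \rho_i = \Theta(2^i/k)$ that is essentially disjoint from the portions witnessed by other split edges of the same cluster; summing, $(\#\text{split edges})\cdot \Theta(2^i/k)\le O(W)$, i.e. $2^i\cdot(\#\text{split edges})\le O(k)\cdot W$. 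Combining, the scale-$i$ bound is $c\cdot k\cdot i\cdot W + O(k)\cdot W\le c\cdot k\cdot(i+1)\cdot W$, closing the induction.

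I would organize the write-up as follows. First, set up notation for the cover $\overline{X}_1,\dots,\overline{X}_s$, the first-occurrence assignment $q(\ell)$ of each path vertex, and the sub-paths; note each sub-path is $\red{h}$-respecting (a sub-path of an $\red{h}$-respecting path is $\red{h}$-respecting, since distances only shrink). Second, prove the split-edge lemma: $2^i\cdot(\#\text{split edges at scale }i)\le O(k)\cdot W$, using that consecutive path vertices $v_\ell,v_{\ell+1}$ with $q(\ell)\ne q(\ell+1)$, say $q(\ell)=q<q(\ell+1)$, have $v_\ell\in\overline{X}_q$ but $v_{\ell+1}\notin \overline{X}_q$ (since $v_{\ell+1}$ appears later it was not removed with $\underline{X}_q$, and if it were in $\overline{X}_q\setminus\underline{X}_q$ it would still get assigned $q$ only if it lies in $X_q$ — here I must be careful and instead argue via $\underline{X}_q$: $v_{\ell+1}\notin \underline{X}_q$, so $v_\ell\in X_q$ forces $d_G\rhop{h}(v_\ell,v_{\ell+1})\ge \rho_i$ because the ball $B_{G}\rhop{h}(v_\ell,\rho_i)$ would otherwise lie inside $\overline{X}_q$ as in the proof of \Cref{lem:ClanDistortion}, but $v_{\ell+1}$ doesn't); hence $w(v_\ell,v_{\ell+1})=d_G\rhop{h}(v_\ell,v_{\ell+1})\ge\rho_i$ since $P$ is $\red{h}$-respecting (so the single edge realizes the $\red{h}$-hop distance). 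Summing over split edges directly gives $(\#\text{split edges})\cdot\rho_i\le W$, i.e. $2^i\cdot(\#\text{split edges})\le 16(k+1)\cdot W$. Third, run the induction on $i$ as above. The main obstacle — and the step I would spend the most care on — is exactly pinning down the split-edge claim: making precise why a split edge must have weight $\ge\rho_i$, which relies on combining the minimal-index assignment $q(\ell)$ with the ball-containment argument already used for \Cref{lem:ClanDistortion} and on the $\red{h}$-respecting hypothesis to equate the edge weight with the $\red{h}$-hop distance. Everything else (the telescoping over sub-paths, the per-split-edge bound of $2^i$ from the root label, summing the geometric-free recursion $T(i)\le T(i-1)+O(k)\cdot W$) is routine.
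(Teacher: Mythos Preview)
Your overall plan---recurse through the hierarchy, apply the inductive hypothesis to sub-paths inside each cluster, and pay the root label $2^i$ for each ``split edge''---is the right shape, and is indeed the approach the paper takes (and notes, in \Cref{rem:AspectRatio}, also yields the $O(k\phi)$ bound you aim for). However, your split-edge lemma has a real gap.

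You assign each $v_\ell$ to the first cluster containing it (whether ``first $\overline{X}_q$'' or ``first $X_q$''---you waver between the two), and then claim that whenever $q(\ell)\neq q(\ell+1)$ the edge $(v_\ell,v_{\ell+1})$ has weight $\ge\rho_i$. This does not follow from your argument. Take the $\chi$-style assignment (first $X_q$) and suppose $q(\ell)=q<q(\ell+1)$. Padding (as in \Cref{lem:ClanDistortion}) gives $B^{\red{(h)}}_{G[X]}(v_\ell,\rho_i)\subseteq\overline{X}_q$, so if $w(v_\ell,v_{\ell+1})<\rho_i$ you conclude $v_{\ell+1}\in\overline{X}_q$---but this is entirely compatible with $v_{\ell+1}\notin X_q$ and hence $q(\ell+1)\neq q$. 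Concretely, $v_\ell$ can sit just inside the boundary of $X_q$ and $v_{\ell+1}$ just outside (in $\overline{X}_q\setminus X_q$), with edge weight arbitrarily small; they receive different cluster indices yet the edge is light. Your sentence ``$v_{\ell+1}\notin\underline{X}_q$, so $v_\ell\in X_q$ forces $d_G^{\red{(h)}}(v_\ell,v_{\ell+1})\ge\rho_i$'' confuses $\underline{X}_q$ with $\overline{X}_q$: you need $v_{\ell+1}\notin\overline{X}_q$ for padding to bite, and you do not have that. The alternative ``first $\overline{X}_q$'' assignment gives $v_{\ell+1}\notin\overline{X}_q$, but then $v_\ell$ is only known to lie in $\overline{X}_q$, not in $X_q$, so padding does not apply to $v_\ell$ either. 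The multi-cluster per-vertex assignment simply does not produce a per-split-edge lower bound.

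The paper avoids this by splitting into only \emph{two} pieces, $\overline{X}_1$ and $Y_1=X\setminus\underline{X}_1$, and using a greedy maximal-prefix assignment that alternates between them. The crucial identity is that ``not in $Y_1$'' equals ``in $\underline{X}_1$'', so consecutive sub-paths furnish witness vertices $c_q\in\underline{X}_1$ and $c_{q+1}\notin\overline{X}_1$ (or vice versa), which are $>2\rho_i$ apart in $d_G^{\red{(h)}}$; since $P$ is $\red{h}$-respecting, the path segment between them has weight $>2\rho_i$. Note the bound is on a \emph{path segment}, not on a single split edge. The induction is then on $\mu(X)$ (this is where the balance condition from \cref{line:RG_Clan} is actually used: it forces $\mu(\overline{X}_1)\le\tfrac23\mu(X)$ or $\mu(Y_1)\le\tfrac23\mu(X)$, giving the $\log_{3/2}\mu(X)$), with the recursion on $Y_1$ staying at the same scale but on strictly fewer vertices. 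Your suggestion that the balance condition ``controls the number of clusters $s$'' is incorrect---it does no such thing---and is not how it enters.
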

\begin{proof}
	The proof follows similar lines to the path distortion argument in
	\cite{BM04multi}. 	Fix $t=8(k+1)$. We will argue by induction on the
	scale $i$ and the number of vertices $n$
	that the algorithm \texttt{Hop-constrained-clan-embedding}, given input
	$(G[X],\mu,\red{h},k,2^i)$, returns a one-to-many embedding $f$ into
	ultrametric $U$ with hop path distortion $(2t\cdot \log_{\nicefrac32}
	\mu(X),\red{h})$. More specifically, for every $\red{h}$-respecting path
	$P=\left(v_0,v_1,\dots,v_m\right)$ (a path such that $\forall a,b\in[m]$,
	$d_{G}\rhop{h}(v_a,v_b)\le d_{P}(v_a,v_b)$) there are  vertices
	$v'_0,\dots,v'_m$ where $v'_i\in f(v_i)$ such that
	$\sum_{i=0}^{m-1}d_U(v'_i,v'_{i+1})\le 2t\cdot \log_{\nicefrac32}
	\mu(X)\cdot \sum_{i=0}^{m-1}w(v_i,v_{i+1})$.

	The base case $i=0$ is trivial, as there is a single vertex and
	no distortion.
	Consider the call \texttt{Hop-constrained-clan-embedding}
	$(G[X],\mu,\red{h},k,2^i)$, which first created the clusters
	$\overline{X}_1,\dots,\overline{X}_s$, and then, for each cluster
	$\overline{X}_q$ created the one-to-many embedding $f_q$, where
	$f=\cup_{q\ge1}f_q$.

	If $s=1$, then $\overline{X}_1=X$. In particular the algorithm will call
	\texttt{Hop-constrained-clan-embedding}
	$(G[X],\mu,\red{h},k,2^{i-1})$ and we will have path distortion  $2t\cdot
	\log_{\nicefrac32} \mu(X)$ by induction.
	Else, recall that $Y_1=X\backslash \underline{X}_1$. If we were to execute
	\texttt{Hop-constrained-clan-embedding}
	$(G[Y_1],\mu,\red{h},k,2^i)$, we would get the clusters
	$\overline{X}_2,\dots,\overline{X}_s$, and on each cluster
	$\overline{X}_q$, we would have obtained the one-to-many embedding $f_q$. Denote
	$\tilde{f}=\cup_{q\ge2}f_q$.
	Then for the global one-to-many embedding $f$ on $X$ it holds that
	$f=f_1\cup\tilde{f}$. Note that $\underline{X}_1\ne\emptyset$, thus
	$Y_1\neq X$. Moreover, $f_1$ was constructed on $\overline{X}_1$ with
	scale $2^{i-1}$.
	Hence by the induction hypothesis, $f_1$ has path distortion  $2t\cdot
	\log_{\nicefrac32} \mu(\overline{X}_1)$, while $\tilde{f}$ has path
	distortion $2t\cdot \log_{\nicefrac32} \mu(Y_1)$.
	By the choice of index $j(v_1)$ in \cref{line:RG_Clan} in
	\Cref{alg:clan-create-cluster}, it holds that either
	$\mu(\overline{X}_1)\le \frac23\mu(X)$, or
	$\mu(Y_1)=\mu(X)-\mu(\underline{X}_1)< \frac23\mu(X)$.
	We assume w.l.o.g. that $\mu(\overline{X}_1)\le \frac23\mu(X)$. Otherwise,
	in the rest of the proof one should swap the roles of
	$\overline{X}_1$ and $Y_1$.

	Consider an $\red{h}$-respecting path $P=\left(x_0,x_1,\dots,x_m\right)$.
	If all vertices of $P$ are fully contained in either $\overline{X_1}$, or
	$Y_1$, then again, we can use the induction hypothesis and be done. We
	will thus assume that it is not the case.
	Let $a_0=0$, and let $Z_0\in\{\overline{X}_1,Y_1\}$ be the cluster
	containing the maximal length of a prefix of $P$:
	$\left(x_{a_0},x_{a_0+1},\dots,x_{b_0}\right)$. In general, the sub-path
	$\{x_{a_q},\dots,x_{b_q}\}$  contained in $Z_q$.
	Unless $b_q=m$, set $a_{q+1}=b_q+1$, and let
	$\left(x_{a_{q+1}},\dots,x_{b_{q+1}}\right)$ be the maximal prefix (among
	the remaining vertices) contained in a single cluster
	$Z_{q+1}\in\{\overline{X}_1,Y_1\}$. Note that $Z_{q}\ne Z_{q+1}$.
	See \Cref{fig:PathDistortion} for an illustration.

	\begin{figure}[t]
		\centering{\includegraphics[scale=.85]{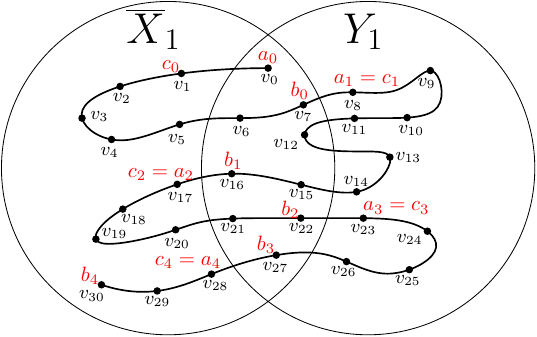}}
		\caption{\label{fig:PathDistortion}\small The path
		$P=\{v_0,\dots,v_{30}\}$ is illustrated in black. Here
		$\mathcal{X}=\{0,2,4\}$ corresponds to indices $i$ for which
		$Z_i=\overline{X}_1$, while $\mathcal{X}=\{1,3\}$ corresponds to
		indices $i$ for which $Z_i=$ $Y_1$.}
	\end{figure}

	We broke $P$ into sub-paths
	$(x_{a_{0}},\dots,x_{b_{0}}),\dots,(x_{a_{q}},\dots,x_{b_{q}}),\dots,
	(x_{a_{\bar{q}}},\dots,x_{b_{\bar{q}}})$. For every $q<\bar{q}$, by the
	maximality of $b_q$, $x_{a_{q+1}}\notin Z_{q}$.
	Furthermore, there is some index $c_0\in[a_0,b_0]$ such that
	$x_{c_0}\notin Z_1$, as otherwise, $x_0,\dots,x_{b_0},x_{a_1}$ are
	contained in $Z_1$, a contradiction to the choice of $b_0$.
	For every $q>0$, set $c_q=a_q$.
	Note that for every $q$, $c_q\notin Z_{q-1}=Z_{q+1}$.
	For every pair of vertices $u,v\in X$ such that $u\notin Y_1$ and $v\notin
	\overline{X}_1$, it holds that $u\in\underline{X}_1$. By the definition of
	$\overline{X}_1,\underline{X}_1$, it follows that
	$d_{G[X]}\rhop{h}(u,v)>2\rho_i$. In particular, for every $q$,
	$ d_{G}\rhop{h}(x{}_{c_{q}},x{}_{c_{q+1}})>2\rho_{i}$.

	For every $q$, using the induction hypothesis on $Z_q$, let
	$x'_{a_q},\dots,x'_{b_{q}}$ be vertices such that $x'_{j}\in f(x_j)$, and
	$\sum_{j=a_{q}}^{b_{q}}d_{U}(x'_{j},x'_{j+1})\le 2t\cdot
	\log_{\nicefrac32}
	\mu(Z_q)\cdot\sum_{j=a_{q}}^{b_{q}}d_{G[X]}(x{}_{j},x{}_{j+1})$.
	Let $\mathcal{X}\subseteq[0,\bar{q}]$ be all the indices that correspond
	to $Z_{q}=\overline{X}_{1}$, and
	$\mathcal{Y}=[0,\bar{q}]\backslash\mathcal{X}$ the indices corresponding
	to $Z_{q}=Y_{1}$. Note that $\mathcal{X}$ is all the even (or odd)
	indices.
	For notational convenience, whenever an index equals $-1$ (respectively,
	$m+1$), we interpret it as $0$ (respectively, $m$).
	 Then
	\begin{align}
	\sum_{j=0}^{m-1}d_{U}(x'_{j},x'_{j+1}) &
	=\ensuremath{\sum_{q\in\mathcal{X}}\sum_{j=a_{q}-1}^{b_{q}}d_{U}(x'_{j},
	x'_{j+1})}+\sum_{q\in\mathcal{Y}}\sum_{j=
	a_{q}}^{b_{q}-1}d_{U}(x'_{j},x'_{j+1})~.\label{eq:PathDistortion}
	\end{align}
	Note that in \cref{eq:PathDistortion}, each of
	$Z\in\{\overline{X_1},Y_1\}$ is ``responsible'' for the sub-paths that are
	fully contained in $Z$, while $\overline{X_1}$ is ``responsible'' for the
	edges connecting consecutive sub-paths (in fact, other than
	the edge connecting the sub-paths $0$ with $1$, and $\overline{q}-1$ with
	$\overline{q}$, each such edge is counted twice, but this is negligible).
	We first bound the second sum in \cref{eq:PathDistortion} (the
	``responsibility'' of $Y_1$).	By the induction hypothesis:
	\begin{align}
		\sum_{q\in\mathcal{Y}}\sum_{j=a_{q}}^{b_{q}-1}d_{U}(x'_{j},
		x'_{j+1}) & \le2t\cdot\log_{\nicefrac{3}{2}}\mu(Y_{1})\cdot\sum_{q\in
		\mathcal{Y}}\sum_{j=a_{q}}^{b_{q}-1}d_{G}(x{}_{j},x{}_{j+1})
		\nonumber \\
		& \le2t\cdot\log_{\nicefrac{3}{2}}\mu(X)\cdot\sum_{q\in\mathcal{
		\mathcal{Y}}}\sum_{j=a_{q}}^{b_{q}-1}d_{G}(x{}_{j},x{}_{j+
		1})\label{eq:PathHopDistortionY1}
	\end{align}
	Next we consider the first sum in \cref{eq:PathDistortion} (the
	``responsibility'' of $\overline{X_1}$).
	Observe that for every $q$, the sub-path of $P$ from $c_q$ to $c_{q+1}$
	has weight at most $\sum_{j=a_{q}}^{b_{q}}d_{G}(x_{j},x_{j+1})$. In
	particular,
	$d_{G[P]}(x_{c_{q}},x_{c_{q+1}})\le\sum_{j=
	a_{q}}^{b_{q}}d_{G}(x_{j},x_{j+1})$.
	As $P$ is $\red{h}$-respecting, for every $q\in\mathcal{X}$, it holds that
	\[
	\sum_{j=a_{q}}^{b_{q}}d_{G}(x_{j},x_{j+1})\ge
	d_{G[P]}(x_{c_{q}},x_{c_{q+1}})\ge
	d_{G}\rhop{h}(x_{c_{q}},x_{c_{q+1}})>2\rho_{i}=\frac{2^{i}}{8(k+1)}=
	\frac{2^{i}}{t}~.
	\]
	As the maximal possible distance in the ultrametric $U$ is $2^{i}$, using
	the induction hypothesis, we get
	\begin{align*}
	&\sum_{j=a_{q}-1}^{b_{q}}d_{U}(x'_{j},x'_{j+1})\\
	 &\qquad \le
	 d_{U}(x'_{a_{q}-1},x'_{a_{q}})+\sum_{j=a_{q}}^{b_{q}-1}d_{U}(x'_{j},
	 x'_{j+1})+d_{U}(x'_{b_{q}},x'_{b_{q}+1})\\
	&\qquad \le2^{i}+t\cdot2\log_{\nicefrac{3}{2}}\mu(\overline{X}_{1})\cdot
	\sum_{j=a_{q}}^{b_{q}-1}d_{G}(x{}_{j},x{}_{j+1})+2^{i}\\
	&\qquad \le\left(t+2t\cdot\log_{\nicefrac{3}{2}}\left(\frac{2}{3}\mu(X)
	\right)+t\right)\cdot\sum_{j=a_{q}}^{b_{q}}d_{G}(x{}_{j},x{}_{j+1})=2t
	\cdot\log_{\nicefrac{3}{2}}\mu(X)\cdot\sum_{j=a_{q}}^{b_{q}}d_{G}(x{}_{j},
	x{}_{j+1})~.
	\end{align*}
	Using the inequality above, and \cref{eq:PathHopDistortionY1} we conclude
	\begin{align*}
	\sum_{j=0}^{m-1}d_{U}(x'_{j},x'_{j+1}) &
	\le2t\cdot\log_{\nicefrac{3}{2}}\mu(X)\cdot\left(\ensuremath{\sum_{q\in
	\mathcal{X}}\sum_{j=a_{q}}^{b_{q}}d_{G}(x{}_{j},x{}_{j+1})}+\sum_{q\in
	\mathcal{Y}}\sum_{j=a_{q}}^{b_{q}-1}d_{G}(x{}_{j},x{}_{j+
	1})\right)\\
	& \le2t\cdot\log_{\nicefrac{3}{2}}\mu(X)\cdot\sum_{j=
	0}^{m-1}d_{U}(x{}_{j},x{}_{j+1})~.
	\end{align*}
\end{proof}

\begin{remark}\label{rem:AspectRatio}
	Similarly to \cite{BM04multi}, in cases where $\phi\ll \log n$, one can
	improve the distortion upper bound in \Cref{lem:ClanPathDistortion} to
	$O(k\cdot \phi)$. The proof technique is the same, where we prove by
	induction that the one-to-many embedding created for scale $2^i$ has hop
	path distortion $(2t\cdot i,\red{h})$.
\end{remark}

	Similarly to the notation introduced before \Cref{lem:frac-survival},
	we associate
	with every cluster $X$ a graph $H(X)$ from which it was created,
	and a center
	$r(X)$. In the present clan-cover construction, the clusters on
	which we recurse
	are the clusters $\overline{X}_q$ created by \texttt{clan-cover}
	(\Cref{alg:clan-c}) using \texttt{Clan-create-cluster}
	(\Cref{alg:clan-create-cluster}). Thus, if $X=\overline{X}_q$
	was created in
	the $q$th iteration of a call to \texttt{clan-cover}, when the
	remaining vertex
	set was $Y_{q-1}$ and the chosen center was $v_q$, then we set
	$H(X)=G[Y_{q-1}]$ and $r(X)=v_q$. For the initial cluster $V$, set
	$H(V)=G[V]$,
	and let $r(V)$ be a vertex maximizing
	$\mu\left(B_G^{\red{((\phi+1)\cdot h')}}(v,2^{\phi-2})\right)$.
\begin{lemma}\label{lem:ClanMesureBound}
	For an $i$-level cluster $X$, it holds that\\\phantom{a}\hfill
	$\mathbb{E}_{x\sim
	X}[|f(x)|]\le\mu(X)\cdot\mu\left(B_{H(X)}^{\red{((i+1)\cdot
	h')}}(r(X),2^{i-2})\right)^{\frac{1}{k}}$.
\end{lemma}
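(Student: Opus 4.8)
The plan is to prove \Cref{lem:ClanMesureBound} by induction on the scale $i$ and on $|X|$, in parallel to the proof of \Cref{lem:frac-survival} for the Ramsey case; the only structural change is that \texttt{clan-cover} produces a \emph{cover} rather than a partition, so a vertex may belong to several clusters. The base case, where $X$ is a single vertex $x$, is immediate: $f(x)=\{x\}$, hence $\mathbb{E}_{x\sim X}[|f(x)|]=\mu(X)\le\mu(X)\cdot\mu\bigl(B_{H(X)}^{\red{(i\cdot h')}}(r(X),2^{i-3})\bigr)^{\frac1k}$ since $\mu$ is a $(\ge1)$-measure and the ball contains $r(X)$. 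For the inductive step, recall (\Cref{alg:hier-Clan,alg:clan-c}) that \texttt{Hop-constrained-clan-embedding}$(G[X],\mu,\red{h},k,2^{i-1})$ first invokes \texttt{clan-cover}$(G[X],\mu,\red{h},k,2^{i-1})$, which carves $\overline{X}_1,\dots,\overline{X}_s$, where $\overline{X}_q$ is returned by \texttt{Clan-create-cluster}$(G[Y_{q-1}],\mu,\red{h},k,2^{i-1})$ from $Y_{q-1}=X\setminus\bigcup_{q'<q}\underline{X}_{q'}$, with center $v_q$ and index $j(v_q)\le 2k$ (\Cref{claim:j-v-clan}); it then recurses on each $G[\overline{X}_q]$ with scale $2^{i-2}$ to obtain $f_q$, and sets $f=\bigcup_q f_q$. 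Since the ultrametrics $U_q$ are vertex-disjoint subtrees of $U$, for each $z\in X$ the clan $f(z)$ is the disjoint union $\bigsqcup_{q\,:\,z\in\overline{X}_q}f_q(z)$; as $\{\overline{X}_q\}_q$ covers $X$, swapping the order of summation gives $\mathbb{E}_{x\sim X}[|f(x)|]=\sum_{x\in X}\mu(x)\,|f(x)|=\sum_{q=1}^{s}\sum_{x\in\overline{X}_q}\mu(x)\,|f_q(x)|=\sum_{q=1}^{s}\mathbb{E}_{x\sim\overline{X}_q}[|f_q(x)|]$.

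Next I would bound each summand. Writing $A_j=B_{Y_{q-1}}^{\red{((i-1)\cdot h'+j\cdot h)}}\bigl(v_q,2^{i-4}+j\cdot\rho_{i-1}\bigr)$ as in \texttt{Clan-create-cluster} at scale $2^{i-1}$ (so $\rho_{i-1}=\frac{2^{i-1}}{16(k+1)}$), a short computation using $\red{h'}=\red{2(k+1)h}$ gives $A_0=B_{Y_{q-1}}^{\red{((i-1)\cdot h')}}(v_q,2^{i-4})$, $A_{2(k+1)}=B_{Y_{q-1}}^{\red{(i\cdot h')}}(v_q,2^{i-3})$, $\underline{X}_q=A_{j(v_q)}$ and $\overline{X}_q=A_{j(v_q)+2}$. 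Since $\overline{X}_q$ was created w.r.t. scale $2^{i-1}$, the induction hypothesis applies and yields $\mathbb{E}_{x\sim\overline{X}_q}[|f_q(x)|]\le\mu(\overline{X}_q)\cdot\mu(A_0)^{\frac1k}$. The first defining property of $j(v_q)$ in \cref{line:RG_Clan} gives $\mu(\overline{X}_q)=\mu(A_{j(v_q)+2})\le\mu(A_{j(v_q)})\cdot\bigl(\mu(A_{2(k+1)})/\mu(A_0)\bigr)^{\frac1k}=\mu(\underline{X}_q)\cdot\bigl(\mu(A_{2(k+1)})/\mu(A_0)\bigr)^{\frac1k}$, so the $\mu(A_0)^{\frac1k}$ factors cancel and $\mathbb{E}_{x\sim\overline{X}_q}[|f_q(x)|]\le\mu(\underline{X}_q)\cdot\mu(A_{2(k+1)})^{\frac1k}$. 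Finally, because $X$ itself was created w.r.t. scale $2^i$, \texttt{Clan-create-cluster} chose $r(X)$ to maximize $\mu\bigl(B_{H(X)}^{\red{(i\cdot h')}}(\cdot,2^{i-3})\bigr)$ over $H(X)$, and $v_q\in Y_{q-1}\subseteq X\subseteq H(X)$; hence $\mu(A_{2(k+1)})=\mu\bigl(B_{Y_{q-1}}^{\red{(i\cdot h')}}(v_q,2^{i-3})\bigr)\le\mu\bigl(B_{H(X)}^{\red{(i\cdot h')}}(v_q,2^{i-3})\bigr)\le\mu\bigl(B_{H(X)}^{\red{(i\cdot h')}}(r(X),2^{i-3})\bigr)$. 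Summing over $q$ and using that $\{\underline{X}_q\}_{q=1}^{s}$ is a partition of $X$ (so $\sum_q\mu(\underline{X}_q)=\mu(X)$) gives $\mathbb{E}_{x\sim X}[|f(x)|]\le\mu\bigl(B_{H(X)}^{\red{(i\cdot h')}}(r(X),2^{i-3})\bigr)^{\frac1k}\cdot\mu(X)$, which is \Cref{lem:ClanMesureBound}. Instantiating at $X=V$ (with the convention $H(V)=V$ and scale $2^{\phi+1}$, exactly as for the first cluster in the Ramsey case) yields $\mathbb{E}_{v\sim\mu}[|f(v)|]\le\mu(V)\cdot\mu(V)^{\frac1k}=\mu(V)^{1+\frac1k}$, as needed for \Cref{lem:clanHopUltraMeasure}.

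The main obstacle is the index bookkeeping across the interleaved scales: the radii and hop-counts of $A_0$ and $A_{2(k+1)}$ must match \emph{exactly} the ball in the induction hypothesis for $\overline{X}_q$ on one side and the ball $B_{H(X)}^{\red{(i\cdot h')}}(r(X),2^{i-3})$ maximized by $r(X)$ on the other, and it is precisely this alignment that makes the $\mu(A_0)^{1/k}$ factor telescope away — which is why the enlarged parameters $\red{h'}=\red{2(k+1)h}$ and $\rho_i=\frac{2^i}{16(k+1)}$ are used in \texttt{Clan-create-cluster}. A secondary point to check is that the extra condition on $j(v)$ in \cref{line:RG_Clan} (either $\mu(A_j)>\frac13\mu(Y)$ or $\mu(A_{j+2})\le\frac23\mu(Y)$), which is present only for the path-distortion analysis (\Cref{lem:ClanPathDistortion}), does not break the argument: \Cref{claim:j-v-clan} still guarantees an admissible $j(v)\le 2k$, and the computation above uses only the ball-growing inequality satisfied by any such $j(v)$.
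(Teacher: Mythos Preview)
Your proof is correct and follows essentially the same approach as the paper's own proof: induction on the scale, decomposing $\mathbb{E}_{x\sim X}[|f(x)|]=\sum_q\mathbb{E}_{x\sim\overline{X}_q}[|f_q(x)|]$, applying the induction hypothesis to each $\overline{X}_q$, telescoping via the ball-growing inequality from \cref{line:RG_Clan}, bounding the outer ball using the maximality of $r(X)$, and summing using $\sum_q\mu(\underline{X}_q)=\mu(X)$. Your index bookkeeping (identifying $A_0$ and $A_{2(k+1)}$ explicitly and tracking the shift between ``created at scale $2^i$'' and ``processed at scale $2^{i-1}$'') is in fact cleaner than the paper's, which has a few off-by-one slips in the proof text, but the argument is the same.
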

\begin{proof}
	Note that an $i$-level cluster was created from an $i+1$-level cluster. In
	particular, it was created by a call to the \texttt{Clan-create-cluster}
	procedure with parameter $2^{i+1}$.
	We prove the lemma by induction on $i$. Consider first the base case
	where $i=0$, or more generally, when
	$X$ is a singleton $\{v\}$. The embedding will be into a single vertex. In
	particular, $\mathbb{E}_{x\sim
	X}[|f(x)|]=\mu(\{v\})\le\mu(X)\cdot\mu\left(B_{H(X)}^{\red{(0)}}(r(X),
	2^{-2})\right)^{\frac{1}{k}}$.
	Next we assume by induction that the claim holds for every
	$i$-level cluster.
	Consider an $i$-level cluster $X$. $X$ was created from the graph $H(X)$,
	and has center $r(X)$ (for the case $i=\phi$, let $H(X)=V$, and $r(x)$ to
	be the vertex maximizing $\mu\left(B_{H(X)}^{\red{((\phi+1)\cdot
	h')}}(r(X),2^{\phi-2})\right)$~).

	The algorithm calls $\texttt{clan-cover}(G[X],\mu,\red{h},k,2^{i})$ to
	obtain the $i-1$-level
	clusters $\overline{X}_{1},\dots,\overline{X}_{s}$, each with center
	$v_{q}$, which was created when the graph was $Y_{q-1}$.
	In particular, there is a sub-cluster $\underline{X}_{q}$ such that
	$Y_{q}=Y_{q-1}\backslash\underline{X}_{q}$. Note that each vertex
	$v\in X$ belongs to exactly one sub-cluster $\underline{X}_{q}$,
	hence
	\begin{equation}
	\sum_{q=1}^{s}\mu(\underline{X}_{q})=\mu(X)\label{eq:SubClustersPartition}
	\end{equation}
	By our choice of index in \cref{line:RG_Clan} of
	\Cref{alg:clan-create-cluster}, for every $q$ it holds that
	\begin{equation}
	\mu(\overline{X}_{q})\le\mu(\underline{X}_{q})\cdot\left(\frac{\mu
	\left(B_{Y_{q-1}}^{\red{((i+1)\cdot h')}}(v_{q},2^{i-2})\right)}{\mu
	\left(B_{Y_{q-1}}^{\red{(i\cdot h')}}(v_{q},2^{i-3})\right)}\right)^{
	\frac{1}{k}}~.
	\label{eq:ClanMeasureClusterIneq}
	\end{equation}
	\sloppy For every $q$, as $Y_{q-1}\subseteq X\subseteq H(X)$, and by
	the choice of
	$r(X)$ as the vertex maximizing $\mu\left(B_{H(X)}^{\red{((i+1)\cdot
	h')}}(r(X),2^{i-2})\right)^{\frac{1}{k}}$,
	it holds that
	\begin{equation}
	\mu\left(B_{Y_{q-1}}^{\red{((i+1)\cdot
	h')}}(v_{q},2^{i-2})\right)\le\mu\left(B_{H(X)}^{\red{((i+1)\cdot
	h')}}(v_{q},2^{i-2})\right)\le\mu\left(B_{H(X)}^{\red{((i+1)\cdot
	h')}}(r(X),2^{i-2})\right)~.\label{eq:ClanMusureOfSuperVertex}
	\end{equation}
	Using the induction hypothesis, for every $q$ we obtain
	\begin{align*}
	\mathbb{E}_{x\sim\overline{X}_{q}}[|f_{q}(x)|] &
	\le\mu(\overline{X}_{q})\cdot\mu\left(B_{Y_{q-1}}^{\red{(i\cdot
	h')}}(v_{q},2^{i-3})\right)^{\frac{1}{k}}\\
	& \overset{(\ref{eq:ClanMeasureClusterIneq})}{\le}\mu(\underline{X}_{q})
	\cdot\left(\frac{\mu\left(B_{Y_{q-1}}^{\red{((i+1)\cdot h')}}(v_{q},
	2^{i-2})\right)}{\mu\left(B_{Y_{q-1}}^{\red{(i\cdot h')}}(v_{q},2^{i-3})
	\right)}\right)^{\frac{1}{k}}\cdot\mu\left(B_{Y_{q-1}}^{\red{(i\cdot
	h')}}(v_{q},2^{i-3})\right)^{\frac{1}{k}}\\
	& =\mu(\underline{X}_{q})\cdot\mu\left(B_{Y_{q-1}}^{\red{((i+1)\cdot
	h')}}(v_{q},2^{i-2})\right)^{\frac{1}{k}}\\
	& \overset{(\ref{eq:ClanMusureOfSuperVertex})}{\le}\mu(\underline{X}_{q})
	\cdot\mu\left(B_{H(X)}^{\red{((i+1)\cdot h')}}(r(X),2^{i-2})\right)^{
	\frac{1}{k}}~.
	\end{align*}
	As $f=\cup_{q=1}^{s}f_{q}$, we conclude
	\begin{align*}
	\mathbb{E}_{x\sim X}[|f(x)|] & =\sum_{q=1}^{s}\mathbb{E}_{x\sim
	X}[|f_{q}(x)|]\\
	& \le\sum_{q=1}^{s}\mu(\underline{X}_{q})\cdot\mu\left(B_{H(X)}^{\red{((i+
	1)\cdot h')}}(r(X),2^{i-2})\right)^{\frac{1}{k}}\\
	& \overset{(\ref{eq:SubClustersPartition})}{=}\mu(X)\cdot\mu
	\left(B_{H(X)}^{\red{((i+1)\cdot h')}}(r(X),2^{i-2})\right)^{
	\frac{1}{k}}~.
	\end{align*}
\end{proof}

Using \Cref{lem:ClanMesureBound} on $V$ with $i=\logdiam$ we have
$\mathbb{E}_{x\sim V}[|f(x)|]\le\mu(V)\cdot\mu\left(B_{G}^{\red{((\phi+1)\cdot
h')}}(r(V),2^{\phi-2})\right)^{\frac{1}{k}}\le\mu(V)^{1+\frac{1}{k}}$.
\Cref{lem:clanHopUltraMeasure} now follows by combining
\Cref{lem:ClanDistortion}, \Cref{lem:ClanPathDistortion}, and
\Cref{lem:ClanMesureBound}.

An observation that will be useful in some of our applications (specifically
for \Cref{thm:Subgraph} and \Cref{cor:hBoundedSubgraph}) is the following:
\begin{observation}\label{obs:singleCopy}
	Suppose that \Cref{lem:clanHopUltraMeasure} is applied on a measure $\mu$
	where for some vertex $r\in V$, $\mu(r)>\frac12\mu(V)$. Then in the
	resulting clan embedding $(f,\chi)$ it will hold that $|f(r)|=1$.
\end{observation}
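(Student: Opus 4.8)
The plan is to argue by induction on the depth of the recursion in \texttt{Hop-constrained-clan-embedding} (equivalently, on the scale $2^i$, which strictly decreases at each recursive level), proving the strengthening: whenever the algorithm is invoked on input $(G[X],\mu,\red{h},k,2^i)$ together with a vertex $r\in X$ for which $\mu(r)>\frac12\mu(X)$, the returned clan embedding $(f,\chi)$ satisfies $|f(r)|=1$. The base case $|X|=1$ is immediate since $f$ is then the identity, and the observation follows by applying this to the top-level call \texttt{Hop-constrained-clan-embedding}$(G[V],\mu,\red{h},k,2^\phi)$.

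For the inductive step I would use that $f(r)=\bigcup_{q\,:\,r\in\overline{X}_q}f_q(r)$, where $f_q$ is the one-to-many embedding produced by the recursive call on $G[\overline{X}_q]$ at scale $2^{i-1}$. So it suffices to establish two things: (a) $r$ lies in exactly one of the cover clusters $\overline{X}_1,\dots,\overline{X}_s$ output by \texttt{clan-cover}; and (b) the majority condition is inherited, i.e.\ $\mu(r)>\frac12\mu(\overline{X}_q)$ for that cluster, so the induction hypothesis gives $|f_q(r)|=1$ and hence $|f(r)|=1$. Part (b) is routine: each $\overline{X}_q\subseteq Y_{q-1}\subseteq X$, so $\mu(\overline{X}_q)\le\mu(X)$ and the strict inequality carries over.

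The heart of the argument is part (a), via a one-line counting observation applied to the \emph{first} call to \texttt{Clan-create-cluster} inside \texttt{clan-cover}, which runs on $G[Y_0]=G[X]$. Its center $v_1\in X$ is chosen (\Cref{alg:clan-create-cluster}) to maximize $\mu\!\left(B_{G[X]}^{\red{(i\cdot h')}}(v_1,2^{i-3})\right)$; comparing with the candidate $v=r$ shows this quantity is at least $\mu\!\left(B_{G[X]}^{\red{(i\cdot h')}}(r,2^{i-3})\right)\ge\mu(r)>\frac12\mu(X)$. Since $\mu(X\setminus\{r\})=\mu(X)-\mu(r)<\frac12\mu(X)$, the ball $A_0=B_{G[X]}^{\red{(i\cdot h')}}(v_1,2^{i-3})$ cannot be contained in $X\setminus\{r\}$, so $r\in A_0\subseteq A_{j(v_1)}=\underline{X}_1$. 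Because \texttt{clan-cover} then sets $Y_1=Y_0\setminus\underline{X}_1$, the vertex $r$ is removed after the first iteration, so $r\notin Y_{q-1}$ and hence $r\notin\overline{X}_q$ for every $q\ge2$ (each later cluster is a ball inside $G[Y_{q-1}]$). Together with $r\in\underline{X}_1\subseteq\overline{X}_1$, this yields (a), the unique cluster being $\overline{X}_1$, and the induction closes.

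I do not expect a genuine obstacle. The one mild subtlety is well-foundedness of the induction: a recursive call may be made on a cluster $\overline{X}_q$ equal to $X$, so the induction must be on the scale / recursion depth rather than on $|X|$. The load-bearing step worth double-checking is the counting claim that a strict-majority vertex is always swallowed by the interior $\underline{X}_1$ of the very first cluster produced at each level; everything else is bookkeeping about the cover built by \texttt{clan-cover} and \texttt{Clan-create-cluster}.
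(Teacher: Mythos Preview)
Your proposal is correct and follows essentially the same approach as the paper: show that in the first call to \texttt{Clan-create-cluster} the maximizing ball $A_0$ must contain $r$ (since any ball missing $r$ has measure $<\frac12\mu(X)$ while the ball around $r$ has measure $>\frac12\mu(X)$), hence $r\in A_0\subseteq\underline{X}_1$, so $r\notin Y_1$ and $r$ appears in no later cluster; then recurse. Your version is slightly more explicit than the paper's (you spell out the inherited majority condition $\mu(r)>\frac12\mu(\overline{X}_1)$ and the need to induct on scale rather than $|X|$), but the argument is the same.
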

\begin{proof}
	Note that in \Cref{alg:clan-create-cluster}, we pick a center $v_1$ that
	maximizes  $\mu\left(B_{H}^{\red{(i\cdot h')}}(v,2^{i-3})\right)$. Note
	that every ball containing $r$ has measure larger than $\frac12\mu(V)$,
	while every ball that does not contain $r$ has measure less than
	$\frac12\mu(V)$. Thus necessarily $r\in B_{H}^{\red{(i\cdot
	h')}}(v,2^{i-3})$.
	Next, observe that for every index $j(v_q)$ chosen in \cref{line:RG_Clan}
	it will hold that $B_{H}^{\red{(i\cdot
	h')}}(v,2^{i-3})\subseteq\underline{X}_1$. Hence $r\notin Y_1$, and
	therefore will not be duplicated: i.e. $r$ belong only to $\overline{X}_1$
	among all the created clusters. The observation follows now by induction.
\end{proof}

\subsection{Alternative clan embedding}
In this section, in a similar spirit to what we did in \Cref{subsec:AltRamsey},
we modify the \texttt{Hop-constrained-clan-embedding} algorithm by making
changes to the \texttt{Clan-create-cluster} procedure. As a result, we obtain a
theorem with a slightly different trade-off between hop-stretch and
hop-distortion from that of \Cref{thm:ClanHopUltrametric}. Specifically, by
adding an additional factor of $O(\log\log n)$ in the distortion, we can
allow the hop-stretch to be as small as $O(\log\log n)$, and  completely avoid
any dependence on the aspect ratio.
We restate the theorem for convenience:

\ClanHopUltrametricAlt*

As usual, the key step will be to prove a distributional lemma:
\begin{lemma}\label{lem:clanHopUltraMeasureAlt}
	Consider an $n$-vertex graph $G=(V,E,w)$ with polynomial aspect ratio,
	$(\ge1)$-measure $\mu:V\rightarrow\mathbb{R}_{\ge1}$, and parameters
	$k,\red{h}\in [n]$. Then there is a clan embedding $(f,\chi)$ into an
	ultrametric with
	hop-distortion $\left(O(k\cdot\log\log \mu(V)),\red{O(k\cdot\log\log
	\mu(V))},\red{h}\right)$, hop-path-distortion $\left(\redno{O(k\cdot\log
	\mu(V)\cdot\log\log \mu(V))},\red{h}\right)$, and
	such that $\mathbb{E}_{v\sim\mu}[|f(v)|]\le\mu(V)^{1+\frac1k}$.
\end{lemma}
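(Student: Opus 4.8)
The plan is to mimic the structure of the proof of Lemma \ref{lem:clanHopUltraMeasure} (the polynomial–aspect–ratio clan embedding), but to replace the \texttt{Clan-create-cluster} procedure by an analogue of \texttt{Create-Cluster-alt} (\Cref{alg:CreateClusterAlternative}) so that the per–level hop budget becomes $\red{O(k\cdot L)\cdot h}$ with $L=\lceil 1+\log\log\mu(V)\rceil$, independent of the scale. Concretely, I would define a procedure \texttt{Clan-create-cluster-alt}$(G[Y],\mu,\red{h},k,2^i)$ that first picks a center $v$ \emph{minimizing} $\mu\bigl(B_H^{\red{(2k\cdot L\cdot h)}}(v,\tfrac14\Delta)\bigr)$ with $\Delta=2^i$; if this minimum already exceeds, say, $\tfrac23\mu(Y)$ it returns $(Y,Y,Y)$ (as in \cref{line:AltCreateClusterReturnX}, where every pair of vertices in $Y$ is at $\red{(4kL)\cdot h}$–hop distance at most $\Delta/2=2^{i-1}$); otherwise it sets $A_{a,j}=B_H^{\red{((2ka+j)\cdot h)}}\bigl(v,(a+\tfrac{j}{2k})\cdot\tfrac{\Delta}{4L}\bigr)$, picks $a\in[0,L-1]$ with $\mu(A_{a,0})\ge \mu(A_{a+1,0})^2/\mu(Y)$ (exists by the same telescoping argument as \Cref{clm:RamseyAltAChoise}), and then picks $j\in[0,2(k-1)]$ with the double ratio condition $\mu(A_{a,j+2})\le \mu(A_{a,j})\cdot(\mu(A_{a+1,0})/\mu(A_{a,0}))^{1/k}$ \emph{and} additionally either $\mu(A_{a,j})>\tfrac13\mu(Y)$ or $\mu(A_{a,j+2})\le\tfrac23\mu(Y)$ --- exactly as in \cref{line:RG_Clan}, the extra balance condition being provable by the argument of \Cref{claim:j-v-clan}. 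It returns $(A_{a,j},A_{a,j+1},A_{a,j+2})$.

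Next I would verify the three properties. For the \textbf{hop-distortion}, the argument of \Cref{lem:ClanDistortion} goes through verbatim: since $j(v)\le 2(k-1)$ and $a\le L-1$, each created cluster $\overline X_q$ sits inside a ball of radius $\le 2^{i-1}$ taken with $\red{(2kL)\cdot h}$ hops, so the diameter observation (\Cref{obs:ClanDiamBound} / \Cref{obs::diamBoundAlt}) gives $\diam^{\red{(4kL)\cdot h}}(G[X])\le 2^i$; the ball–containment argument showing $B_{G[X]}\rhop{h}(u,\rho_i)\subseteq\overline X_q$ for $\rho_i=\Theta(2^i/(kL))$ --- wait, here I must be careful: the padding radius should be $\rho_i=\tfrac{1}{2k}\cdot\tfrac{\Delta}{4L}=\tfrac{2^i}{8kL}$, matching the step size of the $A_{a,j}$'s, which gives hop-distortion $O(kL)=O(k\log\log\mu(V))$ and hop-stretch $\red{O(kL)\cdot h}=\red{O(k\log\log\mu(V))\cdot h}$, as claimed. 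For the \textbf{expected clan size}, I would prove the analogue of \Cref{lem:ClanMesureBound} by induction on $|X|$ and $i$: since the two chosen indices give $\mu(\overline X_1)\le \mu(\underline X_1)\cdot(\mu(A_{a+1,0})/\mu(A_{a,0}))^{1/k}$ and $A_{a+1,0}\subseteq$ (the relevant "max ball at the next scale"), combined with the $a$–choice $\mu(A_{a,0})\ge\mu(A_{a+1,0})^2/\mu(Y)$ one telescopes just as in \Cref{lem:frac-survival-alt} to get $\mathbb{E}_{x\sim\mu}[|f(x)|]\le\mu(V)^{1+1/k}$; the scale-independence of the hop budget is exactly what lets this work for unbounded aspect ratio.

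For the \textbf{hop-path-distortion}, I would re-run the proof of \Cref{lem:ClanPathDistortion} essentially unchanged, since that proof only uses the balance property ($\mu(\overline X_1)\le\tfrac23\mu(X)$ or $\mu(Y_1)<\tfrac23\mu(X)$) --- which we retained in the cluster–creation rule --- together with the facts that the maximal $U$-distance at scale $2^i$ is $2^i$ and that $d_G\rhop{h}(x_{c_q},x_{c_{q+1}})>2\rho_i$ for consecutive "far" break points. The recursion depth in $\mu(V)$ is still $\log_{3/2}\mu(V)$, but now each level contributes $O(k\cdot L)$ rather than $O(k)$ to the path-stretch (because $2^i/(2\rho_i)=\Theta(kL)$ instead of $\Theta(k)$), yielding hop-path-distortion $O\bigl(k\cdot\log\mu(V)\cdot\log\log\mu(V)\bigr)$ as stated. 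Finally, the reduction to the theorem (Claim analogous to \Cref{clm:probabilisticClan}) and the minimax argument are identical to the proof of \Cref{thm:ClanHopUltrametric}, so I would only state them. The main obstacle I anticipate is bookkeeping the exact constants and hop-counts so that the "minimum possible cluster / maximum possible cluster" nesting needed for the telescoping in the clan-size bound is respected despite the scale-independent hop budget --- this is precisely the place where, in the Ramsey analogue, one must work inside a single strip $[A_{a,0},A_{a+1,0}]$ and lose the extra $\log\log$ factor, and the same care is needed here; once that nesting is set up correctly, all three lemmas follow by the established templates.
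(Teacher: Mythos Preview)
Your proposal is essentially correct and follows the same approach as the paper: replace \texttt{Clan-create-cluster} by an \texttt{alt} version built on the two-index $(a,j)$ ball family with scale-independent hop budget $\red{O(kL)\cdot h}$, then rerun the three template lemmas (hop-distortion, path-distortion, measure bound).

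One simplification you missed: in the alt version the explicit balance condition on $j$ is unnecessary. The paper (\Cref{obs:halfTheMeasure}) observes that whenever the procedure does not return the trivial triple $(Y,Y,Y)$, one automatically has $\overline{A}=A_{a,j+2}\subseteq A_{L,0}=B_Y^{\red{(2kLh)}}(v,\Delta/4)$, whose measure is at most $\tfrac12\mu(Y)$ by the branching threshold. Thus $\mu(\overline X_1)\le\tfrac12\mu(X)$ always holds in the nontrivial case, which already gives the ``balance'' needed for the path-distortion recursion. Your added clause ``either $\mu(A_{a,j})>\tfrac13\mu(Y)$ or $\mu(A_{a,j+2})\le\tfrac23\mu(Y)$'' is therefore vacuously true (the second disjunct always holds), so citing \Cref{claim:j-v-clan} for its existence is unneeded --- \Cref{clm:RamseyAltJChoise} alone suffices for the ratio condition on $j$. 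This is harmless, but it means you can drop the extra clause and use the paper's threshold $\tfrac12$ instead of $\tfrac23$, streamlining the argument.
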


The proof of \Cref{thm:ClanHopUltrametricAlt} using
\Cref{lem:clanHopUltraMeasureAlt} follows the exact same lines as the proof of
\Cref{thm:ClanHopUltrametric} using \Cref{lem:clanHopUltraMeasure}, and thus
we will skip it.

\subsection{Proof of \Cref{lem:clanHopUltraMeasureAlt}: alternative
distributional h.c. clan embedding}
For the embedding of \Cref{lem:clanHopUltraMeasureAlt} we will use the exact
same \texttt{Hop-constrained-clan-embedding} (\Cref{alg:hier-Clan}), with the
only difference that instead of using the \texttt{Clan-create-Cluster}
procedure (\Cref{alg:clan-create-cluster}), we will use the
\texttt{Clan-create-cluster-alt} procedure
(\Cref{alg:clan-create-cluster-alt}).
Similarly to the proof of \Cref{lem:clanHopUltraMeasure} we will assume that
$\diam\rhop{h}(G)<\infty$. The exact same argument from
\Cref{subsec:inftyDistance} can be used to remove this assumption.
We begin by making the call
\texttt{Hop-constrained-clan-embedding}$(G[V],\mu,\red{h},k,2^\phi)$, where
the call to \texttt{Clan-create-Cluster} in \cref{line:callClanCreateClaster}
of \Cref{alg:clan-c} is replaced by a call to
\texttt{Clan-create-cluster-alt}.

\begin{algorithm}[t]
	\caption{$(\underline{A},A,\overline{A})=
	\texttt{Clan-create-cluster-alt}(G[Y],\mu,\red{h},k,2^i)$}
	\label{alg:clan-create-cluster-alt}
	\DontPrintSemicolon
	\SetKwInOut{Input}{input}\SetKwInOut{Output}{output}
	\Input{Induced graph $G[Y]$, parameters $\red{h},k\in\N$, $\ge1$-measure
	$\mu$, scale $i$.}
	\Output{Three clusters $(\underline{A},A,\overline{A})$, where
	$\underline{A}\subseteq A\subseteq \overline{A}$.}
	\BlankLine
	set $L=\lceil 1+\log\log\mu(Y)\rceil$ and $\Delta=2^i$\;
	let $v \in Y$ with minimal $\mu\left(B_{G[Y]}^{\red{(2k\cdot L\cdot
	h)}}(v,\frac14\Delta)\right)$\;
	\If{$\mu\left(B_{G[Y]}^{\red{(2k\cdot L\cdot
	h)}}(v,\frac14\Delta)\right)>\frac12\cdot\mu(Y)$}{
		\Return $(Y,Y,Y)$\label{line:AltCreateClusterReturnXCaln}\tcp*{Note
		that for every $u,v\in Y$, $d_{G[Y]}^{\red{(4k\cdot L\cdot h)}}(u,v)\le
		\frac\Delta2$}
	}
	denote $A_{a,j}=B_{G[Y]}^{\red{((2k\cdot a+j)\cdot
	h)}}\left(v,(a+\frac{j}{2k})\cdot\frac{\Delta}{4L}\right)$\;
	let $a\in[0,L-1]$ such that
	$\mu\left(A_{a,0}\right)\ge\frac{\mu\left(A_{a+1,0}\right)^{2}}{
	\mu(Y)}$\label{line:firstStepCaln}\;
	let $j\in [0,2(k-1)]$ such that
	$\mu\left(A_{a,j+2}\right)\le\mu\left(A_{a,j}\right)\cdot\left(\frac{\mu
	\left(A_{a+1,0}\right)}{\mu\left(A_{a,0}\right)}\right)^{\frac{1}{k}}
	$\label{line:SecondStepCaln}\;
	\Return $(A_{a,j},A_{a,j+1},A_{a,j+
	2})$\label{line:AltCreateClusterReturnUsual}\;
\end{algorithm}

The proofs of the following two claims are identical to those of
\Cref{clm:RamseyAltAChoise} and \Cref{clm:RamseyAltJChoise}, so we
will skip them.

\begin{claim}
	In \cref{line:firstStepCaln} of \Cref{alg:clan-create-cluster-alt}, there
	is an index $a\in [0,L-1]$ such that
	$\mu\left(A_{a,0}\right)\ge\frac{\mu\left(A_{a+1,0}\right)^{2}}{\mu(Y)}$.
\end{claim}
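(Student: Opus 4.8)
The plan is to mirror verbatim the argument of \Cref{clm:RamseyAltAChoise}, the only difference being that we work with the plain measure $\mu$ rather than the marked measure $\mu_M$, and the host graph is $Y$ rather than $H$. First I would argue by contradiction: suppose that for every index $a\in[0,L-1]$ we have $\mu\left(A_{a,0}\right)<\frac{\mu\left(A_{a+1,0}\right)^{2}}{\mu(H)}$. Applying this chain of inequalities iteratively from $a=L-1$ down to $a=0$, and using that each step squares the exponent on $\mu(A_{\cdot,0})$ and accumulates the complementary power on $\mu(H)$, one obtains
\begin{align*}
\mu\left(A_{L,0}\right) & >\mu\left(A_{L-1,0}\right)^{1/2}\cdot\mu(H)^{1/2}>\mu\left(A_{L-2,0}\right)^{1/4}\cdot\mu(H)^{3/4}\\
& >\dots>\mu\left(A_{0,0}\right)^{2^{-L}}\cdot\mu(H)^{1-2^{-L}}~.
\end{align*}

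Next I would use the choice $L=\lceil 1+\log\log\mu(Y)\rceil$ to bound $\mu(H)^{1-2^{-L}}>\frac12\cdot\mu(H)$ (since $2^{-L}\le \frac{1}{2\log\mu(H)}$, hence $\mu(H)^{2^{-L}}\le \sqrt{2}<2$, so $\mu(H)^{1-2^{-L}}=\mu(H)/\mu(H)^{2^{-L}}>\mu(H)/2$), and I would use that $v$ is in the domain (here $v$ is chosen with minimal ball measure, but it is still an element of the ground set), so $\mu(A_{0,0})=\mu\left(B_Y^{\red{(0)}}(v,0)\right)=\mu(v)\ge1$ because $\mu$ is a $(\ge1)$-measure; therefore $\mu(A_{0,0})^{2^{-L}}\ge1$. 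Combining, $\mu\left(B_Y^{\red{(2k\cdot L\cdot h)}}(v,\tfrac\Delta4)\right)=\mu(A_{L,0})>\frac12\mu(H)$, which contradicts the fact that the algorithm did not return at line \ref{line:AltCreateClusterReturnXCaln} (that early return is triggered precisely when the minimal-measure ball already exceeds $\frac12\mu(Y)$, and here we reached line \ref{line:firstStepCaln} only after that test failed for the chosen $v$, which by minimality means it fails for \emph{all} vertices).

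There is no real obstacle here; the only point needing a moment's care is the bookkeeping with $H$ versus $Y$ — in \Cref{alg:clan-create-cluster-alt} the procedure is called on $G[Y]$ and one must be consistent that $\mu(H)$ in line \ref{line:firstStepCaln} refers to $\mu(Y)$ (equivalently $\mu$ of the current ground set), so the telescoping, the bound $\mu(Y)^{1-2^{-L}}>\frac12\mu(Y)$, and the early-return test all reference the same quantity. Since the proof is word-for-word that of \Cref{clm:RamseyAltAChoise} with $\mu_M\rightsquigarrow\mu$ and $H\rightsquigarrow Y$, it is safe to state it and skip the details, exactly as the surrounding text does for the companion claims.
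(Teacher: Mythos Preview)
Your proposal is correct and matches the paper's approach exactly: the paper itself states that the proof of this claim is identical to that of \Cref{clm:RamseyAltAChoise} (with $\mu$ in place of $\mu_M$) and skips it. Your observation about the $H$ versus $Y$ bookkeeping is also apt---in \Cref{alg:clan-create-cluster-alt} the ambient graph is $G[Y]$, so $\mu(H)$ in \cref{line:firstStepCaln} and $\mu_M(Y)$ in the early-return test should both be read as $\mu(Y)$; once that is done the argument goes through verbatim.
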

\begin{claim}
	In \cref{line:SecondStepCaln} of \Cref{alg:clan-create-cluster-alt}, there
	is an index $j\in [0,2(k-1)]$ such that
	$\mu\left(A_{a,j+2}\right)\le\mu\left(A_{a,j}\right)\cdot\left(
	\frac{\mu\left(A_{a+1,0}\right)}{\mu\left(A_{a,0}\right)}\right)^{
	\frac{1}{k}}$.
\end{claim}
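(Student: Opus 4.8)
The plan is to follow, essentially verbatim, the proof of \Cref{clm:RamseyAltJChoise}, since \Cref{alg:clan-create-cluster-alt} lays out the nested sets $A_{a,j}$ in exactly the same telescoping fashion as \Cref{alg:CreateClusterAlternative} (the only cosmetic difference being that the clan construction carries no marked set, so one reads $\mu$ for the $\mu_M$ appearing in the claim). The first step I would record is the elementary identity $A_{a,2k}=A_{a+1,0}$: by definition $A_{a,j}=B_{Y}^{\red{((2k\cdot a+j)\cdot h)}}\!\left(v,(a+\tfrac{j}{2k})\cdot\tfrac{\Delta}{4L}\right)$, so substituting $j=2k$ yields hop-budget $(2k(a+1))\cdot h$ and radius $(a+1)\cdot\tfrac{\Delta}{4L}$, which is precisely the definition of $A_{a+1,0}$. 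Consequently the $k$ consecutive ratios $\tfrac{\mu(A_{a,2})}{\mu(A_{a,0})},\tfrac{\mu(A_{a,4})}{\mu(A_{a,2})},\dots,\tfrac{\mu(A_{a,2k})}{\mu(A_{a,2(k-1)})}$ multiply to $\tfrac{\mu(A_{a+1,0})}{\mu(A_{a,0})}$.

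Next I would invoke the pigeonhole / AM--GM step. Let $j\in\{0,2,4,\dots,2(k-1)\}$ be the even index minimizing $\tfrac{\mu(A_{a,j+2})}{\mu(A_{a,j})}$ (there are exactly $k$ such indices, matching the $k$ ratios). Then the minimal ratio is at most the geometric mean of all $k$ of them, i.e.
\[
\frac{\mu(A_{a,j+2})}{\mu(A_{a,j})}\;\le\;\left(\prod_{\ell=0}^{k-1}\frac{\mu(A_{a,2\ell+2})}{\mu(A_{a,2\ell})}\right)^{\frac1k}=\left(\frac{\mu(A_{a+1,0})}{\mu(A_{a,0})}\right)^{\frac1k}.
\]
Rearranging gives $\mu(A_{a,j+2})\le\mu(A_{a,j})\cdot\left(\tfrac{\mu(A_{a+1,0})}{\mu(A_{a,0})}\right)^{\frac1k}$, and since $j$ is even with $j\le 2(k-1)$ it lies in $[0,2(k-1)]$, which is the required range. (The fraction is well defined because $v\in A_{a,0}$ and $\mu$ is a $(\ge1)$-measure, so $\mu(A_{a,0})\ge 1>0$; the ratios are all $\ge 1$ by monotonicity of $\mu$, though this last fact is not needed.)

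The argument is completely routine, so there is no real obstacle. The only two points I would double-check carefully are the index bookkeeping $A_{a,2k}=A_{a+1,0}$ — which is what makes the product telescope to exactly the quantity $\tfrac{\mu(A_{a+1,0})}{\mu(A_{a,0})}$ appearing in the statement — and that $a$ is already fixed in \cref{line:firstStepCaln} when this step is reached, so here $j$ ranges only over $[0,2k]$ and every set $A_{a,j}$ stays inside the ball $B_Y^{\red{(2k\cdot L\cdot h)}}(v,\tfrac{\Delta}{4})$ that the algorithm has already tested at \cref{line:SecondStepCaln}'s predecessor, keeping all hop-budgets within the $\red{O(k\cdot L\cdot h)}=\red{O(k\cdot\log\log\mu(Y))\cdot h}$ allowance.
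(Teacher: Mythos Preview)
Your proposal is correct and follows exactly the approach the paper intends: it explicitly states that the proof is identical to that of \Cref{clm:RamseyAltJChoise}, and what you wrote is precisely that argument—pick the even $j$ minimizing the ratio, telescope the $k$ ratios to $\mu(A_{a+1,0})/\mu(A_{a,0})$ using $A_{a,2k}=A_{a+1,0}$, and conclude. Your remark that the $\mu_M$ in the claim's statement should really be $\mu$ (since \Cref{alg:clan-create-cluster-alt} carries no marked set) is also on point.
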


The following observation is analogous to \Cref{obs::diamBoundAlt}, however
the proof is somewhat simpler (as there is no set $M$ from which we have to
choose the center, and we take into account the measure of all vertices).

\begin{observation}
	Every cluster $i$-level cluster $X$ (i.e. cluster on which we called
	\texttt{Hop-constrained-clan-embedding} with scale $2^i$) has diameter at
	most $\diam^{\red{(4k\cdot L\cdot h)}}(G[X])\le 2^{i}$.
\end{observation}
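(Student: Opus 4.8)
The plan is to follow the proof of \Cref{obs::diamBoundAlt} essentially verbatim, only replacing the Ramsey algorithm and its $\mu_M$-weighted balls by the clan algorithm \texttt{Hop-constrained-clan-embedding} and the plain $\mu$-weighted balls of \texttt{Clan-create-cluster-alt}. First I would treat the base case: the initial invocation uses scale $2^\phi$ with $\phi=\lceil\log_2\diam\rhop{h}(G)\rceil$ and runs on $X=V$, and since allowing $4kLh\ge h$ hops can only decrease path weights, $\diam^{\red{(4k\cdot L\cdot h)}}(G[V])\le\diam\rhop{h}(G)\le 2^\phi$. Every later cluster $X$ processed at scale $2^i$ was returned by a call to \texttt{Clan-create-cluster-alt} with $\Delta=2^{i+1}$ on some ambient induced graph $G[Y]$, so I would split on the two ways that procedure can return.

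If it returns at \cref{line:AltCreateClusterReturnXCaln}, then $X=Y$; the center $v$ was picked to \emph{minimize} $\mu\bigl(B_Y^{\red{(2k\cdot L\cdot h)}}(v,\tfrac14\Delta)\bigr)$ and the \texttt{if}-guard forces this minimum to exceed $\tfrac12\mu(Y)$, so \emph{every} $u\in Y$ has $\mu\bigl(B_Y^{\red{(2k\cdot L\cdot h)}}(u,\tfrac14\Delta)\bigr)>\tfrac12\mu(Y)$. Hence for any $u,w\in Y$ the balls $B_Y^{\red{(2k\cdot L\cdot h)}}(u,\tfrac14\Delta)$ and $B_Y^{\red{(2k\cdot L\cdot h)}}(w,\tfrac14\Delta)$ intersect (otherwise their $\mu$-measures would sum to more than $\mu(Y)$); choosing $z$ in the intersection and concatenating the two witnessing paths, whose vertices all lie in $Y=X$, gives $d_{G[X]}^{\red{(4k\cdot L\cdot h)}}(u,w)\le\tfrac14\Delta+\tfrac14\Delta=\tfrac12\Delta=2^i$.

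Otherwise the procedure returns $(A_{a,j},A_{a,j+1},A_{a,j+2})$ with $a\in[0,L-1]$, $j\in[0,2(k-1)]$, and $X=A_{a,j+1}=B_{Y}^{\red{((2k\cdot a+j+1)\cdot h)}}\bigl(v,(a+\tfrac{j+1}{2k})\tfrac{\Delta}{4L}\bigr)$. The only content is the arithmetic $2ka+j+1\le 2k(L-1)+(2k-1)+1=2kL$ for the hop count and $(a+\tfrac{j+1}{2k})\tfrac{\Delta}{4L}<(a+1)\tfrac{\Delta}{4L}\le L\cdot\tfrac{\Delta}{4L}=\tfrac{\Delta}{4}=2^{i-1}$ for the radius, so $L$ cancels in both. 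Since a prefix of an $\ell$-hop path of weight $\le r$ is again an $\ell$-hop path of weight $\le r$, the whole witnessing path from $v$ to any point of $X$ stays inside the ball $X$; therefore, for $u,w\in X$, concatenating the $v$-to-$u$ and $v$-to-$w$ witnesses produces a path from $u$ to $w$ that lies in $X$, uses at most $2(2ka+j+1)h\le 4kLh$ hops, and has weight $<2\cdot 2^{i-1}=2^i$, which gives $\diam^{\red{(4k\cdot L\cdot h)}}(G[X])\le 2^i$.

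I expect no genuine obstacle: the statement is pure bookkeeping of hop budgets and radii, identical to \Cref{obs::diamBoundAlt}. The single point worth stating explicitly --- the analogue of the subtlety already implicit in \Cref{obs:diamBound} --- is that $\diam$ here is the \emph{strong} diameter, so one must observe that the hop-bounded shortest paths defining the balls $A_{a,j+1}$ (resp. the intersecting balls in the degenerate case) are internal to the cluster that is returned, which is what licenses concatenating them into a path lying inside $G[X]$ rather than merely inside $G[Y]$.
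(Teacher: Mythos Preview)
Your approach is correct and mirrors the paper's (omitted) argument, but there is one indexing slip you should fix. In the clan algorithm the recursive call of \texttt{Hop-constrained-clan-embedding} is made on $\overline{X}_q$, not on $X_q$ (see \Cref{alg:hier-Clan}); this is the one structural difference from the Ramsey case you are copying. Hence the cluster $X$ processed at scale $2^i$ is the \emph{outer} set $A_{a,j+2}$ returned by \texttt{Clan-create-cluster-alt} at scale $\Delta=2^{i+1}$, not the middle set $A_{a,j+1}$. Your arithmetic still goes through after replacing $j+1$ by $j+2$: the hop count satisfies $2ka+j+2\le 2k(L-1)+2(k-1)+2=2kL$, and the radius satisfies $(a+\tfrac{j+2}{2k})\tfrac{\Delta}{4L}\le (L-1+1)\tfrac{\Delta}{4L}=\tfrac{\Delta}{4}=2^{i-1}$ (now with $\le$ rather than strict inequality), so the concatenation argument yields the same $4kLh$ hop bound and $2^i$ weight bound. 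The degenerate case and the strong-diameter remark are handled correctly.
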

The case where the cluster is returned in
\cref{line:AltCreateClusterReturnUsual} is identical (to the proof of
\Cref{obs::diamBoundAlt}). For the case where we returned the cluster in
\cref{line:AltCreateClusterReturnXCaln} it holds that for every $v\in Y$,
$\mu\left(B_{G[Y]}^{\red{(2k\cdot L\cdot
h)}}(v,\frac14\Delta)\right)>\frac12\cdot\mu(Y)$ (as $v$ is a minimizer of
the measure). Hence for every two points $x,y\in Y$, the balls
$B_{G[Y]}^{\red{(2k\cdot L\cdot h)}}(x,\frac14\Delta),B_{G[Y]}^{\red{(2k\cdot
L\cdot h)}}(y,\frac14\Delta)$ intersect and hence $d_{G[Y]}\rhop{4k\cdot
L\cdot h}(x,y)\le\frac{\Delta}{2}=2^{i}$. In particular, $\diam^{\red{(4k\cdot
L\cdot h)}}(G[X])\le 2^{i}$.

The proof of the following lemma follows the same lines as
\Cref{lem:ClanDistortion}, and we will skip it.
\begin{lemma}\label{lem:ClanDistortionAlt}
	The clan embedding $(f,\chi)$ has hop-distortion\\\phantom{a}\hfill
	$(O(k\cdot \log\log\mu(V)),\red{O(k\cdot\log\log\mu(V))},\red{h})$.
\end{lemma}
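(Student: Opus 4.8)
The plan is to prove, by induction on the scale $i$, the invariant that the call \texttt{Hop-constrained-clan-embedding}$(G[X],\mu,\red{h},k,2^i)$, run with \texttt{Clan-create-cluster-alt} in place of \texttt{Clan-create-cluster}, returns a clan embedding $(f,\chi)$ into an ultrametric $U$ such that, writing $L=\lceil 1+\log\log\mu(V)\rceil$, (i) for all $u,v\in X$, $d_{G[X]}^{\red{(4kL\cdot h)}}(u,v)\le\min_{u'\in f(u),v'\in f(v)}d_U(u',v')$, and (ii) for all $u,v\in X$, $\min_{v'\in f(v)}d_U(v',\chi(u))\le 8kL\cdot d_{G[X]}\rhop{h}(u,v)$. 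Since every subgraph $Y$ arising in the recursion has $\mu(Y)\le\mu(V)$, every hop budget $2k\lceil 1+\log\log\mu(Y)\rceil h$ used inside the algorithm is at most $2kLh$, so this invariant applied to the top-level call on $G[V]$ gives exactly the claimed hop-distortion $(O(k\log\log\mu(V)),\red{O(k\log\log\mu(V))},\red{h})$. As in \Cref{lem:ClanDistortion}, the base case $i=0$ (a singleton) is trivial, and the case where \texttt{Clan-create-cluster-alt} returns the trivial cover $(X,X,X)$ in \cref{line:AltCreateClusterReturnXCaln} is immediate: then $f=f_1$, $\chi=\chi_1$, the fresh root with label $2^i$ only increases distances, and the bound follows from the induction hypothesis on the recursive call with scale $2^{i-1}$.

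For the non-trivial inductive step I would follow the proof of \Cref{lem:ClanDistortion} essentially verbatim, substituting the level-dependent quantities of \Cref{alg:clan-create-cluster} by those of \Cref{alg:clan-create-cluster-alt}: the padding radius becomes $\rho_i:=\frac{2^i}{8kL}$ (the radius increment from $X_q=A_{a_q,j_q+1}$ to $\overline{X}_q=A_{a_q,j_q+2}$), the corresponding hop increment is a single $\red{h}$, and each created cluster $\overline{X}_q=A_{a_q,j_q+2}=B_{Y_{q-1}}^{\red{((2ka_q+j_q+2)h)}}\!\big(v_q,(a_q+\tfrac{j_q+2}{2k})\tfrac{2^i}{4L}\big)$ satisfies $2ka_q+j_q+2\le 2kL$ and radius $\le 2^{i-2}$, since $a_q\le L-1$ and $j_q\le 2(k-1)$. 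The lower bound (i) then splits as before: if $u,v$ lie in no common cluster, their least common ancestor in $U$ is the root, and $d_{G[X]}^{\red{(4kL\cdot h)}}(u,v)\le\diam^{\red{(4kL\cdot h)}}(G[X])\le 2^i$ by the (unlabeled) diameter observation stated just before \Cref{lem:ClanDistortionAlt}; if they share a cluster $\overline{X}_q$, then $d_{G[X]}^{\red{(4kL\cdot h)}}(u,v)\le d_{G[\overline{X}_q]}^{\red{(4kL\cdot h)}}(u,v)\le d_{U_q}(u',v')$ by the induction hypothesis.

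The heart of the argument, exactly as in \Cref{lem:ClanDistortion}, is the upper bound (ii). If $d_{G[X]}\rhop{h}(u,v)>\rho_i$ then $\min_{v'\in f(v)}d_U(v',\chi(u))\le 2^i=8kL\cdot\rho_i<8kL\cdot d_{G[X]}\rhop{h}(u,v)$. Otherwise, letting $q$ be the minimal index with $u\in X_q$ (so $\chi(u)=\chi_q(u)\in f_q(u)$), I would show $B:=B_{G[X]}\rhop{h}(u,\rho_i)\subseteq\overline{X}_q$: first $B\subseteq Y_{q-1}$, because if some $z\in B$ lay in $\underline{X}_{q'}=A_{a_{q'},j_{q'}}$ for a minimal $q'<q$ then $B\subseteq Y_{q'-1}$, and the triangle inequality on hop-bounded distances would force $u\in A_{a_{q'},j_{q'}+1}=X_{q'}$, contradicting minimality of $q$; then, since $u\in X_q=A_{a_q,j_q+1}$ and $d_{G[Y_{q-1}]}\rhop{h}(u,z)=d_{G[X]}\rhop{h}(u,z)\le\rho_i$ for each $z\in B$, the triangle inequality gives $d_{G[Y_{q-1}]}^{\red{((2ka_q+j_q+2)h)}}(v_q,z)\le(a_q+\tfrac{j_q+1}{2k})\tfrac{2^i}{4L}+\rho_i=(a_q+\tfrac{j_q+2}{2k})\tfrac{2^i}{4L}$, i.e. $z\in A_{a_q,j_q+2}=\overline{X}_q$. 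Hence $v\in\overline{X}_q$, the shortest $\red{h}$-hop $u$–$v$ path stays inside $\overline{X}_q$, and the induction hypothesis on $G[\overline{X}_q]$ with scale $2^{i-1}$ yields $\min_{v'\in f_q(v)}d_U(v',\chi_q(u))\le 8kL\cdot d_{G[\overline{X}_q]}\rhop{h}(u,v)=8kL\cdot d_{G[X]}\rhop{h}(u,v)$. The assumption $\diam\rhop{h}(G)<\infty$ is finally discharged exactly as in \Cref{subsec:inftyDistance}. The main obstacle is the same as in \Cref{lem:ClanDistortion} — tracking hop-bounded distances through the iterative deletions in \texttt{clan-cover} — with the one new subtlety that the per-level hop budget is now a fixed $2kLh$ independent of $i$, so one must verify that the $L$ "strip" levels together with the $2k$ ball-growing steps never exhaust it; this is precisely what the bounds $2ka_q+j_q+2\le 2kL$ and $\diam^{\red{(4kL\cdot h)}}(G[X])\le 2^i$ guarantee.
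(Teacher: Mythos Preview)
Your proposal is correct and follows exactly the approach the paper indicates (it simply states that the proof ``follows the same lines as \Cref{lem:ClanDistortion}'' and omits details). The one subtlety you flagged and handled correctly is that \Cref{alg:clan-create-cluster-alt} recomputes $L=\lceil 1+\log\log\mu(Y)\rceil$ locally at each call, so the actual radius/hop increments use $L_Y\le L$; your global choice $\rho_i=\tfrac{2^i}{8kL}$ is thus no larger than the algorithm's local increment, which is precisely what the padding and distortion inequalities need.
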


Our next goal is to prove the path distortion guarantee  (analogous to
\Cref{lem:ClanPathDistortion}).
We observe the following:
\begin{observation}\label{obs:halfTheMeasure}
	Consider a cluster $\overline{A}$ returned by the call\\
	\texttt{Clan-create-cluster-alt}$(G[Y],\mu,\red{h},k,2^i)$. Then either
	$\overline{A}=\underline{A}=Y$, or $\mu(\overline{A})\le \frac12\mu(Y)$.
\end{observation}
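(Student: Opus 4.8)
The plan is a short two-case split according to which of the two \texttt{return} statements of \texttt{Clan-create-cluster-alt} is executed on the input $(G[Y],\mu,\red{h},k,2^i)$. Write $\Delta=2^i$ and $L=\lceil 1+\log\log\mu(Y)\rceil$, and let $v$ be the center chosen by the procedure, i.e. a vertex minimizing $\mu\bigl(B_{Y}^{\red{(2k\cdot L\cdot h)}}(v,\tfrac14\Delta)\bigr)$.

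\emph{Case A: the guard on the third line holds}, i.e. $\mu\bigl(B_{Y}^{\red{(2k\cdot L\cdot h)}}(v,\tfrac14\Delta)\bigr)>\tfrac12\mu(Y)$ (this in particular always occurs when $Y$ is a singleton, since then every ball equals $Y$). Then the procedure returns $(Y,Y,Y)$, so $\overline{A}=\underline{A}=Y$ and the first alternative of the observation holds.

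\emph{Case B: the guard fails.} Then for the chosen center $v$ we have $\mu\bigl(B_{Y}^{\red{(2k\cdot L\cdot h)}}(v,\tfrac14\Delta)\bigr)\le\tfrac12\mu(Y)$, and the procedure returns $\overline{A}=A_{a,j+2}$ for the indices $a\in[0,L-1]$, $j\in[0,2(k-1)]$ fixed in \cref{line:firstStepCaln,line:SecondStepCaln}. I would finish by showing $A_{a,j+2}\subseteq B_{Y}^{\red{(2k\cdot L\cdot h)}}(v,\tfrac14\Delta)$, from which $\mu(\overline{A})\le\tfrac12\mu(Y)$ is immediate. By definition $A_{a,j+2}=B_{Y}^{\red{((2k\cdot a+j+2)\cdot h)}}\bigl(v,(a+\tfrac{j+2}{2k})\cdot\tfrac{\Delta}{4L}\bigr)$; since $a\le L-1$ and $j+2\le 2k$, the radius is at most $L\cdot\tfrac{\Delta}{4L}=\tfrac14\Delta$ and the hop bound is at most $2k(L-1)\cdot h+2k\cdot h=2k\cdot L\cdot h$, so the inclusion follows from the monotonicity of hop-bounded balls $B_{Y}^{\red{(\cdot)}}(v,\cdot)$ in both the hop parameter and the radius.

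There is essentially no obstacle here: the entire content is the containment in Case B, and that reduces to checking the two elementary inequalities $a+\tfrac{j+2}{2k}\le L$ and $2k\cdot a+j+2\le 2k\cdot L$ for every admissible pair $(a,j)$, both of which are forced by the index ranges hard-coded into the procedure.
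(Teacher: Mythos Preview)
Your proof is correct and essentially identical to the paper's: a two-case split on which \texttt{return} is executed, and in the nontrivial case the containment $A_{a,j+2}\subseteq B_{Y}^{\red{(2kLh)}}(v,\tfrac14\Delta)$ via the index bounds $a\le L-1$, $j+2\le 2k$. The paper phrases the containment by passing through the intermediate set $A_{L-1,2k}$, but the arithmetic is the same.
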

\begin{proof}
	Consider the execution of the \texttt{Clan-create-cluster-alt} procedure
	on input $(G[Y],\mu,\red{h},k,2^i)$.
	If the algorithm halts in \cref{line:AltCreateClusterReturnXCaln}, then
	$\overline{A}=\underline{A}=Y$ and we are done.
	Otherwise, we have that $\mu\left(B_{G[Y]}^{\red{(2k\cdot L\cdot
	h)}}(v,\frac14\Delta)\right)\le\frac12\cdot\mu(Y)$, while
	$\overline{A}=A_{a,j+2}$ for some $a\le L-1$ and $j\le 2(k-1)$. Hence $
	A_{a,j+2}\subseteq A_{L-1,2k}=B_{G[Y]}^{\red{((2k\cdot(L-1)+2k)\cdot
	h)}}\left(v,(L-1+\frac{2k}{2k})\cdot\frac{\Delta}{4L}\right)=B_{G[Y]}^{
	\red{(2k\cdot
	L\cdot h)}}\left(v,\frac{\Delta}{4}\right)$, implying that
	$\mu(\overline{A})\le \frac12\cdot\mu(Y)$, as required.
\end{proof}
Given \Cref{obs:halfTheMeasure}, the proof of the following lemma follows the
exact same lines as \Cref{lem:ClanPathDistortion} (replacing $O(k)$ with
$O(k\cdot\log\log\mu(X))$), and we will skip it (note that
\Cref{rem:AspectRatio} will also hold).
\begin{lemma}\label{lem:ClanPathDistortionAlt}
	The one-to-many embedding $f$ has hop path distortion\\
	$\left(\redno{O(k\cdot\log \mu(V)\cdot\log\log \mu(V))},\red{h}\right)$.
\end{lemma}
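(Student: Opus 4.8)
The plan is to follow the proof of Lemma~\ref{lem:ClanPathDistortion} essentially verbatim, with the single substantive change that the padding radius used inside a cluster is now $\rho_i=\Theta(2^i/(k\cdot L))$ with $L=\lceil 1+\log\log\mu(V)\rceil$ rather than $\Theta(2^i/k)$, and that the measure drop between successive recursion levels is a factor $\tfrac12$ (via Observation~\ref{obs:halfTheMeasure}) rather than $\tfrac23$. Concretely, I would fix $t=\Theta(k\cdot L)$ and prove by induction on the scale $i$ and on $|X|$ that \texttt{Hop-constrained-clan-embedding} on input $(G[X],\mu,\red{h},k,2^i)$ (using \texttt{Clan-create-cluster-alt}) returns a one-to-many embedding $f$ into an ultrametric $U$ with hop path distortion $(2t\cdot\log_2\mu(X),\red{h})$; since $L=O(\log\log\mu(V))$, this yields hop path distortion $O(k\cdot\log\mu(V)\cdot\log\log\mu(V))$ as claimed.

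The base case $i=0$ (a single vertex) is trivial. For the inductive step, let $\overline{X}_1,\dots,\overline{X}_s$ be the clusters produced by \texttt{clan-cover}, and recall $Y_1=X\setminus\underline{X}_1$. If $s=1$ then \texttt{Clan-create-cluster-alt} returned $(X,X,X)$, the algorithm recurses on $G[X]$ at scale $2^{i-1}$, and we conclude by induction. If $s>1$, then by Observation~\ref{obs:halfTheMeasure} we have $\mu(\overline{X}_1)\le\tfrac12\mu(X)$ (the alternative $\overline{X}_1=\underline{X}_1=X$ would force $Y_1=\emptyset$, i.e. $s=1$), and also $|Y_1|<|X|$ since $\underline{X}_1\neq\emptyset$. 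Exactly as in Lemma~\ref{lem:ClanPathDistortion}, $f_1$ (built on $\overline{X}_1$ at scale $2^{i-1}$) inductively has path distortion at most $2t\cdot\log_2\mu(\overline{X}_1)\le 2t\cdot(\log_2\mu(X)-1)$, while $\tilde f=\bigcup_{q\ge2}f_q$ (the embedding obtained by running the algorithm on $G[Y_1]$ at the \emph{same} scale $2^i$) inductively has path distortion at most $2t\cdot\log_2\mu(Y_1)\le 2t\cdot\log_2\mu(X)$. Note that here, unlike in Lemma~\ref{lem:ClanPathDistortion}, no ``swap of roles'' between $\overline{X}_1$ and $Y_1$ is needed, since the factor-$\tfrac12$ drop always applies to $\overline{X}_1$.

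Given an $\red{h}$-respecting path $P=(x_0,\dots,x_m)$, I would break it into maximal sub-paths each internal to one of $\overline{X}_1,Y_1$, alternating between the two, and pick a witness vertex $c_q$ in each block exactly as in Lemma~\ref{lem:ClanPathDistortion}. The structural fact reused is: if $u\notin Y_1$ (so $u\in\underline{X}_1$) and $v\notin\overline{X}_1$, then $d^{(\red{h})}_G(u,v)$ exceeds the radius gap between $\underline{X}_1=A_{a,j}$ and $\overline{X}_1=A_{a,j+2}$, which is $\Theta(2^i/(kL))$; choosing $t$ so that $2^i/t$ is at most this gap, each crossing of $P$ between the two clusters contributes at least $2^i/t$ to $\sum_j w(x_j,x_{j+1})$, i.e. at least a $1/t$ fraction of the maximum possible ultrametric distance $2^i$ at the current level. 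Summing the two telescoping contributions (the ``$+2^i$'' boundary terms per block together with the inductive bounds for the interiors), and using that each block in $\overline{X}_1$ pays only $2t\log_2(\tfrac12\mu(X))+2\cdot 2^i\le 2t\log_2\mu(X)\cdot\sum_{j}w(x_j,x_{j+1})$ over that block, gives $\sum_j d_U(x'_j,x'_{j+1})\le 2t\cdot\log_2\mu(X)\cdot\sum_j w(x_j,x_{j+1})$, completing the induction.

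The bookkeeping requiring the most care is the hop accounting: the crossing bound must be phrased in terms of $\red{h}$-hop distance, so one has to verify that the balls $A_{a,j}$ leave enough hop budget---a $+\red{h}$ slack per step, precisely what the $(2ka+j)\red{h}$ hop indices in \texttt{Clan-create-cluster-alt} provide---for the triangle-inequality steps, just as in the proofs of Lemmas~\ref{lem:RamseyDistortion} and~\ref{lem:ClanDistortion}. I would also record, as in Remark~\ref{rem:AspectRatio}, that when $\phi\ll\log\mu(V)$ one gets the sharper bound $O(k\cdot\phi\cdot\log\log\mu(V))$ by inducting with the quantity $2t\cdot i$ in place of $2t\cdot\log_2\mu(X)$.
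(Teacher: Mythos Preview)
Your proposal is correct and follows essentially the same approach as the paper, which simply notes that the proof ``follows the exact same lines as \Cref{lem:ClanPathDistortion}'' using \Cref{obs:halfTheMeasure} and replacing $O(k)$ by $O(k\cdot\log\log\mu(X))$. Your observation that no swap of roles between $\overline{X}_1$ and $Y_1$ is needed (since \Cref{obs:halfTheMeasure} always gives $\mu(\overline{X}_1)\le\tfrac12\mu(X)$ when $s>1$) is a correct and slightly cleaner variant of the argument.
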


	Finally, we turn to bounding $\mathbb{E}_{x\sim\mu}[|f(x)|]$, which is the
	main difference between the two algorithms.
Recall that for a cluster $\underline{X}$, we denote by $H(X)$ the graph at
the time of the creation of $\underline{X},X,\overline{X}$.
\begin{lemma}\label{lem:ClanMesureBoundAly}
	It holds that $\mathbb{E}_{v\sim V}[|f(v)|]\le\mu(V)^{1+\frac{1}{k}}$.
\end{lemma}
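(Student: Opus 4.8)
The plan is to reprove the analogue of Lemma~\ref{lem:ClanMesureBound} for the alternative construction, i.e., to establish by induction on $|X|$ and the scale $2^i$ that every cluster $X$ created during the run of \texttt{Hop-constrained-clan-embedding} (using \texttt{Clan-create-cluster-alt}) satisfies $\mathbb{E}_{x\sim X}[|f(x)|]\le\mu(X)^{1+\frac1k}$; applying this to $X=V$ at scale $2^\phi$ gives the lemma. Unlike the polynomial-aspect-ratio case, the inductive statement here should be the clean bound $\mu(X)^{1+\frac1k}$ directly (with no ball-of-radius $2^{i-3}$ correction term), because the \texttt{Clan-create-cluster-alt} procedure chooses the radii so that $A_{a+1,0}$ — the ``outer control ball'' — is itself contained in the next cluster scale, exactly as in the Ramsey analogue (cf.\ the argument for \Cref{lem:frac-survival-alt}).

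First I would set up the induction exactly as in the proof of \Cref{lem:frac-survival-alt}: for the base case where $X$ is a singleton (or $i=0$) the embedding is a single vertex and $\mathbb{E}_{x\sim X}[|f(x)|]=\mu(X)\le\mu(X)^{1+\frac1k}$. For the inductive step, consider the call to \texttt{clan-cover}$(G[X],\mu,\red{h},k,2^i)$, which produces clusters $\overline{X}_1,\dots,\overline{X}_s$ with sub-clusters $\underline{X}_q\subseteq X_q\subseteq\overline{X}_q$ and $Y_q=Y_{q-1}\setminus\underline{X}_q$. If $s=1$ and \texttt{Clan-create-cluster-alt} returned via \cref{line:AltCreateClusterReturnXCaln} (so $\overline{X}_1=\underline{X}_1=X$), then we recurse once on $G[X]$ with scale $2^{i-1}$ and are done by induction. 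Otherwise, treat the process as a recursive call on $\overline{X}_1$ (scale $2^{i-1}$) followed by a call on $G[Y_1]$ (scale $2^i$), mirroring the device used in \Cref{lem:frac-survival-alt}: since $|\overline{X}_1|,|Y_1|<|X|$ (note $\underline{X}_1\ne\emptyset$, so $Y_1\subsetneq X$), the induction hypothesis gives $\mathbb{E}_{x\sim\overline{X}_1}[|f_1(x)|]\le\mu(\overline{X}_1)^{1+\frac1k}$ and $\mathbb{E}_{x\sim Y_1}[|\tilde f(x)|]\le\mu(Y_1)^{1+\frac1k}$ where $\tilde f=\cup_{q\ge2}f_q$, and $f=f_1\cup\tilde f$.

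The heart of the argument is then to combine these two bounds. Using the choice of indices $a,j$ in \cref{line:firstStepCaln} and \cref{line:SecondStepCaln} — which, as noted after \Cref{obs:halfTheMeasure}, yield $\mu(\overline{X}_1)=\mu(A_{a,j+2})\le\mu(A_{a,j})\cdot\big(\mu(A_{a+1,0})/\mu(A_{a,0})\big)^{1/k}$ together with $\mu(A_{a,0})\ge\mu(A_{a+1,0})^2/\mu(X)$ and the containment $A_{a+1,0}\supseteq A_{a,j}=\underline{X}_1$ — one gets, exactly as in the chain of inequalities displayed in \Cref{lem:frac-survival-alt} (with the direction of the clan bound reversed),
\[
\mu(\overline{X}_1)^{1+\frac1k}\;\le\;\mu(\underline{X}_1)\cdot\mu(X)^{\frac1k}\,.
\]
Plugging this in, and using $\mu(\overline{X}_1)\le\tfrac12\mu(X)$ (so $\mu(Y_1)=\mu(X)-\mu(\underline{X}_1)$ with $\mu(\underline{X}_1)\le\mu(\overline{X}_1)\le\tfrac12\mu(X)$), gives
\[
\mathbb{E}_{x\sim X}[|f(x)|]\;=\;\mathbb{E}_{x\sim\overline{X}_1}[|f_1(x)|]+\mathbb{E}_{x\sim Y_1}[|\tilde f(x)|]\;\le\;\mu(\underline{X}_1)\mu(X)^{\frac1k}+\mu(Y_1)^{1+\frac1k}\,,
\]
and one checks the elementary inequality $\mu(\underline{X}_1)\mu(X)^{1/k}+(\mu(X)-\mu(\underline{X}_1))^{1+1/k}\le\mu(X)^{1+1/k}$ (valid for $0\le\mu(\underline{X}_1)\le\mu(X)$, by concavity/convexity of $x\mapsto x^{1+1/k}$ — the function $g(t)=t\,\mu(X)^{1/k}+(\mu(X)-t)^{1+1/k}$ has $g(0)=\mu(X)^{1+1/k}$ and $g'(t)=\mu(X)^{1/k}-(1+\tfrac1k)(\mu(X)-t)^{1/k}\ge 0$ for $t\le\mu(X)$, hmm — actually one wants $g$ decreasing, so recheck the constant; the correct statement, as in \cite{BM04multi,FL21}, is that the combination telescopes because $\mu(\overline{X}_1)^{1/k}\le\mu(X)^{1/k}$ and $\mu(\underline X_1)+\mu(Y_1)=\mu(X)$). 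I would carry out this last elementary estimate carefully, since it is the only genuinely new computation. The main obstacle is bookkeeping: making sure the radii in \texttt{Clan-create-cluster-alt} are set so that the ``next scale cluster'' really does contain $A_{a+1,0}$ (this is why the clean $\mu(X)^{1+1/k}$ invariant works without a ball-correction term, exactly the phenomenon described in the paper overview for the $\log\log n$-hop-stretch construction), and verifying the convexity inequality for combining the two recursive contributions; both are routine given the parallel arguments already in the paper.
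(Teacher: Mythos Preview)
Your approach is exactly the paper's: the same induction hypothesis $\mathbb{E}_{x\sim X}[|f(x)|]\le\mu(X)^{1+1/k}$, the same two-way split into $\overline{X}_1$ and $Y_1$, and the same key inequality $\mu(\overline{X}_1)^{1+1/k}\le\mu(\underline{X}_1)\cdot\mu(X)^{1/k}$. Two small corrections. First, the containment you need to finish that key inequality is $A_{a+1,0}\supseteq A_{a,j+2}=\overline{X}_1$ (which holds since $j\le 2(k-1)$ gives $j+2\le 2k$), not merely $A_{a+1,0}\supseteq\underline{X}_1$; with only the weaker inclusion the chain does not close. Second, the final combination needs no convexity at all: simply bound $\mu(Y_1)^{1+1/k}\le\mu(Y_1)\cdot\mu(X)^{1/k}$ (as $Y_1\subseteq X$) and then $\mu(\underline{X}_1)\mu(X)^{1/k}+\mu(Y_1)\mu(X)^{1/k}=(\mu(\underline{X}_1)+\mu(Y_1))\mu(X)^{1/k}=\mu(X)^{1+1/k}$, since $\underline{X}_1$ and $Y_1$ partition $X$. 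That is precisely the ``telescoping'' you gestured at in your last parenthetical, and it is all the paper does.
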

\begin{proof}
	We prove the lemma by induction on $i$ and $|X|$.
	Specifically, the induction hypothesis is that for an $i$-level cluster
	$X$, it holds that $\mathbb{E}_{x\sim
	X}[|f(x)|]\le\mu(X)^{1+\frac{1}{k}}$.
	Consider first the base case
	where the cluster $X$ created with scale $i=0$, or when
	$X$ is a singleton $\{v\}$. The embedding will be into a single
	vertex. In particular, $\mathbb{E}_{x\sim
	X}[|f(x)|]=\mu(\{v\})\le\mu\left(X\right)^{1+\frac{1}{k}}$.

	Next we assume that the claim holds for every $i-1$-level cluster.
	Consider an $i$-level cluster $X$.
	We called
	 $\texttt{clan-cover}(G[X],\mu,\red{h},k,2^{i})$, and obtained clusters
 	$\overline{X}_{1},\dots,\overline{X}_{s}$. Then for each cluster
 	$\overline{X}_{q}$ the algorithm created a one-to-many embedding $f_q$. If
 	$s=1$, then there is only a single cluster and the lemma will follow by
 	the induction hypothesis on $i-1$.
	Else, we obtain a non-trivial cover.
	Recall that after creating $\overline{X}_1$, the set of remaining vertices
	is denoted $Y_1=X\backslash \underline{X}_1$.
	Note that if we were to call \texttt{clan-cover} on input
	$(G[Y_1],\mu,\red{h},k,2^i)$, the algorithm will first create a cover
	$\overline{X}_2,\dots,\overline{X}_s$ of $Y_1$, and then will recursively
	create one-to-many embeddings for each $\overline{X}_q$.
	Denote by  $\tilde{f}$ the one-to-many embedding created by such a call.
	It holds that $f=f_1\cup\tilde{f}$.
	Therefore, we can analyze the rest of the process as if the algorithm did
	a recursive call on $Y_1$ rather than on each cluster $\overline{X}_q$.
	Since $|\overline{X}_1|,|Y_1|<|X|$, the induction hypothesis implies that
	$\mathbb{E}_{x\sim
	\overline{X}_1}[|f_1(x)|]\le\mu(\overline{X}_1)^{1+\frac{1}{k}}$ and
	$\mathbb{E}_{x\sim Y_1}[|\tilde{f}(x)|]\le\mu(Y_1)^{1+\frac{1}{k}}$.
	As the \texttt{Padded-partition} algorithm did not return a trivial
	partition of $X$, it defined  $A_{a,j}=B_{G[Y]}^{\red{((2k\cdot a+j)\cdot
	h)}}\left(v,(a+\frac{j}{2k})\cdot\frac{\Delta}{4L}\right)$,
	and returned
	$(\underline{X}_1,X_1,\overline{X}_1)=(A_{a,j},A_{a,j+1},A_{a,j+2})$. The
	indices $a,j$ were chosen such that
	$\mu\left(A_{a,0}\right)\ge\frac{\mu\left(A_{a+1,0}\right)^{2}}{\mu(X)}$
	and $\mu\left(A_{a,j+2}\right)\le\mu\left(A_{a,j}\right)\cdot\left(\frac{
	\mu\left(A_{a+1,0}\right)}{\mu\left(A_{a,0}\right)}\right)^{\frac{1}{k}}$,
	implying
	 $$\mu\left(\overline{X}_{1}\right)\le\mu\left(\underline{X}_{1}\right)
	 \cdot\left(\frac{\mu\left(A_{a+1,0}\right)}{\mu\left(A_{a,0}\right)}
	 \right)^{\frac{1}{k}}\le\mu\left(\underline{X}_{1}\right)\cdot\left(
	 \frac{\mu\left(X\right)}{\mu\left(A_{a+1,0}\right)}\right)^{\frac{1}{k}}
	 ~.$$
	It follows that
	\begin{align*}
	\mathbb{E}_{x\sim\overline{X}_{1}}[|f_{1}(x)|] &
	\le\mu(\overline{X}_{1})^{1+\frac{1}{k}}=\mu(\overline{X}_{1})\cdot\mu(
	\overline{X}_{1})^{\frac{1}{k}}\\
	& \le\mu\left(\underline{X}_{1}\right)\cdot\left(\frac{\mu\left(X\right)}{
	\mu\left(A_{a+1,0}\right)}\right)^{\frac{1}{k}}\cdot\mu(
	\overline{X}_{1})^{\frac{1}{k}}\le\mu\left(\underline{X}_{1}\right)\cdot
	\mu\left(X\right)^{\frac{1}{k}}~,
	\end{align*}
	where the last inequality follows as $A_{a+1,0}\supseteq
	A_{a,j+2}=\overline{X}_1$. As $X=\underline{X}_1\cupdot Y_1$, we conclude
	\begin{align*}
	\mathbb{E}_{x\sim X}[|f(x)|] & =\mathbb{E}_{x\sim
	\overline{X}_{1}}[|f_{1}(x)|]+\mathbb{E}_{x\sim
	Y_{1}}[|\tilde{f}(x)|]\le\mu\left(\underline{X}_{1}\right)\cdot\mu\left(X
	\right)^{\frac{1}{k}}+\mu\left(Y_{1}\right)\cdot\mu\left(X\right)^{
	\frac{1}{k}}=\mu\left(X\right)^{1+\frac{1}{k}}~.
	\end{align*}
\end{proof}
\Cref{lem:clanHopUltraMeasureAlt} now follows by combining
\Cref{lem:ClanDistortionAlt}, \Cref{lem:ClanPathDistortionAlt}, and
\Cref{lem:ClanMesureBoundAly}.

An observation that will be useful in some of our applications (specifically
for \Cref{thm:SubgraphAlt} and \Cref{cor:hBoundedSubgraphAlt}) is
the following:
\begin{observation}\label{obs:singleCopyAlt}
	Suppose that  \Cref{lem:clanHopUltraMeasureAlt} is applied on a measure
	$\mu$ where for some vertex $r\in V$, $\mu(r)>\frac12\mu(V)$. Then in the
	resulting clan embedding $(f,\chi)$ it will hold that $|f(r)|=1$.
\end{observation}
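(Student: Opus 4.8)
The plan is to follow the proof of \Cref{obs:singleCopy} almost verbatim, with one twist: in \texttt{Clan-create-cluster-alt} the cluster center is chosen to \emph{minimize} the measure of its ball rather than to maximize it, so the role played there by the \emph{first} created cluster is played here by the \emph{halting} cluster (the one returned at \cref{line:AltCreateClusterReturnXCaln}). The whole argument is an induction on the scale $i$ used in the recursive calls to \texttt{Hop-constrained-clan-embedding}; the base case ($i=0$, or $|X|=1$) is immediate since then the embedding is a single vertex and $|f(r)|=1$ trivially. Throughout, note that if $\mu(r)>\tfrac12\mu(V)$ then $\mu(r)>\tfrac12\mu(Y)$ for every subset $Y\subseteq V$ that arises, since $\mu(Y)\le\mu(V)$; this is the only property of $\mu$ we use.

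For the inductive step, consider a call on $(G[X],\mu,\red{h},k,2^i)$ with $r\in X$, and track the sets $Y_0=X\supseteq Y_1\supseteq\cdots$ produced inside the \texttt{clan-cover} loop. First I would show, by induction on the iteration index $q$, that $r\in Y_{q-1}$ and $r\notin\overline{X}_q$ for every iteration $q$ that does \emph{not} trigger \cref{line:AltCreateClusterReturnXCaln}. Indeed, in such an iteration the chosen center $v_q$ satisfies $\mu\!\left(B_{Y_{q-1}}^{\red{(2k\cdot L\cdot h)}}(v_q,\tfrac14\Delta)\right)\le\tfrac12\mu(Y_{q-1})<\mu(r)$, so this ball cannot contain $r$; and the returned cluster is $\overline{X}_q=A_{a,j+2}$ with $a\le L-1$ and $j+2\le 2k$, so its hop parameter is $(2ka+j+2)h\le 2kLh$ and its radius is $(a+\tfrac{j+2}{2k})\tfrac{\Delta}{4L}\le\tfrac\Delta4$, giving $\overline{X}_q\subseteq B_{Y_{q-1}}^{\red{(2k\cdot L\cdot h)}}(v_q,\tfrac14\Delta)$. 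Hence $r\notin\overline{X}_q\supseteq\underline{X}_q$, so $r$ is not removed and $r\in Y_q$.

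Next I would argue that a halting iteration $q^\ast$ must occur while $r$ is still present: non-halting iterations strictly shrink $Y$ (each removes the nonempty $\underline{X}_q$) and never remove $r$, so the loop cannot empty $Y$ without halting. At $q^\ast$ the procedure returns $\overline{X}_{q^\ast}=\underline{X}_{q^\ast}=Y_{q^\ast-1}\ni r$ and sets $Y_{q^\ast}=\emptyset$, so $r$ lies in exactly one of the clusters $\overline{X}_1,\dots,\overline{X}_s$, namely $\overline{X}_{q^\ast}$ — it is in none of the earlier non-halting clusters, and the loop stops after $q^\ast$. Therefore $f(r)=f_{q^\ast}(r)$, and since $f_{q^\ast}$ is produced by a recursive call on $G[\overline{X}_{q^\ast}]$ at scale $2^{i-1}$ with $\mu(r)>\tfrac12\mu(\overline{X}_{q^\ast})$, the induction hypothesis on $i$ gives $|f_{q^\ast}(r)|=1$, hence $|f(r)|=1$, completing the proof.

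The only step requiring any care is the containment $\overline{X}_q\subseteq B_{Y_{q-1}}^{\red{(2k\cdot L\cdot h)}}(v_q,\tfrac14\Delta)$, which is the elementary index arithmetic on $a,j$ above together with reading the test at \cref{line:AltCreateClusterReturnXCaln} correctly (the $\mu_M$ appearing there is a leftover from the Ramsey version and should be read as $\mu$, since the clan construction carries no marked set); everything else is bookkeeping parallel to \Cref{obs:singleCopy}.
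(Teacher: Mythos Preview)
Your proof is correct and follows essentially the same approach as the paper's. The paper's version is terser: it invokes \Cref{obs:halfTheMeasure} directly (which you instead re-derive via the index arithmetic on $a,j$) to conclude that in each non-halting iteration $\mu(\overline{X}_q)\le\tfrac12\mu(Y_{q-1})<\mu(r)$, hence $r\notin\overline{X}_q$, and then states that $r$ belongs only to the final cluster $\overline{X}_s$; you fill in the extra step that a halting iteration must eventually occur, which the paper leaves implicit.
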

\begin{proof}
	Consider an execution of the
	\texttt{Clan-create-cluster-alt}$(G[Y],\mu,\red{h},k,2^i)$
	(\Cref{alg:clan-create-cluster-alt}) procedure where $r\in Y$.
	The algorithm picks a center $v$ that minimizes
	$\mu\left(B_{G[Y]}^{\red{(2k\cdot L\cdot h)}}(v,\frac14\Delta)\right)$. Note
	that every ball containing $r$ has measure strictly larger than
	$\frac12\mu(V)\ge \frac12\mu(Y)$, while every ball that does not contain $r$ has measure
	strictly less than $\frac12\mu(Y)$.
	Thus by \Cref{obs:halfTheMeasure}, either $\underline{A}=Y$, or
	$\mu(\overline{A})\le\frac12\mu(Y)$, which implies $r\notin \overline{A}$.
	It follows that in any call to the \texttt{clan-cover} procedure that
	returns $(\overline{X}_1,\dots,\overline{X}_s)$, $r$ will belong only to
	$\overline{X}_s$. The observation now follows by induction.
\end{proof}

\section{Subgraph preserving one-to-many embedding}

In our clan embeddings
(\Cref{thm:ClanHopUltrametric,thm:ClanHopUltrametricAlt}) we had the
hop-path-distortion property, which ensures that every path $P$ in $G$ has a
``corresponding'' copy $P_T$ in the tree $T$ of similar weight. However, as
$T$ contains many vertices that are not copies of $G$ vertices, it is not
immediately clear to which subgraph of $G$, $P_T$ corresponds. To clarify this
point, we introduce the notion of \emph{Path Tree One-To-Many Embedding}
(\Cref{def:OTM-PathTreeEmbedding}). Here all $T$ vertices are copies of $V$,
and every edge $e\in T$ corresponds to a path $P^T_e$ between its endpoints.
Now, there is a natural way to translate from subgraphs of $T$ back to
subgraphs of $G$. Morally, \Cref{cor:hBoundedSubgraph,cor:hBoundedSubgraphAlt}
below are just a translation of
\Cref{lem:clanHopUltraMeasure,lem:clanHopUltraMeasureAlt} (respectively) to
the language of path tree one-to-many embeddings.

A major drawback of \Cref{cor:hBoundedSubgraph,cor:hBoundedSubgraphAlt} is
that it only applies on $\red{h}$-respecting subgraphs.
Specifically, in order to apply the ``subgraph preservation'' guarantee, the
subgraph $H$ in question must be ``local'' (formally it is not allowed to
contain a $u-v$ path of weight less than $d_G\rhop{h}(u,v)$ ).
This is problematic, as in some of our applications the solution subgraph is
not $\red{h}$-respecting (e.g. in  h.c. group Steiner forest).
On the other hand, clearly one cannot simply drop the $\red{h}$-respecting
condition from  \Cref{cor:hBoundedSubgraph,cor:hBoundedSubgraphAlt} (or
\Cref{thm:ClanHopUltrametric,thm:ClanHopUltrametricAlt}).
Indeed, consider the $n$-path (a graph consisting only of the edges of the
length-$n$ path: $(v_0,\dots,v_{n-1})$) with hop parameter $\red{h}=1$, and
let $H=G$. The distance between every pair of copies $v'_0\in
f(v_0),v'_{n-1}\in f(v_{n-1})$ has to be $\infty$. Thus any ``image'' of the
entire graph has to be of weight $\infty$.

In \Cref{thm:Subgraph,thm:SubgraphAlt} we give a weaker guarantee which can be
applied to arbitrary subgraphs. Specifically, given a subgraph $H$ of $G$, we
construct a (not necessarily connected) subgraph $H'$ of $T$ where for every
$u,v\in H$ such that $\hop_H(u,v)\le\red{h}$, $H'$ contains a pair of copies
$u'\in f(u)$ and $v'\in f(v)$ in the same connected component. Further,
$w_T(H')\le O(k\cdot\log n)\cdot w(H)$.
This guarantee will be sufficient for our applications.

This section is devoted to proving \Cref{thm:Subgraph,thm:SubgraphAlt}, which
ensures that any arbitrary subgraph $H$ of $G$ will have a ``corresponding''
copy in the image.
We begin the section by formally defining Path Tree One-To-Many Embedding
(a very similar definition was given by \cite{HHZ21}). See \Cref{fig:PathTree}
for an illustration.
\begin{definition}[Path Tree One-To-Many
Embedding]\label{def:OTM-PathTreeEmbedding}
	A one-to-many path tree embedding of a weighted graph $G = (V, E, w_G)$
	consists of a dominating one-to-many embedding $f$ into a tree $T$ such
	that  $f(V)=V(T)$, and associated paths $\{P^T_{e}\}_{e\in E(T)}$, where
	for every edge $e=\{u',v'\}\in E(T)$ where $u'\in f(u)$ and $v'\in f(v)$,
	$ P^T_{e}$ is a path from $u$ to $v$ in $G$ of weight at most
	$w_T(\{u',v'\})$.

	For every two vertices $u',v'\in V(T)$, let $e_1,e_2,\dots,e_s$ be the
	unique simple path between them in $T$. We denote:
	$P^T_{u'v'}=P^T_{e_1}\circ P^T_{e_2}\circ\cdots\circ P^T_{e_s}$ to be the
	concatenation of all the associated paths (note that $P^T_{u'v'}$ is not
	necessarily simple).
	We say that $f$ has hop-bound $\red{\beta}$ if for every $u',v'\in V(T)$
	which are connected in $T$, $\hop(P^T_{u'v'})\le\red{\beta}$.
\end{definition}
\begin{figure}[t]
	\centering{\includegraphics[scale=.65]{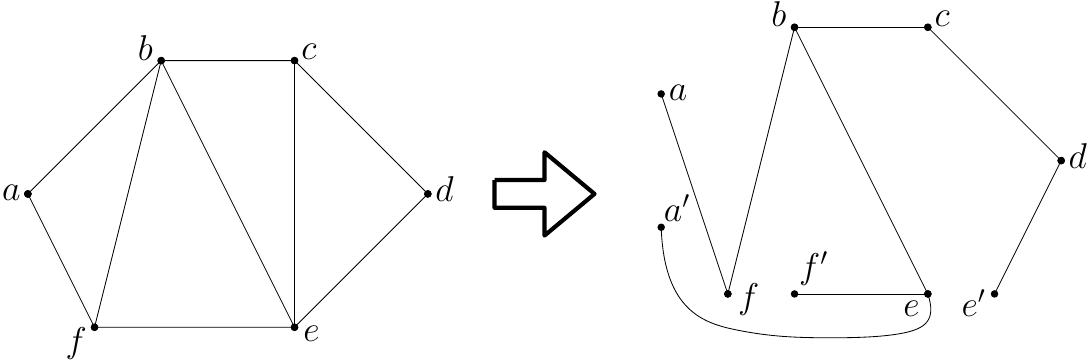}}
	\caption{\label{fig:PathTree}\small The graph $G$ is illustrated on the
	left, while the path tree one-to-many embedding into the tree $T$ is
	illustrated on the right.
	Each edge $e\in T$ has an associated path $P^T_e$, for example
	$P^T_{e',d}=(e,d)$, and $P^T_{a',e}=(a,f,e)$. Each pair of vertices
	$u,v\in V(T)$, has a (not necessarily simple) induced path $P^T_{u,v}$,
	for example $P^T_{f,e'}=(f,b,c,d,e)$, and $P^T_{a',e'}=(a,f,e,b,c,d,e)$.
	The hop-bound $\red{\beta}$ is the maximum number of hops in an induced
	path. Here $\red{\beta}=6$ (realized by $P_T^{a',e'}$).}
\end{figure}

We now translate \Cref{lem:clanHopUltraMeasure} to the language of
path-tree embeddings.
For simplicity, we state only the version needed for our applications:
using roughly a uniform measure (rather than a general measure $\mu$), we
obtain a unique copy of the root and
a bound on the total number of copies.
\begin{lemma}\label{lem:PathClanEmbedding}
	Given an $n$-point weighted graph $G = (V, E, w_G)$ with polynomial aspect
	ratio, a vertex $r\in V$,
	and integers $\red{h},k\ge1$, there is a dominating clan embedding
	$(f,\chi)$ into a
	tree $T$ such that:
	\begin{enumerate}
		\item $|f(V)|\le(2n)^{1+
			\frac1k}$ and $|f(r)|=1$.
		\item $(f,\chi)$ has hop-distortion $\left(O(k),\red{O(k\cdot\log
		n)},\red{h}\right)$, and hop-path-distortion $\left(\redno{O(k\cdot
		\log n)},\red{h}\right)$.
		\item $f$ is a path tree one-to-many embedding with associated paths
		$\{P^T_{e}\}_{e\in E(T)}$ and  hop-bound $\red{O(k\cdot\log^2
		n)\cdot h}$.
	\end{enumerate}
\end{lemma}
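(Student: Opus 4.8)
The plan is to derive \Cref{lem:PathClanEmbedding} from \Cref{lem:clanHopUltraMeasure} (equivalently, from the distributional form used to prove \Cref{thm:ClanHopUltrametric}) by instantiating it with a carefully chosen measure and then ``reifying'' the ultrametric into a path tree embedding whose edges carry genuine paths of $G$. Concretely, I would first take the measure $\mu$ that puts weight $n$ on the designated vertex $r$ and weight $1$ on every other vertex (so $\mu(V)=2n-1$), and apply \Cref{lem:clanHopUltraMeasure} with the parameter $k=2$. This yields a clan embedding $(f,\chi)$ into an ultrametric $U$ with hop-distortion $\left(O(1),\red{O(\log n)},\red{h}\right)$, hop-path-distortion $\left(\redno{O(\log n)},\red{h}\right)$, and $\mathbb{E}_{v\sim\mu}[|f(v)|]\le \mu(V)^{1+1/2}=O(n^{1.5})$; since $\mu(v)\ge 1$ for all $v$, the same bound gives $\sum_{v\in V}|f(v)|=|f(V)|=O(n^{1.5})$. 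Moreover $\mu(r)=n>\frac12\mu(V)$, so \Cref{obs:singleCopy} applies and $|f(r)|=1$. This takes care of items 1 and 2.

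The real content is item 3: converting $U$ into a \emph{tree} $T$ with the same leaf set $V(T)=f(V)$ and with every edge equipped with an associated $G$-path, so that $f$ becomes a path tree one-to-many embedding with hop-bound $\red{O(\log^2n)\cdot h}$. I would argue that inspecting the recursive construction of $U$ in \Cref{alg:hier-Clan}, each internal node $r_U$ at scale $2^i$ corresponds to a cluster $\overline{X}$ with a center $r(\overline{X})=v$ and a bounded-hop radius: $\overline{X}\subseteq B^{\red{(O(k)\cdot i\cdot h)}}_{H(\overline X)}(v,2^{i-2})\subseteq B^{\red{(O(k)\cdot\phi\cdot h)}}_{G}(v,2^{i-2})$ with $\phi=O(\log n)$, so $O(k)\cdot\phi\cdot h = \red{O(\log n)\cdot h}$ hops suffice to reach any vertex of $\overline{X}$ from its center within $G$ (using \Cref{obs:ClanDiamBound} and \Cref{claim:j-v-clan}). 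I would then build $T$ by giving each internal node of $U$ a designated \emph{portal copy} — namely, a copy of its center vertex — and connecting each child's portal to the parent's portal via a tree edge whose weight is the ultrametric label $2^i$ of the parent; the associated $G$-path $P^T_e$ is a shortest $\red{(O(\log n)\cdot h)}$-hop path in $G$ between the two center vertices, which by the triangle inequality and the containment above has weight $\le 2\cdot(2^{i-2})+2^{i-2}\le 2^i = w_T(e)$. (Some bookkeeping is needed so that leaves are identified with the copies produced by $f$ and so that each center's portal copy is chosen consistently; the recursion naturally supplies a center copy at every level, analogous to how $\chi$ is defined in \Cref{alg:hier-Clan}.) Since distances in $U$ and in this reified $T$ differ by at most a constant factor, the hop-distortion and hop-path-distortion bounds of $(f,\chi)$ from \Cref{lem:clanHopUltraMeasure} carry over.

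For the hop-bound: a path in $T$ between two leaves $u'\in f(u)$ and $v'\in f(v)$ goes up from $u'$ to the least common ancestor and down to $v'$. There are $O(\phi)=\red{O(\log n)}$ edges on this path (one per scale on each side), and each edge's associated $G$-path has $\red{O(\log n)\cdot h}$ hops by the radius bound above; concatenating gives $\hop(P^T_{u'v'})\le O(\log n)\cdot O(\log n)\cdot h = \red{O(\log^2 n)\cdot h}$, which is exactly the claimed hop-bound. The main obstacle I anticipate is the ``reification'' step — verifying that one can choose portal copies coherently across the recursion so that (a) the resulting $T$ really is a tree on vertex set $f(V)$ with no copies created or destroyed, (b) every tree edge's weight dominates its associated path's weight, and (c) the leaf-to-leaf distances in $T$ are within a constant of those in $U$ so that the distortion guarantees transfer. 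This is essentially a careful but routine unwinding of \Cref{alg:hier-Clan}, using that every cluster at scale $2^i$ is a bounded-hop, bounded-radius ball around its center; once that is in place, items 1–3 follow immediately from the bounds already established in \Cref{lem:clanHopUltraMeasure}, \Cref{obs:singleCopy}, and \Cref{obs:ClanDiamBound}.
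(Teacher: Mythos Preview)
Your proof of items 1 and 2 is exactly the paper's: apply \Cref{lem:clanHopUltraMeasure} with $k=2$ and the measure $\mu(r)=n$, $\mu(v)=1$ for $v\ne r$, and invoke \Cref{obs:singleCopy} to get $|f(r)|=1$.

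For item 3 you diverge. The paper does not build the tree by hand via ``portal copies''; instead it invokes Gupta's Steiner point removal \cite{Gupta01} as a black box. Treating the ultrametric $U$ as a weighted tree with the leaves $f_U(V)$ as terminals, Gupta yields a tree $\mathcal T_K$ on vertex set exactly $f_U(V)$ with $d_U(u',v')\le d_{\mathcal T_K}(u',v')\le 8\cdot d_U(u',v')$, and moreover $\mathcal T_K$ is obtained from $U$ by edge contractions only, so its hop-diameter is at most the depth of $U$, which is $O(\log n)$ by the polynomial aspect ratio assumption. This single citation disposes of precisely the bookkeeping you flag as ``the main obstacle'' (getting $V(T)=f(V)$ with no extra or missing copies, and constant-factor distance preservation in both directions). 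Your portal construction can be made to work, but note that naively adding an edge for every (parent, child) pair produces $|V(U)|-1$ edges on $|f(V)|$ vertices, hence cycles; you must enforce $\mathrm{portal}(N)=\mathrm{portal}(C_j)$ for exactly one child $C_j$ (as you hint with the $\chi$-analogy) so that one edge per internal node collapses to a self-loop and the count drops to $|f(V)|-1$.

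The paper's argument for the associated paths is also cleaner than yours. Rather than appealing to cluster-diameter bounds to control center-to-center distances, it simply uses the domination property already in hand: for any edge $\{u',v'\}\in E(\mathcal T_K)$ with $u'\in f(u)$, $v'\in f(v)$, we have $d_G^{\red{(O(\log n)\cdot h)}}(u,v)\le d_U(u',v')\le d_{\mathcal T_K}(u',v')$, so one may take $P^{\mathcal T_K}_e$ to be a minimum-weight $u$--$v$ path in $G$ among those with at most $\red{O(\log n)\cdot h}$ hops, and its weight is automatically $\le w_{\mathcal T_K}(e)$. The final hop-bound computation is then the same as yours: $O(\log n)$ tree edges times $\red{O(\log n)\cdot h}$ hops each gives $\red{O(\log^2 n)\cdot h}$.
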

\begin{proof}
	First apply \Cref{lem:clanHopUltraMeasure} on the graph $G=(V,E,w)$
	with parameter $\red{h}$, $k$, and $(\ge1)$-measure $\mu$ where
	$\mu(r)=n$, and for every $v\ne r$, $\mu(v)=1$.
	As a result, we will obtain a dominating clan embedding $(f_U,\chi_U)$ into
	an ultrametric $U$, where
	(1) $|f_U(V)|\le\mu(V)^{1+\frac1k}<(2n)^{1+
		\frac1k}$,
	(2) $\forall u,v\in V$, $d_G^{\red{(O(k\cdot\log n)\cdot
	h)}}(u,v)\le\min_{v'\in f_U(v),u'\in f_U(u)}d_U(v',u')$,
	(3) $\forall u,v\in V$, $\min_{v'\in f_U(v)}d_U(v',\chi(u))\le O(k)\cdot
	d_G\rhop{h}(u,v)$,
	(4) $f$ has hop-path distortion $\left(O(k\cdot\log n),\red{h}\right)$,
	and finally, (5) as $\mu(r)>\frac12\mu(V)$, \Cref{obs:singleCopy} implies
	that $|f_U(r)|=1$.

	Next, we transform the ultrametric into a tree.
	Gupta \cite{Gupta01}, \footnote{See \cite{FKT19} for an alternative proof.
	We refer to \cite{Fil19SPR,Fil24Scattering} for further details on the
	Steiner point removal problem.}
	showed that given a tree $\mathcal{T}$ and a subset of terminals
	$K\subseteq V(\mathcal{T})$, one can (efficiently) embed (in a classic
	one-to-one fashion) the set of terminals $K$ into a tree $\mathcal{T}_K$
	such that $V(\mathcal{T}_K)=K$, and for every $u,v\in K$,
	$d_{\mathcal{T}}(u,v)\le d_{\mathcal{T}_K}(u,v)\le 8\cdot
	d_{\mathcal{T}}(u,v)$.
	That is, Gupta removed all the non-terminal (Steiner) points, and
	distorted all the terminal pairwise distances by a multiplicative factor
	of at most $8$.
	Applying \cite{Gupta01} on the ultrametric $U$ with $f_U(V)$ as the set of
	terminals $K$, and concatenating the two embeddings, we obtain a clan
	embedding $(f,\chi)$ into a tree $\mathcal{T}_K$ over $f_U(V)$ such that
	$\forall u,v\in V$,
	\begin{align*}
	d_{G}^{\red{(O(k\cdot\log n)\cdot h)}}(u,v) & \le\min_{v'\in f(v),u'\in
	f(u)}d_{\mathcal{T}_K}(v',u')\\
	\min_{v'\in f(v)}d_{\mathcal{T}_K}(v',\chi(u)) & \le O(k)\cdot
	d_{G}\rhop{h}(u,v)~.
	\end{align*}
	For every sequence of vertices $v'_0,\dots,v'_s$ in $f_U(V)$, it holds
	that $\sum_{i=0}^{s-1}d_{\mathcal{T}_K}(v'_i,v'_{i+1})\le 8\cdot
	\sum_{i=0}^{s-1}d_U(v'_i,v'_{i+1})$.
	As $f_U$ has hop-path distortion $\left(O(k\cdot\log n),\red{h}\right)$,
	it follows that $f$ has hop-path distortion $\left(O(k\cdot\log
	n),\red{h}\right)$ as well.

	Finally, for every edge $e=\{u',v'\}\in E(T)$ where $u'\in f(u)$ and
	$v'\in f(v)$, let $P^{\mathcal{T}_K}_{e}$ be a path of minimum weight
	among all the $u$-$v$ paths in $G$ with at most $\red{O(k\cdot\log n)\cdot
	h}$ hops. Note that $w_G(P_e^{\mathcal{T}_K})\le d_{G}^{\red{(O(k\cdot\log
	n)\cdot h)}}(u,v)\le d_{\mathcal{T}_K}(u',v')$.
	Following \Cref{alg:hier-Clan}, as $G$ has polynomial aspect ratio, the
	ultrametric $U$ has depth $O(\log n)$. In particular, every path in $U$
	consists of at most $\red{O(\log n)}$ hops.
	The tree $\mathcal{T}_K$ constructed by \cite{Gupta01} (see also
	\cite{FKT19}) is a minor of $U$ constructed by contracting edges only.
	Therefore, the hop-diameter of $\mathcal{T}_K$ is bounded by $\red{O(\log
	n)}$ as well.
	Consider $u',v'\in f(V)$, then $P^{\mathcal{T}_K}_{u'v'}$ consists of
	concatenating at most $O(\log n)$ paths, each with  $\red{O(k\cdot\log
	n)\cdot h}$ hops. It follows that $\hop(P^{\mathcal{T}_K}_{u'v'})\le
	\red{O(k\cdot\log^2n)\cdot h}$ as required.
\end{proof}

Recall that a subgraph $H$ of $G$ is $\red{h}$-respecting if for every $u,v\in
H$, it holds that $d\rhop{h}_G(u,v)\le d_H(u,v)$.
The following is a corollary of \Cref{lem:PathClanEmbedding}.
\begin{corollary}\label{cor:hBoundedSubgraph}
Consider an $n$-point graph $G=(V,E,w)$ with polynomial aspect ratio,
parameters $\red{h},k\in\N$ and vertex $r\in V$.
Then there is a path-tree one-to-many embedding $f$ of $G$ into a tree $T$
with hop bound $\red{O(k\cdot\log^2n)\cdot h}$, $|f(V)|=(2n)^{1+\frac1k}$,
$|f(r)|=1$, and such that for every connected $\red{h}$-respecting sub-graph
$H$, there is a connected subgraph $H'$ of $T$ where for every $u\in H$, $H'$
contains some vertex from $f(u)$. Further, $w_T(H')\le O(k\cdot\log
n)\cdot w(H)$.
\end{corollary}
\begin{proof}
	We apply \Cref{lem:PathClanEmbedding} on $G$. All the properties other
	than the last one hold immediately. Consider an $\red{h}$-respecting
	subgraph $H$ of $G$.
	We double each edge in $H$, to obtain a multi-graph, where all the
	vertices have even degrees. In particular, there is an Eulerian tour
	$P=(x_0,\dots,x_s)$ traversing all the edges in $H$. Note that for every
	$a,b\in[0,s]$, as $H$ is $\red{h}$-respecting subgraph of $G$, it holds
	that $$d_{P}(x_a,x_b)\ge d_{H}(x_a,x_b)\ge d\rhop{h}_{G}(x_a,x_b)~.$$
	Thus $P$ is an $\red{h}$-respecting path.
	It follows that there are vertices $x'_0,\dots,x'_s$ where $x'_i\in
	f(x_i)$ such that $\sum_{i=0}^{s-1}d_T(x'_i,x'_{i+1})\le O(k\cdot\log
	n)\cdot \sum_{i=0}^{s-1}w(x_i,x_{i+1})=O(k\cdot\log n)\cdot 2\cdot w(H)$.
	The corollary follows.
\end{proof}

Finally, we will state the result in this section in an alternative form,
where instead of \Cref{lem:clanHopUltraMeasure}, we will use
\Cref{lem:clanHopUltraMeasureAlt}.
In the counter-part of \Cref{lem:PathClanEmbedding} we will obtain the same
result with the only changes that
$(f,\chi)$ will have hop-distortion $\left(O(k\cdot\log\log
n),\red{O(k\cdot\log\log n)},\red{h}\right)$,   hop-path-distortion
$\left(\redno{O(k\cdot\log n\cdot\log\log n)},\red{h}\right)$, and the hop
bound of the path tree one-to-many embedding will be $\red{O(k\cdot\log
n\cdot\log \log n)\cdot h}$.
The following corollary follows the same lines:
\begin{corollary}\label{cor:hBoundedSubgraphAlt}
	Consider an $n$-point graph $G=(V,E,w)$ with polynomial aspect ratio,
	parameters $\red{h},k\in\N$ and vertex $r\in V$.
	Then there is a path-tree one-to-many embedding $f$ of $G$ into a tree $T$
	with hop bound $\red{O(k\cdot\log n\cdot\log \log n)\cdot h}$,
	$|f(V)|=(2n)^{1+\frac1k}$, $|f(r)|=1$, and such that for every connected
	$\red{h}$-respecting sub-graph $H$, there is a connected subgraph $H'$ of
	$T$ where for every $u\in H$, $H'$ contains some vertex from $f(u)$.
	Further, $w_T(H')\le O(k\cdot\log n\cdot\log \log n)\cdot w(H)$.
\end{corollary}

\subsection{General subgraph: Proofs of \Cref{thm:Subgraph,thm:SubgraphAlt}}
Both \Cref{cor:hBoundedSubgraph} and \Cref{cor:hBoundedSubgraphAlt} have a
major drawback: the guarantee is only w.r.t. $\red{h}$-respecting sub-graphs.
Unfortunately, the $\red{h}$-respecting condition is indeed necessary (as the
image tree $T$ should not allow us to connect vertices at large hop-distance).
Here we ensure the next best thing: given an arbitrary subgraph $H$ of $G$, we
will construct a (not necessarily connected) subgraph $H'$ of $T$ where for
every $u,v\in H$ such that $\hop_H(u,v)\le\red{h}$, $H'$ contains a pair of
copies $u'\in f(u)$ and $v'\in f(v)$ in the same connected component. Further,
$w_T(H')\le O(k\cdot\log n)\cdot w(H)$.
As a result, the hop bound will increase by a $\log$ factor.

\SubgraphPreservingEmbedding*
\begin{proof}
	Set $\red{h'}=\red{h\cdot 4\log n}$. We execute
	\Cref{cor:hBoundedSubgraph} with hop parameter $\red{h'}$. As a result we
	obtain a path-tree one-to-many embedding $f$ into a tree $T$ with hop
	bound $\red{O(k\cdot\log^2n)\cdot h'}=\red{O(k\cdot\log^3n)\cdot h}$,
	$|f(V)|=O((2n)^{1+1/k})$, and  $|f(r)|=1$. It only remains to prove the
	subgraph preservation property, which will be based on the
	following lemma.
	\begin{lemma}[Sparse Cover]\label{lem:SparseCover}
		Consider an $n$-vertex weighted graph $G=(V,E,w)$, with cost function
		$\mu:E\rightarrow\R_{\ge0}$, and parameter $\Delta>0$. Then there is a
		collection of clusters $C_{1},C_{2},\dots,C_{s}$ such that:
		\begin{enumerate}
			\item Each $C_{i}$ has radius at most $\Delta\cdot \log (2n)$.
			That is, there is a vertex $v_i\in C_i$ such that $\max_{u\in
			C_i}d_{G[C_i]}(u,v_i)\le \Delta\cdot  \log (2n)$.
			\label{property:diameter}
			\item Every pair of vertices $u,v$ such that $d_G(u,v)\le \Delta$
			belongs to some $C_{i}$. \label{property:cover}
			\item $\sum_{i=1}^{s} \mu(C_{i})\le4\cdot \mu(G)$, where
			$\mu(C)=\sum_{e\subseteq C}\mu(e)$. \label{property:sparsity}
		\end{enumerate}
	\end{lemma}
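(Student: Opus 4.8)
The plan is to prove the sparse cover lemma (\Cref{lem:SparseCover}) by a standard ball-growing / region-growing argument, and then use it to upgrade the $\red{h}$-respecting subgraph-preservation of \Cref{cor:hBoundedSubgraph} into the general-subgraph guarantee of \Cref{thm:Subgraph}. I'll present the two pieces in that order, since the second builds directly on the first.

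\textbf{Proof of \Cref{lem:SparseCover}.} First I would peel off clusters greedily. Pick an arbitrary remaining vertex $v_1$, grow a ball $B(v_1,r)$ in the remaining graph, and stop growing at the first radius $r$ (an integer multiple of $\Delta$, so $r\in\{\Delta,2\Delta,\dots\}$) where the ``shell'' $\mu$-cost does not blow up, i.e. where
\[
\mu\bigl(B(v_1,r+\Delta)\bigr)\le 2\cdot\mu\bigl(B(v_1,r)\bigr).
\]
Here $\mu(B(v,r))$ counts $\mu(e)$ for every edge $e$ with both endpoints in $B(v,r)$, plus a ``$+1$''-style seed term to keep the geometric growth argument honest (this is the standard trick so that the ratio argument terminates after $\log(2n)$ doublings even when the innermost ball has zero $\mu$-mass; one sets $\mu^+(B) = \mu(B) + \mu(G)/n$ or similar). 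Because the total $\mu^+$-mass is at most $O(\mu(G))$ and each failed radius doubles the enclosed mass, the stopping radius is reached within $\log(2n)$ steps, so $r\le \Delta\cdot\log(2n)$, giving \propref{property:diameter}. Define $C_1 := B(v_1,r+\Delta)$ --- the ball one shell beyond the stopping point --- and then \emph{delete only the core} $B(v_1,r)$ from the graph, recursing on what remains. Deleting just the core (not all of $C_1$) is what guarantees \propref{property:cover}: any pair $u,v$ with $d_G(u,v)\le\Delta$ that has not yet been covered --- at the step where the first of them, say $u$, lies in some deleted core $B(v_i,r_i)$ --- satisfies $v\in B(v_i,r_i+\Delta)=C_i$ by the triangle inequality, so the pair is in $C_i$. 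Finally \propref{property:sparsity}: by the stopping rule $\mu^+(C_i)=\mu^+(B(v_i,r_i+\Delta))\le 2\,\mu^+(B(v_i,r_i))$, and the cores $B(v_i,r_i)$ are pairwise disjoint, so $\sum_i\mu^+(C_i)\le 2\sum_i\mu^+(\text{core}_i)\le 2\,\mu^+(G)$, which after unwinding the $+1$ seed terms is at most $4\mu(G)$ (adjusting the seed constant). That's the whole lemma.

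\textbf{From the lemma to \Cref{thm:Subgraph}.} Given an arbitrary subgraph $H$ of $G$ --- not necessarily $\red{h}$-respecting --- I would apply \Cref{lem:SparseCover} to $H$ itself, with cost function $\mu=w\!\restriction_{E(H)}$ and scale parameter $\Delta$ chosen so that $\Delta\cdot\log(2n)$ is comparable to the weight a $\red{h}$-hop shortest path in $G$ can have; more precisely I'd actually run it over the \emph{weight scales} of $H$ simultaneously, but the clean version is: set $\Delta := \diam_H$-ish and note we only care about pairs with $\hop_H(u,v)\le\red{h}$. The key point is that each cluster $C_i$ produced by the cover, being a ball of radius $\le\Delta\log(2n)$ \emph{inside $H$}, is automatically a connected $\red{h'}$-respecting subgraph of $G$ for $\red{h'}=\red{h\cdot 4\log n}$: for any two vertices $a,b\in C_i$ we have $d_{C_i}(a,b)\le 2\Delta\log(2n)$, and --- since $C_i\subseteq H$ and each edge of $H$ between consecutive vertices on an $H$-path can be realized by a single hop --- the $C_i$-path of at most $O(\log n)$ times the relevant hop-length certifies $d_G^{\red{(h')}}(a,b)\le d_{C_i}(a,b)$. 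So I can feed each $C_i$ into \Cref{cor:hBoundedSubgraph} (which was invoked with hop parameter $\red{h'}$), obtaining a connected subgraph $H'_i$ of $T$ containing a copy of every vertex of $C_i$, with $w_T(H'_i)\le O(\log n)\cdot w(C_i)$. Set $H' := \bigcup_i H'_i$. Then $w_T(H') \le O(\log n)\sum_i w(C_i) \le O(\log n)\cdot 4w(H)=O(\log n)\cdot w(H)$ by \propref{property:sparsity}. And for any $u,v\in H$ with $\hop_H(u,v)\le\red{h}$ we have in particular $d_H(u,v)\le \red{h}\cdot w_{\max}\le\Delta$ for the appropriate scale, so by \propref{property:cover} both lie in a common $C_i$; hence $H'_i\subseteq H'$ contains a copy of $u$ and a copy of $v$ in the same connected component. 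This is exactly the claimed property, with hop bound $\red{O(\log^2 n)\cdot h'}=\red{O(\log^3 n)\cdot h}$ inherited from \Cref{cor:hBoundedSubgraph}.

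\textbf{The main obstacle.} The delicate part is not the region-growing itself (that is textbook) but bookkeeping the \emph{weight scales}: a general subgraph $H$ has edges of many magnitudes, and \propref{property:cover} of \Cref{lem:SparseCover} only covers pairs with $d_G(u,v)\le\Delta$, whereas a pair with $\hop_H(u,v)\le\red{h}$ could still have large $d_H$-distance if it passes through heavy edges. The fix is to run the cover at a geometric sequence of scales $\Delta = 2^j$ and take the union of all the resulting images (this costs only an extra $O(\log n)$ factor in the weight, absorbed into the $O(\log n)$, since each edge of $H$ of weight $\approx 2^j$ is charged to a constant number of scales near $j$), together with the observation that the \emph{relevant} pairs --- those realizable with $\le\red{h}$ hops --- automatically have $d_H$-distance within a $\red{h}$ factor of a single edge weight, so the right scale $\Delta=2^j\approx \red{h}\cdot w_{\max}(H)$ catches them; one must also verify that each cluster stays $\red{h'}$-respecting at \emph{every} scale, which follows because the radius bound $\Delta\log(2n)$ scales with $\Delta$ exactly as the hop budget $\red{h'}=\red{4h\log n}$ was chosen to accommodate. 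I'd expect to spend most of the writeup making this scale-alignment precise; everything else is a direct application of results already in the excerpt.
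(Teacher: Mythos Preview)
Your proof of \Cref{lem:SparseCover} is correct but takes a different route from the paper. The paper's argument is randomized: it samples each radius $r_i\sim\Geo(\frac12)$, sets $C_i=B_{G[Y_i]}(x_i,r_i)$ with interior $B_{G[Y_i]}(x_i,r_i-1)$, and deletes only the interior. Property~(\ref{property:diameter}) holds after conditioning on the event $\Psi=\{r_i\le\log(2n)\ \forall i\}$, which has probability $\ge\frac12$; property~(\ref{property:sparsity}) follows because by memorylessness each vertex (hence each edge) lies in an expected $\le 2$ clusters, and conditioning on $\Psi$ at most doubles this. Your deterministic doubling argument is the classical Awerbuch--Peleg region growing and achieves the same bounds without randomness, at the cost of the seed term. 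One small slip: for property~(\ref{property:cover}) you look at the first step where an \emph{endpoint} $u$ enters a deleted core, but the $u$--$v$ shortest path $P$ in $G$ may already have been severed by an earlier core containing neither endpoint, so the triangle inequality in $G[Y_i]$ is not available. The fix (which is exactly what the paper does) is to look at the first core whose \emph{interior intersects $P$}; at that step all of $P$ is still present, so both $u$ and $v$ are within $\Delta$ of the intersection point and hence in the cluster.

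The genuine gap is in your reduction from \Cref{lem:SparseCover} to \Cref{thm:Subgraph}. You apply the cover to the \emph{weighted} $H$ and then wrestle with aligning weight scales against the hop budget, proposing a geometric sequence of scales that would cost an extra $O(\log n)$ factor and whose ``$\red{h'}$-respecting'' claim is never substantiated (a ball of bounded weight-radius in $H$ need not have bounded hop-radius). The paper's trick avoids all of this: apply \Cref{lem:SparseCover} to the \emph{unweighted} version of $H$ with $\Delta=\red{h}$, while keeping the cost function $\mu(e)=w(e)$. Then ``radius $\le\Delta\log(2n)$'' literally means ``$\le \red{h}\log(2n)$ hops'', so the shortest-path tree $F_i$ of each cluster is automatically $\red{h'}$-respecting for $\red{h'}=\red{4h\log n}$; property~(\ref{property:cover}) directly covers every pair with $\hop_H(u,v)\le\red{h}$; and property~(\ref{property:sparsity}) still controls total weight because $\mu$ was the original weight all along. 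The decoupling of the metric used for radius control from the cost function $\mu$ is precisely what the lemma is set up to allow, and is the idea missing from your sketch.
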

	We will now proceed with the proof of \Cref{thm:Subgraph}. The proof (and
	some historical context) of \Cref{lem:SparseCover} is deferred to
	\Cref{sec:SparseCover}.
	Consider a subgraph $H=(V_H,E_H,w)$ of $G$. Let $H_1=(V_H,E_H)$ be the
	graph $H$ where all the edge weights are changed to $1$. Set
	$\Delta=\red{h}$, and a cost function $\mu:E_H\rightarrow\R_{\ge 0}$,
	where $\mu(e)=w(e)$.
	Using \Cref{lem:SparseCover}, we obtain a collection of clusters
	$C_1,\dots,C_s\subseteq V_H$. Let $F_i$ be the shortest path tree rooted
	in $v_i$ spanning all the vertices in $C_i$ w.r.t. distances in $H_1$.
	Note that as $H_1$ is unweighted, by \propref{property:diameter} of
	\Cref{lem:SparseCover},
	$F_i$ is a tree of depth at most $\Delta\cdot \log 2n$. In particular,
	every path in $F_i$ contains at most $2\cdot \Delta\cdot \log
	2n\le\red{h}\cdot 4\log n=\red{h'}$ hops. It follows that $F_i$ (when
	considered as a subgraph of $G$ with the original weight function $w$) is
	$\red{h'}$-respecting subgraph of $G$ (this is as every $u$-$v$ path in
	$F_i$ consists of at most $\red{h'}$ hops, and thus has weight at least
	$d_G\rhop{h'}(u,v)$ ).

	By \Cref{cor:hBoundedSubgraph}, there is a connected subgraph $T_i$ of $T$
	where for every $u\in F_i$, $T_i$ contains some vertex from $f(u)$.
	Further, $w_T(T_i)\le O(k\cdot\log n)\cdot w(F_i)$. Set
	$H'=\cup_{i=1}^{s}T_i$.
	Note that for every $u,v\in H$ such that $\hop_H(u,v)\le \red{h}$, it
	holds that $d_{H_1}(u,v)\le \red{h}$, and hence by \Cref{lem:SparseCover}
	there is some cluster $C_i$ containing both $u,v$. According to
	\Cref{cor:hBoundedSubgraph}, $T_i\subseteq H'$ is a connected subgraph
	containing vertices from both $f(u)$ and $f(v)$, as required. Finally, we
	bound the weight of $H'$:
	\begin{align*}
		w_{T}(H') & \le\sum_{i=1}^{s}w_{T}(T_{i})=O(k\cdot\log
		n)\cdot\sum_{i=1}^{s}w(F_{i})\le O(k\cdot\log
		n)\cdot\sum_{i=1}^{s}w(C_{i})\\
		& =O(k\cdot\log n)\cdot\sum_{i=1}^{s}\mu(C_{i})=O(k\cdot\log
		n)\cdot\mu(H)=O(k\cdot\log n)\cdot w(H)~.
	\end{align*}
\end{proof}

By using \Cref{cor:hBoundedSubgraphAlt} instead of \Cref{cor:hBoundedSubgraph}
in the proof above, we obtain the following:
\begin{theorem}\label{thm:SubgraphAlt}
	Consider an $n$-point graph $G=(V,E,w)$ with polynomial aspect ratio,
	parameters $\red{h},k\in\N$ and vertex $r\in V$.
	Then there is a path-tree one-to-many embedding $f$ of $G$ into a tree $T$
	with hop bound $\red{O(k\cdot\log^2 n\cdot\log\log n)\cdot h}$,
	$|f(V)|=(2n)^{1+\frac1k}$, $|f(r)|=1$, and such that for every sub-graph
	$H$, there is a subgraph $H'$ of $T$ where for every $u,v\in H$ with
	$\hop_H(u,v)\le\red{h}$, $H'$ contains a pair of copies $u'\in f(u)$ and
	$v'\in f(v)$ in the same connected component. Further, $w_T(H')\le
	O(k\cdot\log n\cdot\log\log n)\cdot w(H)$.
\end{theorem}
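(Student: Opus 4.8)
The plan is to repeat the proof of \Cref{thm:Subgraph} essentially verbatim, the only change being that every invocation of \Cref{cor:hBoundedSubgraph} is replaced by \Cref{cor:hBoundedSubgraphAlt}; then one re-tallies the two quantities that the swap affects, namely the hop bound of the path-tree embedding and the weight blow-up of the image $H'$. Concretely, set $\red{h'}=\red{h\cdot 4\log n}$ and apply \Cref{cor:hBoundedSubgraphAlt} to $G$ with hop parameter $\red{h'}$. This produces a path-tree one-to-many embedding $f$ of $G$ into a tree $T$ with $|f(V)|=O(n^{1.5})$, $|f(r)|=1$, and hop bound $\red{O(\log n\cdot\log\log n)\cdot h'}=\red{O(\log^2n\cdot\log\log n)\cdot h}$, which already yields all the statements of the theorem except the subgraph-preservation property and the weight bound on $H'$.

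For the remaining part I would mimic the sparse-cover argument. Fix a subgraph $H=(V_H,E_H,w)$ of $G$, and let $H_1=(V_H,E_H)$ be $H$ with all edge weights set to $1$. Apply \Cref{lem:SparseCover} to $H_1$ with $\Delta=\red{h}$ and cost function $\mu(e)=w(e)$; this gives clusters $C_1,\dots,C_s$, each of radius at most $\red{h}\cdot\log(2n)$ in $H_1$, covering every pair of vertices at $H_1$-distance $\le\red{h}$, and with $\sum_{i=1}^s\mu(C_i)\le 4\mu(H_1)=4w(H)$. For each $i$ let $F_i$ be a shortest-path (i.e.\ BFS) tree of $H_1[C_i]$ rooted at the cluster center $v_i$, carrying the original $G$-weights $w$. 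Since $H_1$ is unweighted, every root-to-leaf path of $F_i$ has at most $\red{h}\cdot\log(2n)$ hops, hence every path in $F_i$ has at most $2\cdot\red{h}\cdot\log(2n)\le\red{h\cdot 4\log n}=\red{h'}$ hops; so $F_i$ is a connected $\red{h'}$-respecting subgraph of $G$. Moreover $E(F_i)\subseteq E(H_1[C_i])$, so $w(F_i)\le\sum_{e\subseteq C_i}\mu(e)=\mu(C_i)$. Now \Cref{cor:hBoundedSubgraphAlt} (with hop parameter $\red{h'}$) supplies, for every $i$, a connected subgraph $T_i$ of $T$ containing a copy of each vertex of $F_i$ with $w_T(T_i)\le O(\log n\cdot\log\log n)\cdot w(F_i)$, and I set $H'=\bigcup_{i=1}^s T_i$.

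It then remains to verify the two required properties. If $u,v\in H$ satisfy $\hop_H(u,v)\le\red{h}$ then $d_{H_1}(u,v)=\hop_H(u,v)\le\red{h}$, so by \propref{property:cover} of \Cref{lem:SparseCover} some $C_i$ contains both $u$ and $v$; the corresponding $T_i$ is then a connected subgraph of $T$ (lying inside $H'$) containing a copy from $f(u)$ and a copy from $f(v)$, as desired. For the weight,
\[
w_T(H')\le\sum_{i=1}^s w_T(T_i)\le O(\log n\cdot\log\log n)\sum_{i=1}^s w(F_i)\le O(\log n\cdot\log\log n)\sum_{i=1}^s\mu(C_i)\le O(\log n\cdot\log\log n)\cdot w(H)~,
\]
using $w(F_i)\le\mu(C_i)$ and the sparsity bound $\sum_i\mu(C_i)\le 4w(H)$.

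As for difficulty: there is essentially no new idea here, and the only thing requiring care is the propagation of the two hop budgets. Setting $\Delta=\red{h}$ in the sparse cover inflates cluster radii by a $\log n$ factor, forcing $\red{h'}=\Theta(\red{h\log n})$ so that the trees $F_i$ stay $\red{h'}$-respecting, and the $\red{O(\log n\cdot\log\log n)}$ hop bound of \Cref{cor:hBoundedSubgraphAlt} then multiplies this to $\red{O(\log^2n\cdot\log\log n)\cdot h}$. The second subtle point is that the $F_i$ must be $\red{h'}$-respecting \emph{in hops} even though they are shortest-path trees \emph{in weight}; this is exactly why one first passes to the unit-weight graph $H_1$ and keeps $\mu=w$ as a separate cost function, so that a BFS tree of $H_1[C_i]$ simultaneously has bounded hop depth and weight at most $\mu(C_i)$.
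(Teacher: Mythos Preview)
Your proposal is correct and is exactly the paper's approach: the paper's proof of \Cref{thm:SubgraphAlt} consists of the single sentence ``By using \Cref{cor:hBoundedSubgraphAlt} instead of \Cref{cor:hBoundedSubgraph} in the proof above, we obtain the following,'' and you have carried out precisely that substitution with the correct re-tally of the hop bound and weight blow-up.
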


\subsection{Proof of \Cref{lem:SparseCover}: sparse
cover}\label{sec:SparseCover}
Sparse covers were first introduced by Awerbuch and Peleg \cite{AP90}, and
were studied in various graph families (see e.g.
\cite{KPR93,Fil19padded,Fil24Scattering,Fil25}). We refer to
\cite{Fil19padded,Fil24Scattering} for further details.
For an integer parameter $k\ge 1$, and $\Delta>0$, Awerbuch and Peleg
partitioned a general $n$-vertex graph into clusters of radius at most
$(2k-1)\cdot\Delta$ such that every ball of radius $\Delta$ is fully contained
in some cluster, and every vertex belongs to at most $2k\cdot n^{\frac1k}$
different clusters. In particular, by fixing $k=\log n$, one will get all the
properties in \Cref{lem:SparseCover}, but the bound on $\sum_i\mu(C_i)$ will
only be $4\log n\cdot\mu(G)$ (we aim for $4\cdot\mu(G)$).
While the construction in \cite{AP90} is deterministic, here we will present a
random ball growing clustering algorithm (ala \cite{Bar96}). As a result,
while in the worst case a vertex might belong to $\log n$ different clusters,
in expectation, each vertex will belong only to a constant number. A bound on
the total weight will follow.
\vspace{7pt}

	\begin{wrapfigure}{r}{0.33\textwidth}
	\begin{center}
		\vspace{-40pt}
		\includegraphics[width=0.83\textwidth]{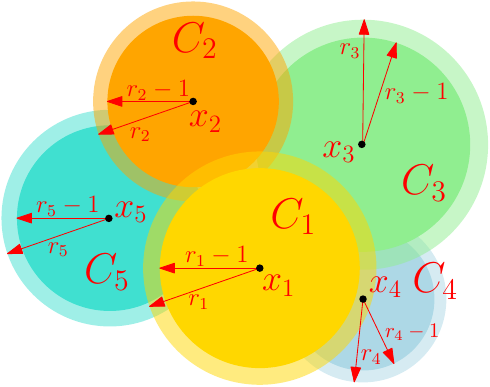}
	\end{center}
	\vspace{-25pt}
\end{wrapfigure}

\emph{\hspace{-18pt}Proof of \Cref{lem:SparseCover}.}
	By scaling, we can assume that $\Delta=1$. Let $\Geo(\frac12)$ be
	a geometric distribution with parameter $\frac12$ (that is, taking value
	$i\ge1$ with probability $2^{-i}$).
	Each created cluster $C$ will have an interior part denoted $\inter(C)$,
	that will be contained in $C$.
	We create a collection of clusters $\mathcal{C}$ as follows. Initially
	$Y_1=V$ is all the unclustered vertices. At step $i$, after we created the
	clusters $C_1,C_2,\dots,C_{i-1}$, the unclustered vertices are
	$Y_i=V\setminus(\cup_{j<i}\inter(C_j))$. Pick an arbitrary vertex $x_i\in
	Y_i$, and a radius $r_i\sim\Geo(\frac12)$.  Set the new cluster to be
	$C_{i}=B_{G[Y_i]}(x_i,r_i)$, the ball of radius $r_i$ around $x_i$ in the
	graph induced by the unclustered vertices, and its interior to be
	$\inter(C_i)=B_{G[Y_i]}(x_i,r_i-1)$. The process halts once
	$Y_i=\emptyset$.
	See the illustration on the right.

	Consider a pair $u,v$ such that $d_G(u,v)\le 1$, and let $P$ be the
	shortest path between them. Let $i$ be the first index such that
	$\inter(C_i)\cap P\ne\emptyset$, and let $z\in \inter(C_i)\cap P$. By the
	minimality of $i$, $P\subseteq Y_i$.
	As $z\in\inter(C_i)$ it follows that $d_{G[Y_i]}(x_i,z)\le r_i-1$. By
	triangle inequality $d_{G[Y_i]}(x_i,u)\le
	d_{G[Y_i]}(x_i,z)+d_{G[Y_i]}(z,u)\le r_i$, implying that $u\in C_i$.
	Similarly, $v\in C_i$. \Propref{property:cover} of
	\Cref{lem:SparseCover} follows.

	The radius of the cluster $C_i$ is bounded by $r_i$. Denoted by $\Psi$ the
	event that $r_i\le \log(2 n)$ for every $i$.
	By a union bound, as in the entire process we create at most $n$ balls,
	\[
	\Pr\left[\overline{\Psi}\right]=\Pr\left[\exists i\text{ s.t.
	}r_{i}>\log(2 n)\right]\le n\cdot2^{-\log(2 n)}=\frac{1}{2}~.
	\]

	Denote by $X_v$ the number of clusters containing a vertex $v$. Note that
	given that $v$ belongs to a cluster $C_i$, by the memoryless property of
	the geometric distribution, the probability that $v$ belongs to the
	interior of the cluster is $\frac12$:
	\[
	\Pr\left[v\in\inter(C_{i})\mid v\in
	C_{i}\right]=\Pr\left[d_{G}(x_{i},v)\le r_{i}-1\mid d_{G}(x_{i},v)\le
	r_{i}\right]=\frac{1}{2}~.
	\]
	Once $v$ belongs to the interior of a cluster $\inter(C_i)$, it will not
	belong to any other clusters. Hence $X_v$ is dominated by a geometric
	distribution with parameter $\frac12$. In particular,
	$\mathbb{E}[X_v]\le2$.
	For an edge $e=\{u,v\}$ denote by $X_e$ the number of clusters containing
	$e$. Then clearly $\mathbb{E}[X_e]\le\mathbb{E}[X_v]\le 2$. We conclude:
	\[
	\mathbb{E}\left[\sum_{i\ge1}\mu(C_{i})\right]=\sum_{e\in
	E}\mu(e)\cdot\mathbb{E}\left[X_{e}\right]\le 2\cdot\mu(G)~.
	\]
	We condition the execution of the algorithm on the event $\Psi$. Then
	clearly \propref{property:diameter} holds. Using the law of total
	probability, the expected cost is bounded by
	\[
	\mathbb{E}\left[\sum_{i\ge1}\mu(C_{i})\mid\Psi\right]=\frac{1}{\Pr\left[
	\Psi\right]}\cdot\left(\mathbb{E}\left[\sum_{i\ge1}\mu(C_{i})\right]-\Pr
	\left[\overline{\Psi}\right]\cdot\mathbb{E}\left[\sum_{i\ge1}\mu(C_{i})
	\mid\overline{\Psi}\right]\right)\le\nicefrac{2}{\frac{1}{2}}=4~.
	\]
	\Propref{property:sparsity} now follows as well.
	\QED

\section{Approximation algorithms}\label{sec:approxAlgs}
\subsection{Application of the Ramsey type embedding}
Our \Cref{thm:UltrametricRamsey,thm:UltrametricRamseyAlt} is similar in
spirit to a theorem in Haeupler \etal \cite{HHZ21}:
\begin{theorem}[\cite{HHZ21}]\label{thm:HHZ21}
	Consider an $n$-vertex graph $G=(V,E,w)$, and parameters $\red{h}\in [n]$,
	$\eps\in(0,\frac13)$. There is a distribution $\mathcal{D}$ over pairs
	$(M,U)$ where $M\subseteq V$, and $U$ is an ultrametric over
	$M$ such that:
	\begin{enumerate}
		\item For every $v\in V$, $\Pr[v\in M]\ge 1-\eps$.
		\item For every $v,u\in M$, $d_G\rhop{O(\log^2 n)\cdot h}(u,v)\le
		d_U(u,v)\le O(\frac{\log^2n}{\eps})\cdot d_G\rhop{h}(u,v)$.
	\end{enumerate}
\end{theorem}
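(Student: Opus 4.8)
The plan is to reconstruct this statement the way \cite{HHZ21} originally did, via a hop-constrained \emph{padded decomposition}, since the target ultrametric is just the laminar tree of a hierarchical partition. First I would reduce to the case $\diam\rhop{h}(G)<\infty$ (cap at the largest finite $h$-hop distance, add $\infty$-weight edges between $h$-far pairs, and restore $\infty$ labels at the end, exactly as in \Cref{subsec:inftyDistance}) and set $\phi=\lceil\log_2\diam\rhop{h}(G)\rceil$, which is $O(\log n)$ under the polynomial-aspect-ratio assumption so that the later union bound runs over $O(\log n)$ levels.

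The heart of the argument is the following hop-constrained padded decomposition: for every scale $\Delta=2^i$ there is a distribution over partitions $\mathcal{C}_i$ of $V$ with (a) $\max_{u,v\in C}d_G\rhop{O(\log^2 n)\cdot h}(u,v)\le\Delta$ for every $C\in\mathcal{C}_i$, and (b) for every $v$, $\Pr\big[B_G\rhop{h}(v,\gamma\Delta)\text{ lies inside a single cluster of }\mathcal{C}_i\big]\ge 1-\tfrac{\epsilon}{2\phi}$ with $\gamma=\Theta(\tfrac{\epsilon}{\log^2 n})$. I would prove this by a ball-growing partition in the style of \cite{Bar96,Fil19padded}: iteratively pick an unclustered vertex $x$, draw a radius $r$ from a truncated exponential/geometric law of scale $\Theta(\Delta/\log n)$, and carve out the cluster $B\rhop{h}_{G[Y]}(x,r)$ from the induced graph $G[Y]$ on the remaining vertices. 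Two cluster-mates are joined by a walk through the cluster center, which controls the (weak) hop-diameter; the standard dyadic analysis bounds the probability that $B_G\rhop{h}(v,r)$ is split by $O(\tfrac{r}{\Delta}\log n)$, forcing $\gamma=\Theta(\tfrac{\epsilon}{\log^2 n})$ to push that below $\tfrac{\epsilon}{2\phi}$, and the $O(\log^2 n)\cdot h$ in (a) (rather than $O(h)$) is precisely the price of running this on $d_G\rhop{h}$, which is not a metric: the center-stitching step costs extra hops, and those compound through the $O(\log n)$ dyadic rounds of the analysis.

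Given this, I would assemble the hierarchy top-down: take $\mathcal{C}_\phi=\{V\}$, and having produced the scale-$2^{i+1}$ clusters, independently sample a padded decomposition of each one (in its induced graph) at scale $2^i$; this makes $\mathcal{C}_i$ a refinement of $\mathcal{C}_{i+1}$ automatically, with $\mathcal{C}_0$ the singletons. Let $U$ be the laminar tree of $\{\mathcal{C}_i\}_{i=0}^\phi$ with the internal node of a scale-$2^i$ cluster labelled $2^i$, and let $M$ be the set of vertices padded at \emph{every} level, i.e.\ $B_G\rhop{h}(v,\gamma 2^i)$ stays in one scale-$(i-1)$ cluster for all $i$. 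A union bound over the $\phi$ levels gives $\Pr[v\in M]\ge 1-\phi\cdot\tfrac{\epsilon}{2\phi}\ge 1-\epsilon$, which is property 1. For property 2: if $u,v$ fall in different scale-$i$ clusters but the same scale-$(i+1)$ cluster $C$, then $d_U(u,v)=2^{i+1}$ and $d_G\rhop{O(\log^2 n)\cdot h}(u,v)\le\max_{x,y\in C}d_G\rhop{O(\log^2 n)\cdot h}(x,y)\le 2^{i+1}=d_U(u,v)$, giving the lower bound; and if $u\in M$ and $i^\star$ is the top level at which $u,v$ are separated, then at scale $2^{i^\star+1}$ the ball $B_G\rhop{h}(u,\gamma 2^{i^\star+1})$ was uncut and excluded $v$, so $d_G\rhop{h}(u,v)>\gamma 2^{i^\star+1}$ and hence $d_U(u,v)=2^{i^\star+1}<\tfrac1\gamma d_G\rhop{h}(u,v)=O(\tfrac{\log^2 n}{\epsilon})\cdot d_G\rhop{h}(u,v)$.

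The hard part is the hop-constrained padded decomposition itself: driving the cluster hop-diameter down to $O(\log^2 n)\cdot h$ while keeping padding probability $1-\Theta(\epsilon/\log n)$ at padding radius $\Theta(\epsilon\Delta/\log^2 n)$. The subtlety is entirely the failure of the triangle inequality for $d_G\rhop{h}$: the usual ``a cluster is a ball around its center, so cluster-mates are within twice the radius'' reasoning only yields a walk through the center that accumulates hops, and one must carefully account for how often this is invoked — across the randomized shells of the ball-growing and across the deletion of earlier clusters in the induced graph — to certify the total stays $O(\log^2 n)\cdot h$. Everything else (the laminar construction, the union bound, the distortion arithmetic, and the $\diam\rhop{h}(G)=\infty$ reduction) is routine and mirrors \Cref{subsec:ProofOfUltraMeasure,subsec:inftyDistance}.
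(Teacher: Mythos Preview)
Your proposal is correct and matches the approach: this theorem is cited from \cite{HHZ21} rather than proved in the present paper, but the ``Previous approach'' paragraph in the paper overview sketches exactly the argument you give --- a hop-constrained padded decomposition (ball growing \`a la \cite{Bar96,Fil19padded}) with cluster hop-diameter $O(\log^2 n)\cdot h$ and padding radius $\Omega(\Delta/\log^2 n)$, applied hierarchically over $O(\log n)$ scales, with $M$ the set of everywhere-padded vertices and a union bound for $\Pr[v\in M]$. Your distortion arithmetic and the $\diam\rhop{h}(G)=\infty$ reduction are likewise in line with the paper's treatment.
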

Note that our \Cref{thm:UltrametricRamsey} implies \cite{HHZ21}
\Cref{thm:HHZ21} with an improved distortion parameter of $O(\frac{\log
n}{\eps})$, while our \Cref{thm:UltrametricRamseyAlt} implies \cite{HHZ21}
\Cref{thm:HHZ21} with an improved hop-stretch guarantee of $\red{O(\frac{\log
n\cdot\log\log n}{\eps})}$ and improved distortion parameter of $O(\frac{\log
n\cdot\log\log n}{\eps})$.
\cite{HHZ21} applied \Cref{thm:HHZ21} to obtain many different
approximation algorithms.
While their application is not straightforward,
one can use \Cref{thm:UltrametricRamsey} or \Cref{thm:UltrametricRamseyAlt}
instead of \Cref{thm:HHZ21}, to obtain improved approximation factors in all
the approximation algorithms in \cite{HHZ21}. We refer to \cite{HHZ21} for
problem definitions, background, approximation algorithms and their analysis,
and only summarize the results in the following corollary. See
\Cref{tab:ApproxAlgs} for a comparison between our results and previous results.
\begin{corollary}\label{cor:HHZ21Application}
	The hop-constrained version of each of the following problems has
	a poly-time approximation algorithm as follows:
	\begin{enumerate}
		\item The oblivious Steiner forest with $O(\log^2n)$-cost
		approximation and $O(\log^3n)$ hop approximation.
		Alternatively, one can obtain $\tilde{O}(\log^2n)$-cost approximation
		and $\tilde{O}(\log^2n)$ hop approximation.
		\item The relaxed k-Steiner tree with $O(\log n)$-cost approximation
		and $O(\log^3n)$ hop approximation.
		Alternatively, one can obtain $\tilde{O}(\log n)$-cost approximation
		and $\tilde{O}(\log^2n)$ hop approximation.
		\item The k-Steiner tree with $O(\log n\cdot\log k)$-cost
		approximation and $O(\log^3n)$ hop approximation.
		Alternatively, one can obtain $\tilde{O}(\log n\cdot\log k)$-cost
		approximation and $\tilde{O}(\log^2n)$ hop approximation.
		\item The oblivious network design with $O(\log^3 n)$-cost
		approximation and $O(\log^3n)$ hop approximation.
		Alternatively, one can obtain $\tilde{O}(\log^3n)$-cost approximation
		and $\tilde{O}(\log^2n)$ hop approximation.
		\item The group Steiner tree with $O(\log^3 n\cdot\log k)$-cost
		approximation and $O(\log^3n)$ hop approximation.
		Alternatively, one can obtain $\tilde{O}(\log^3n\cdot\log k)$-cost
		approximation and $\tilde{O}(\log^2n)$ hop approximation.
		\item The online group Steiner tree with $O(\log^4 n\cdot\log
		\kappa)$-cost approximation and $O(\log^3n)$ hop approximation.
		Alternatively, one can obtain $\tilde{O}(\log^4n\cdot\log
		\kappa)$-cost approximation and $\tilde{O}(\log^2n)$ hop
		approximation.
		\item The group Steiner forest with $O(\log^5 n\cdot\log k)$-cost
		approximation and $O(\log^3n)$ hop approximation.
		Alternatively, one can obtain $\tilde{O}(\log^5n\cdot\log k)$-cost
		approximation and $\tilde{O}(\log^2n)$ hop approximation.
		\item The online group Steiner forest with $O(\log^6 n\cdot\log
		\kappa)$-cost approximation and $O(\log^3n)$ hop approximation.
		Alternatively, one can obtain $\tilde{O}(\log^6n\cdot\log
		\kappa)$-cost approximation and $\tilde{O}(\log^2n)$ hop
		approximation.
	\end{enumerate}
\end{corollary}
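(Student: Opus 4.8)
## Proof proposal for Corollary~\ref{cor:HHZ21Application}

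The plan is to invoke the approximation algorithms of \cite{HHZ21} essentially verbatim, substituting our improved Ramsey-type embeddings (\Cref{thm:UltrametricRamsey} and \Cref{thm:UltrametricRamseyAlt}) wherever \cite{HHZ21} uses their \Cref{thm:HHZ21}. First I would record the precise way in which each parameter of the embedding propagates into the final approximation guarantees in \cite{HHZ21}: in all their reductions, the cost approximation is (roughly) a product of the distortion $t$ of the Ramsey embedding, the inclusion parameter $\tfrac1\eps$ when applicable, and the approximation ratio of the underlying tree algorithm; the hop approximation equals (up to constants) the hop-stretch $\beta$ of the embedding. Since \Cref{thm:HHZ21} has $t=O(\tfrac{\log^2 n}{\eps})$ and $\beta=O(\log^2 n)$, replacing it by \Cref{thm:UltrametricRamsey}, which has $t=O(\tfrac{\log n}{\eps})$ and $\beta=O(\tfrac{\log^2 n}{\eps})$, shaves one $\log n$ factor off the cost (after setting $\eps$ to an absolute constant, so the extra $\tfrac1\eps$ in $\beta$ is harmless) while keeping the hop approximation at $O(\log^3 n)$ once the recursive blowups in \cite{HHZ21} are accounted for; replacing it by \Cref{thm:UltrametricRamseyAlt}, which has $t=O(\tfrac{\log n\cdot\log\log n}{\eps})=\tilde O(\log n)$ and $\beta=\tilde O(\log n)$, gives the ``alternative'' rows with $\tilde O$ cost and $\tilde O(\log^2 n)$ hop approximation (the extra $\log n$ in the hop bound again coming from the same recursive structure in \cite{HHZ21}).

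Concretely, I would go through the seven/eight items one at a time. For each, I would (i) cite the relevant theorem of \cite{HHZ21} together with the tree-algorithm it builds on — e.g. $O(1)$ for oblivious Steiner forest, $O(\log n)$ for relaxed $k$-Steiner tree, $O(\log n\cdot\log k)$ or $O(\log k)$ for $k$-Steiner tree, the $O(\log^2 n)$ network-design bound of \cite{HHZ21NonHop}-style machinery, $O(\log n\cdot\log k)$ of \cite{GKR00} for group Steiner tree, and its online/forest variants — (ii) observe that \cite{HHZ21}'s reduction uses the Ramsey embedding only as a black box satisfying the two bullets of \Cref{thm:HHZ21}, which both of our theorems satisfy with strictly better parameters, and (iii) multiply out the parameters. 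For instance, group Steiner tree becomes $(\text{distortion})\cdot(\text{tree-alg ratio}) = O(\tfrac{\log n}{\eps})\cdot O(\log^2 n\cdot\log k) = O(\log^3 n\cdot\log k)$ cost (item~5) with hop approximation $O(\log^3 n)$, and the $\tilde O$ variant follows identically from \Cref{thm:UltrametricRamseyAlt}. The $k$-Steiner tree item additionally records that this is the first cost improvement over \cite{KS16}, but that is a remark, not part of the proof.

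The main obstacle — and the reason I would be careful rather than hasty — is that the reductions of \cite{HHZ21} are not literally ``plug in the embedding and read off the answer''. One must verify that the \emph{partial} nature of the Ramsey embedding (only $M$ is preserved, and $\Pr[v\in M]\ge 1-\eps$, or $\ge\Omega(n^{-1/k})$ for the non-$\eps$ version) is handled the same way, and in particular that the hop-stretch blowups accumulate in exactly the same multiplicative fashion; e.g. for the problems where \cite{HHZ21} composes the embedding with a further reduction (oblivious network design, group Steiner forest and their online versions), one loses an extra $\log n$ in cost and, in the forest cases, an extra $\log n$ in the hop bound, which must be tracked to land on $O(\log^3 n)$ versus $O(\log^2 n)$ hop approximation and to get $\tilde O(\log^2 n)$ hops in the alternative forest rows of \Cref{tab:ApproxAlgs}. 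I expect the cleanest write-up is simply: ``The proofs are identical to the corresponding proofs in \cite{HHZ21}, using \Cref{thm:UltrametricRamsey} (resp.\ \Cref{thm:UltrametricRamseyAlt}) in place of \Cref{thm:HHZ21}; the only change is that the distortion parameter improves from $O(\tfrac{\log^2 n}{\eps})$ to $O(\tfrac{\log n}{\eps})$ (resp.\ to $\tilde O(\log n)$ together with hop-stretch $\tilde O(\log n)$), and the stated bounds follow by substituting these parameters into the analysis of \cite{HHZ21}.'' I would then include the per-item parameter arithmetic in a short displayed list for completeness, but defer all structural details to \cite{HHZ21}.
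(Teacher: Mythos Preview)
Your proposal is correct and matches the paper's approach exactly: the paper does not give a standalone proof of \Cref{cor:HHZ21Application} either, but simply observes that \Cref{thm:UltrametricRamsey} and \Cref{thm:UltrametricRamseyAlt} strictly improve the parameters of \Cref{thm:HHZ21}, and then refers the reader to \cite{HHZ21} for the reductions and analysis, with the stated bounds obtained by parameter substitution. If anything, you are being more careful than the paper, which does not spell out the per-item arithmetic at all.
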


Most notably, the state of the art bicriteria approximation for the h.c.
$k$-Steiner tree problem$^{\ref{foot:kSteinerTree}}$ was by Khani and
Salavatipour \cite{KS16}, who obtain $O(\log^2n)$ cost approximation and
$O(\log n)$ hop-stretch (improving over \cite{HKS09}).
Note that our $O(\log n\cdot\log k)$ cost approximation is strictly superior,
while our hop-stretch is inferior.

\subsection{Application of the clan embedding}
For the h.c. relaxed k-Steiner tree, h.c. k-Steiner tree, h.c. oblivious
Steiner forest, and h.c. oblivious network design problems
\Cref{cor:HHZ21Application} is the best approximation we obtain (by simply
replacing \Cref{thm:HHZ21} with
\Cref{thm:UltrametricRamsey,thm:UltrametricRamseyAlt}).
For the h.c. group Steiner tree/forest problems, and their online versions, we
provide better approximation factors by going through clan embeddings, and in
particular the subgraph preservation properties of
\Cref{cor:hBoundedSubgraph,cor:hBoundedSubgraphAlt} and
\Cref{thm:Subgraph,thm:SubgraphAlt}.

\cite{HHZ21} had an analogous theorem to \Cref{thm:Subgraph}, while
the cost was
$O(\log^3n)$ instead of our $O(\log n)$ (see Theorem 8 in their
\href{https://arxiv.org/abs/2011.06112}{arXiv version}). Using our
\Cref{thm:Subgraph,thm:SubgraphAlt} provides a $O(\log^2n)$ multiplicative
improvement for the h.c. group Steiner forest problem and its online version.
To illustrate the strength of \Cref{thm:Subgraph,thm:SubgraphAlt}, we provide
full details in \Cref{subsec:GroupForest,subsec:OnlineGroupForest}.
Further, for the h.c. group Steiner tree problem and its online version the
optimal solution is an $\red{h}$-respecting subgraph, and hence we can use
\Cref{cor:hBoundedSubgraph,cor:hBoundedSubgraphAlt} instead of
\Cref{thm:Subgraph,thm:SubgraphAlt}, thus shaving an $O(\log n)$ factor
from the hop stretch, see \Cref{subsec:GroupTree,subsec:OnlineGroupTree}
for details.
See \Cref{tab:ApproxAlgs} for a comparison between our results and previous results.

\subsubsection{Hop-constrained group Steiner tree}\label{subsec:GroupTree}
In the group Steiner tree problem we are given $k$ sets
$g_1,\dots,g_k\subseteq V$, and a root vertex $r\in V$. The goal is to
construct a minimum-weight tree spanning the vertex $r$ and at least one
vertex from each group (set) $g_i$.
In general, the state of the art is an $O(\log^2n\cdot \log k)$-approximation
\cite{GKR00}.
In the h.c. version of the problem, we are given in addition a hop-bound
$\red{h}$, and the requirement is to find a minimum weight subgraph $H$, such
that for every $g_i$, $\hop_H(r,g_i)\le \red{h}$.
Denote by $\opt$ the weight of the optimal solution.
\cite{HHZ21} provided a bicriteria approximation by constructing a subgraph
$H$ of weight $O(\log^4n\cdot \log k)\cdot\opt$, such that for every $i$,
$\hop_H(r,g_i)\le\red{O(\log^3n)\cdot h}$.
We restate our improvement:

\GroupSteinerTree*
Note that ignoring the hop constraints, our approximation factor matches
the state of the art \cite{GKR00}.
\begin{proof}[Proof of \Cref{thm:GroupSteinerTree}]
We apply \Cref{cor:hBoundedSubgraph} with hop parameter $\red{h'}=\red{2h}$,
stretch parameter $2$, and $r$ as the special vertex, to obtain a path-tree
one-to-many embedding $f$ into a tree $T$, with associated paths
$\{P^T_e\}_{e\in E(T)}$. For every $i$, let $\tilde{g}_i=f(g_i)$, and set
$\tilde{r}$ to be the only vertex in $f(r)$.
Garg, Konjevod, and Ravi \cite{GKR00} constructed an $O(\log n\cdot\log k)$
approximation algorithm for the group Steiner tree problem in trees. We apply
their algorithm on the tree $T$ with center $\tilde{r}$ and the groups
$\tilde{g}_1,\dots,\tilde{g}_k$. Suppose that \cite{GKR00} returns the solution
$\tilde{H}\subseteq T$. Our solution to the original problem will be
$H=\bigcup_{e\in \tilde{H}}P^T_e$. \footnote{If it is desirable that the
obtained solution will be a tree, one can simply take a shortest path tree
rooted at $r$ w.r.t. unweighted $H$. This will be a valid solution of at most
the same weight.}

For every $i$, $\tilde{H}$  contains a path $P_i=e_1,\dots,e_s$ from
$\tilde{r}$ to some vertex $v'\in \tilde{g}_i=f(g_i)$, in particular $v'\in
f(v)$ for some $v\in g_i$. It follows that the solution $H$ we return
will contain a path $P^{T}_{\tilde{r},v'}$ from $r$ to $v$. As the
path-tree one-to-many embedding $f$ has hop-bound $\red{O(\log^2n)\cdot h'}$,
it follows that $\hop_{H}(r,g_i)\le \red{O(\log^2n)\cdot h}$ as required.
We conclude that $H$ is a valid solution to the problem.

Next we bound the weight of $H$.
Let $F$ be an optimal solution in $G$ of weight $\opt$. Note that we can
assume that $F$ is a tree, as otherwise we can take the shortest path tree
w.r.t. the unweighted version of the graph $F$ rooted at $r$. The result will
still be a valid solution of weight at most $\opt$.
Note that for every $u\in F$, $\hop_F(r,u)\le \red{h}$, as otherwise we can
remove this vertex from $F$. As $F$ is a tree, every pair of vertices $u,v\in
F$ have a unique path between them with at most $\red{2h}$ hops, in
particular $d_G\rhop{2h}(u,v)\le d_F(u,v)$.  We conclude that $F$ is
$\red{h'}$-respecting.
Hence by \Cref{cor:hBoundedSubgraph}, $T$ contains a connected graph
$\tilde{F}$, such that for every $u\in F$, $\tilde{F}$ contains some vertex in
$f(u)$, and $w_T(\tilde{F})=O(\log n)\cdot w_G(F)=O(\log n)\cdot\opt$.
Note that $\tilde{F}$ is a valid solution to the group Steiner tree problem in
$T$. Hence by \cite{GKR00},
\[
w_{G}(H)\le w_{T}(\tilde{H})\le O(\log n\cdot\log k)\cdot
w_{T}(\tilde{F})=O(\log^{2}n\cdot\log k)\cdot\opt~.
\]
as required.
The alternative tradeoff is obtained by using \Cref{cor:hBoundedSubgraphAlt}
instead of \Cref{cor:hBoundedSubgraph}.
\end{proof}

\subsubsection{Online hop-constrained group Steiner
tree}\label{subsec:OnlineGroupTree}
The online group Steiner tree problem is an online version of the group
Steiner tree problem. Here we are given a graph $G=(V,E,w)$ with a root $r\in
V$. The groups $g_1,g_2,\dots$ arrive one at a time. The goal is to maintain
trees $T_1,T_2,\dots$ such that for every $i$,  $T_{i-1}\subseteq T_i$, and
such that in $T_i$, $r$ is connected to at least one vertex in $g_i$.
An algorithm has \emph{competitive ratio} $t$ if for every sequence of sets
$g_1,\dots,g_i$, the tree $T_i$ has weight at most $t$ times larger than the
optimal solution for the offline group Steiner tree problem with input groups
$g_1,\dots,g_i$ (and root $r$).
For randomized algorithms, we say that the algorithm has competitive ratio
$\alpha$, if for every input sequence (fixed in advance, unknown by the
algorithm) the expected ratio between the obtained solution to the optimal one
is at most $\alpha$.

As was shown by Alon \etal \cite{AAABN06}, in full generality, no
sub-polynomial approximation is possible.
Therefore, we will assume that there is a set of subsets $\mathcal{K}\subseteq
2^{V}$ of size $\kappa=|\cK|$, and all the sets $g_i$ are chosen from
$\mathcal{K}$.
The parameter in the competitive ratio of \cite{AAABN06} (and here) is
$\kappa$, the number of possible subsets, rather than the number of
requests that actually appear in a particular input sequence.
For this version, \cite{AAABN06} obtain a randomized $O(\log^2 n\cdot\log
\kappa)$ competitive ratio for the case where the host graph $G$ is a tree,
implying an $O(\log^3 n\cdot\log \kappa)$ competitive ratio for
general graphs.

In the online h.c. group Steiner tree problem, there is an additional
requirement that the number of hops in $T_i$ between $r$ and a member of $g_i$ is
at most $\red{h}$, i.e. $\hop_{T_i}(r,g_i)\le\red{h}$.
Using their $\red{h}$-Hop Repetition Tree Embedding, \cite{HHZ21} obtained
hop-stretch $\red{O(\log^3 n)}$, and competitive ratio $O(\log^5
n\cdot\log \kappa)$.
That is, constructing a sequence of subgraphs $T_1\subseteq
T_2\subseteq\dots$, such that for every $i$,
$\hop_{T_i}(r,g_i)\le\red{O(\log^3 n)\cdot h}$, and the (expected) weight of
$T_i$ is at most $O(\log^5 n\cdot\log \kappa)\cdot \opt_i$, where $\opt_i$ is
the cost of the optimal solution for the offline $\red{h}$ h.c. group Steiner
tree problem for root $r$ and groups $g_1,\dots,g_i$.

\begin{theorem}\label{thm:OnlineGroupSteinerTree}
	There is a poly-time randomized algorithm for the online
	$\red{h}$-hop-constrained group Steiner tree problem on graphs with
	polynomial aspect ratio, that maintains a solution $\{T_i\}_{i\ge
	1}$ which is
	$O(\log^3n\cdot\log \kappa)$-cost competitive (i.e. $\mathbb{E}[w(T_i)]\le
	O(\log^3n\cdot\log \kappa)\cdot\opt_i$), and such that $\hop_{T_i}(r,g_i)\le
	\red{O(\log^2n)\cdot h}$ for every $i$.
	Alternatively, one can obtain a solution which is
	$\tilde{O}(\log^3n)\cdot\log \kappa$-cost competitive, and such that
	$\hop_{T_i}(r,g_i)\le \red{\tilde{O}(\log n)\cdot h}$ for every $i$.
\end{theorem}
Note that ignoring the hop constraints, our approximation factor matches
the state of the art \cite{AAABN06}. It is also interesting to note that the
only randomized component in our algorithm is the \cite{AAABN06} black box
solution for trees.
\begin{proof}[Proof of \Cref{thm:OnlineGroupSteinerTree}]
	Consider an instance of the online $\red{h}$-h.c. group Steiner
	tree problem. Initially, we are given a graph $G=(V,E,w_G)$, hop parameter
	$\red{h}$, root $r\in V$, and a set $\mathcal{K}\subseteq2^V$ of possible
	groups, where $\kappa=|\mathcal{K}|$.
	We apply \Cref{cor:hBoundedSubgraph} with hop parameter
	$\red{h'}=\red{2h}$, stretch parameter $2$, and $r$ as the special vertex.
	As a result, we obtain a path-tree one-to-many embedding $f$ into a tree
	$\mathcal{T}$, with associated paths $\{P^{\mathcal{T}}_e\}_{e\in
	E(\mathcal{T})}$.

	Next, we apply Alon {\em et al.}'s \cite{AAABN06} algorithm for the online
	group Steiner tree problem on trees with competitive ratio $O(\log^2
	n\cdot\log \kappa)$.
	Here the input graph is $\mathcal{T}$, the root vertex is the only vertex
	$\tilde{r}$ in $f(r)$, and the set of possible groups is
	$\tilde{\mathcal{K}}=\left\{f(K)\mid K\in\mathcal{K}\right\}$.
	We will iteratively receive as an input the sets $g_1,g_2,\dots\in
	\mathcal{K}$, and feed Alon {\em et al.}'s \cite{AAABN06} algorithm the
	sets $f(g_1),f(g_2),\dots\in \tilde{\mathcal{K}}$. As a result, we will
	obtain sub-trees $\tilde{T}_1,\tilde{T}_2,\dots$ of $\mathcal{T}$.
	For every $i$, we will return the solution
	$T_i=\cup_{e\in\tilde{T}_i}P^{\mathcal{T}}_e$.

	We first argue that our solution is valid. As
	$\tilde{T}_1\subseteq\tilde{T}_2\subseteq\dots$, it holds that
	$T_1\subseteq T_2\subseteq\dots$.
	Next, $\tilde{T}_i$ contains a path $P_i=e_1,\dots,e_s$ from $\tilde{r}$
	to some vertex $v'\in f(g_i)$, in particular $v'\in f(v)$ for some $v\in
	g_i$. It follows that $T_i$ will contain a path
	$P^{\mathcal{T}}_{\tilde{r},v'}$ from $r$ to $v$. As the path-tree
	one-to-many embedding $f$ has hop-bound $\red{O(\log^2n)\cdot h'}$, it
	follows that $\hop_{T_i}(r,g_i)\le \red{O(\log^2n)\cdot h}$ as required.

	Next, we bound the expected weight of the returned solution.
	Let $H_i$ be the optimal solution for the (offline) $\red{h}$-h.c. group
	Steiner tree problem in the graph $G$ with root $r$ and groups
	$g_1,\dots,g_i$, of cost $\opt_i=w_G(H_i)$.
	Similarly to the proof of \Cref{thm:OnlineGroupSteinerTree},
	$H_i$ is a tree.
	Note that for every $u\in H_i$, $\hop_{H_i}(r,u)\le \red{h}$, as otherwise
	we can remove this vertex from $H_i$. As $H_i$ is a tree, every pair of
	vertices $u,v\in H_i$ have a unique path between them with at most
	$\red{2h}$ hops, in particular $d_G\rhop{2h}(u,v)\le d_{H_i}(u,v)$.  We
	conclude that $H_i$ is $\red{h'}$-respecting.
	Hence by \Cref{cor:hBoundedSubgraph}, $T$ contains a connected graph
	$\tilde{H}_i$, such that for every $u\in H_i$, $\tilde{H}_i$ contains some
	vertex in $f(u)$, and $w_T(\tilde{H}_i)=O(\log n)\cdot w_G(H_i)=O(\log
	n)\cdot\opt_i$.
	Hence $\tilde{H}_i$ is a valid solution to the group Steiner tree problem
	in $T$, and thus $\tilde{\opt}_i=O(\log n)\cdot\opt_i$, where
	$\tilde{\opt}_i$ denotes the cost of the optimal solution in
	$\tilde{H}_i$.
	By \cite{AAABN06}, it holds that
	\[
	\mathbb{E}\left[w_{G}(T_{i})\right]\le\mathbb{E}\left[w_{T}(\tilde{T}_{i})
	\right]\le O(\log^{2}n\cdot\log \kappa)\cdot\tilde{\opt}_{i}=O(\log^{3}n
	\cdot\log \kappa)\cdot\opt_{i}~.
	\]

	The alternative tradeoff is obtained by using
	\Cref{cor:hBoundedSubgraphAlt} instead of \Cref{cor:hBoundedSubgraph}.
\end{proof}

\subsubsection{Hop-constrained group Steiner forest}\label{subsec:GroupForest}
In the group Steiner forest problem we are given $k$ subset pairs
$(S_1,R_1),\dots,(S_k,R_k)\subseteq V$, where for every $i\in[k]$,
$S_i,R_i\subseteq V$. The goal is to construct a minimum-weight forest $F$,
such that for every $i$, there are vertices $s_i\in S_i$ and $r_i\in R_i$,
belonging to the same connected component of $F$.
The group Steiner forest problem is sometimes called the ``generalized
connectivity problem'' as it generalizes both group Steiner tree, and Steiner
forest problems.

The group Steiner forest problem was introduced by Alon \etal \cite{AAABN06}.
The state of the art is by Chekuri \etal \cite{CEGS11} who obtained an
$O(\log^2n\log^2k)$ approximation factor.

In the hop-constrained version of the problem (introduced by \cite{HHZ21}), we
are given in addition a hop-bound $\red{h}$, and the requirement is to find a
minimum weight subgraph $H$, such that for every $i$,
$\hop_H(S_i,R_i)\le \red{h}$.
Denote by $\opt$ the weight of the optimal solution.
\cite{HHZ21} obtained a bicriteria approximation by constructing a subgraph
$H$ of weight $O(\log^6n\cdot \log k)\cdot\opt$, such that for every $i$,
$\hop_H(S_i,R_i)\le\red{O(\log^3n)\cdot h}$.
We obtain the following:
\GroupSteinerForest*
\begin{proof}
	We apply \Cref{thm:Subgraph} (with hop parameter $\red{h}$, and stretch
	parameter $2$) to obtain a path-tree one-to-many embedding $f$ into a tree
	$T$, with associated paths $\{P^T_e\}_{e\in E(T)}$. For every $i$, let
	$\tilde{S}_i=f(S_i)$, and $\tilde{R}_i=f(R_i)$.
	Naor, Panigrahi, and Singh \cite{NPS11} constructed an $O(d\cdot\log^2
	n\cdot\log k)$ approximation algorithm for the group Steiner forest
	problem in trees of depth $d$. \footnote{The paper \cite{NPS11} deals only
	with the online version of the group Steiner forest problem. From private
	communication with Hershkowitz \cite{Hershkowitz22}, the $O(d\cdot\log^2
	n\cdot\log k)$ approximation factor follows from \cite{NPS11} implicitly.}
	Following the discussion in the proof of \Cref{lem:PathClanEmbedding} the
	tree returned by \Cref{cor:hBoundedSubgraph} (and therefore also
	\Cref{thm:Subgraph}) has depth $O(\log n)$.
	We apply \cite{NPS11} algorithm on the tree $T$ with the group pairs
	$(\tilde{S}_1,\tilde{R}_1),\dots,(\tilde{S}_k,\tilde{R}_k)$. Suppose that
	\cite{NPS11} returns the solution $\tilde{H}\subseteq T$. Our solution to
	the original problem will be $H=\bigcup_{e\in \tilde{H}}P^T_e$.

	For every $i$, $\tilde{H}$ contains a path $P_i=e_1,\dots,e_s$ from a
	vertex $s'\in\tilde{S}_i=f(S_i)$ to a vertex $r'\in \tilde{R}_i=f(R_i)$.
	It follows that the solution $H$ we return will contain a path
	$P^{T}_{s',r'}$ from $f^{-1}(s')\in S_i$ to $f^{-1}(r')\in R_i$. As the
	path-tree one-to-many embedding $f$ has hop-bound $\red{O(\log^3n)\cdot
	h}$, it follows that $\hop_{H}(S_i,R_i)\le \red{O(\log^3n)\cdot h}$
	as required.
	We conclude that $H$ is a valid solution to the problem.

	Next we bound the weight of $H$. Let $F$ be an optimal solution in $G$ of
	weight $\opt$.
	By \Cref{thm:Subgraph}, $T$ contains a subgraph $\tilde{F}$, such that for
	every $u,v\in F$ with $d_F(u,v)\le\red{h}$, $\tilde{F}$ contains two
	vertices from $f(u)$ and $f(v)$ in the same connected component, and
	$w_T(\tilde{F})=O(\log n)\cdot w_G(F)=O(\log n)\cdot\opt$.
	Note that $\tilde{F}$ is a valid solution to the h.c. group Steiner forest
	problem in $T$. Hence by \cite{NPS11},
	\[
	w_{G}(H)\le w_{T}(\tilde{H})\le O(\log^3 n\cdot\log k)\cdot
	w_{T}(\tilde{F})=O(\log^{4}n\cdot\log k)\cdot\opt~.
	\]
	as required.
	The alternative tradeoff is obtained by using \Cref{thm:SubgraphAlt}
	instead of \Cref{thm:Subgraph}.
\end{proof}

\subsubsection{Online hop-constrained group Steiner
forest}\label{subsec:OnlineGroupForest}
The online group Steiner forest is an online version of the group Steiner
forest problem. Here we are given a graph $G=(V,E,w)$, and subset pairs
$(S_1,R_1),\dots,(S_i,R_i),\dots$ (where $S_i,R_i\subseteq V$) arrive one at a
time. The goal is to maintain forests $F_1,F_2,\dots$ such that for every $i$,
there are vertices $s_i\in S_i$ and $r_i\in R_i$, belonging to the same
connected component of $F_i$.
An algorithm has \emph{competitive ratio} $t$ if for every sequence of subset
pairs $(S_1,R_1),\dots,(S_i,R_i)$, the forest $F_i$ has weight at most $t$
times larger than the optimal solution for the offline group Steiner forest
problem with input subset pairs $(S_1,R_1),\dots,(S_i,R_i)$.
For randomized algorithms, the competitive ratio is the worst-case expected
ratio between the weights of the constructed forest and that of the
optimal solution.
It follows from Alon \etal \cite{AAABN06} that in full generality no
sub-polynomial approximation is possible.
Therefore, we will assume that there is a set of subset pairs
$\mathcal{K}\subseteq 2^{V}\times2^{V}$ of size $\kappa=|\mathcal{K}|$, and
all the subset pairs $(S_i,R_i)$ are chosen from $\mathcal{K}$.
For this version, Naor \etal \cite{NPS11} obtain an $O(\log^5n\cdot\log
\kappa)$ competitive ratio for the online group Steiner forest (answering an
open question of \cite{CEGS11}).

In the online h.c. group Steiner tree problem (introduced by \cite{HHZ21}),
there is an additional requirement that the number of hops in $F_i$ between a
member of $S_i$ and a member of $R_i$ is at most $\red{h}$, i.e.
$\hop_{F_i}(S_i,R_i)\le\red{h}$.
Using their $\red{h}$-Hop Repetition Tree Embedding, \cite{HHZ21} obtained
hop-stretch $\red{O(\log^3 n)}$, and competitive ratio $O(\log^7
n\cdot\log \kappa)$.
That is, constructing a sequence of subgraphs $F_1\subseteq
F_2\subseteq\dots$, such that for every $i$,
$\hop_{F_i}(S_i,R_i)\le\red{O(\log^3 n)\cdot h}$, and the (expected) weight of
$F_i$ is at most $O(\log^7 n\cdot\log \kappa)\cdot \opt_i$, where $\opt_i$ is
the cost of the optimal solution for the offline $\red{h}$ h.c. group Steiner
forest problem for subset pairs $(S_1,R_1),\dots,(S_i,R_i)$.
Using our \Cref{thm:Subgraph,thm:SubgraphAlt} we obtain a significant
improvement:

\begin{theorem}\label{thm:OnlineGroupSteinerForest}
	There is a poly-time randomized algorithm for the online
	$\red{h}$-hop-constrained group Steiner forest problem on graphs with
	polynomial aspect ratio, that maintains a solution $\{F_i\}_{i\ge
	1}$ which is
	$O(\log^5n\cdot\log \kappa)$-cost competitive (i.e. $\mathbb{E}[w(F_i)]\le
	O(\log^5n\cdot\log \kappa)\cdot\opt_i$), and such that $\hop_{F_i}(S_i,R_i)\le
	\red{O(\log^3n)\cdot h}$ for every $i$.
	Alternatively, one can obtain a solution which is
	$\tilde{O}(\log^5n)\cdot\log \kappa$-cost competitive, and such that
	$\hop_{F_i}(S_i,R_i)\le \red{\tilde{O}(\log^2 n)\cdot h}$ for every $i$.
\end{theorem}
Note that ignoring the hop constraints, our competitive ratio matches the
state of the art \cite{NPS11}.
\begin{proof}[Proof of \Cref{thm:OnlineGroupSteinerForest}]
	Consider an instance of the online $\red{h}$-h.c. group Steiner forest
	problem. Initially, we are given a graph $G=(V,E,w_G)$, hop parameter
	$\red{h}$, and a set of possible subset pairs
	$\mathcal{K}\subseteq2^V\times2^V$, where $\kappa=|\mathcal{K}|$.
	We apply \Cref{thm:Subgraph} to obtain a path-tree one-to-many embedding $f$
	into a tree $T$, with associated paths $\{P^{T}_e\}_{e\in E(T)}$.
	Naor, Panigrahi, and Singh \cite{NPS11} constructed an online algorithm
	for the group Steiner forest problems on trees of depth $d$ with
	competitive ratio $O(d\cdot\log^3 n\cdot\log \kappa)$.
	Following the discussion in the proof of \Cref{lem:PathClanEmbedding} the
	tree returned by \Cref{cor:hBoundedSubgraph} (and therefore also
	\Cref{thm:Subgraph}) has depth $O(\log n)$.

	We apply \cite{NPS11} online algorithm on the tree $T$ with the set of
	possible subset pairs
	$\tilde{\mathcal{K}}=\{(f(S_i),f(R_i))\mid(S_i,R_i)\in\mathcal{K}\}$.
	We will iteratively receive as an input subset pairs
	$(S_1,R_1),(S_2,R_2),\dots\in \mathcal{K}$, and feed Naor \etal
	\cite{NPS11} algorithm the subset pairs
	$(f(S_1),f(R_1)),(f(S_2),f(R_2)),\dots\in \tilde{\mathcal{K}}$. As a
	result, we will obtain forests $\tilde{F}_1,\tilde{F}_2,\dots$ of $T$.
	For every $i$, we will return the solution
	$\mathcal{F}_i=\cup_{e\in\tilde{F}_i}P^{T}_e$.

	For every $i$, $\tilde{F}_i$ contains a path $P_i=e_1,\dots,e_s$ from a
	vertex $s'\in f(S_i)$ to a vertex $r'\in f(R_i)$.
	It follows that the solution $\mathcal{F}_i$ we return will contain
	a path $P^{T}_{s',r'}$ from $f^{-1}(s')\in S_i$ to $f^{-1}(r')\in R_i$.
	As the path-tree one-to-many embedding $f$ has hop-bound
	$\red{O(\log^3n)\cdot h}$, it follows that
	$\hop_{\mathcal{F}_i}(S_i,R_i)\le \red{O(\log^3n)\cdot h}$ as required.
	We conclude that $\mathcal{F}_i$ is a valid solution to the problem.

	Next we bound the weight of $\mathcal{F}_i$. Let $H_i$ be an optimal
	solution in $G$ of weight $\opt_i$ on input $(S_1,R_1),\dots,(S_i,R_i)$.
	By \Cref{thm:Subgraph}, $T$ contains a subgraph $\tilde{H}_i$, such that
	for every $u,v\in H_i$ with $d_{H_i}(u,v)\le\red{h}$, $\tilde{H}_i$
	contains two vertices from $f(u)$ and $f(v)$ in the same connected
	component, and $w_T(\tilde{H}_i)=O(\log n)\cdot w_G(H_i)=O(\log
	n)\cdot\opt_i$.
	Note that $\tilde{H}_i$ is a valid solution to the group Steiner forest
	problem in $T$. Hence by \cite{NPS11},
	\[
	\mathbb{E}\left[w_{G}(\mathcal{F}_{i})\right]\le\mathbb{E}\left[w_{T}(
	\tilde{F}_{i})\right]\le O(\log^{4}n\cdot\log \kappa)\cdot w_{T}(
	\tilde{H}_{i})=O(\log^{5}n\cdot\log \kappa)\cdot\opt_i~.
	\]
	as required.
	The alternative tradeoff is obtained by using \Cref{thm:SubgraphAlt}
	instead of \Cref{thm:Subgraph}.
\end{proof}

\section{Hop-constrained metric data structures}
In this section we construct hop-constrained metric data structures.
Mendel and Naor \cite{MN07} (and later Abraham \etal \cite{ACEFN20}) used
Ramsey trees in order to construct metric data structures. This is done by
simply constructing Ramsey trees until every vertex belongs to the set $M$ in
some tree. Here we begin by following the exact same approach using
\Cref{lem:UltraMeasureAlt} to obtain
\Cref{lem:DistanceLabelingLargeStretch,lem:DistanceOracleLargeStretch} below.

The distortion factor in
\Cref{lem:DistanceLabelingLargeStretch,lem:DistanceOracleLargeStretch} is
large, and we can improve it significantly. In our actual metric data
structures we will use
\Cref{lem:DistanceLabelingLargeStretch,lem:DistanceOracleLargeStretch} as a
subroutine to obtain an initial coarse estimate of the hop-constrained
distance. Given such an estimate $d\rhop{h}_G(u,v)\approx2^i$, we can construct
an auxiliary graph $G_i$ where the shortest path distance (not hop-constrained)
is an accurate estimate of $d\rhop{h}_G(u,v)$. We then use state of the art
metric data structures on top of $G_i$ in a black box manner.

We begin with a construction of an asymmetric version of a distance labeling,
where each vertex has short and long labels.

\begin{lemma}\label{lem:DistanceLabelingLargeStretch}
	Given a weighted graph $G=(V,E,w)$ on $n$ vertices with polynomial aspect
	ratio, and parameters $k\in\N$, $\eps\in(0,1)$, $\red{h}\in\N$, there is
	an efficient construction of a distance labeling that assigns each node
	$v$ a label $\ell(v)$ consisting of two parts
	$\ell^S(v),\ell^L(v)$ such that:
	$\ell^S(v)$ has size $O(1)$, $\ell^L(v)$ has size $O(n^{\frac1k}\cdot k)$,
	and there is an algorithm $\mathcal{A}$ that for every pair of vertices
	$u,v$, given $\ell^L(v),\ell^S(u)$  returns a value
	$\A(\ell^L(v),\ell^S(u))$ such that $d^{\red{(O(k\cdot\log\log n)\cdot
	h)}}_{G}(u,v)\le \A(\ell^L(v),\ell^S(u))\le O(k\cdot\log\log n)\cdot
	d\rhop{h}_{G}(u,v)$.
\end{lemma}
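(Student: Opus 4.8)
The plan is to mimic the classical Mendel--Naor distance labeling construction \cite{MN07}, but using our hop-constrained alternative Ramsey embedding \Cref{thm:UltrametricRamseyAlt} (via its distributional version \Cref{lem:UltraMeasureAlt}) instead of the ordinary Ramsey embedding, and a cheap exact distance labeling for ultrametrics (trees) as the underlying primitive. First I would recall that ultrametrics (being special trees) admit a distance labeling with $O(1)$-size labels, $O(1)$ query time, and stretch $1$ — in fact one can just store, for each leaf, the sequence of labels along the root-to-leaf path in a compressed form, but it suffices to cite the known tree distance labeling of \cite{FGNW17} (constant size, $1.5$-distortion, which is fine up to constants; or an exact one since an ultrametric distance is just the label of the lca). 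The key combinatorial fact I would establish first is a ``covering'' statement: there is a collection of $N = O(n^{1/k}\cdot k)$ pairs $(U_1,M_1),\dots,(U_N,M_N)$, where each $U_j$ is a dominating ultrametric on $V$ with Ramsey hop-distortion $(O(k\log\log n), M_j, \red{O(k\log\log n)}, \red{h})$, such that every vertex $v\in V$ lies in at least one $M_j$. This is the deterministic analogue of the probabilistic statement in \Cref{thm:UltrametricRamseyAlt}, and it is obtained in the standard iterative way: apply \Cref{lem:UltraMeasureAlt} with the uniform $(\ge 1)$-measure $\mu\equiv 1$ and $\mathcal{M}=V$ to get $M_1$ with $|M_1|\ge n^{1-1/k}$, then recurse on the ``uncovered'' set $\mathcal{M}=V\setminus(M_1\cup\dots\cup M_{j-1})$; since each step removes a $1-|\mathcal{M}|^{-1/k}$ fraction, after $O(n^{1/k}\cdot k)$ steps every vertex is covered (the bookkeeping here is exactly as in \cite[Lemma~...]{MN07}, and I would invoke it rather than redo it).

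Given this, the construction of the labeling is immediate. First I would build the $N$ ultrametrics $U_1,\dots,U_N$ above; then for each $U_j$ I attach to every vertex $v$ its tree distance-label $\ell_j(v)$ in $U_j$ (of size $O(1)$). The long label $\ell^L(v)$ is the concatenation $\big(\ell_1(v),\ell_2(v),\dots,\ell_N(v)\big)$, which has total size $O(N) = O(n^{1/k}\cdot k)$ as required. The short label $\ell^S(v)$ is just an index $j(v)\in[N]$ of some set $M_{j(v)}\ni v$ (which exists by the covering property), together with the single tree-label $\ell_{j(v)}(v)$ — this has size $O(1)$. The query algorithm $\mathcal{A}$, on input $\ell^L(v)$ and $\ell^S(u)$, reads off $j=j(u)$ and $\ell_j(u)$ from $\ell^S(u)$, extracts $\ell_j(v)$ from $\ell^L(v)$, runs the underlying tree-distance-label decoder on $(\ell_j(v),\ell_j(u))$ to get (an approximation of) $d_{U_j}(u,v)$, and returns that value. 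Correctness: since $U_j$ is dominating with hop-stretch $\red{O(k\log\log n)}$, we get $d_G^{\red{(O(k\log\log n)\cdot h)}}(u,v)\le d_{U_j}(u,v)$; and since $u\in M_j$, the Ramsey guarantee gives $d_{U_j}(u,v)\le O(k\log\log n)\cdot d_G^{\red{(h)}}(u,v)$. The $1.5$-factor (or exact) loss of the tree labeling only changes the hidden constant, so the claimed inequality $d_G^{\red{(O(k\log\log n)\cdot h)}}(u,v)\le \mathcal{A}(\ell^L(v),\ell^S(u))\le O(k\log\log n)\cdot d_G^{\red{(h)}}(u,v)$ holds.

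The main obstacle — really the only non-routine point — is making sure the asymmetry is exploited correctly: the Ramsey guarantee is one-sided ($u$ must be in the preserved set $M_j$, while $v$ is arbitrary), so the short label must belong to the vertex $u$ that plays the role of the ``center'' in the embedding and the long label to the arbitrary endpoint $v$. Concretely, $\mathcal{A}$ must use the ultrametric indexed by $u$'s short label, not $v$'s, and one has to check this is consistent with how the statement quantifies ($\mathcal{A}(\ell^L(v),\ell^S(u))$ approximates $d_G^{\red{(h)}}(u,v)$, which is symmetric in $u,v$ as a value but the embedding that realizes it is not). I would also note in passing that the polynomial aspect ratio hypothesis is not actually needed for \Cref{thm:UltrametricRamseyAlt} itself, but it is retained here for uniformity with the later theorems; the efficiency claim follows because \Cref{lem:UltraMeasureAlt} and the iteration are polynomial-time, and the tree labeling of \cite{FGNW17} is efficiently constructible. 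Everything else (label sizes, query time $O(1)$, the union bound/covering count) is a direct transcription of the Mendel--Naor argument with $\log n$ replaced by $O(k\log\log n)$ throughout.
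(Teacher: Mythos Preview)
Your proposal is correct and follows essentially the same approach as the paper's proof: iteratively apply \Cref{lem:UltraMeasureAlt} with the uniform $(\ge1)$-measure on the remaining uncovered set $\mathcal{M}$ to produce $O(n^{1/k}\cdot k)$ ultrametrics covering $V$, use the tree labeling of \cite{FGNW17} on each, and define $\ell^S(u)$ as the index $\home(u)$ together with $\ell_{\home(u)}(u)$ while $\ell^L(v)$ is the concatenation of all $\ell_i(v)$. The paper spells out the $O(n^{1/k}\cdot k)$ round count via a geometric halving argument rather than citing \cite{MN07}, but otherwise the construction, label definitions, query algorithm, and correctness argument are identical to yours.
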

\begin{proof}
	Set $\mathcal{M}_0=V$.
	For $i\ge 1$, apply \Cref{lem:UltraMeasureAlt}
	\footnote{\label{foot:DiffTradeLabeling}One might wonder why we use here
	\Cref{lem:UltraMeasureAlt}, and whether other interesting tradeoffs can be
	obtained if we instead use \Cref{lem:UltraMeasure}. The answer is
	that \Cref{lem:DistanceLabelingLargeStretch} here is not our end goal, but
	will only be used by
	\Cref{thm:DistanceOracle,thm:DistanceLabeling,thm:RoutingScheme} to get an
	initial rough estimate of the h.c. distance. In particular, there is no
	significant difference between distortion $O(k)$ and $O(k\cdot \log\log
	n)$. However, the difference between hop-stretch $O(k\cdot \log\log n)$
	and $O(k\cdot\log n)$ (which we will get by using \Cref{lem:UltraMeasure})
	is very significant. We are therefore using here only
	\Cref{lem:UltraMeasureAlt}.} with parameters
	$k,\red{h},\mathcal{M}_{i-1}$, and uniform measure $\mu$ to construct a
	Ramsey type embedding into ultrametric $U_i$ with Ramsey hop distortion
	$(O(k\cdot \log\log n),M_i,\red{O(k\cdot\log\log n)},\red{h})$. Set
	$\mathcal{M}_i=\mathcal{M}_{i-1}\backslash M_i$.
	This process halts once $\mathcal{M}_i=\emptyset$.

	To analyze the number of steps until the process halts, denote by $i_1$
	the first time $|\mathcal{M}_{i_1}|\le\frac12n$.
	As long as $i<i_1$, $|M_i|=\mu(M_i)\ge
	\mu(\mathcal{M}_{i-1})^{1-\frac1k}\ge (\frac n2)^{1-\frac1k}$.
	Thus for each such step we remove from $\cM_i$ at least $(\frac
	n2)^{1-\frac1k}$, until we reach $i_1$.
	It follows that
	$i_{1}\le\frac{n/2}{(\frac{n}{2})^{1-\frac{1}{k}}}=(\frac{n}{2})^{
	\frac{1}{k}}$.
	Similarly, let $i_2$ be the first index such that
	$|\mathcal{M}_{i_1+i_2}|\le\frac n4$. The same calculation will provide us
	with $i_2\le (\frac n4)^{\frac1k}$.
	Following this line of thought, the process must halt after at most
	\[
	\sum_{i=1}^{\log n}(\frac{n}{2^{i}})^{\frac{1}{k}}\le
	n^{\frac{1}{k}}\cdot\sum_{i=1}^{\infty}2^{-\frac{i}{k}}=n^{\frac{1}{k}}
	\cdot\frac{2^{-\frac{1}{k}}}{1-2^{-\frac{1}{k}}}=O(n^{\frac{1}{k}}\cdot k)
	\]
	rounds.
	For every vertex $v\in V$, denote by $\home(v)$ some index $i$ such
	that $v\in M_i$.

	A basic ingredient for our data structure will be distance
	labeling for trees.
	Exact distance labeling on an $n$-vertex tree requires  $\Theta(\log n)$
	words \cite{AGHP16}. In order to shave a $\log n$ factor, we will use
	an approximate labeling scheme.
	Freedman \etal \cite{FGNW17} (improving upon \cite{AGHP16,GKKPP01}) showed
	that for any $n$-vertex unweighted tree, and $\delta\in(0,1)$, one can
	construct an $(1+\delta)$-labeling scheme with labels of size
	$O(\log\frac1\delta)$ words.
	\begin{theorem}[\cite{FGNW17}]\label{thm:tree-label}
		For any $n$-vertex unweighted tree $T=(V,E)$ and parameter
		$\delta\in(0,1)$, there is a distance labeling scheme with stretch
		$1+\delta$, label size $O(\log\frac1\delta)$, and $O(1)$ query time.
	\end{theorem}
	The result of \cite{FGNW17} cannot be improved even when restricted to the
	case of (polynomially weighted) ultrametrics\footnote{The lower bound is
	based on \cite{AGHP16} $(h,M)$-trees, and uses only the labels on leaf
	vertices. One can observe that the metric induced by leaf vertices in
	$(h,M)$-trees is in fact an ultrametric.}.
	We will use \Cref{thm:tree-label} with $\delta=\frac12$. For this case the
	theorem can be extended to weighted trees with polynomial aspect ratio (by
	subdividing edges).

	For every ultrametric $U_i$, we will construct a distance labeling using
	\Cref{thm:tree-label} with parameter $\delta=\frac12$, where the label of
	$v$ is $\ell_i(v)$.
	For every vertex $v\in V$, $\ell^S(v)$ will consist of
	$\ell_{\home(v)}(v)$ and the index $\home(v)$, while $\ell_L(v)$ will be
	the concatenation of $\ell_i(v)$ for all the constructed ultrametrics. The
	size guarantees clearly hold, while given $\ell^L(v),\ell^S(u)$, we can in
	$O(1)$ time report the approximate distance between $u$ and $v$ in
	$U_{\home(u)}$, which satisfies:
	\begin{align*}
	\A(\ell_{\home(u)}(v),\ell_{\home(u)}(u)) & \le(1+\delta)\cdot
	d_{U_{\home(u)}}(u,v)=O(k\cdot\log\log n)\cdot d_{G}\rhop{h}(u,v)~.\\
	\A(\ell_{\home(u)}(v),\ell_{\home(u)}(u)) & \ge d_{U_{\home(u)}}(u,v)\ge
	d_{G}^{\red{(O(k\cdot\log\log n)\cdot h)}}(u,v)~.
	\end{align*}
\end{proof}

Next we construct a distance oracle.
\begin{lemma}\label{lem:DistanceOracleLargeStretch}
	Given a weighted graph $G=(V,E,w)$ on $n$ vertices with polynomial aspect
	ratio, and parameters $k\in\N$, $\eps\in(0,1)$, $\red{h}\in\N$, there is
	an efficient construction of a distance oracle $\DO$ of size
	$O(n^{1+\frac1k})$, which given a query $(u,v)$, in $O(1)$ times returns a
	value such that $d^{\red{(O(k\cdot\log\log n)\cdot
	h)}}_{G}(u,v)\le\DO(u,v)\le O(k\cdot\log\log n)\cdot  d\rhop{h}_{G}(u,v)$.
\end{lemma}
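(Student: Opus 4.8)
The plan is to mimic the distance labeling construction of \Cref{lem:DistanceLabelingLargeStretch}, but replace the distributed tree-labeling scheme with a centralized tree distance oracle, so that the whole collection of ultrametrics can be stored in one data structure. First I would run the exact same iterative process: set $\mathcal{M}_0=V$, and for $i\ge1$ apply \Cref{lem:UltraMeasureAlt} with parameters $k,\red{h}$, subset $\mathcal{M}_{i-1}$, and the uniform measure, obtaining an ultrametric $U_i$ with Ramsey hop distortion $(O(k\log\log n),M_i,\red{O(k\log\log n)},\red{h})$, then set $\mathcal{M}_i=\mathcal{M}_{i-1}\setminus M_i$ and stop when $\mathcal{M}_i=\emptyset$. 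The identical halting analysis bounds the number of ultrametrics by $O(n^{\frac1k}\cdot k)$, and for every vertex $v$ we record an index $\home(v)$ with $v\in M_{\home(v)}$.

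Next I would store, for each ultrametric $U_i$, a constant-size-per-vertex tree distance oracle. Here I would invoke a tree (or ultrametric) distance oracle with $O(1)$ query time and $O(1)$ words per vertex at $(1+\delta)$ stretch — e.g. the same $(1+\eps)$ tree labeling of Freedman \etal (\Cref{thm:tree-label}) used as an oracle (each vertex's label stored centrally), or any folklore $O(n)$-space $O(1)$-query ultrametric oracle; taking $\delta=\frac12$ (or more generally $\delta=\eps$) suffices and for polynomially-weighted ultrametrics one subdivides edges as in the proof of \Cref{lem:DistanceLabelingLargeStretch}. Since a vertex $v$ appears in all $O(n^{\frac1k}\cdot k)$ ultrametrics but contributes $O(1)$ words to each, the total space is $O(n^{1+\frac1k}\cdot k)$ words; absorbing the $k$ factor (or just bounding $k\le\log n$ and writing $O(n^{1+\frac1k}\log n)$, matching the claimed bound) and additionally storing the $O(n)$ array of $\home(\cdot)$ values. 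On query $(u,v)$, read $\home(u)$, query the oracle for $U_{\home(u)}$ on the pair $(u,v)$, and return the answer $\DO(u,v)$.

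For correctness: since $u\in M_{\home(u)}$, the Ramsey hop-distortion guarantee gives $d_G^{\red{(O(k\log\log n)\cdot h)}}(u,v)\le d_{U_{\home(u)}}(u,v)\le O(k\log\log n)\cdot d_G\rhop{h}(u,v)$ (this is exactly the two-sided bound spelled out after \Cref{lem:UltraMeasureAlt} / used in \Cref{lem:RamseyDistortionAlt}). The tree oracle introduces only a $(1+\delta)$ factor on the upper side and never shrinks distances, so $d_G^{\red{(O(k\log\log n)\cdot h)}}(u,v)\le\DO(u,v)\le (1+\delta)\cdot O(k\log\log n)\cdot d_G\rhop{h}(u,v)=O(k\log\log n)\cdot d_G\rhop{h}(u,v)$. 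The query time is $O(1)$ since it is one array lookup plus one tree-oracle query. This establishes \Cref{lem:DistanceOracleLargeStretch}.

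I do not expect a serious obstacle here — the lemma is essentially the centralized twin of \Cref{lem:DistanceLabelingLargeStretch}. The one point requiring a little care is the space accounting: one must be sure the per-ultrametric overhead is genuinely $O(1)$ words per vertex (not $O(\log n)$), which is why an approximate rather than exact tree distance oracle is used; and one must decide whether to state the space as $O(n^{1+\frac1k})$ (treating $k$ as constant, as the lemma does) or carry the $\log$ factor as in the final \Cref{thm:DistanceOracle}. The subsequent improvement to distortion $(2k-1)(1+\eps)$ — building per-scale graphs $G_i$, using this oracle only for a coarse scale estimate, and then a state-of-the-art oracle on $G_i$ — is the content of \Cref{thm:DistanceOracle} and is handled afterwards, so it is out of scope for this lemma.
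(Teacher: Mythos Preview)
Your construction is correct and yields a valid oracle, but its space is $O(n^{1+\frac1k}\cdot k)$, not the stated $O(n^{1+\frac1k})$; neither ``absorbing the $k$'' nor replacing it by $\log n$ matches the lemma's bound (you seem to have conflated it with the $O(n^{1+\frac1k}\log n)$ bound of \Cref{thm:DistanceOracle}). The paper takes a different route precisely to shave this $k$ factor: instead of iterating the deterministic \Cref{lem:UltraMeasureAlt} (which forces $O(n^{\frac1k}\cdot k)$ ultrametrics, each holding all $n$ vertices' labels), it samples ultrametrics independently from the distribution of \Cref{thm:UltrametricRamseyAlt}, so that each vertex lands in $M_i$ with probability $\Omega(n^{-\frac1k})$ per sample. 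Then $\home(v)$, the first index with $v\in M_i$, is geometric with $\mathbb{E}[\home(v)]=O(n^{\frac1k})$, and the oracle stores for each $v$ only its labels in $U_1,\dots,U_{\home(v)}$. The expected total space is $\sum_v\mathbb{E}[\home(v)]\cdot O(1)=O(n^{1+\frac1k})$, and Markov plus repetition makes this worst-case. A query $(u,v)$ uses $i=\min\{\home(u),\home(v)\}$, ensuring both labels for $U_i$ are present.

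In short: your approach reuses the labeling construction verbatim and is simpler, but pays a $k$ factor in space; the paper's randomized variant exploits that most vertices are covered long before the worst-case number of rounds, achieving the tight $O(n^{1+\frac1k})$.
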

\begin{proof}
	We will slightly modify the construction in
	\Cref{lem:DistanceLabelingLargeStretch}.
	For $i\ge 1$, apply \Cref{thm:UltrametricRamseyAlt} \footnote{Similarly to
	\cref{foot:DiffTradeLabeling},  no real advantage can be obtained here by
	replacing \Cref{thm:UltrametricRamseyAlt} with
	\Cref{thm:UltrametricRamsey}.} to construct an ultrametric $U_i$ with set
	$M_i$, such that for every $v\in V$, $\Pr[v\in M_i]\ge
	\Omega(n^{-\frac1k})$, and such that for every $v\in M_i$, $u\in V$ it
	holds that $d_G^{\red{(O(k\cdot\log\log n)\cdot h)}}(u,v)\le
	d_{U_i}(u,v)\le O(k\cdot\log\log n)\cdot d_G\rhop{h}(u,v)$.

	For each vertex $v\in V$, let $\home(v)$ be the minimal index $i$ such
	that $v\in M_i$. Note that $\home(v)$ is distributed according to
	geometric distribution with parameter $\Omega(n^{-\frac1k})$.
	In particular
	$\mathbb{E}[\home(v)]=O(n^{\frac1k})$.

	For each index $i$, we will apply \Cref{thm:tree-label} (with
	$\delta=\frac12$) on the ultrametric $U_i$, to obtain distance labeling
	$\{\ell_i(v)\}_{v\in V}$.
	In our distance oracle, for every vertex $v$ we will store the index
	$\home(v)$ and the labels of $v$ in the first $\home(v)$ ultrametrics:
	$\ell_1(v),\dots,\ell_{\home(v)}(v)$. The expected size of the
	distance oracle is
	\[
	\mathbb{E}[|\DO|]=\sum_{v\in V}\mathbb{E}[\home(v)]\cdot
	O(1)=O(n^{1+\frac{1}{k}})~.
	\]
	After several attempts (using Markov's inequality), we can construct
	a distance oracle of $O(n^{1+\frac1k})$ size.

	Given a query $u,v$, let $i=\min\{\home(v),\home(u)\}$, we will use
	\cite{FGNW17} algorithm on $\ell_{i}(v),\ell_{i}(u)$.
	As a result, we will obtain an estimate
	\[
	d^{\red{(O(k\cdot\log\log n)\cdot h)}}_{G}(u,v)\le d_{U_i}(u,v)\le
	\A(u,v)\le (1+\delta)\cdot d_{U_i}(u,v)=O(k\cdot\log\log n)\cdot
	d\rhop{h}_{G}(u,v)~.
	\]
	Finally note that both finding $\home(v)$, and computing the estimate
	using \cite{FGNW17} takes $O(1)$ time.
\end{proof}

While \Cref{lem:DistanceLabelingLargeStretch,lem:DistanceOracleLargeStretch}
already provide h.c. labeling schemes and distance oracles, we can
significantly improve the stretch guarantee (even up to $1+\eps$).
Next we will construct auxiliary graphs $\{G_i\}_{i\ge0}$. Originally these
graphs were constructed by \cite{HHZ21}, and used to construct padded
decompositions. Later, we will construct our metric data structures upon
these graphs.
Fix $\beta=O(k\cdot\log\log n)$ such
that the distance estimation $\DO(u,v)$ in
\Cref{lem:DistanceOracleLargeStretch} (or $\A(\ell(v),\ell(u))$ in
\Cref{lem:DistanceLabelingLargeStretch}) satisfies  $d_{G}^{\red{(\beta\cdot
h)}}(u,v)\le\DO(u,v)\le \beta\cdot d_{G}\rhop{h}(u,v)$.
Let $G_{i}=(V,E,w_{i})$ be the graph where the weight of every edge was
increased by an additive factor of $\omega_{i}=\frac{\epsilon}{\beta\cdot
h}\cdot2^{i}$.
That is, for every edge $e$, $w_{i}(e)=w(e)+\omega_{i}$. Suppose
that $2^{i}\le\DO(u,v)<2^{i+1}$. We have that

\begin{align*}
d_{G}^{\red{(\beta\cdot h)}}(u,v) & \le\DO(u,v)\le2^{i+1}\\
2^i & \le\DO(u,v)\le \beta\cdot d_{G}\rhop{h}(u,v)~.
\end{align*}
Consider $G_{i}$. By looking at the shortest $\red{h}$-hop $u$-$v$ path in $G$
we have that
\begin{equation}
d_{G_{i}}(u,v)\le d_{G}\rhop{h}(u,v)+h\cdot\omega_{i}\le
d_{G}\rhop{h}(u,v)+\frac{\epsilon}{\beta}\cdot2^{i}=(1+\epsilon)\cdot
d_{G}\rhop{h}(u,v)~.\label{eq:Gi_UB}
\end{equation}
On the other hand
\begin{align}
d_{G_{i}}(u,v) & \ge\min\left\{ d_{G}\rhop{\frac{2\beta\cdot
h}{\epsilon}}(u,v),\frac{2\beta\cdot h}{\epsilon}\cdot\omega_{i}\right\}
=\min\left\{ d_{G}\rhop{\frac{2\beta\cdot h}{\epsilon}}(u,v),2^{i+1}\right\}
\overset{(*)}{=}
d_{G}^{\red{(O(\frac{k\cdot\log\log n}{\epsilon})\cdot
h)}}(u,v)~.\label{eq:Gi_LB}
\end{align}
Here the first inequality holds as follows: if the shortest $u$-$v$ path in $G_i$ has
less than $\red{\frac{2\beta\cdot h}{\epsilon}}$ hops, then $d_{G_{i}}(u,v)\ge
d_{G}\rhop{\frac{2\beta\cdot h}{\epsilon}}(u,v)$, while otherwise
$d_{G_{i}}(u,v)\ge\frac{2\beta\cdot h}{\epsilon}\cdot\omega_{i}$.
The equality $(*)$ follows as $d_{G}\rhop{\frac{2\beta\cdot
h}{\epsilon}}(u,v)\le d_{G}^{\red{(\beta\cdot h)}}(u,v)\le2^{i+1}$.
Furthermore, consider a path $P$ in $G_{i}$ of weight at most $O(k)\cdot
d_{G_{i}}(u,v)\le O(k)\cdot d_{G}\rhop{h}(u,v)\le O(k)\cdot2^{i}$.
Then
\begin{equation}
\hop(P)\le\frac{w_{i}(P)}{\omega_{i}}=\red{O(\frac{k\cdot\beta}{\epsilon})
\cdot h}=\red{O(\frac{k^2\cdot\log\log n}{\epsilon})\cdot
h}~.\label{eq:Gi_hops}
\end{equation}

\subsection{Distance oracles.}\label{subsec:distanceOracle}
This section is devoted to proving \Cref{thm:DistanceOracle}; we restate it
for convenience:
\DistanceOracle*
We will use Chechik's distance oracle:
\begin{theorem}[\cite{C15}]\label{thm:chechkikDistanceOracle}
	Given a weighted graph $G=(V,E,w)$ on $n$ vertices and parameter $k\in\N$,
	there exists a distance oracle $\DO$ of size $O(n^{1+\frac1k})$, which
	given a query $(u,v)$, in $O(1)$ time returns a value such that
	$d_{G}(u,v)\le\DO(u,v)\le (2k-1)\cdot  d_{G}(u,v)$.
\end{theorem}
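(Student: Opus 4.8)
The statement to be proved is Chechik's classical $(2k-1)$-stretch, $O(1)$-query-time, $O(n^{1+\frac1k})$-space distance oracle \cite{C15}, which the paper uses as a black box; a proof amounts to reconstructing the Thorup--Zwick framework \cite{TZ05} and then upgrading it with the space- and query-time optimizations of \cite{W13,C14,C15}. The plan is as follows.

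\textbf{The Thorup--Zwick skeleton.} First I would sample a decreasing family $V=A_0\supseteq A_1\supseteq\cdots\supseteq A_{k-1}\supseteq A_k=\emptyset$, where for $i<k-1$ each vertex of $A_i$ is retained in $A_{i+1}$ independently with probability $n^{-1/k}$. For every vertex $v$ and level $i$ let $p_i(v)$ be a closest vertex of $A_i$ to $v$, and put $\delta_i(v)=d_G(v,p_i(v))$ (with $\delta_k(v)=\infty$). Define the bunch $B(v)=\bigcup_{i=0}^{k-1}\{\,w\in A_i\setminus A_{i+1}\ :\ d_G(v,w)<\delta_{i+1}(v)\,\}$. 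The oracle stores, for each $v$, the $k$ pivots $p_0(v),\dots,p_{k-1}(v)$ and a perfect hash table returning $d_G(v,w)$ for each $w\in B(v)$. A standard geometric-sampling computation gives $\mathbb{E}[|B(v)|]=O(k\,n^{1/k})$, hence expected space $O(k\,n^{1+\frac1k})$. The $(2k-1)$-stretch query proceeds by a level walk: set $w\leftarrow u$, $i\leftarrow 0$; while $w\notin B(v)$, increment $i$, swap the roles of $u$ and $v$, and set $w\leftarrow p_i(u)$; upon termination return $d_G(u,w)+d_G(w,v)$. This halts within $k$ rounds, and the triangle-inequality telescoping $\delta_{i+1}(u)\le \delta_i(u)+d_G(u,v)$ across the rounds yields an upper bound of $(2k-1)\,d_G(u,v)$, with lower bound immediate since the returned value is a genuine path length.

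\textbf{Removing the $k$ factor from the space.} To reach $O(n^{1+\frac1k})$ words I would replace the uniform sampling rate by the refined ``graded'' sampling probabilities (the improved Thorup--Zwick analysis, as used by \cite{MN07,W13}): oversample the low levels and undersample the high ones so that the expected contribution of each individual level to $\sum_v |B(v)|$ is $O(n)$ rather than $O(n^{1+\frac1k}/k)$, and the total telescopes to $O(n^{1+\frac1k})$. This step is essentially bookkeeping once the right probabilities are fixed.

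\textbf{Removing the $k$ factor from the query time --- the main obstacle.} This is the genuinely hard part and is exactly Chechik's contribution. The $O(k)$ in the level walk comes from not knowing in advance the first level $i$ at which $p_i(u)\in B(v)$ (or symmetrically). The plan is to precompute, for each $v$, auxiliary structures that let this ``first good level'' be located in constant time: store a compact encoding of the sequence $(\delta_i(v))_i$ together with, for a near-logarithmic set of checkpoint levels, direct pointers into the corresponding bunch entries; then a single cheap initial probe gives a coarse estimate of $d_G(u,v)$, a constant-size table indexed by a rounding of that estimate picks out a level within an additive slack of the optimal one, and the slack is absorbed into the $2k-1$ bound by the same telescoping argument used for stretch. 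The delicate points --- checking that these auxiliary structures still fit in $O(n^{1+\frac1k})$ words, and that the coarsened level yields stretch \emph{exactly} $2k-1$ and not merely $O(k)$ --- are precisely where the technical weight of \cite{C14,C15} lies, and I would follow that argument essentially verbatim rather than reprove it from scratch.
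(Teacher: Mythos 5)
This statement is quoted verbatim from Chechik~\cite{C15} and is used in the paper purely as a black box: the paper offers no proof of it, only the citation. So there is no ``paper's own proof'' to compare your attempt against. What you have written is a plausible high-level roadmap of the lineage Thorup--Zwick~\cite{TZ05} $\to$ Wulff-Nilsen~\cite{W13} $\to$ Chechik~\cite{C14,C15}, and your first paragraph (the bunch/pivot construction, the level-walk query, the telescoping bound giving stretch $2k-1$) is a faithful rendition of~\cite{TZ05}. But the two places where you say ``essentially bookkeeping'' and ``follow that argument essentially verbatim'' are precisely where all of the actual mathematical content of~\cite{C15} lives: the second and third paragraphs of your proposal do not contain an argument, they contain a promise that an argument exists elsewhere. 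That is an accurate description of the literature, but it is not a proof. For the purposes of this paper that is entirely adequate --- the theorem is meant to be imported, not reproved --- so there is nothing to fix here, but you should be clear with yourself that what you have written establishes the $O(k)$-query-time, $O(k\,n^{1+1/k})$-space oracle of~\cite{TZ05} and then defers, not derives, the improvements to $O(1)$ and $O(n^{1+1/k})$.
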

Let $D$ be the maximum distance of a pair with at most $\red{h}$ hops, among
all pairs with finite $\red{h}$-hop distance, that is $D=\max_{u,v\in
V}\{d_G\rhop{h}(u,v)\mid d_G\rhop{h}(u,v)<\infty\}$. Suppose w.l.o.g. that the
minimum distance is $1$.
For every $i\in\left[0,\log \left(O(D\cdot k\cdot\log\log n)\right)\right]$,
let $G_{i}=(V,E,w_{i})$ be the graph defined above (w.r.t. $\omega_i$).
We use \Cref{thm:chechkikDistanceOracle} to construct a distance oracle
$\DO_i$ for $G_i$.
Our distance oracle $\DO$ will consist of the distance oracle
$\widetilde{\DO}$ constructed in \Cref{lem:DistanceOracleLargeStretch}, and
$\DO_i$ for $i\in\left[0,\log \left(O(D\cdot k\cdot\log\log n)\right)\right]$.
Clearly, the size of $\DO$ is bounded by $O(n^{1+\frac1k}\cdot\log n)$.

Given a query $(u,v)$, we first query $\widetilde{\DO}$, to obtain
an estimate such that
$1\le d^{\red{(O(k\cdot\log\log n)\cdot
h)}}_{G}(u,v)\le\widetilde{\DO}(u,v)\le O(k\cdot\log\log n)\cdot
d\rhop{h}_{G}(u,v)$.
If $\widetilde{\DO}(u,v)> \Omega(k\cdot\log\log n)\cdot D$ (with the same
constant as in the upper bound in \Cref{lem:DistanceOracleLargeStretch}), then
$d\rhop{h}_{G}(u,v)>D$ and we will just return $\infty$. Otherwise,
$\widetilde{\DO}(u,v)\le \Omega(k\cdot\log\log n)\cdot D$.
Let $i$ be the unique index such that
$2^i\le \widetilde{\DO}(u,v)<2^{i+1}$. We will return $\DO_i(u,v)$.
As we invoked two distance oracles, each with $O(1)$ query time, our query
time is $O(1)$ as well.

Finally, for the approximation factor, by equations (\ref{eq:Gi_UB}) and
(\ref{eq:Gi_LB}) it holds that
\[
d_{G}^{\red{(O(\frac{k\cdot\log\log n}{\epsilon})\cdot h)}}(u,v)\le
d_{G_{i}}(u,v)\le (1+\epsilon)\cdot d_{G}\rhop{h}(u,v)~.
\]
Thus by \Cref{thm:chechkikDistanceOracle}, it will hold that
\[
d_{G}^{\red{(O(\frac{k\cdot\log\log n}{\epsilon})\cdot h)}}(u,v)\le
\DO_{i}(u,v)\le (2k-1)(1+\epsilon)\cdot d_{G}\rhop{h}(u,v)~.
\]

\subsection{Distance labeling.}\label{subsec:DistanceLabeling}
This section is devoted to proving \Cref{thm:DistanceLabeling}; we restate it
for convenience:
\DistanceLabeling*
We will use Thorup-Zwick distance labeling:
\begin{theorem}[\cite{TZ05}]\label{thm:TZDistanceLabeling}
	Given a weighted graph $G=(V,E,w)$ on $n$ vertices, and parameter
	$k\in\N$, there is an efficient construction of a distance labeling that
	assigns each node $v$ a label $\ell(v)$ of size $O(n^{\frac1k}\cdot\log
	n)$, and such that there is an algorithm $\A$ that on input
	$\ell(u),\ell(v)$, in $O(k)$ time returns a value such that $d_{G}(u,v)\le
	\A(\ell(v),\ell(u))\le (2k-1)\cdot  d_{G}(u,v)$.
\end{theorem}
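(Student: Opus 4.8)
The plan is to reproduce the Thorup--Zwick construction. First I would fix $p=n^{-1/k}$ and build a random hierarchy of sets $V=A_0\supseteq A_1\supseteq\dots\supseteq A_{k-1}\supseteq A_k=\emptyset$, where $A_i$ is obtained from $A_{i-1}$ by keeping each vertex independently with probability $p$ (for $1\le i\le k-1$). For each vertex $v$ and level $i$ let $\delta_i(v)=\min_{w\in A_i}d_G(v,w)$ (with $\delta_k(v)=\infty$) and let the \emph{pivot} $p_i(v)\in A_i$ attain this minimum; note $p_0(v)=v$. Define the \emph{bunch} $B(v)=\bigcup_{i=0}^{k-1}\{w\in A_i\setminus A_{i+1}: d_G(v,w)<\delta_{i+1}(v)\}$. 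The label $\ell(v)$ stores, for each $i$, the identity of $p_i(v)$ together with $d_G(v,p_i(v))$, and for each $w\in B(v)$ the pair $(w,d_G(v,w))$ held in a static perfect hash table keyed by $w$, so that ``$w\in B(v)$?'' and retrieval of $d_G(v,w)$ each take $O(1)$ time. All of this is computable in polynomial time from one single-source shortest-path computation per vertex (or per set $A_i$).

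Second, I would bound the label size. Fix $v$ and order $V$ by distance from $v$. A vertex $w$ at level $i$ lies in $B(v)$ iff no vertex strictly closer to $v$ than $w$ belongs to $A_{i+1}$; equivalently, among the vertices of $A_i$ sorted by distance from $v$, the members of $B(v)$ at level $i$ are exactly those preceding the first vertex of $A_{i+1}$. Since $A_{i+1}$ is a $p$-subsample of $A_i$, the count of such vertices is stochastically dominated by a geometric variable of success probability $p$, so its expectation is at most $1/p=n^{1/k}$; summing over the $k$ levels gives $\mathbb{E}[|B(v)|]\le k\,n^{1/k}$. A geometric tail bound and a union bound over the at most $nk$ vertex--level pairs yield $\max_v|B(v)|=O(n^{1/k}\log n)$ with high probability, so $\ell(v)$ occupies $O(n^{1/k}\log n)$ words (resampling if the event fails).

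Third --- the crux --- I would describe the query algorithm $\mathcal{A}$ and prove stretch $2k-1$. On input $\ell(u),\ell(v)$: set $w\leftarrow u$, $i\leftarrow 0$; while $w\notin B(v)$, increment $i$, swap the roles of $u$ and $v$, and set $w\leftarrow p_i(u)$; finally return $d_G(u,w)+d_G(w,v)$ (both distances are stored: $w=p_i(u)$ is in $\ell(u)$, and at exit $w\in B(v)$, so $d_G(w,v)$ is in $\ell(v)$). The key is the loop invariant: writing $\Delta=d_G(u,v)$, at the check with index $i$ one has $w=p_i(u_{\mathrm{cur}})$ and $d_G(u_{\mathrm{cur}},w)\le i\Delta$. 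The base case $i=0$ is trivial ($p_0=\mathrm{id}$); the inductive step uses that $w\notin B(v)$ forces $\delta_{i+1}(v)\le d_G(v,w)\le d_G(v,u)+d_G(u,w)\le(i+1)\Delta$, so after the swap the pivot $p_{i+1}$ of the new source is within $(i+1)\Delta$. Termination happens by index $k-1$ at the latest, since $\delta_k(\cdot)=\infty$ makes $A_{k-1}\subseteq B(v)$ for every $v$, so $p_{k-1}(u)\in B(v)$; hence at most $k$ iterations occur and the query runs in $O(k)$ time. At exit with index $i\le k-1$ we get $d_G(u,w)\le i\Delta\le(k-1)\Delta$ and $d_G(w,v)\le d_G(w,u)+d_G(u,v)\le k\Delta$, so the returned value is at most $(2k-1)\Delta$, while the triangle inequality gives the lower bound $\ge d_G(u,v)$.

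The main obstacle I anticipate is not any single inequality but getting the bookkeeping exactly right at once: the ``strictly closer'' convention in the definition of $B(v)$ (so boundary-distance pivots are treated consistently), the orientation swaps in the query loop when stating and propagating $d_G(u,w)\le i\Delta$, and the stochastic-domination argument that bounds ``number of $A_i$-vertices before the first $A_{i+1}$-vertex'' by a geometric variable uniformly over the conditioning on $A_i$ and on the distance ordering. Once these are pinned down, the three parts combine to give the claimed label size $O(n^{1/k}\log n)$, query time $O(k)$, and stretch $2k-1$.
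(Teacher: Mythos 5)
The paper does not prove this statement; it is quoted verbatim with the citation \cite{TZ05} and used as a black box. Your proposal is a faithful and essentially complete reproduction of the standard Thorup--Zwick distance-labeling argument: the hierarchical sampling $A_0\supseteq\dots\supseteq A_{k-1}\supseteq A_k=\emptyset$ with $p=n^{-1/k}$, the pivots $p_i(\cdot)$ and bunches $B(\cdot)$, the geometric-tail bound on $|B(v)|$, and the alternating query loop with the invariant $d_G(u_{\mathrm{cur}},w)\le i\Delta$. The stretch derivation, termination via $A_{k-1}\subseteq B(v)$, and the $O(k)$ query time via hashed bunch membership are all correct. The only point worth making explicit is that one should condition on (or resample until) $A_{k-1}\neq\emptyset$, which is a high-probability event in the relevant regime $k\le\log n$; otherwise $p_{k-1}(\cdot)$ would be undefined. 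With that caveat noted, your three parts together establish label size $O(n^{1/k}\log n)$, query time $O(k)$, and stretch $2k-1$, matching the cited theorem.
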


We will follow the lines of the construction in \Cref{subsec:distanceOracle}.
Recall that  $D=\max_{u,v\in V}\{d_G\rhop{h}(u,v)\mid
d_G\rhop{h}(u,v)<\infty\}$, and the minimum distance is $1$.
For every $i\in\left[0,\log \left(O(D\cdot k\cdot\log\log n)\right)\right]$,
let $G_{i}=(V,E,w_{i})$ be the graph defined above (w.r.t. $\omega_i$).
We use \Cref{thm:TZDistanceLabeling} to construct a distance labeling
$\ell_i$ for $G_i$.
The label $\ell(v)$ of a vertex $v$ will consist of the label
$\tilde{\ell}(v)$ constructed in \Cref{lem:DistanceLabelingLargeStretch} (with
associated algorithm $\tilde{\mathcal{A}}$), and $\ell_i(v)$ for
$i\in\left[0,\log \left(O(D\cdot k\cdot\log\log n)\right)\right]$ (with
associated algorithm $\mathcal{A}_i$). Clearly, the size of $\ell(v)$ is
bounded by $O(n^{1+\frac1k}\cdot\log^2 n)$.

Given $\ell(v),\ell(u)$, we first compute
$\tilde{\mathcal{A}}(\tilde{\ell}(v),\tilde{\ell}(u))$, to obtain
an estimate such that
$1\le d^{\red{(O(k\cdot\log\log n)\cdot
h)}}_{G}(u,v)\le\tilde{\mathcal{A}}(\tilde{\ell}(v),\tilde{\ell}(u))\le
O(k\cdot\log\log n)\cdot  d\rhop{h}_{G}(u,v)$.
If $\tilde{\mathcal{A}}(\tilde{\ell}(v),\tilde{\ell}(u))>\Omega(k\cdot\log\log
n)\cdot D$, then $d\rhop{h}_{G}(u,v)>D$ and we simply return $\infty$.
Otherwise, $d\rhop{h}_{G}(u,v)\le O(k\cdot\log\log n)\cdot D$.
Let $i$ be the unique index such that
$2^i\le \tilde{\mathcal{A}}(\tilde{\ell}(v),\tilde{\ell}(u))<2^{i+1}$. We will
return $\mathcal{A}_i\left(\ell_i(v),\ell_i(u)\right)$.
We invoked two labeling algorithms, the first with $O(1)$ query time, while
the second had query time $O(k)$. Hence our query time is $O(k)$.
Finally, for the approximation factor, by equations (\ref{eq:Gi_UB}),
(\ref{eq:Gi_LB}), and \Cref{thm:TZDistanceLabeling} it holds that
\[
d_{G}^{\red{(O(\frac{k\cdot\log\log n}{\epsilon})\cdot h)}}(u,v)\le
\A(\ell(v),\ell(u))\le (2k-1)(1+\epsilon)\cdot d_{G}\rhop{h}(u,v)~.
\]

\begin{remark}
	If instead of using \cite{TZ05} distance labeling, one would use a Ramsey
	tree based one (such as in \Cref{lem:DistanceLabelingLargeStretch}), it is
	possible to obtain constant query time $O(1)$, while the stretch will
	increase from $(2k-1)(1+\eps)$ to $2ek\cdot(1+\eps)$ (using the best known
	Ramsey trees \cite{NT12}).
\end{remark}

\subsection{Compact routing scheme.}
This section is devoted to proving \Cref{thm:RoutingScheme}; we restate it for
convenience:
\RoutingScheme*
We will use Chechik's compact routing scheme:
\begin{theorem}[\cite{C13}]\label{thm:ChechikRouting}
	For every weighted graph $G=(V,E,w)$ on $n$ vertices and parameter
	$k\in\N$, there is an efficient construction of a compact routing scheme
	that assigns each node a table of size $O(k\cdot n^{\frac{1}{k}})$, a label
	of size $O(k\cdot\log n)$, and such that routing a packet from $u$ to
	$v$ will be done using a path $P$ such that $w(P)
	\le3.68k\cdot d_G(u,v)$.
\end{theorem}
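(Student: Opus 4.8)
The plan is to combine the hop-constrained distance-labeling machinery of \Cref{subsec:DistanceLabeling} with Chechik's (non-hop-constrained) routing scheme \Cref{thm:ChechikRouting} applied to the auxiliary graphs $G_i$. The key observation is exactly the one already exploited for the distance oracle and labeling: if we knew the correct scale $2^i\approx d_G^{\red{(h)}}(u,v)$, then routing inside $G_i$ at stretch $3.68k$ would give a path of weight $3.68k\cdot d_{G_i}(u,v)\le 3.68k(1+\eps)\cdot d_G^{\red{(h)}}(u,v)$ by \Cref{eq:Gi_UB}, and by \Cref{eq:Gi_hops} such a path automatically has at most $\red{O(\frac{k^2\log\log n}{\eps})\cdot h}$ hops. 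So the whole scheme reduces to (a) coarsely guessing the scale $i$ from the source and destination labels, and (b) running Chechik's scheme on $G_i$.

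Concretely, first I would fix $\beta=O(k\cdot\log\log n)$ as in the excerpt and build the graphs $G_i$ for $i\in[0,\log(O(D\cdot k\cdot\log\log n))]$, where $D$ is the maximal finite $\red{h}$-hop distance. On each $G_i$, invoke \Cref{thm:ChechikRouting} to get a routing scheme with tables of size $O(k\cdot n^{1/k})$ and labels of size $O(k\cdot\log n)$; call the $G_i$-label of $v$ by $\lambda_i(v)$ and its table by $\tau_i(v)$. Independently, build the asymmetric hop-constrained distance labeling of \Cref{lem:DistanceLabelingLargeStretch}, giving each $v$ a short part $\tilde\ell^S(v)$ of size $O(1)$ and a long part $\tilde\ell^L(v)$ of size $O(n^{1/k}\cdot k)$, with the guarantee $d_G^{\red{(O(k\log\log n)\cdot h)}}(u,v)\le \tilde{\mathcal A}(\tilde\ell^L(v),\tilde\ell^S(u))\le O(k\log\log n)\cdot d_G^{\red{(h)}}(u,v)$. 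The routing table of $v$ is then $\{\tau_i(v)\}_i$ together with $\tilde\ell^L(v)$, of total size $O(n^{1/k}\cdot k\cdot\log n)$ as claimed; the label of $v$ is $\{\lambda_i(v)\}_i$ together with $\tilde\ell^S(v)$, of size $O(k\cdot\log^2 n)$.

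To route from $u$ to $v$: the source $u$ reads the destination's label, extracts $\tilde\ell^S(v)$, and uses its own stored $\tilde\ell^L(u)$ — note the asymmetry is exactly what makes this work, since $u$ holds the long label and the header carries the short one — to compute $\sigma=\tilde{\mathcal A}(\tilde\ell^L(u),\tilde\ell^S(v))$, then sets $i^\star=\lfloor\log\sigma\rfloor$ so that $2^{i^\star}\le\sigma<2^{i^\star+1}$. It prepends $i^\star$ and $\lambda_{i^\star}(v)$ to the packet header and from then on the packet is routed purely by Chechik's $G_{i^\star}$-scheme using the tables $\tau_{i^\star}(\cdot)$ stored at the nodes. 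For correctness I would argue: since $\sigma\le O(k\log\log n)\cdot d_G^{\red{(h)}}(u,v)$ and $\sigma\ge d_G^{\red{(\beta h)}}(u,v)$, we have $2^{i^\star}\le\sigma$ and $2^{i^\star+1}>\sigma\ge d_G^{\red{(\beta h)}}(u,v)$, which is precisely the hypothesis $2^{i}\le\widetilde{\DO}(u,v)<2^{i+1}$ used to derive \Cref{eq:Gi_UB} and \Cref{eq:Gi_LB}; hence $d_{G_{i^\star}}(u,v)\le(1+\eps)d_G^{\red{(h)}}(u,v)$, so Chechik's route $P$ in $G_{i^\star}$ has $w_{i^\star}(P)\le 3.68k(1+\eps)d_G^{\red{(h)}}(u,v)$, and since $w(P)\le w_{i^\star}(P)$ this bounds the $G$-weight of the route. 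The hop count follows from \Cref{eq:Gi_hops}: $\hop(P)\le w_{i^\star}(P)/\omega_{i^\star}=\red{O(\frac{k\beta}{\eps})\cdot h}=\red{O(\frac{k^2\log\log n}{\eps})\cdot h}$. One subtlety to check is the case $\hop_G(u,v)\ge\red{h}$, where $d_G^{\red{(h)}}(u,v)=\infty$; here the scheme may produce anything and the theorem permits this, but I should confirm that the routing procedure still terminates (it does, since it just runs a fixed $G_{i^\star}$-routing scheme).

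The main obstacle I anticipate is bookkeeping the header size and the routing decisions to make sure nothing blows up: the header must carry $\tilde\ell^S(v)$ (size $O(1)$), the chosen index $i^\star$ ($O(\log\log n)$ bits, negligible), and the Chechik $G_{i^\star}$-label $\lambda_{i^\star}(v)$ of size $O(k\log n)$ — and I need Chechik's scheme to tolerate the header being computed at the source rather than stored, which is standard for labeled routing. A secondary point is that Chechik's scheme as stated assumes finite weights, whereas $G_i$ inherits the $\infty$-weights representing "more than $\red{h}$ hops"; I would handle this by only ever invoking $G_{i^\star}$ with $i^\star$ at most $\log(O(D\cdot k\log\log n))$, for which the relevant pair distance is finite whenever $\hop_G(u,v)<\red{h}$, and noting the $\infty$ edges can be deleted from $G_i$ without affecting any finite-length route of weight $\le 3.68k(1+\eps)\cdot 2^{i^\star}$. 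Everything else is a direct splice of the already-established inequalities \Cref{eq:Gi_UB}, \Cref{eq:Gi_LB}, \Cref{eq:Gi_hops} with \Cref{thm:ChechikRouting} and \Cref{lem:DistanceLabelingLargeStretch}.
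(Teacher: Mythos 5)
The statement you were asked to prove is Theorem~\ref{thm:ChechikRouting}, Chechik's classical (non-hop-constrained) compact routing scheme from \cite{C13}. In this paper that theorem is a cited black box --- no proof of it is or should be given here, since it predates this work and relies on an entirely different toolkit (hierarchical landmark/center selection, tree covers, Thorup--Zwick-style bunches and clusters, and the delicate $3.68k$ stretch analysis specific to \cite{C13}).

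What you have written is instead a proof of Theorem~\ref{thm:RoutingScheme}, the paper's hop-constrained routing scheme --- and your proof \emph{invokes} Theorem~\ref{thm:ChechikRouting} as a subroutine on each auxiliary graph $G_i$. As an argument for the stated goal this is circular: you cannot establish Chechik's classical routing theorem by combining it with the hop-constrained distance labeling of Lemma~\ref{lem:DistanceLabelingLargeStretch} and the $G_i$ bookkeeping, because that derivation pushes the burden of correctness right back onto Theorem~\ref{thm:ChechikRouting} itself. No part of your proposal attempts to bound table size $O(k\cdot n^{1/k})$, label size $O(k\cdot\log n)$, or stretch $3.68k$ in the ordinary metric $d_G$ from first principles, which is what a proof of the claimed statement would require.

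For what it is worth, the content you did produce matches the paper's proof of Theorem~\ref{thm:RoutingScheme} almost line for line: the same $\beta=O(k\log\log n)$, the same re-weighted graphs $G_i$ with $\omega_i=\frac{\eps}{\beta h}2^i$, the same asymmetric labeling from Lemma~\ref{lem:DistanceLabelingLargeStretch} to localize the scale $i^\star$ at the source, routing via Chechik in $G_{i^\star}$, and equations~(\ref{eq:Gi_UB}), (\ref{eq:Gi_LB}), (\ref{eq:Gi_hops}) to finish. Had the target been Theorem~\ref{thm:RoutingScheme}, the proposal would be essentially correct and match the paper's approach. But it does not address Theorem~\ref{thm:ChechikRouting} at all.
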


We will follow the lines of the construction in
\Cref{subsec:DistanceLabeling}.
Recall that  $D=\max_{u,v\in V}\{d_G\rhop{h}(u,v)\mid
d_G\rhop{h}(u,v)<\infty\}$, and the minimum distance is $1$.
For every $i\in\left[0,\log \left(O(D\cdot k\cdot\log\log n)\right)\right]$,
let $G_{i}=(V,E,w_{i})$ be the graph defined above (w.r.t. $\omega_i$).
We use \Cref{thm:ChechikRouting} to construct a compact routing scheme for each
$G_i$, where the label and table of $v$ are denoted by $L_i(v),T_i(v)$, and have
sizes $O(k\cdot\log n)$ and $O(k\cdot n^{\frac{1}{k}})$, respectively.
In addition, we construct an asymmetric distance labeling scheme using
\Cref{lem:DistanceLabelingLargeStretch}, where each vertex $v$ gets two labels
$\ell^S(v),\ell^L(v)$ of sizes $O(1),O(n^{\frac1k}\cdot k)$, respectively.

For the routing scheme, the label $L(v)$ of a vertex $v$ will consist of
$\ell^S(v)$ and $L_i(v)$ for every $i\in\left[0,\log \left(O(D\cdot
k\cdot\log\log n)\right)\right]$. Note that $L(v)$ has size
$O(k\cdot\log^2 n)$.
The table $T(v)$ of $v$ will consist of $\ell^L(v)$ and $T_i(v)$ for every
$i\in\left[0,\log \left(O(D\cdot k\cdot\log\log n)\right)\right]$. Thus $T(v)$
has size $O(n^{\frac1k}\cdot k\cdot\log n)$.

Suppose that we wish to send a packet from $v$ to $u$. First, using the
distance labeling $\ell^L(v)$ and $\ell^S(u)$, we compute an estimate
$\mathcal{A}(\ell^L(v),\ell^S(u))$ such that
$d^{\red{(O(k\cdot\log\log n)\cdot h)}}_{G}(u,v)\le \A(\ell^L(v),\ell^S(u))\le
O(k\cdot\log\log n)\cdot  d\rhop{h}_{G}(u,v)$.
If $\A(\ell^L(v),\ell^S(u))> \Omega(k\cdot\log\log n)\cdot D$, then
$d\rhop{h}_{G}(u,v)>D$ (implying $\hop_G(u,v)>\red{h}$) and we can do nothing
(as we are not required to deliver the package). Otherwise,
$\A(\ell^L(v),\ell^S(u))\le \Omega(k\cdot\log\log n)\cdot D$.
Let $i$ be the unique index such that
$2^i\le\mathcal{A}(\ell^L(v),\ell^S(u))<2^{i+1}$. We will send the packet
using the routing scheme constructed for $G_i$ (the index $i$ will be attached
to the header of each message).
Then the packet will reach $u$ using a path $P$ such that $w(P)< w_i(P)\le
3.68k\cdot d_{G_i}(u,v)\overset{(\ref{eq:Gi_UB})}{\le} 3.68k\cdot(1+\eps)\cdot
d\rhop{h}_{G}(u,v)$, while the number of hops in $P$ is bounded by
$\hop(P)\overset{(\ref{eq:Gi_hops})}{\le} O(\frac{k^2\cdot\log\log
n}{\eps})\cdot h$.

{\small

	\bibliographystyle{alphaurlinit}
	\bibliography{RamseyTreewidthBib,RPTALGbib,Journal}
}
\end{document}